\newcommand{\mnote}[1]
{\protect{\stepcounter{mnotecount}}$^{\mbox{\footnotesize
$
\bullet$\themnotecount}}$ \marginpar{
\raggedright\tiny\em
$\!\!\!\!\!\!\,\bullet$\themnotecount: #1} }
\newcounter{mnotecount}[section]
\renewcommand{\themnotecount}{\thesection.\arabic{mnotecount}}
\newcommand{\tim}[1]{\mnote{{\bf tim:}#1}}
\newtheorem{theorem}{\sc  Theorem\rm}[section]
\newtheorem{definition}[theorem]{\sc  Definition\rm}
\newtheorem{lemma}[theorem]{\sc Lemma\rm}
\newtheorem{proposition}[theorem]{\sc Proposition\rm}
\newtheorem{remark}[theorem]{\sc Remark\rm}
\newcommand{\ol}[1]{\overline{#1}{}}
\newcommand{\jlcax}[1]{}
\newcommand{\eean}{\nonumber\end{eqnarray}}
\newcommand{\kk}[1]{}
\newcommand{\beq}{\begin{equation}}
\newcommand{\FS}       
                  {F}
\newcommand{\HS} 
       {H_{\mbox{\scriptsize volume}}}
\newcommand{\eeal}[1]{\label{#1}\end{eqnarray}}
\newcommand{\bed}{\begin{deqarr}}
\newcommand{\eed}{\end{deqarr}}
\newcommand{\bedl}[1]{\begin{deqarr}\label{#1}}
\newcommand{\eedl}[2]{\arrlabel{#1}\label{#2}\end{deqarr}}
\newcommand{\bel}[1]{\begin{equation}\label{#1}}
\newcommand{\bea}{\begin{eqnarray}}
\newcommand{\bean}{\begin{eqnarray}\nonumber}
\newcommand{\beal}[1]{\begin{eqnarray}\label{#1}}
\newcommand{\eea}{\end{eqnarray}}
\def\typeout{:<+ #.tex}\include{#}\typeout{:<-}1{\typeout{:<+ #1.tex}\include{#1}\typeout{:<-}}
\newcommand{\qed}{\hfill $\Box$ \medskip}
\newcommand{\be}{\begin{equation}}
\newcommand{\eeq}{\end{equation}}
\newcommand{\ee}{\end{equation}}
\newcommand{\beqa}{\begin{eqnarray}}
\newcommand{\eeqa}{\end{eqnarray}}
\newcommand{\beqan}{\begin{eqnarray*}}
\newcommand{\eeqan}{\end{eqnarray*}}
\newcommand{\ba}{\begin{array}}
\newcommand{\ea}{\end{array}}
\newcommand{\mcM}{{\mycal M}}
\newcommand{\scri}{{\mycal I}}%
\newcommand{\Scri}{\scri}
\newcommand{\warn}[1]
{\protect{\stepcounter{mnotecount}}$^{\mbox{\footnotesize
$
\bullet$\themnotecount}}$ \marginpar{
\raggedright\tiny\em
$\!\!\!\!\!\!\,\bullet$\themnotecount: {\bf Warning:} #1} }
\newcommand{\eq}[1]{(\ref{#1})}
\newcommand{\ptc}[1]{\mnote{{\bf ptc:}#1}}
\newcommand{\beqar}{\begin{deqarr}}
\newcommand{\eeqar}{\end{deqarr}}
\newcommand{\beaa}{\begin{eqnarray*}}
\newcommand{\eeaa}{\end{eqnarray*}}
\DeclareFontFamily{OT1}{rsfs}{}
\DeclareFontShape{OT1}{rsfs}{m}{n}{ <-7> rsfs5 <7-10> rsfs7 <10-> rsfs10}{}
\DeclareMathAlphabet{\mycal}{OT1}{rsfs}{m}{n}
\global\let\AddToReset=\@addtoreset}
\global\let\AddToReset=\@addtoreset}
\global\let\AddToReset=\@addtoreset}
\newcommand{\lambdahere}{\lambda}%
\begin{document}

\title{Conformally covariant systems of wave equations and their equivalence to Einstein's field equations%
\thanks{Preprint UWThPh-2013-15.}
\vspace{0.5em}}
\author{
 Tim-Torben Paetz%
\thanks{E-mail:  Tim-Torben.Paetz@univie.ac.at}  \vspace{0.5em}\\  \textit{Gravitational Physics, University of Vienna}  \\ \textit{Boltzmanngasse 5, 1090 Vienna, Austria }}

\maketitle

\begin{abstract}
We derive, in $3+1$ spacetime dimensions, two alternative systems of quasi-linear wave equations, based on  Friedrich's conformal field equations. We analyse their equivalence to  Einstein's vacuum field equations when appropriate  constraint equations are satisfied by the initial data. As an application, the characteristic initial value problem for the Einstein
equations with data on past null infinity is reduced to a characteristic initial value problem for wave equations with data on an ordinary light-cone.
\end{abstract}


\noindent
\hspace{2.1em} PACs numbers: 04.20.Ex

\tableofcontents

\section{Introduction}

\subsection{Asymptotic flatness}

In general relativity there is the endeavour to characterize those spacetimes which one would regard as being ``asymptotically flat'', possibly merely in certain (null) directions.
Spacetimes which possess this property would be well-suited to describe e.g.\ purely radiative spacetimes or  isolated gravitational systems.
However, due to the absence of a non-dynamical background field this is an intricate issue in general relativity.
In~\cite{p1,p2} (see e.g.\ \cite{geroch} for an overview) R.~Penrose proposed a geometric approach to resolve this problem: The starting point is a $3+1$-dimensional
spacetime $(\tilde{\mcM\enspace}\hspace{-0.5em}, \tilde g)$, the \textit{physical spacetime}.
It then proves fruitful to introduce a so-called \textit{unphysical spacetime} $(\mcM,g)$ into which (a part of) $(\tilde{\mcM\enspace}\hspace{-0.5em}, \tilde g)$
is conformally embedded,
\begin{equation*}
 \tilde g \overset{\phi}{\mapsto} g:=\Theta^2 \tilde g\;, \quad  \tilde{\mcM\enspace}\hspace{-0.5em} \overset{\phi}{\hookrightarrow} \mcM\;,
  \quad \Theta|_{\phi( \tilde{\mcM\enspace}\hspace{-0.5em})}>0
 \;.
\end{equation*}
%
The part of $\partial \phi( \tilde{\mcM\enspace}\hspace{-0.5em})$ where the conformal factor $\Theta$ vanishes can be interpreted as representing
infinity of the original, physical spacetime, for the physical affine parameter diverges along null geodesics when approaching this part of the boundary.
The subset $\{\Theta=0\;, \, \mathrm{d}\Theta\ne 0\} \subset \partial \phi( \tilde{\mcM\enspace}\hspace{-0.5em})$ is called \textit{Scri}, denoted by\ $\scri$.
Large classes of solutions of the Einstein equations (with vanishing cosmological constant) possess a $\scri$ which forms a smooth  null hypersurface in $(\mcM,g)$, on which null geodesics in $(\mcM,g)$ acquire end-points.
 The hypersurface $\scri$ is therefore regarded as providing a representation of null infinity.

Penrose's proposal
to distinguish those spacetimes which have an ``asymptotically flat'' structure [in certain null directions] is to require that the unphysical metric tensor $g$ extends smoothly across [a part of] $\scri$.%
\footnote{One may also think of weaker requirements here.}
The idea is that such a smooth conformal extension is possible whenever the gravitational field has an appropriate ``asymptotically flat'' fall-off behaviour in these directions.

Null infinity can be split into two components,
past and future null infinity $\scri^-$ and $\scri^+$, which are generated by the past and future endpoints of null geodesics in $\mcM$, respectively.
If the spacetime is further supposed to be asymptotically flat in all spacelike directions, one may require the existence
of a  point $i^0$, representing \textit{spacelike infinity}, where all the spacelike geodesics meet. However, $i^0$ cannot be assumed to be smooth
(it cannot even assumed to be $C^1$~\cite{ashtekar}).

In this work we are particularly interested in spacetimes (and the construction thereof) which, at sufficiently early times, possess a conformal infinity which
is similar to that of Minkowski spacetime. By that we mean that (a part of)  $(\tilde{\mcM\enspace}\hspace{-0.5em}, \tilde g)$
can be conformally mapped into an unphysical spacetime, where all timelike geodesics originate from one regular
 point, which represents \textit{past timelike infinity}, denoted by $i^-$;
moreover, we assume that, at least sufficiently close to $i^-$, a regular $\scri^-$ exists and is generated by the null geodesics emanating from
$i^-$, i.e.\ forms the future null cone of $i^-$, denoted by $C_{i^-}:=\scri^- \cup \{i^-\}$.
By the term ``regular'' we mean that the conformally rescaled metric $g$, and also the rescaled Weyl tensor,
admit smooth extensions.
In fact, in $3+1$ dimensions the extendability assumption across $\scri$ on the rescaled Weyl tensor is automatically satisfied in the current setting.
At $i^-$ this assumption will be dropped in Section~\ref{alternative_system}.
Purely radiative spacetimes are expected to possess such a conformal structure \cite{F7}.

It is an important issue to understand the interplay between the geometric concept of asymptotic flatness and the
Einstein equations, and whether all relevant physical systems
are compatible with the notion of a regular conformal infinity.
There are various results indicating that this is a reasonable concept, cf.\  \cite{ac,andersson, p2,kannar,F8,ChDelay} and references given therein.
An open issue is to characterize the set of asymptotically Euclidean initial data on a spacelike hypersurface  which lead  to solutions of Einstein's field equations which are  ``null asymptotically
flat''.

Since we have a characteristic initial value problem at $C_{i^-}$ in mind, we want to avoid too many technical assumptions which might lead to a more reasonable (and rigid) notion of asymptotic flatness, asymptotic simplicity, etc.\
(cf.\ e.g.\ \cite{geroch2}).
In a nutshell, we are concerned with solutions of the vacuum Einstein equations (with vanishing cosmological constant) which admit a 
regular null cone at past timelike infinity, at least near $i^-$.

\subsection{Conformal field equations}

Due to the geometric construction outlined above, the asymptotic behaviour of the gravitational field can be analysed in terms of a local problem in a neighbourhood of $\scri$ (as well as $i^{\pm}$ and $i^0$).
However, the vacuum Einstein equations, regarded as equations for the unphysical metric $g$, are (formally) singular at conformal infinity (set $\Box_g:=\nabla^{\mu}\nabla_{\mu}$),
\begin{eqnarray}
 &&\hspace{-3em}\tilde R_{\mu\nu}[\tilde g] = \lambda\tilde g_{\mu\nu} \quad  \Longleftrightarrow
 \nonumber
\\
&&\hspace{-3em} R_{\mu\nu}[g] + 2\Theta^{-1} \nabla_{\mu}\nabla_{\nu} \Theta  + g_{\mu\nu} \big( \Theta^{-1}\Box_g\Theta
  - 3\Theta^{-2} \nabla^{\sigma}\Theta\nabla_{\sigma}\Theta\big) = \lambda \Theta^{-2} g_{\mu\nu}
  \;,
\label{Einstein_singular}
\end{eqnarray}
%
where the conformal factor $\Theta$ is assumed to be some given (smooth) function.
The system \eq{Einstein_singular} does therefore not seem to be convenient to study unphysical spacetimes $(\mcM,g)$ with $\Theta^{-2} g$ being a solution of the  Einstein equations away from conformal infinity.
Serendipitously, H.~Friedrich~\cite{F1,F2,F3} was able to extract a system, the \textit{conformal field equations}, which does remain regular even if $\Theta$ vanishes, and which is equivalent to the vacuum Einstein equations wherever $\Theta$ is non-vanishing.

In a suitable gauge the propagational part of the conformal field equations implies, in $3+1$ dimensions, a symmetric hyperbolic system, the \emph{reduced
conformal field equations}.
Thus equipped with some nice mathematical properties Friedrich's equations  provide a powerful tool to analyse the
asymptotic behaviour of those solutions of the Einstein equations which admit an appropriate conformal structure at infinity.

\subsection{Characteristic initial value problems}

The characteristic initial value problem in general relativity provides a tool to construct systematically general solutions of
Einstein's field equations.
An advantage in comparison with the spacelike Cauchy problem is that the constraint equations can be read as a hierarchical system of ODEs,
which is much more convenient to deal with.
In fact, one may think of  several different types of (asymptotic) characteristic initial value problems, which we want to recall briefly.

One possibility is to take two transversally intersecting null hypersurfaces as initial surface.
This problem was studied by Rendall~\cite{rendall} who established well-posedness results for quasi-linear wave equations as well as for symmetric hyperbolic systems in a neighbourhood of the cross-section of these hypersurfaces.
Using a harmonic reduction of the Einstein equations he then applied his results to prove well-posedness for the Einstein equations.

Another approach is to prescribe data on a light-cone.
There is a well-posedness result for quasi-linear wave equations near the tip of a cone available which is due to Cagnac~\cite{cagnac2} and Dossa~\cite{dossa}.
A crucial assumption in their proof is that the initial data are restrictions to the light-cone of smooth%
\footnote{There is a version for finite differentiability, but here we restrict attention to the smooth case.}
spacetime fields.
Well-posedness of the Einstein equations was investigated in a series of papers~\cite{CCM3,CCM4,CCM2} by
Choquet-Bruhat, Chru\'sciel and Mart\'in-Garc\'ia, and by Chru\'sciel~\cite{C1}.
The authors impose a wave-map gauge condition to obtain a system of wave equations to which the Cagnac-Dossa theorem is applied.
A main difficulty, in the most comprehensive case treated in~\cite{C1}, is to make sure that the Cagnac-Dossa theorem
is indeed applicable. For that one needs to make sure that the initial data for the reduced Einstein equations, which are constructed from suitable free data as solution of the constraint equations, can be extended to smooth spacetime fields.
One then ends up with the result that these free data determine a unique solution (up to isometries) in
some neighbourhood of the tip of the cone $C_O$, intersected  with $J^+(C_O)$.

A third important case arises when the initial surface is, again, given by two transversally intersecting null hypersurfaces,
but now in the unphysical spacetime and with one of the hypersurfaces belonging to $\scri$.
This issue was treated by Friedrich~\cite{F6}, who proved well-posedness for analytic data, and by K\'ann\'ar~\cite{kannar}, who extended Friedrich's result to the smooth case. The basic idea for the proof is to exploit the fact that the reduced conformal field equations form a symmetric hyperbolic system to which  Rendall's local existence result is applicable.

The case we have in mind is when the initial surface is given in the unphysical spacetime by the light-cone $C_{i^-}$
emanating from past timelike infinity~$i^-$.
In order to construct systematically solutions of Einstein's field equations which are compatible with Penrose's notion of asymptotic flatness
and a regular $i^-$,
one would like to prescribe data
on $C_{i^-}$ and predict existence of a solution of Einstein's equations
off  $C_{i^-}$ by solving an appropriate initial value problem.
One way to establish well-posedness near the tip of the cone is to mimic the analysis in~\cite{CCM2,C1}.
To do that, one needs a  system of wave equations
which, when supplemented by an appropriate set of constrain equations, is equivalent to the vacuum Einstein equations wherever $\Theta$ is  non-vanishing and which remains regular when
$\Theta$ vanishes.
Based on a conformal system of equations  due to Choquet-Bruhat and Novello~\cite{novello}, such a regular system of wave equations was employed by Dossa~\cite{dossa2} who states a well-posedness result for suitable initial data for which, however, it is not clear how they can be constructed, nor to what extent his system of wave equations is equivalent to the Einstein equations.

The purpose of this paper is to derive two such systems of wave equations in $3+1$-spacetime dimensions, which we will call \textit{conformal wave equations},
and prove equivalence to Friedrich's conformal system
for solutions of the characteristic initial value problem with initial surface $C_{i^-}$ which satisfy certain constraint equations on $C_{i^-}$.
Our first system will use the same set of unknowns as Friedrich's \textit{metric conformal field equations}~\cite{F3}, while the second system will employ the Weyl and the Cotton tensor rather than the rescaled Weyl tensor (and might be advantageous in view of the construction of solutions with a rescaled Weyl tensor which diverges at $i^-$).
The construction of initial data to which the Cagnac-Dossa theorem is applicable, and thus a well-posedness proof of the Cauchy problem with data on the $C_{i^-}$-cone, is accomplished in \cite{CFP,F9}.

Apart from the application to tackle the characteristic initial value problem with data on $C_{i^-}$,
a regular system of wave equations
 might be interesting for numerics, as well~\cite{kreiss}.

\subsection{Structure of the paper}
In Section~\ref{conformal_field_equations} we recall the metric conformal field equations and address the gauge freedom inherent to them.
In Section~\ref{cwe} we derive the first system of conformal wave equations, \eq{cwe1}-\eq{cwe5}, and prove equivalence
to the conformal field equations and consistency with the gauge condition under the assumption that certain relations hold initially.
In Section~\ref{sec_constraint_equations} we derive the constraint equations induced by the conformal field equations on $C_{i^-}$ in adapted coordinates and imposing a generalized wave-map gauge condition.
We then focus on the case of a light-cone with vertex at past timelike infinity to verify
in Section~\ref{sec_applicability} that the hypotheses needed for the equivalence theorem of Section~\ref{cwe} are indeed satisfied, supposing that the initial data fulfill the constraint equations \eq{constraint_g}-\eq{relation_lambda_omega}.
Our main result, Theorem~\ref{main_result}, states that a solution of the characteristic initial value problem for the conformal wave equations, with initial data on $C_{i^-}$ which have been constructed as solutions of the constraint equations, is also a solution of the conformal field equations in wave-map gauge and vice versa.
In Section~\ref{alternative_system} we then derive an alternative system of  wave equations, \eq{Wweylwave1}-\eq{Wweylwave6}, and study equivalence to the conformal field equations, supposing that certain constraint equations, namely \eq{W_constrainteqn_g}-\eq{W_boundary_values2}, are satisfied, cf.\ Theorem~\ref{main_result2}.  In Section~\ref{sec_conclusions} we   briefly compare both systems of wave equations and give a short summary.
We conclude the article by reviewing some basic properties of cone-smooth functions, which are utilized to prove a lemma stated  in Section~\ref{conformal_field_equations}.

Throughout this work we restrict attention to $3+1$ dimensions, cf.\  footnote~\ref{foot_rem}.

\section{Friedrich's conformal field equations and gauge freedom}
\label{conformal_field_equations}

\subsection{Metric conformal field equations (MCFE)}


As indicated above, the vacuum Einstein equations themselves do not provide a nice evolution system near infinity and are therefore not suitable to tackle the issue at hand,
namely to analyse existence of a solution to the future of $C_{i^-}$.
Nonetheless,  they permit a representation
which does not contain factors of $\Theta^{-1}$ and which is regular everywhere \cite{F1,F2,F3}.
Due to this property the Einstein equations are called \textit{conformally regular}.

The curvature of a spacetime is measured by the Riemann curvature tensor $R_{\mu\nu\sigma}{}^{\rho}$, which can be decomposed into the trace-free \textit{Weyl tensor} $W_{\mu\nu\sigma}{}^{\rho}$,
invariant under conformal transformations, and a term which involves the \textit{Schouten tensor} $L_{\mu\nu}$,
\begin{equation}
  R_{\mu\nu\sigma}{}^{\rho}= W_{\mu\nu\sigma}{}^{\rho} + 2(g_{\sigma[\mu} L_{\nu]}{}^{\rho} -\delta_{[\mu}{}^{\rho}L_{\nu]\sigma} )
 \;.
\end{equation}
The Schouten tensor is defined in terms of the Ricci tensor $R_{\mu\nu}$,
\begin{equation}
  L_{\mu\nu} := \frac{1}{2}R_{\mu\nu} - \frac{1}{12} R g_{\mu\nu}
 \;.
\label{dfn_Schouten}
\end{equation}
 The Weyl tensor is usually considered to represent the radiation part of the gravitational field.
Let us further define the \textit{rescaled Weyl tensor}
\begin{equation}
 d_{\mu\nu\sigma}{}^{\rho} := \Theta^{-1} W_{\mu\nu\sigma}{}^{\rho}
 \label{dfn_rescaled_weyl}
 \;,
\end{equation}
as well as the scalar function ($\Box_g \equiv \nabla^{\mu}\nabla_{\mu}$)
\begin{equation}
 s := \frac{1}{4}\Box_g \Theta + \frac{1}{24}  R\Theta
 \;.
\end{equation}

There exist different versions of the conformal field equations, depending on which fields are regarded as unknowns. Here we present the \textit{metric conformal field
equations (MCFE)}~\cite{F3} which read in $3+1$ spacetime dimensions
\begin{eqnarray}
 && \nabla_{\rho} d_{\mu\nu\sigma}{}^{\rho} =0\;,
 \label{conf1}
\\
 && \nabla_{\mu} L_{\nu\sigma} - \nabla_{\nu}L_{\mu\sigma} = \nabla_{\rho}\Theta \, d_{\nu\mu\sigma}{}^{\rho}\;,
 \label{conf2}
\\
 && \nabla_{\mu}\nabla_{\nu}\Theta = -\Theta L_{\mu\nu} + s g_{\mu\nu}\;,
 \label{conf3}
\\
 && \nabla_{\mu} s = -L_{\mu\nu}\nabla^{\nu}\Theta\;,
 \label{conf4}
\\
 && 2\Theta s - \nabla_{\mu}\Theta\nabla^{\mu}\Theta = \lambda/3  \;,
 \label{conf5}
\\
 && R_{\mu\nu\sigma}{}^{\kappa}[ g] = \Theta d_{\mu\nu\sigma}{}^{\kappa} + 2\left(g_{\sigma[\mu} L_{\nu]}{}^{\kappa}  - \delta_{[\mu}{}^{\kappa}L_{\nu]\sigma} \right)
 \label{conf6}
\;.
\end{eqnarray}
The unknowns are $g_{\mu\nu}$, $\Theta$, $s$, $ L_{\mu\nu}$ and $d_{\mu\nu\sigma}{}^{\rho}$.

Friedrich has shown that the MCFE are equivalent to the vacuum Einstein equations,
\begin{equation*}
 \tilde R_{\mu\nu}[\tilde g] \, = \, \lambda \tilde g_{\mu\nu}\;, \quad \tilde g_{\mu\nu}\, =\, \Theta^{-2} g_{\mu\nu}\;,
\end{equation*}
in the region where $\Theta$ is non-vanishing. 
They give rise to a complicated and highly overdetermined PDE-system.
It turns out that (\ref{conf5}) is a consequence of \eq{conf3} and \eq{conf4} if it is known to hold at just one point (e.g.\ by an appropriate choice of the initial data).
Moreover, Friedrich  has separated constraint and evolution equations from the conformal field equations by working in a spin frame  \cite{F1,F2} .
In Sections~\ref{ss31VIII11},  \ref{constraints_generalizedw} and \ref{kappa0_wavemap}
we shall do the same (if the initial surface is $C_{i^-}$) in a coordinate frame and by imposing a generalized wave-map gauge condition.

A specific property in the $3+1$-dimensional case is that the contracted Bianchi identity is equivalent to the Bianchi identity.
That is the reason why (\ref{conf1}) implies hyperbolic equations;
in higher dimensions this is no longer true~\cite{F3}. The conformal field equations provide a nice, i.e.\  symmetric hyperbolic,  evolution system only in $3+1$ dimensions.

Penrose proposed to distinguish asymptotically flat spacetimes by requiring
the unphysical  metric $g$ to be smoothly extendable across $\scri$.
The Weyl tensor of $g$ is known to vanish on $\scri$~\cite{p2}.
Since by definition $\mathrm{d}\Theta|_{\scri}\ne 0$ the rescaled Weyl tensor can be smoothly continued across $\scri$.
However, there seems to be no reason why
the same should be possible at $i^-$ where $\mathrm{d}\Theta=0$.
When dealing with the MCFE, where  the rescaled Weyl tensor is one of the unknowns, it is convenient to confine attention to the class of solutions with
a regular $i^-$ in the sense that both $g_{\mu\nu}$ and $d_{\mu\nu\sigma}{}^{\rho}$ are smoothly extendable
across $i^-$ (cf.\ Section~\ref{alternative_system}
where this additional assumption is dropped).

\subsection{Gauge freedom and conformal covariance inherent to the MCFE}

The gauge freedom contained in the MCFE comes from the freedom to choose  coordinates supplemented by the freedom
to choose the conformal factor $\Theta$ relating the physical and the unphysical spacetime.
Since $\Theta$ is regarded as an unknown rather than a gauge function, it remains to identify another function which reflects this gauge
freedom. The most convenient choice is the Ricci scalar~$R$:

Let us assume we have been given a smooth solution $(g_{\mu\nu}, \Theta, s, L_{\mu\nu}, d_{\mu\nu\sigma}{}^{\rho})$ of the MCFE.
Then we can compute $R$. For a conformal rescaling $g\mapsto \phi^{2}g$ for some $\phi> 0$,
the Ricci scalars $R$ and $R^*$ of $g$ and $\phi^{2}g$, respectively, are related via
\begin{equation}
 \phi R - \phi^{3} R^* = 6\Box_g\phi
 \;.
  \label{R_R*}
\end{equation}
Now, let us prescribe $R^*$ and read \eq{R_R*} as an equation for $\phi$.
If we think of a characteristic initial value problem with data on a light-cone $C_O$ (including the $C_{i^-}$-case) we are free to prescribe some $\mathring \phi >0$ on $C_O$.%
\footnote{The positivity of $\phi$ at the vertex guarantees any solution of \eq{R_R*} to be positive sufficiently close to the vertex and thereby the positivity of $\Theta^*$ (in the $C_{i^-}$-case just off the cone).
}%
\footnote{Since we are mainly interested in this case, we focus on an initial surface which is a cone. However, an analogous result can be obtained for two transversally intersecting null hypersurfaces.}
Supposing that $\mathring \phi$ is the restriction to $C_O$ of a smooth spacetime function, the Cagnac-Dossa theorem~\cite{cagnac2,dossa} tells us
that there is a solution $\phi>0$ with $\phi|_{C_O}=\mathring \phi$ in some neighbourhood of the tip of the cone.
Due to the \textit{conformal covariance} of the conformal field equations, the conformally rescaled fields
\begin{eqnarray}
 g^* &=& \phi^{2}g\;,
 \label{conformal_covariance1}
 \\
 \Theta^* &=& \phi\,\Theta\;,
 \\
 s^* &=& \frac{1}{4}\Box_{g^*}\Theta^* + \frac{1}{24}R^*\Theta^* \;,
   \\
   L^*_{\mu\nu} &=& \frac{1}{2} R^*_{\mu\nu}[g^*] - \frac{1}{12}R^* g^*_{\mu\nu} \;,
   \\
   d^*_{\mu\nu\sigma}{}^{\rho} &=& \phi^{-1} d_{\mu\nu\sigma}{}^{\rho}\;,
    \label{conformal_covariance5}
  \end{eqnarray}
provide another solution of the MCFE with Ricci scalar $R^*$ which corresponds to the same physical solution $\tilde g_{\mu\nu}$.
These considerations show that if we treat the conformal factor $\Theta$ as unknown, determined by the MCFE,
the curvature scalar $R$ of the unphysical spacetime can be arranged to take any preassigned form.
The function $R$ can therefore be regarded as a \textit{conformal gauge source function} which can be chosen arbitrarily.

There remains the freedom to prescribe $\mathring \phi$ on $C_O$.
On an ordinary cone with nowhere vanishing $\Theta$ this freedom can be employed to prescribe the initial data for the conformal factor, $\Theta|_{C_O}$ (it clearly needs to be the restriction to $C_O$ of a smooth spacetime function).
In this work we are particularly interested in the case where the vertex of the cone is located at past timelike infinity $i^-$, where, by definition, $\Theta =0$
(note that this requires to take $\lambda=0$).
Then the gauge freedom to choose $\mathring \phi$ can be employed to prescribe the function $s$ on $C_{i^-}$.
To see that, let us assume we have been given a smooth solution $(g_{\mu\nu}, \Theta, s, L_{\mu\nu}, d_{\mu\nu\sigma}{}^{\rho})$ of the MCFE
to the future of $C_{i^-}$, at least in some neighbourhood of $i^-$, by which we also mean that the solution admits a smooth extension through $C_{i^-}$.
(When $\Theta$ vanishes e.g.\ on one of two transversally intersecting null hypersurfaces one might put forward a similar argument.)
In particular the function $s$ is smooth. According to \eq{conf5} (with $\lambda=0$), it can be written away from $C_{i^-}$ as
%
\[ s= \frac{1}{2}\Theta^{-1} \nabla_{\mu}\Theta\nabla^{\mu}\Theta\;, \]
with the right-hand side smoothly extendable across $C_{i^-}$.
Under the conformal rescaling
\begin{eqnarray}
\Theta \mapsto \Theta^*:=\phi\,\Theta\;, \quad  g_{\mu\nu} \mapsto g^*_{\mu\nu}:=\phi^{2}g_{\mu\nu}\;, \quad \phi>0\;,
\label{rescaling}
\end{eqnarray}
the function $s$ becomes
\begin{equation}
 s^* = \phi^{-1}\Big( \frac{1}{2}\Theta \phi^{-2}\nabla^{\mu}\phi \nabla_{\mu}\phi + \phi^{-1} \nabla^{\mu}\Theta \nabla_{\mu}\phi + s\Big)\;.
\label{trafo_s_s'}
\end{equation}
Evaluation of this expression on $C_{i^-}$ yields
\begin{eqnarray}
 \overline{\nabla^{\mu}\Theta\nabla_{\mu} \phi +  \phi\,  s - \phi^{2} s^*}=0
  \;.
 \label{ODE_omega_rho_pre}
\end{eqnarray}
Here and henceforth we use an overbar to denote the restriction of a spacetime object to the initial surface.
Note that $\overline{\nabla^{\mu}\Theta}$ is tangent to $\Scri$, so \eq{ODE_omega_rho_pre} does not involve transverse derivatives of $\phi$ on $\scri$.
Let us prescribe $\overline s^*$ (as a matter of course it needs to be the restriction of a smooth spacetime function) and assume for the moment that some positive solution of
\eq{ODE_omega_rho_pre} exists,%
\footnote{In case of a negative $\overline s^*$, the gauge transformation would change the sign of $\Theta$.}
which we denote by $\mathring \phi$.
We take $\mathring \phi$  as initial datum for \eq{R_R*}.
We would like to have a $\mathring \phi$ which is the restriction to $C_{i^-}$ of a smooth spacetime function, so that we can apply the Cagnac-Dossa theorem, which  would supply us with a  function $\phi$ solving \eq{R_R*} and satisfying $\phi|_{C^-}=\mathring\phi$. Via the conformal rescaling \eq{conformal_covariance1}-\eq{conformal_covariance5}
we then would be led to a new solution of the MCFE with preassigned functions $R^*$ and $\overline s^*$ which represents the same physical solution
we started with.

The crucial point, which remains to be checked, is whether a solution of~\eq{ODE_omega_rho_pre} exists with the desired properties.
The following lemma, which is proven in Appendix~\ref{cone_smoothness}, shows that
this is indeed the case
(cf.\ \cite[Appendix~A]{CFP} where an alternative proof is given).
%
\begin{lemma}
\label{extension_s}
Consider any smooth solution of the MCFE in $3+1$  dimensions
in some neighbourhood $\mycal U$ to the future of $i^-$, smoothly extendable through $C_{i^-}$, which satisfies
 \begin{equation}
 s|_{i^-} \ne 0\;.
\end{equation}
 Let $\overline s^*$ be the restriction of a smooth spacetime function on $\mycal U\cap \partial J^+(i^-)$ with $\overline s^*|_{i^-}\ne 0$
and $\lim_{r\rightarrow 0}\partial_r\overline s^* = 0$.%
 \footnote{$r$ is a suitable (e.g.\ an affine) parameter along the null geodesics emanating from $i^-$, see Section~\ref{adapted_null_coord} and Appendix~\ref{cone_smoothness} for more details.}
 Then \eq{ODE_omega_rho_pre} is a Fuchsian
ODE and for every solution $\mathring \phi$  (note that the solution set is non-empty) it holds that
\begin{equation}
\mathrm{sign}(\mathring\phi|_{i^-} ) =   \mathrm{sign}(s|_{i^-}) \mathrm{sign}(s^*|_{i^-})
  \;,
\end{equation}
and $\mathring \phi$ is the restriction to $C_{i^-}$ of a smooth spacetime function.
In particular, if $\mathrm{sign}(s|_{i^-})= \mathrm{sign}(s^*|_{i^-})$ the function $\mathring\phi$ will be positive sufficiently close to $i^-$.
\end{lemma}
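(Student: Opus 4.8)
The plan is to exploit the fact that $\Theta$ vanishes identically on $C_{i^-}$ in order to turn \eq{ODE_omega_rho_pre} into an ODE along the null generators, and then to read off the behaviour at the vertex from the Hessian equation \eq{conf3}.

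First I would introduce adapted null coordinates $(r,\omega^A)$ on $C_{i^-}$, with $r$ an affine parameter along the null generators emanating from $i^-$. Since $\overline\Theta\equiv 0$, every derivative of $\Theta$ tangent to the cone vanishes, so $\overline{\nabla_\mu\Theta}$ is proportional to the null conormal of $C_{i^-}$; because the cone is a null hypersurface, this conormal is, after raising an index, proportional to the generator field $\partial_r$. Hence $\overline{\nabla^\mu\Theta}\,\nabla_\mu\phi=\beta\,\partial_r\phi$ for some $\beta=\beta(r,\omega)$ and, crucially, no angular derivatives of $\phi$ enter. Consequently \eq{ODE_omega_rho_pre} decouples into a family (labelled by $\omega$) of first-order Riccati equations $\beta\,\partial_r\phi+\overline s\,\phi-\overline s^*\phi^2=0$ along each generator, which already accounts for the word ``ODE'' in the statement.

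Next I would determine the leading behaviour of $\beta$ at the vertex. Evaluating \eq{conf3} at $i^-$, where $\Theta=0$, gives $\nabla_\mu\nabla_\nu\Theta|_{i^-}=s|_{i^-}g_{\mu\nu}|_{i^-}$; since $s|_{i^-}\neq0$ the Hessian of $\Theta$ at $i^-$ is nondegenerate, so in normal coordinates $\Theta=\tfrac12 s|_{i^-}g_{\alpha\beta}x^\alpha x^\beta+O(|x|^3)$ and $\nabla^\mu\Theta=s|_{i^-}x^\mu+O(|x|^2)$. Along a generator $x^\mu=r\ell^\mu$ this yields $\beta=s|_{i^-}\,r+O(r^2)$, so the coefficient of $\partial_r\phi$ vanishes precisely to first order at $r=0$. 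Dividing by $\beta$ therefore recasts the Riccati equation as $r\,\partial_r\phi=\tfrac{1}{s|_{i^-}}\big(\overline s^*\phi^2-\overline s\,\phi\big)+\dots$, i.e.\ a Fuchsian ODE with a regular singular point at the vertex $r=0$. The indicial analysis then pins down the value at $i^-$: requiring the solution to stay bounded forces the right-hand side to vanish at $r=0$, i.e.\ $\mathring\phi|_{i^-}\big(s|_{i^-}-s^*|_{i^-}\mathring\phi|_{i^-}\big)=0$, whose nontrivial root is $\mathring\phi|_{i^-}=s|_{i^-}/s^*|_{i^-}$. Hence $\sign(\mathring\phi|_{i^-})=\sign(s|_{i^-})\sign(s^*|_{i^-})$, with positivity near $i^-$ when the two signs agree. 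Linearising about this root gives the model equation $r\partial_r\psi=\psi$, so the admissible exponent is $+1$; the other root $\phi=0$ carries exponent $-1$ and is discarded since it produces solutions blowing up like $1/r$. This confirms that every solution in the relevant (bounded, smooth) class takes the asserted value at the vertex.

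The main obstacle is the final claim, that $\mathring\phi$ is the restriction to $C_{i^-}$ of a \emph{smooth spacetime} function rather than merely a smooth function of $(r,\omega)$. Here I would invoke the existence and regularity theory for Fuchsian ODEs with positive indicial exponent to produce the solution and to control its $r$-expansion, and then feed this into the cone-smoothness criteria reviewed in the Appendix. The hypothesis $\lim_{r\to0}\partial_r\overline s^*=0$ enters exactly at this stage: it removes the $O(r)$ term that a generic restriction of a smooth function would carry at the vertex, ensuring that the angular coefficients in the expansion of $\mathring\phi$ have the parities required for cone-smoothness. Verifying that the Fuchsian solution indeed satisfies these parity and regularity conditions — and hence extends smoothly off the cone — is the delicate technical heart of the argument, and is where I expect the bulk of the work (and the appeal to the Appendix) to lie.
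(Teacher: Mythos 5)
Your reduction of \eq{ODE_omega_rho_pre} to a Riccati ODE along the generators, the identification of the coefficient $\beta=\overline{\nabla^\mu\Theta}\partial_\mu r$ as $s|_{i^-}r+O(r^2)$, and the indicial analysis giving $\mathring\phi|_{i^-}=s|_{i^-}/s^*|_{i^-}$ all match the paper's argument; the only cosmetic difference is that you obtain the leading behaviour of $\beta$ from a normal-coordinate Taylor expansion of $\Theta$ at the vertex, whereas the paper gets the same information by evaluating the $\overline g^{AB}$-trace of \eq{conf3} on the cone (giving $\tau\,\overline{\partial_0\Theta}=2\nu_0\overline s$) together with the regularity of $\tau$ and $\nu_0$, and uses \eq{conf4} to see $\partial_r\overline s|_{i^-}=0$. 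Both routes are sound.

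The genuine gap is exactly where you locate it: the cone-smoothness of $\mathring\phi$. Appealing to ``existence and regularity theory for Fuchsian ODEs with positive indicial exponent'' does not close it, for two reasons. First, the equation is a nonlinear (Riccati) Fuchsian equation, so linear Fuchsian theory does not apply directly; second, even granting a smooth solution in $(r,x^A)$, smoothness in these coordinates is strictly weaker than membership in $\mycal C^\infty(C_O)$, which requires the specific structure of the angular coefficients in Proposition~\ref{prop_cone_smooth}. The paper's device for both problems is the substitution $\gamma = e^{-\int_\varepsilon^r \hat r^{-1}\hat\omega\,\mathrm d\hat r}\,\mathring\phi^{-1}$, which linearizes the Riccati equation to $r^2\partial_r\gamma+\zeta=0$ with $\zeta$ cone-smooth; this can be integrated in closed form, and cone-smoothness of $r\gamma$ and hence of $\mathring\phi$ then follows from the closure of $\mycal C^\infty(C_O)$ under exponentials and the weighted integrals of Lemma~\ref{lemma_cone_smooth}. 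In that explicit integration the hypothesis $\lim_{r\to0}\partial_r\overline s^*=0$ is used not to fix ``parities'' but to kill the $O(r)$ term in $\zeta$, which upon integrating $\hat r^{-2}\zeta$ would otherwise produce a $\log r$ term and destroy cone-smoothness (cf.\ footnote~\ref{comment_rho}). The explicit formula also shows that \emph{every} (nontrivial) solution of the ODE — not only those assumed bounded a priori — is cone-smooth with the asserted value at the vertex, which is what the lemma claims.
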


\begin{remark}
 Note that solutions with $s|_{i^-}=0$ would satisfy $\mathrm d \Theta =0$ on $\scri^-$, which is why the corresponding class of solutions is not of physical interest.
\end{remark}

To sum it up, due the conformal covariance of the MCFE the functions $R$ and $s|_{C_{i^-}}$ can and will be regarded as gauge source functions.

\section{Conformal wave equations (CWE)}
\label{cwe}
\subsection{Derivation of the conformal wave equations}
 \label{ss31VIII11}

In this section we  derive a system of wave equations from the MCFE (\ref{conf1})-(\ref{conf6}).
Recall that the unknowns are $g_{\mu\nu}$, $\Theta$, $s$, $L_{\mu\nu}$ and $d_{\mu\nu\sigma}{}^{\rho}$,
while the Ricci scalar $ R$ (and the function $\overline s$ or $\overline \Theta$, respectively, depending on the characteristic initial surface) are considered as
gauge functions.
The cosmological constant $\lambda$ is allowed to be non-vanishing in this section.

\subsubsection*{Derivation of an appropriate second-order system}

From (\ref{conf1}) and (\ref{conf2}) we obtain (with $\Box_{ g}\equiv \nabla^{\mu}\nabla_{\mu}$)
\begin{eqnarray*}
\Box_{ g}  L_{\mu\nu} -  R_{\mu\kappa}  L_{\nu}{}^{\kappa} - R_{\alpha\mu\nu}{}^{\kappa} L_{\kappa}{}^{\alpha} - \nabla_{\mu}\nabla_{\alpha} L_{\nu}{}^{\alpha}
 =d_{\mu}{}^{\alpha}{}_{\nu}{}^{\rho} \nabla_{\alpha} \nabla_{\rho}\Theta
 \;.
\end{eqnarray*}
Using the definition \eq{dfn_Schouten} of the Schouten tensor,
together with the contracted Bianchi identity, 
we find
\begin{equation}
 \nabla_{\mu} L_{\nu}{}^{\mu} = \frac{1}{6}\nabla_{\nu} R
 \;,\ \label{BianchiL}
\end{equation}
and thus
\begin{eqnarray*}
 \Box_{ g}  L_{\mu\nu} -  R_{\mu\kappa}  L_{\nu}{}^{\kappa} - R_{\alpha\mu\nu}{}^{\kappa} L_{\kappa}{}^{\alpha}
 -\frac{1}{6}\nabla_{\mu}\nabla_{\nu} R &=& d_{\mu}{}^{\alpha}{}_{\nu}{}^{\rho} \nabla_{\alpha} \nabla_{\rho}\Theta
 \;.
\end{eqnarray*}
We combine the right-hand side with \eq{conf3}, and employ  \eq{dfn_rescaled_weyl} as well as  (\ref{conf6})  to transform the third term on the left-hand side to end up with a wave equation for the Schouten tensor (suppose for the time being that $g_{\mu\nu}$ is given, cf.\ below),
\begin{eqnarray}
  \Box_{ g} L_{\mu\nu} - 4 L_{\mu\kappa} L_{\nu}{}^{\kappa} +  g_{\mu\nu}| L|^2
  + 2\Theta d_{\mu\alpha\nu}{}^{\rho}  L_{\rho}{}^{\alpha}
 = \frac{1}{6}\nabla_{\mu}\nabla_{\nu} R
  \label{waveeqn_L} \label{vconf1}
 \;,
\end{eqnarray}
where we have set
\begin{equation*}
 |L|^2 := L_{\mu}{}^{\nu}L_{\nu}{}^{\mu}
  \;.
\end{equation*}

Next, let us consider the function $ s$. From \eq{conf4}, \eq{BianchiL} and \eq{conf3} we deduce the wave equation
\begin{eqnarray}
 \Box_{ g}  s &=& -\nabla_{\mu} L_{\nu}{}^{\mu}\nabla^{\nu}\Theta - L^{\mu\nu}\nabla_{\mu}\nabla_{\nu}\Theta
 \nonumber
\\
    &=&  \Theta| L|^2  -\frac{1}{6}\nabla_{\nu} R\,\nabla^{\nu}\Theta  - \frac{1}{6} s  R
 \;.
 \label{vconf2}
\end{eqnarray}
The definition of $ s$ provides a wave equation for the conformal factor,
\begin{eqnarray}
 \Box_{ g}\Theta &=& 4 s-\frac{1}{6} \Theta  R
 \;.
 \label{vconf3}
\end{eqnarray}

To obtain a wave equation for the rescaled Weyl tensor $d_{\mu\nu\sigma}{}^{\rho}$ \textit{in 3+1 dimensions} one proceeds as follows:
Due to its algebraic properties
the rescaled Weyl tensor satisfies the relation
\begin{equation*}
 \epsilon_{\mu\nu}{}{}^{\alpha\beta}d_{\alpha\beta\lambda\rho} = \epsilon_{\lambda\rho}{}^{\alpha\beta}d_{\mu\nu\alpha\beta}
 \;,
\end{equation*}
where $\epsilon_{\mu\nu\sigma\rho}$ denotes the totally antisymmetric tensor.
We conclude that (cf.~\cite{pr1})
\begin{eqnarray}
 \nabla_{[\lambda}d_{\mu\nu]\sigma\rho} &=& -\frac{1}{6} \epsilon_{\lambda\mu\nu\kappa} \epsilon^{\alpha\beta\gamma\kappa} \nabla_{\alpha}d_{\beta\gamma\sigma\rho}
 \,=\, \frac{1}{6} \epsilon_{\lambda\mu\nu}{}^{\kappa} \epsilon_{\sigma\rho}{}^{\beta\gamma} \nabla_{\alpha}d_{\beta\gamma\kappa}{}^{\alpha}
  \;.
\label{d_bianchi_divergence}
\end{eqnarray}
This equation  implies the equivalence%
\footnote{\label{foot_rem}
We remark that this equivalence holds only in 4 dimensions.
Any attempt to derive a wave equation for $d_{\mu\nu\sigma}{}^{\rho}$ in dimension $d\geq 5$ seems to lead to singular
terms.
Also, if one uses a different set of variables, like e.g.\ Cotton and Weyl tensor instead of $d_{\mu\nu\sigma}{}^{\rho}$, cf.\ Section~\ref{alternative_system}, the derivation of a regular system of wave equations
seems to be possible merely in the 4-dimensional case.
This is in line with the observation that the conformal field equations provide a good evolution system only in 4 dimensions.
}
\begin{eqnarray*}
 \nabla_{\rho}d_{\mu\nu\sigma}{}^{\rho}=0 \quad \Longleftrightarrow \quad \nabla_{[\lambda}d_{\mu\nu]\sigma\rho}=0
  \;.
\end{eqnarray*}
Equation \eq{conf1} can therefore be replaced by
\begin{eqnarray}
 \nabla_{[\lambda}d_{\mu\nu]\sigma\rho} =0
  \;.
 \label{property_d_tensor}
\end{eqnarray}
Applying $\nabla^{\lambda}$ and commuting the covariant derivatives yields with \eq{conf1}
\begin{eqnarray*}
   \Box_g d_{\mu\nu\sigma\rho}
  + 2R_{\kappa\mu\nu}{}^{\alpha} d_{\sigma\rho\alpha}{}^{\kappa} + 2R_{\alpha[\mu} d_{\nu]}{}^{\alpha}{}_{\sigma\rho}
   + 2R_{\kappa[\mu|\sigma|}{}^{\alpha} d_{\nu]}{}^{\kappa}{}_{\alpha\rho}
   - 2 R_{\alpha\rho\kappa[\mu} d_{\nu]}{}^{\kappa}{}_{\sigma}{}^{\alpha}
   =0
   \;.
\end{eqnarray*}
With \eq{conf6} we end up with a wave equation for the rescaled Weyl tensor,
\begin{eqnarray}
  \Box_g d_{\mu\nu\sigma\rho}
  -\Theta d_{\mu\nu\kappa}{}^{\alpha}d_{\sigma\rho\alpha}{}^{\kappa}
   + 4\Theta d_{\sigma\kappa[\mu} {}^{\alpha}d_{\nu]\alpha\rho}{}^{\kappa}
 + 2 g_{\sigma[\mu}d_{\nu]\alpha\rho\kappa} L^{\alpha\kappa} &&
 \nonumber
 \\
  - 2g_{\rho[\mu} d_{\nu]\alpha\sigma\kappa} L^{\alpha\kappa}
  + 2d_{\mu\nu\kappa[\sigma} L_{\rho]}{}^{\kappa}
    + 2 d_{\sigma\rho\kappa[\mu} L_{\nu]}{}^{\kappa}
 -\frac{1}{3}R d_{\mu\nu\sigma\rho}  & =&0
   \label{vconf4_pre}
   \;.
\end{eqnarray}
It turns out that this equation does not take its simplest form yet.
To see this let us exploit \eq{property_d_tensor} again.
Invoking the Bianchi identity and \eq{conf2} we find
\begin{eqnarray*}
 0&=& \Theta\nabla_{[\lambda}d_{\mu\nu]\sigma\rho}\,=\, \nabla_{[\lambda}W_{\mu\nu]\sigma\rho} -(\nabla_{[\lambda}\Theta) d_{\mu\nu]\sigma\rho}
 \\
  &=& \frac{2}{3}\big( g_{\sigma\nu}\nabla_{[\lambda}L_{\mu]\rho} +g_{\mu\rho}\nabla_{[\lambda}L_{\nu]\sigma}
   + g_{\sigma\mu}\nabla_{[\nu}L_{\lambda]\rho}  + g_{\lambda\rho}\nabla_{[\nu}L_{\mu]\sigma}
  \\
   &&  + g_{\sigma\lambda}\nabla_{[\mu}L_{\nu]\rho} + g_{\nu\rho}\nabla_{[\mu}L_{\lambda]\sigma} \big)
 -(\nabla_{[\lambda}\Theta )d_{\mu\nu]\sigma\rho}
   \\
    &=& g_{\rho[\lambda} d_{\mu\nu]\sigma}{}^{\alpha} \nabla_{\alpha}\Theta - g_{\sigma[\lambda} d_{\mu\nu]\rho}{}^{\alpha} \nabla_{\alpha}\Theta
 -(\nabla_{[\lambda}\Theta )d_{\mu\nu]\sigma\rho}
 \;.
\end{eqnarray*}
Applying $\nabla^{\lambda}$ and using \eq{property_d_tensor}, \eq{conf3} and \eq{vconf3} we are led to
\begin{eqnarray*}
 0 &=& 3\nabla^{\lambda}(g_{\rho[\lambda} d_{\mu\nu]\sigma}{}^{\alpha} \nabla_{\alpha}\Theta - g_{\sigma[\lambda} d_{\mu\nu]\rho}{}^{\alpha} \nabla_{\alpha}\Theta
 -\nabla_{[\lambda}\Theta \,d_{\mu\nu]\sigma\rho})
 \\
&=&
2d_{\mu\nu[\sigma}{}^{\alpha}  \nabla_{\rho]}\nabla_{\alpha}\Theta
     + 2g_{\rho[\mu} d_{\nu]\lambda\sigma}{}^{\alpha}  \nabla^{\lambda}\nabla_{\alpha}\Theta
  - 2g_{\sigma[\mu} d_{\nu]\lambda\rho}{}^{\alpha}\nabla^{\lambda} \nabla_{\alpha}\Theta
  \\
  && -\Box \Theta \,d_{\mu\nu\sigma\rho}
   -\nabla^{\lambda}\nabla_{\nu}\Theta \,d_{\lambda\mu\sigma\rho}    -\nabla^{\lambda}\nabla_{\mu}\Theta \,d_{\nu\lambda\sigma\rho}
   \\
&=&
     2\Theta g_{\sigma[\mu} d_{\nu]\lambda\rho}{}^{\alpha}L_{\alpha}{}^{\lambda}  -2\Theta g_{\rho[\mu} d_{\nu]\lambda\sigma}{}^{\alpha} L_{\alpha}{}^{\lambda}
      +  2\Theta d_{\mu\nu\alpha[\sigma}L_{\rho]}{}^ { \alpha}
  \\
  &&  +2\Theta d_{\sigma\rho\alpha[\mu} L_{\nu]}{}^{ \alpha}
 +\frac{1}{6}\Theta R d_{\mu\nu\sigma\rho}
 \;.
\end{eqnarray*}
This relation simplifies \eq{vconf4_pre} significantly,
\begin{eqnarray}
  &&\Box_g d_{\mu\nu\sigma\rho}
  -\Theta d_{\mu\nu\kappa}{}^{\alpha}d_{\sigma\rho\alpha}{}^{\kappa}
   + 4\Theta d_{\sigma\kappa[\mu} {}^{\alpha}d_{\nu]\alpha\rho}{}^{\kappa}
  -\frac{1}{2}R d_{\mu\nu\sigma\rho}   =0
   \label{vconf4}
   \;.
\end{eqnarray}

We have found a system of wave equations (\ref{vconf1})-(\ref{vconf3}) and  (\ref{vconf4})
for the fields $ L_{\mu\nu}$, $s$, $\Theta$ and $d_{\mu\nu\sigma}{}^{\rho}$, assuming that $g_{\mu\nu}$ is given.
Now, we drop this assumption, so first of all the system needs to be complemented by an equation for the metric tensor. Taking the trace of \eq{conf6} yields
\begin{equation}
R_{\mu\nu}[g] = 2L_{\mu\nu} + \frac{1}{6} R g_{\mu\nu}
 \label{ricci_schouten}
 \;.
\end{equation}
However, the equations
(\ref{vconf1})-(\ref{vconf3}) and (\ref{vconf4})-\eq{ricci_schouten} do
 \textit{not} form a system of wave equations yet: Equation \eq{ricci_schouten} is not a wave equation due to
the fact that the principal part of the Ricci tensor is not a d'Alembert operator.
Moreover, the  principal part of the wave-operator $\Box_g$ is not a  d'Alembert operator when acting on tensors of valence $\geq 1$ and when the metric tensor is part of the unknowns, for the corresponding expression contains second-order derivatives of the metric due to which the principal part is not $g^{\mu\nu}\partial_{\mu}\partial_{\nu}$ anymore.
Consequently  \eq{waveeqn_L} and \eq{vconf4} are no wave equations, as well.

We need to impose an appropriate gauge condition to transform these equations into wave equations, which is accomplished subsequently.

\subsubsection*{Generalized wave-map gauge}
\label{reduced_cwe}

Let us introduce the so-called \textit{$\hat g$-generalized wave-map gauge} (cf.\ \cite{F4, F5,CCM2}),
where $\hat g_{\mu\nu}$ denotes some \textit{target metric}.
For that we define the \textit{wave-gauge vector}
\begin{equation*}
 H^{\sigma} := g^{\alpha\beta}(\Gamma^ {\sigma}_{\alpha\beta} -\hat\Gamma^{\sigma}_{\alpha\beta} ) -  W^{\sigma}
 \;.
\end{equation*}
Herein $\hat\Gamma^{\sigma}_{\alpha\beta}$ are the Christoffel symbols of  $\hat g_{\mu\nu}$.
Moreover,
\begin{equation*}
 W^{\sigma} = W^{\sigma}(x^{\mu}, g_{\mu\nu},  s,\Theta, L_{\mu\nu}, d_{\mu\nu\sigma}{}^{\rho})
\end{equation*}
is an arbitrary vector field, which is allowed to depend upon the coordinates, and possibly upon $g_{\mu\nu}$ as well as all the other fields which appear in the MCFE,%
\footnote{I am grateful to L. Andersson for pointing that out. However, in view of the constraint equations we shall consider later on for convenience merely those $W^{\sigma}$'s which depend just on the coordinates.}
but not upon derivatives thereof.
The freedom to prescribe $W^{\sigma}$ reflects the freedom to choose coordinates off the initial surface.
We then impose the \textit{$\hat g$-generalized wave-map gauge condition}
\begin{equation*}
 H^{\sigma} \,=\, 0\;.
 \end{equation*}
The \textit{reduced Ricci tensor} $R^{(H)}_{\mu\nu}$ is defined as
\begin{equation}
 R^{(H)}_{\mu\nu} :=  R_{\mu\nu} -  g_{\sigma(\mu}\hat\nabla_{\nu)} H^{ \sigma}
 \label{ricci_riccired}
 \;,
\end{equation}
where $\hat\nabla$  denotes the covariant derivative associated to the target metric.
The principal part of the reduced Ricci tensor \textit{is} a d'Alembert operator.

Furthermore, we define a \textit{reduced wave-operator} as follows:
We observe that for any covector field $v_{\lambda}$ we have
\begin{eqnarray*}
 \Box_g v_{\lambda} &=& g^{\mu\nu}\partial_{\mu}\partial_{\nu}v_{\lambda}-g^{\mu\nu}(\partial_{\mu}\Gamma^{\sigma}_{\nu\lambda})v_{\sigma} +f_{\lambda}(g,\partial g, v, \partial v)
 \\
 &=& g^{\mu\nu}\partial_{\mu}\partial_{\nu}v_{\lambda}
  +(R_{\lambda}{}^{\sigma} - \partial_{\lambda}(g^{\mu\nu}\Gamma^{\sigma}_{\mu\nu}))v_{\sigma} +f_{\lambda}(g,\partial g, v, \partial v)
  \\
 &=& g^{\mu\nu}\partial_{\mu}\partial_{\nu}v_{\lambda}
  +(R_{\lambda}{}^{\sigma} - \partial_{\lambda}H^{\sigma})v_{\sigma} +f_{\lambda}(g,\partial g, v, \partial v,\hat g, \partial\hat g, \partial^2 \hat g,\partial W)
   \\
 &=& g^{\mu\nu}\partial_{\mu}\partial_{\nu}v_{\lambda}
  +(R_{\mu\lambda}^{(H)} + g_{\sigma[\lambda}\hat\nabla_{\mu]} H^{ \sigma})v^{\mu} +f_{\lambda}(g,\partial g, v, \partial v,\hat g, \partial\hat g, \partial^2 \hat g,\partial W)
   \;.
\end{eqnarray*}
Similarly, the action on a vector field $v^{\lambda}$ yields
\begin{eqnarray*}
 \Box_g v^{\lambda} &=&
  g^{\mu\nu}\partial_{\mu}\partial_{\nu}v^{\lambda}
  -(R_{(H)}^{\mu\lambda} +  g^{\sigma[\lambda}\hat\nabla_{\sigma} H^{ \mu]})v_{\mu} +f^{\lambda}(g,\partial g, v, \partial v,\hat g, \partial\hat g, \partial^2 \hat g,\partial W)
   \;.
\end{eqnarray*}
This motivates to define a \textit{reduced wave-operator} $\Box_g^{(H)}$ via its action on (co)vector  fields in the following way:
\begin{eqnarray*}
 \Box_g^{(H)}v_{\lambda} &:=& \Box_g v_{\lambda} - g_{\sigma[\lambda}(\hat\nabla_{\mu]} H^{ \sigma})v^{\mu}
  +(2L_{\mu\lambda} -R^{(H)}_{\mu\lambda} + \frac{1}{6}Rg_{\mu\lambda})v^{\mu}
   \;,
   \\
  \Box_g^{(H)}v^{\lambda} &:=& \Box_g v^{\lambda}  +  g^{\sigma[\lambda}(\hat\nabla_{\sigma} H^{ \mu]})v_{\mu}
  -(2L^{\mu\lambda} -R_{(H)}^{\mu\lambda} + \frac{1}{6}Rg^{\mu\lambda})v_{\mu}
   \;.
\end{eqnarray*}
For arbitrary tensor fields we set
\begin{eqnarray*}
 \Box_g^{(H)}v_{\alpha_1\dots\alpha_n}{}^{\beta_1\dots\beta_m} &:=&
   \Box_g v_{\alpha_1\dots\alpha_n}{}^{\beta_1\dots\beta_m}
     - \sum_i g_{\sigma[\alpha_i}(\hat\nabla_{\mu]} H^{ \sigma})v_{\alpha_1\dots}{}^{\mu}{}_{\dots\alpha_n}{}^{\beta_1\dots\beta_m}
 \\
 && +\sum_i(2L_{\mu\alpha_i} -R^{(H)}_{\mu\alpha_i} + \frac{1}{6}Rg_{\mu\alpha_i})v_{\alpha_1\dots}{}^{\mu}{}_{\dots\alpha_n}{}^{\beta_1\dots\beta_m}
 \\
  &&  + \sum_i g^{\sigma[\beta_i}(\hat\nabla_{\sigma} H^{ \mu]})v_{\alpha_1\dots\alpha_n}{}^{\beta_1\dots}{}_{\mu}{}^{\dots\beta_m}
 \\
  && -\sum_i(2L^{\mu\beta_i} -R_{(H)}^{\mu\beta_i} + \frac{1}{6}Rg^{\mu\beta_i})v_{\alpha_1\dots\alpha_n}{}^{\beta_1\dots}{}_{\mu}{}^{\dots\beta_m}
   \;,
\end{eqnarray*}
which is a proper wave-operator even if $g_{\mu\nu}$ is part of the unknowns since $L_{\mu\nu}$ and the gauge source function $R$ are regarded as independent of $g_{\mu\nu}$.
Note that the action of $\Box_g$ and $\Box^{(H)}_g$ coincides on scalars.
Moreover, if $H^{\sigma}=0$, and $L_{\mu\nu}$ and $R$ are known to be the Schouten tensor and the Ricci scalar of $g_{\mu\nu}$, respectively, then the action of
$\Box_g$ and $\Box^{(H)}_g$ coincides on all tensor fields.

\subsubsection*{Conformal wave equations}

Let us reconsider the system \eq{vconf1}, \eq{vconf2}, \eq{vconf3}, \eq{vconf4} and \eq{ricci_schouten}.
We replace the Ricci tensor by the reduced Ricci tensor and the wave-operator by the reduced wave-operator to
end up with a closed regular system of wave equations for $g_{\mu\nu}$, $\Theta$, $s$, $L_{\mu\nu}$ and $d_{\mu\nu\sigma}{}^{\rho}$,
\begin{eqnarray}
 \Box^{(H)}_{ g} L_{\mu\nu}&=&  4 L_{\mu\kappa} L_{\nu}{}^{\kappa} -  g_{\mu\nu}| L|^2
  - 2\Theta d_{\mu\sigma\nu}{}^{\rho}  L_{\rho}{}^{\sigma}
 + \frac{1}{6}\nabla_{\mu}\nabla_{\nu} R
  \label{cwe1}
  \;,
  \\
  \Box_g  s  &=& \Theta| L|^2 -\frac{1}{6}\nabla_{\kappa} R\,\nabla^{\kappa}\Theta  - \frac{1}{6} s  R
  \label{cwe2}
  \;,
  \\
  \Box_{ g}\Theta &=& 4 s-\frac{1}{6} \Theta  R
  \label{cwe3}
  \;,
  \\
  \Box^{(H)}_g d_{\mu\nu\sigma\rho}
  &=& \Theta d_{\mu\nu\kappa}{}^{\alpha}d_{\sigma\rho\alpha}{}^{\kappa}
   - 4\Theta d_{\sigma\kappa[\mu} {}^{\alpha}d_{\nu]\alpha\rho}{}^{\kappa}
  + \frac{1}{2}R d_{\mu\nu\sigma\rho}
  \label{cwe4}
  \;,
  \\
  R^{(H)}_{\mu\nu}[g] &=& 2L_{\mu\nu} + \frac{1}{6} R g_{\mu\nu}
  \label{cwe5}
  \;.
\end{eqnarray}
Henceforth the system \eq{cwe1}-\eq{cwe5} will be called \textit{conformal wave equations}
(CWE).

\begin{remark}Since $ R$ is regarded as a
gauge degree of freedom and not as unknown, there is no need to worry about its second-order derivatives appearing in \eq{cwe1}.
Note, however,  that, unlike $W^{\sigma}$, the gauge source function $R$  cannot be allowed to depend upon the fields  $L_{\mu\nu}$, $d_{\mu\nu\sigma\rho}$, $\Theta$ and $s$,
due to the fact that \eq{cwe1} contains second-order derivatives of $R$.
Since $	\nabla g=0$, $R$ can in principle be allowed to depend upon $g_{\mu\nu}$.
\end{remark}

\subsection{Consistency with the gauge condition}
\label{subsect_consist_gauge}

Let us analyse  now consistency of the CWE with the gauge conditions we imposed.
More concretely, we consider a characteristic initial value problem,
where, for definiteness, we think of two transversally intersecting null hypersurfaces or a light-cone,
 and assume that we have been given  initial
data ($\mathring g_{\mu\nu}$, $\mathring s$, $\mathring \Theta$, $\mathring L_{\mu\nu}$, $\mathring d_{\mu\nu\sigma}{}^{\rho}$). We further assume that there exists a smooth solution ($g_{\mu\nu}$, $s$, $\Theta$, $L_{\mu\nu}$, $d_{\mu\nu\sigma}{}^{\rho}$) of the CWE with gauge source function $R$ which induces these data.
We aim to work out conditions, which need to be satisfied initially, which guarantee consistency
with the gauge conditions in the sense that the solution implies $H^{\sigma}=0$ and $R_g=R$, where $R_g:=R[g]$ denotes the curvature scalar of $g_{\mu\nu}$. (Recall that there is, depending on the type of the characteristic initial surface, the additional gauge freedom to prescribe $\overline \Theta$ or $\overline s$, but here consistency is trivial.)

Let us outline the strategy.
To make sure that  $H^{\sigma}$ and $R-R_g$ vanish we shall derive a linear, homogeneous system of wave equations
for  $H^{\sigma}$  as well as some subsidiary fields, which is fulfilled by any solution of the CWE.
We shall see that it is not necessary to regard $R-R_g$ as an unknown.
We shall assume that all the fields which are regarded as unknowns in this set of equations vanish on the initial surface (in Section~\ref{sec_applicability} these assumptions will be justified).
Due to the uniqueness of solutions of wave equations, which is established by standard energy estimates, cf.\ e.g.\  \cite{friedlander},
we then conclude that the trivial solution is the only one and that the fields involved need to vanish everywhere.

\subsubsection*{Some properties of solutions of the CWE}
Let establish some properties of solutions of the CWE.
First of all we show that
 the tensors $g_{\mu\nu}$ and $L_{\mu\nu}$ are symmetric,
supposing that their initial data are (and that $d_{\mu\nu\sigma\rho}$ satisfies a certain symmetry property on the initial surface).
\begin{lemma}
\label{lem_g_sym}
 Assume that the initial data on a characteristic initial surface $S$  of some smooth solution of the CWE are such that
$g_{\mu\nu}|_S$ is the restriction to $S$ of a Lorentzian metric, that $ L_{[\mu\nu]}|_S=0$ and $ d_{\mu\nu\sigma\rho}|_S=d_{\sigma\rho\mu\nu}$.
 Then the solution has the following properties:
 \begin{enumerate}
  \item $g_{\mu\nu}$ and $L_{\mu\nu}$ are symmetric tensors,
  \item $ d_{\mu\nu\sigma\rho}=  d_{\sigma\rho\mu\nu}$.
 \end{enumerate}
\end{lemma}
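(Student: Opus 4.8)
The plan is to show that the skew parts of the three tensorial unknowns propagate trivially off the initial surface $S$. I would introduce the fields $\check g_{\mu\nu}:=g_{[\mu\nu]}$, $\check L_{\mu\nu}:=L_{[\mu\nu]}$ and $\check d_{\mu\nu\sigma\rho}:=d_{\mu\nu\sigma\rho}-d_{\sigma\rho\mu\nu}$, all of which vanish on $S$ by hypothesis. The goal is to derive from the CWE a closed, linear, homogeneous system for the triple $(\check g,\check L,\check d)$ and then to invoke uniqueness of solutions of such a system (energy estimates, cf.\ \cite{friedlander}) to conclude that the skew fields vanish throughout the domain of determinacy of $S$. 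This is precisely the content of statements~1 and~2.

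For the Schouten and Weyl fields the argument is a direct antisymmetrisation of \eq{cwe1} and \eq{cwe4}. Since the Christoffel symbols are symmetric in their lower indices the connection is torsion-free, so $\nabla_\mu\nabla_\nu R$ is symmetric and drops out of the skew part of \eq{cwe1}; moreover the principal part $g^{\alpha\beta}\partial_\alpha\partial_\beta$ of $\Box_g^{(H)}$ commutes with antisymmetrisation of the free indices, producing the clean wave operator acting on $\check L_{\mu\nu}$. The remaining task is to check that every lower-order term, once the appropriate skew part is extracted, vanishes at $(\check g,\check L,\check d)=0$ and is therefore linear homogeneous in these fields. This rests on three observations: (a) on a solution the coefficients $2L_{\mu\lambda}-R^{(H)}_{\mu\lambda}+\tfrac16 Rg_{\mu\lambda}$ appearing in $\Box_g^{(H)}$ vanish identically by \eq{cwe5}; (b) the two $H^\sigma$-coupling terms of $\Box_g^{(H)}$, one per free index, cancel pairwise upon antisymmetrisation when $g$ and $L$ are symmetric, so their net contribution is of order $(\check g,\check L)$; and (c) the algebraic sources $L_{\mu\kappa}L_\nu{}^\kappa$, $g_{\mu\nu}|L|^2$, $\Theta d_{\mu\sigma\nu}{}^\rho L_\rho{}^\sigma$ (and, in \eq{cwe4}, the quadratic $d\,d$ terms) are symmetric under the relevant index swap whenever the skew fields vanish, as one checks using the pair symmetry and tracelessness of a genuine Weyl tensor together with the symmetry of $L$. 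The same scheme applied to \eq{cwe4}, now antisymmetrising under the exchange of the two index pairs $(\mu\nu)\leftrightarrow(\sigma\rho)$, yields $\Box_g^{(H)}\check d_{\mu\nu\sigma\rho}=(\text{linear homogeneous in }\check g,\check L,\check d)$.

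The delicate point, and what I expect to be the main obstacle, is the metric equation \eq{cwe5}. Because the reduction term $g_{\sigma(\mu}\hat\nabla_{\nu)}H^\sigma$ is manifestly symmetric one has $R^{(H)}_{[\mu\nu]}=R_{[\mu\nu]}=\partial_{[\mu}\Gamma^\sigma_{\nu]\sigma}$, so the skew part of \eq{cwe5} reads $R_{[\mu\nu]}[g]=2\check L_{\mu\nu}+\tfrac16 R\,\check g_{\mu\nu}$. Now $R_{[\mu\nu]}$ vanishes identically for symmetric $g$ (the Ricci tensor of the connection of a symmetric metric is symmetric), hence it is homogeneous of degree $\ge 1$ in $\check g$ and its derivatives; crucially it carries \emph{no} nondegenerate d'Alembertian principal part $g^{\alpha\beta}\partial_\alpha\partial_\beta\check g_{\mu\nu}$, the would-be wave term cancelling on the skew part. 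Thus \eq{cwe5} does not propagate $\check g$ as a wave on the same footing as $\check L$ and $\check d$; it functions instead as a constraint coupling $\check g$ to $\check L$. I would therefore treat the combined system as a mixed-order homogeneous system — genuine wave equations for $\check L$ and $\check d$ together with the constraint-type relation for $\check g$ — and verify that the energy/uniqueness argument still closes (alternatively, if $g_{\mu\nu}$ is regarded as a symmetric unknown in the underlying Cagnac--Dossa solution framework, then $\check g\equiv 0$ automatically and only $\check L,\check d$ require the wave argument). With vanishing data on $S$ this forces $\check g=\check L=\check d\equiv 0$, establishing the symmetry of $g_{\mu\nu}$ and $L_{\mu\nu}$ and the pair symmetry $d_{\mu\nu\sigma\rho}=d_{\sigma\rho\mu\nu}$.
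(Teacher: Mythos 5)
Your treatment of the Schouten and Weyl-type unknowns follows the paper's route and is essentially correct, but your handling of the metric equation contains a genuine error that breaks the argument. You claim that $R^{(H)}_{[\mu\nu]}=R_{[\mu\nu]}$ ``carries no nondegenerate d'Alembertian principal part $g^{\alpha\beta}\partial_\alpha\partial_\beta g_{[\mu\nu]}$, the would-be wave term cancelling on the skew part.'' This is false. Writing out the Ricci tensor of a not-necessarily-symmetric $g_{\mu\nu}$ in terms of its Christoffel symbols one finds
\begin{equation*}
R_{\mu\nu}=-\tfrac12 g^{\alpha\beta}\partial_\alpha\partial_\beta g_{\mu\nu}
+\tfrac12 g^{\alpha\beta}\partial_\alpha\partial_\nu g_{\beta\mu}
+\tfrac12 g^{\alpha\beta}\partial_\mu\partial_\beta g_{\alpha\nu}
-\tfrac12 g^{\alpha\beta}\partial_\mu\partial_\nu g_{\beta\alpha}
+\dots\;;
\end{equation*}
upon antisymmetrisation in $\mu\nu$ the last term drops and the second and third terms cancel against each other modulo contributions proportional to $g^{[\alpha\beta]}$ (hence linear homogeneous in the skew unknowns), but the \emph{first} term survives as $-\tfrac12 g^{(\alpha\beta)}\partial_\alpha\partial_\beta g_{[\mu\nu]}$. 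The identity ``$R_{[\mu\nu]}=0$ for symmetric $g$'' only says the operator annihilates symmetric arguments; its linearisation in an antisymmetric direction is a genuine wave operator. This is precisely the principal part that survives the reduction, since the subtracted term $g_{\sigma(\mu}\hat\nabla_{\nu)}H^{\sigma}$ in \eq{ricci_riccired} is symmetric and therefore only removes second-derivative terms from the symmetric part.

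Consequently the skew part of \eq{cwe5} is not a mere constraint: it is a bona fide wave equation, $-\tfrac12 g^{\alpha\beta}\partial_\alpha\partial_\beta g_{[\mu\nu]}+\dots=2L_{[\mu\nu]}+\tfrac16 R\,g_{[\mu\nu]}$, for $g_{[\mu\nu]}$, and this is exactly how the paper closes the system: equations \eq{symmetry_C}--\eq{antisym_R} form a linear, homogeneous system of \emph{three} wave equations for $(g_{[\mu\nu]},L_{[\mu\nu]},d_{\mu\nu\sigma\rho}-d_{\sigma\rho\mu\nu})$, to which the standard uniqueness result applies. Your two fallbacks do not repair the gap: a ``mixed-order system'' in which the $g_{[\mu\nu]}$-equation has no invertible principal part cannot propagate $g_{[\mu\nu]}=0$ off $S$, so the energy/uniqueness argument does not close; and declaring $g_{\mu\nu}$ symmetric as part of the solution framework begs the very question the lemma is meant to settle --- the remark following the lemma stresses that a priori $g_{\mu\nu}$ could become non-symmetric away from the initial surface. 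Once you replace your claim by the correct computation of the principal part of $R^{(H)}_{[\mu\nu]}$ acting on a non-symmetric $g$, your argument coincides with the paper's.
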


\begin{remark}
A priori it might  happen that $g_{\mu\nu}$ becomes non-symmetric away from the initial surface.
However, the lemma shows that the tensor $g_{\mu\nu}$ does indeed define a metric as long as it does not degenerate  (i.e.\ at least sufficiently close to the vertex or the intersection manifold, respectively).
Later on, the initial data will be constructed from certain free data such that  all the hypotheses of Lemma~\ref{lem_g_sym} are satisfied, we thus will  assume throughout that $g_{\mu\nu}$ and $L_{\mu\nu}$ have their usual symmetry properties.
\end{remark}

\begin{proof}
Equation \eq{cwe4} yields%
\footnote{
\label{non-symmetric_g}
The indices are raised and lowered as follows: $v^{\mu} := g^{\mu\nu}v_{\nu}$ and $w_{\mu} := g_{\mu\nu}w^{\nu}$. Note for this that $g_{\mu\nu}$ is non-degenerated sufficiently close to $S$.  The definition of the Ricci tensor, which appears in \eq{cwe5}, in terms of Christoffel symbols which in turn are expressed in terms of $g$ make sense even if $g$ is not symmetric.}
\begin{eqnarray}
  \Box^{(H)}_g (d_{\mu\nu\sigma\rho}- d_{\sigma\rho\mu\nu}) &=& 4\Theta [g^{[\alpha\beta]}d_{\sigma\beta\mu\kappa} d_{\rho\alpha\nu}{}^{\kappa}
 - g^{[\gamma\kappa]}d_{\sigma\beta\mu\kappa}d_{\rho}{}^{\beta}{}_{\nu\gamma}]
 \nonumber
\\
 && \hspace{-10em} + 2\Theta g^{\alpha\beta}g^{\kappa\gamma}  [d_{\rho\alpha\nu\gamma} (d_{\mu\kappa\sigma\beta}-d_{\sigma\beta\mu\kappa})
- d_{\sigma\kappa\mu\beta}(d_{\nu\alpha\rho\gamma} -d_{\rho\gamma\nu\alpha} )]
 \nonumber
\\
&& \hspace{-10em} + \frac{1}{2}R (d_{\mu\nu\sigma\rho}- d_{\sigma\rho\mu\nu})
 \label{symmetry_C}
 \;.
\end{eqnarray}
From \eq{cwe1} and \eq{cwe5} we find
\begin{eqnarray}
 \Box^{(H)}_{ g} L_{[\mu\nu]}&=&  4g_{[\alpha\beta]}L_{\mu}{}^{\alpha}L_{\nu}{}^{\beta} - g_{[\mu\nu]}| L|^2
  + \Theta g^{\rho\gamma}L_{\rho }{}^{\sigma}(d_{\nu\sigma\mu\gamma}-d_{\mu\gamma\nu\sigma})
 \nonumber
\\
 && + 2\Theta g^{\sigma\kappa}d_{\mu}{}^{\rho}{}_{\nu\sigma}L_{[\rho\kappa]}
 - 2\Theta g^{[\sigma\kappa]}d_{\mu\sigma\nu}{}^{\rho}L_{\rho\kappa}
  \;,
 \label{antisym_L}
\\
  R^{(H)}_{[\mu\nu]}[g_{(\sigma\rho)},g_{[\sigma\rho]}] &=&  2L_{[\mu\nu]} + \frac{1}{6} R g_{[\mu\nu]}
 \label{antisym_R}
 \;.
\end{eqnarray}
The equations \eq{symmetry_C}-\eq{antisym_R} are to be read as a linear, homogeneous system of wave equations  satisfied by $g_{[\mu\nu]}$, $L_{[\mu\nu]}$
and $d_{\mu\nu\sigma\rho}- d_{\sigma\rho\mu\nu}$, and with all the other fields regarded as being given.
Since, by assumption, these fields vanish initially they have to vanish everywhere and the assertion follows.
\qed
\end{proof}

It is useful to derive some more properties of the tensor $d_{\mu\nu\sigma\rho}$.
We emphasize that $d_{\mu\nu\sigma\rho}$ is assumed to be part of some given   solution of the CWE and that, a priori, it
neither needs to be the rescaled Weyl tensor nor does it need to have all its algebraic properties.

\begin{lemma}
\label{lemma_properties_d}
 Assume that $d_{\mu\nu\sigma\rho}$ belongs to a solution of the CWE \eq{cwe1}-\eq{cwe5} for which the hypotheses of
Lemma~\ref{lem_g_sym} are fulfilled. Then the tensor $d_{\mu\nu\sigma\rho}$ has the following properties:
\begin{enumerate}
 \item[(i)] $d_{\mu\nu\sigma\rho}=d_{\sigma\rho\mu\nu}$,
\item[(ii)] $d_{\mu\nu\sigma\rho}$ is anti-symmetric in its first two and last two indices,
 \item[(iii)] $d_{\mu\nu\sigma\rho}$ satisfies the first Bianchi identity, i.e.\  $d_{[\mu\nu\sigma]\rho}=0$,
 \item[(iv)] $d_{\mu\nu\sigma\rho}$ is trace-free,
\end{enumerate}
supposing that (i)-(iv) hold initially.
\end{lemma}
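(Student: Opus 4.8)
The plan is to reuse the subsidiary-system technique from the proof of Lemma~\ref{lem_g_sym}. Observe first that property~(i) is nothing but conclusion~2 of Lemma~\ref{lem_g_sym}, whose hypotheses are assumed here; hence $d_{\mu\nu\sigma\rho}=d_{\sigma\rho\mu\nu}$ holds everywhere and, together with that same lemma, we may treat $g_{\mu\nu}$ and $L_{\mu\nu}$ as symmetric tensors throughout. It therefore remains to establish (ii)--(iv). For each of these I would introduce a defect tensor measuring the failure of the corresponding property: $P_{\mu\nu\sigma\rho}:=d_{\mu\nu\sigma\rho}+d_{\nu\mu\sigma\rho}$ for (ii) (once (i) and antisymmetry in the first pair are known, antisymmetry in the last pair is automatic), $B_{\mu\nu\sigma\rho}:=d_{[\mu\nu\sigma]\rho}$ for (iii), and the trace $t_{\nu\rho}:=g^{\mu\sigma}d_{\mu\nu\sigma\rho}$ for (iv). By hypothesis all of $P$, $B$, $t$ vanish on the initial surface $S$.

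The central observation is that the reduced wave operator $\Box^{(H)}_g$ commutes with permutations of indices of equal variance and with contractions by the metric: indeed $\Box_g=g^{\mu\nu}\nabla_\mu\nabla_\nu$ has this property because $\nabla$ is natural and $\nabla g=0$, and the zeroth-order correction terms in the definition of $\Box^{(H)}_g$ are added slot-by-slot with the same coefficient tensors, so they are themselves equivariant under such permutations and contractions. Consequently I would apply $\Box^{(H)}_g$ to each defect, commute it through the relevant (anti)symmetrization or trace, and substitute the right-hand side of \eq{cwe4} for each occurrence of $\Box^{(H)}_g d_{\mu\nu\sigma\rho}$. Since that right-hand side consists of the term $\tfrac12 R\,d_{\mu\nu\sigma\rho}$, which is manifestly linear in $d$, together with the quadratic terms $\Theta\,d\,d$, the task reduces to showing that the (anti)symmetrized or traced quadratic terms can be rewritten --- using the already-established symmetry (i) and the symmetry of $g$ and $L$ --- as a linear combination of $P$, $B$ and $t$ with smooth coefficients carrying a factor $\Theta$. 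This yields a closed linear, homogeneous system of wave equations for the triple $(P,B,t)$.

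The main obstacle is property~(iii). Antisymmetrizing the quadratic expression $\Theta d_{\mu\nu\kappa}{}^{\alpha}d_{\sigma\rho\alpha}{}^{\kappa}-4\Theta d_{\sigma\kappa[\mu}{}^{\alpha}d_{\nu]\alpha\rho}{}^{\kappa}$ over the three indices $\mu\nu\sigma$ is the delicate step: one must repeatedly use the pair symmetry (i) and the antisymmetry within each pair (that is, $P=0$, which is being proven simultaneously) to collapse the resulting cyclic sum of products into terms each of which contains a factor of $B$, of $P$, or of a trace $t$. It is essential here that $P$, $B$ and $t$ be treated as a single coupled system rather than one at a time, because the reduction of the Bianchi defect naturally reproduces the other two defects; the coupling is harmless since all three vanish initially. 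The contraction needed for (iv) requires the analogous but easier verification that $g^{\mu\sigma}\Box^{(H)}_g d_{\mu\nu\sigma\rho}=\Box^{(H)}_g t_{\nu\rho}$, which again follows from $\nabla g=0$ and the slot-by-slot structure of the correction terms.

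With the closed linear homogeneous system in hand, the conclusion is immediate: since $(P,B,t)$ vanishes on $S$, the uniqueness of solutions of linear wave equations with trivial initial data, established by the standard energy estimates (cf.\ \cite{friedlander}), forces $P\equiv B\equiv t\equiv 0$ throughout the domain of existence. This proves (ii)--(iv) and, together with the already-available (i), completes the proof.
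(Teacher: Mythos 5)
Your proposal is correct and follows essentially the same strategy as the paper: derive from \eq{cwe4} a linear, homogeneous wave equation for each defect (symmetrized part, Bianchi defect, trace) and invoke uniqueness for linear wave equations with vanishing characteristic data, with (i) taken over from Lemma~\ref{lem_g_sym}. The only inessential difference is that the paper proceeds sequentially --- it establishes (ii) globally first, so that when antisymmetrizing over $[\mu\nu\sigma]$ for (iii) the terms you would absorb into $P$ vanish identically, and likewise (iv) uses (i)--(iii) exactly --- so your claim that treating $(P,B,t)$ as a coupled system is ``essential'' is not accurate, although the coupled argument works just as well.
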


\begin{remark}
 The constraint equations we shall impose later on on the initial data guarantee that (i)-(iv) are initially satisfied.
As for $g_{\mu\nu}$ and $L_{\mu\nu}$ we shall therefore use the implications of this lemma
without mentioning it each time.
\end{remark}

\begin{proof}
(i) This is part of  the proof of Lemma~\ref{lem_g_sym}.

(ii) Equation \eq{cwe4} implies a linear, homogeneous wave equation for $d_{(\mu\nu)\sigma\rho}$,
\begin{eqnarray*}
\Box^{(H)}_g d_{(\mu\nu)\sigma\rho} &=&
  \Theta d_{\sigma\rho\alpha}{}^{\kappa}  d_{(\mu\nu)\kappa}{}^{\alpha}
   + \frac{1}{2} R d_{(\mu\nu)\sigma\rho}
 \;,
\end{eqnarray*}
i.e.\ the tensor $d_{\mu\nu\sigma\rho}$ is antisymmetric in its first two (and therefore by (i) in its last two indices) since this is assumed to be  initially the case.

(iii) Due to the (anti-)symmetry properties (i)-(ii), we find the following linear, homogeneous wave equation from \eq{cwe4},
\begin{eqnarray*}
 \Box^{(H)}_g d_{[\mu\nu\sigma]\rho} &=&
  \Theta d_{[\mu\nu|\kappa|}{}^{\alpha}d_{\sigma]\rho\alpha}{}^{\kappa}
   +4\Theta d_{\kappa[\sigma\mu} {}^{\alpha}d_{\nu]\alpha\rho}{}^{\kappa}
    + \frac{1}{2}R d_{[\mu\nu\sigma]\rho}
\\
 &=& 2\Theta d_{\sigma\alpha\rho}{}^{\kappa} d_{[\kappa\mu\nu]} {}^{\alpha}+2\Theta d_{\mu\alpha\rho}{}^{\kappa}  d_{[\kappa\nu\sigma]} {}^{\alpha}
  +2\Theta d_{\nu\alpha\rho}{}^{\kappa}  d_{[\kappa\sigma\mu]} {}^{\alpha}
\\
&& + \Theta d_{\mu\nu\kappa} {}^{\alpha}d_{[\alpha\sigma\rho]}{}^{\kappa}
  +\Theta d_{\nu\sigma\kappa} {}^{\alpha}d_{[\alpha\mu\rho]}{}^{\kappa}
 +\Theta d_{\sigma\mu\kappa} {}^{\alpha}d_{[\alpha\nu\rho]}{}^{\kappa}
 + \frac{1}{2}R d_{[\mu\nu\sigma]\rho}
 \;.
\end{eqnarray*}

(iv)  It remains to be shown that $d_{\mu\rho\sigma}{}^{\rho}=0$.
Employing the properties (i)-(iii) we conclude from \eq{cwe4} that
\begin{eqnarray*}
 \Box^{(H)}_g d_{\mu\rho\sigma}{}^{\rho} =
 - 2\Theta d_{\sigma}{}^{\kappa}{}_{\mu} {}^{\alpha}d_{\kappa\rho\alpha}{}^{\rho}
  + \frac{1}{2}Rd_{\mu\rho\sigma}{}^{\rho}
 \;,
\end{eqnarray*}
which is again a  linear, homogeneous wave equation.
\qed
\end{proof}

\newpage

Next, let us establish another important property:
\begin{lemma}
\label{lemma_properties_L}
 Assume that the hypotheses of Lemma~\ref{lem_g_sym} and \ref{lemma_properties_d} are satisfied and that, in addition, the trace
 \[
  L:= L_{\sigma}{}^{\sigma}
\]
of $L_{\mu\nu}$ coincides on the initial surface with one sixth of the gauge source function~$R$, $\overline L=\frac{1}{6}\overline R$. Then
\begin{equation}
 L=\frac{1}{6} R
 \label{vanishingL}
  \;.
\end{equation}
(This is what one would expect if $L_{\mu\nu}$ was the Schouten tensor and $R$ the Ricci scalar.)
\end{lemma}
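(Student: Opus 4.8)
The plan is to mirror the template of Lemmas~\ref{lem_g_sym} and~\ref{lemma_properties_d}: I would produce a \emph{homogeneous} wave equation satisfied by the scalar $L-\frac16 R$ and then appeal to uniqueness for the characteristic problem, using that $L-\frac16 R$ vanishes on the initial surface $S$ by the hypothesis $\overline L=\frac16\overline R$. Since $L$ is the trace of $L_{\mu\nu}$, the obvious source of such an equation is the $g$-trace of the wave equation~\eq{cwe1}, so the first step is to contract~\eq{cwe1} with $g^{\mu\nu}$ and simplify each side with the algebraic identities inherited from the two preceding lemmas (symmetry of $g_{\mu\nu}$ and $L_{\mu\nu}$, and the Weyl-type symmetries of $d_{\mu\nu\sigma\rho}$, in particular its trace-freeness).

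On the right-hand side the contraction is straightforward. Because we are in four dimensions, $g^{\mu\nu}g_{\mu\nu}=4$, so $g^{\mu\nu}\big(4L_{\mu\kappa}L_\nu{}^\kappa-g_{\mu\nu}|L|^2\big)=4|L|^2-4|L|^2=0$; the term $-2\Theta\,d_{\mu\sigma\nu}{}^\rho L_\rho{}^\sigma$ drops out since property~(iv) of Lemma~\ref{lemma_properties_d} (together with the other symmetries) forces $g^{\mu\nu}d_{\mu\sigma\nu}{}^\rho=0$; and the remaining term gives $\frac16\Box_g R$. Thus the traced right-hand side is just $\frac16\Box_g R$.

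The left-hand side is where care is needed. Expanding the definition of $\Box_g^{(H)}$ on the $(0,2)$-tensor $L_{\mu\nu}$ and contracting, I expect $g^{\mu\nu}\Box_g^{(H)}L_{\mu\nu}=\Box_g L$ after three observations: (a) the principal piece gives $g^{\mu\nu}\Box_g L_{\mu\nu}=\Box_g(g^{\mu\nu}L_{\mu\nu})=\Box_g L$, since $\nabla g=0$ once $g_{\mu\nu}$ is a metric (Lemma~\ref{lem_g_sym}); (b) the curvature-correction terms vanish \emph{identically} on solutions, because~\eq{cwe5} gives $2L_{\mu\nu}-R^{(H)}_{\mu\nu}+\frac16 Rg_{\mu\nu}=0$; and (c) the $H^\sigma$-correction terms vanish after tracing, because they pair the \emph{symmetric} tensor $L^{\lambda\nu}$ against the \emph{antisymmetric} factor $\hat\nabla_{[\lambda}H_{\nu]}$ (this is exactly where the symmetry of $L_{\mu\nu}$ from Lemma~\ref{lem_g_sym} is used). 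Equating the two sides then yields
\[
 \Box_g\Big(L-\frac16 R\Big)=0\;.
\]

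This is a linear homogeneous wave equation (indeed with vanishing zeroth-order term), and its characteristic datum $\overline L-\frac16\overline R$ vanishes on $S$ by assumption; the standard uniqueness argument for the characteristic initial value problem, already invoked for the previous lemmas, then gives $L-\frac16 R\equiv 0$, i.e.~\eq{vanishingL}. I expect the only genuine obstacle to be verifying that the equation truly \emph{closes} as a homogeneous equation in $L-\frac16 R$ alone: a priori the trace could couple to the not-yet-controlled fields $H^\sigma$ and $R_g-R$, and it is precisely cancellations (b) and (c)---resting on~\eq{cwe5} and on the symmetry of $L_{\mu\nu}$---that remove the $H^\sigma$ dependence and decouple this scalar from the remaining subsidiary system.
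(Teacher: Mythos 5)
Your argument is exactly the paper's: the paper's proof of Lemma~\ref{lemma_properties_L} is precisely the $g$-trace of \eq{cwe1}, using the tracelessness of $d_{\mu\nu\sigma\rho}$ from Lemma~\ref{lemma_properties_d} to obtain $\Box_g(L-\frac16 R)=0$, followed by the standard uniqueness result for linear wave equations. Your additional verifications (a)--(c) that $g^{\mu\nu}\Box_g^{(H)}L_{\mu\nu}=\Box_g L$ --- via \eq{cwe5} and the symmetry of $L_{\mu\nu}$ --- are correct and simply spell out what the paper leaves implicit.
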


\begin{proof}
We observe that in virtue of \eq{cwe1} the tracelessness of $d_{\mu\nu\sigma\rho}$ implies
\begin{eqnarray*}
 \Box_{ g} (L-\frac{1}{6} R) &=& 0
 \;.
\end{eqnarray*}
and the assertion follows again from standard uniqueness results for linear wave equations.
\qed
\end{proof}

\subsubsection*{Gauge consistency}

Let us return to the question of whether we have consistency with the gauge condition in the sense that a solution of the CWE satisfies $H^{\sigma}=0$
 and $R_g=R$.
For that we assume that all the hypotheses of Lemma~\ref{lem_g_sym}, \ref{lemma_properties_d} and \ref{lemma_properties_L}
 are fulfilled.
We consider the identity
\begin{eqnarray}
 R_{\mu\nu} - \frac{1}{2} R_g g_{\mu\nu} \equiv R^{(H)}_{\mu\nu} - \frac{1}{2}R^{(H)}g_{\mu\nu} + g_{\sigma(\mu}\hat\nabla_{\nu)} H^{ \sigma}
 -\frac{1}{2} g_{\mu\nu} \hat\nabla_{\sigma} H^{ \sigma}
 \label{einstein_tensor}
 \;.
\end{eqnarray}
Invoking \eq{cwe5} and Lemma~\ref{lemma_properties_L} we deduce that
\begin{eqnarray}
  && \hspace{-2em} R_{\mu\nu}   - \frac{1}{2} R_ gg_{\mu\nu} \,= \,2 L_{\mu\nu}
  -\frac{1}{3} R g_{\mu\nu} + g_{\sigma(\mu}\hat\nabla_{\nu)} H^{ \sigma} -\frac{1}{2} g_{\mu\nu} \hat\nabla_{\sigma} H^{ \sigma}
 \nonumber
\\
 \overset{\mathrm{Bianchi}}{\Longrightarrow} && \nabla^{\nu} \hat\nabla_{\nu} H^{ \alpha}+2g^{\mu\alpha} \nabla_{[\sigma} \hat\nabla_{\mu]} H^{ \sigma}
 + 4(\nabla^{\nu} L_{\nu}{}^{\alpha}- \frac{1}{6}\nabla^{\alpha}R ) \,=\,0
 \label{wave_H}
 \;.
\end{eqnarray}
Be aware that at this stage it is not known whether $L_{\mu\nu}$ coincides with the Schouten tensor and thus  satisfies the contracted Bianchi identity \eq{BianchiL} such that the term in brackets in \eq{wave_H} drops out.  That is the reason why we cannot immediately deduce $H^{\sigma}=0$ as in~\cite{CCM2} supposing that this is initially the case.

Given two covariant derivative operators $\nabla$ and $\hat\nabla$ (associated to the metrics $g$ and $\hat g$, respectively), there exists a tensor field $C^{\sigma}_{\mu\nu}=C^{\sigma}_{\nu\mu}$,
which depends on $g$, $\partial g$, $\hat g$ and $\partial\hat g$,
such that
\begin{equation}
   \nabla_{\mu}v^{\sigma}-\hat\nabla_{\mu}v^{\sigma} = C^{\sigma}_{\mu\nu}v^{\nu}
 \label{nabla_hat}
 \;,
\end{equation}
for any vector $v^{\sigma}$, and similar formulae hold for tensor fields of other types.
Setting
\begin{equation}
 \zeta_{\mu}:= -4(\nabla_{\nu}L_{\mu}{}^{\nu} -\frac{1}{6} \nabla_{\mu} R)
 \;,
\end{equation}
the equation \eq{wave_H} can therefore be written as
\begin{eqnarray}
  \Box_gH^{\alpha}
 &=& \zeta^{\alpha}  + f^{\alpha}(g,\hat g; H,\nabla H)
 \label{wave_H*}
 \;,
\end{eqnarray}
which is a linear wave equation satisfied by the wave-gauge vector $H^{\sigma}$.%
\footnote{Note that in this part the metric is regarded as being given, so $\Box_g$ is a wave-operator
and there is no need to work with the reduced wave-operator $\Box_g^{(H)}$.
}
In \eq{wave_H*}, as in what follows, the generic smooth field $ f^{\alpha}(g,\hat g; H,\nabla H)$, or more general  $ f_{\alpha_1\dots\alpha_p}{}^{\beta_1\dots\beta_q}(v_1,\dots, v_m; w_1,\dots, w_n)$,
represents a sum of fields, each of which contains precisely 
 one multiplicative factor from the set $\{w_i\}$ as well as further factors which may depend on the $v_j$'s and also higher-order derivatives
of the $v_j$'s. The latter does not cause any problems  since the $v_j$'s will be regarded as given fields rather than unknowns of the system we are about to derive.
In most cases we will therefore simply write  $ f_{\alpha_1\dots\alpha_p}{}^{\beta_1\dots\beta_q}(x; w_1,\dots, w_n)$.

Taking the trace of \eq{einstein_tensor} and inserting \eq{cwe5},  yields (note that $L=R/6$)
\begin{eqnarray}
 R_g &\equiv&   R^{(H)}  + \hat\nabla_{\sigma} H^{ \sigma} =  R +   \hat\nabla_{\sigma} H^{ \sigma}
 \label{eqn_curvscalar}
 \;.
\end{eqnarray}
The vanishing of $H^{\sigma}$ would  therefore immediately ensure that $R_g=R$.

The tensor $d_{\mu\nu\sigma}{}^{\rho}$ is supposed to be part of a solution of the CWE.
Note, again, that  at this stage
it is by no means clear whether it, indeed, represents the rescaled Weyl tensor of $g_{\mu\nu}$ and $\Theta$.
As before, we denote by $W_{\mu\nu\sigma}{}^{\rho}$  the Weyl tensor associated  to $g_{\mu\nu}$, defined via the decomposition
\begin{eqnarray}
  R_{\mu\nu\sigma\rho} = W_{\mu\nu\sigma\rho}  + g_{\sigma[\mu}R_{\nu]\rho} - g_{\rho[\mu}R_{\nu]\sigma} - \frac{1}{3}R_g g_{\sigma[\mu}g_{\nu]\rho}
 \label{riemann_ricci_weyl}
 \;.
\end{eqnarray}

As outlined above we want to derive a closed, linear, homogeneous system of wave equations for a certain set of fields
in order to establish the vanishing of $H^{\sigma}$.
First of all, we need a wave equation for $\zeta_{\mu}$.
Making use of the Bianchi identity, \eq{vanishingL} and \eq{cwe1}, we obtain
\begin{eqnarray}
 \Box_g \zeta_{\mu}
 &\equiv &  -4 \nabla_{\nu} \Box_g L_{\mu}{}^{\nu}  +\frac{2}{3}\Box_g\nabla_{\mu} R
   -8 \nabla^{\nu}(W_{\mu\sigma\nu}{}^{\rho}L_{\rho}{}^{\sigma})+8R_{\nu}{}^{\kappa}\nabla_{\kappa}L_{\mu}{}^{\nu}
    \nonumber
\\
 &&  -4 R_{\nu}{}^{\kappa}  \nabla_{\mu}L_{\kappa}{}^{\nu} -R_{\mu}{}^{\nu}\zeta_{\nu}+\frac{1}{3} R_g \zeta_{\mu}   -4 R_{\mu}{}^{\nu} \nabla_{\nu}(L-\frac{1}{6} R)
 \nonumber
  \\
 && +\frac{4}{3} R_g  \nabla_{\mu}(L-\frac{1}{6}R)  +\frac{8}{3}L_{\mu}{}^{\nu}\nabla_{\nu} R_g  -\frac{2}{3}L \nabla_{\mu} R_g
  \nonumber
\\
 &=& (4  L_{\mu}{}^{\nu} -R_{\mu}{}^{\nu})\zeta_{\nu}
    + 4(2L_{\nu\sigma}- R_{\nu\sigma}+\frac{1}{6} R g_{\nu\sigma})( \nabla_{\mu} L^{\nu\sigma}-2\nabla^{\sigma} L_{\mu}{}^{\nu})
       \nonumber
 \\
 &&
  -8 \nabla^{\nu}[(W_{\mu\sigma\nu}{}^{\rho}-\Theta d_{\mu\sigma\nu}{}^{\rho})L_{\rho}{}^{\sigma}]
+\frac{1}{3}(\zeta_{\mu} + 8L_{\mu}{}^{\nu}\nabla_{\nu}  - 2L \nabla_{\mu})( R_g-R)
  \nonumber
\\
 &&   -4  L_{\nu}{}^{\lambda}\nabla^{\nu}\nabla_{[\lambda}H_{\mu]}
    -4 L_{\mu}{}^{\lambda}\nabla^{\nu}\nabla_{[\lambda}H_{\nu]}
+ f_{\mu}(x;H,\nabla H)
 \;.
  \label{equation_Box_zeta}
\end{eqnarray}
We employ \eq{ricci_riccired}, \eq{cwe5}, \eq{eqn_curvscalar} and \eq{wave_H*} to end up with
\begin{eqnarray}
 \Box_g \zeta_{\mu}
  &=& 4  L_{\mu}{}^{\nu}\zeta_{\nu}-\frac{1}{6}R\zeta_{\mu}
      -8 \nabla^{\nu}[(W_{\mu\sigma\nu}{}^{\rho}-\Theta d_{\mu\sigma\nu}{}^{\rho})L_{\rho}{}^{\sigma}]
       -\frac{2}{3}L \nabla_{\mu}\nabla_{\nu} H^{\nu}
       \nonumber
 \\
 &&   +\frac{2}{3}L_{\mu}{}^{\nu}\nabla_{\nu}\nabla_{\sigma} H^{\sigma}
  -4  L_{\nu}{}^{\lambda}\nabla^{\nu}\nabla_{[\lambda}H_{\mu]}
   + f_{\mu}(x;H,\nabla H)
    \;.
 \label{wave_zetavec_prov}
\end{eqnarray}
In order to get rid of the undesired second-order derivatives in $H^{\sigma}$, we introduce
the tensor  field
\begin{equation}
 K_{\mu}{}^{\nu} :=\nabla_{\mu}H^{\nu}
 \label{dfn_K_diffH}
\end{equation}
as another unknown for which we need to derive a wave equation, as well.
We employ the fact that the right-hand side of \eq{wave_H*} does not contain derivatives of $\zeta^{\alpha}$:
Differentiating \eq{wave_H*} we are straightforwardly led to the desired equation,
\begin{eqnarray}
  \Box_g K_{\mu\nu} &\equiv & \nabla_{\mu}\Box_g H_{\nu} + R_{\mu}{}^{\kappa}\nabla_{\kappa}H_{\nu}
 + H^{\kappa}  \nabla_{\sigma}R_{\kappa\nu\mu}{}^{\sigma}
 +2 R_{\kappa\nu\mu}{}^{\sigma}  \nabla_{\sigma}H^{\kappa}
 \nonumber
\\
 &=&  \nabla_{\mu}\zeta_{\nu}
 +f_{\mu\nu}(x;H,\nabla H,\nabla K)
 \label{wave_DiffH}
 \;.
\end{eqnarray}
Moreover, \eq{wave_zetavec_prov} becomes a wave equation for $\zeta_{\mu}$,
\begin{eqnarray}
 \Box_g \zeta_{\mu} &=&   4  L_{\mu}{}^{\nu}\zeta_{\nu}-\frac{1}{6}R\zeta_{\mu}
 -8 \nabla^{\nu}[(W_{\mu\sigma\nu\rho}- \Theta d_{\mu\sigma\nu\rho})L^{\sigma\rho}]
\nonumber
\\
&&  + f_{\mu}(x;H,\nabla H,\nabla K)
 \label{wave_zetavec}
 \;.
\end{eqnarray}
We observe that we need a wave equation for $W_{\mu\sigma\nu\rho}- \Theta d_{\mu\sigma\nu\rho}$
(actually just for its contraction with $L^{\sigma\rho}$, but for later purposes it is useful to show that
$\Theta d_{\mu\sigma\nu\rho}$ coincides with the Weyl tensor, which would follow, supposing, as usual, that it is initially true).
For this purpose let us introduce the tensor field $\zeta_{\mu\nu\sigma}$,
\begin{eqnarray*}
 \zeta_{\mu\nu\sigma}:= 4\nabla_{[\sigma} L_{\nu]\mu}
 \;.
\end{eqnarray*}
Note that $ \zeta_{[\mu\nu\sigma]}=0$ for a symmetric $L_{\mu\nu}$.

Starting from the second Bianchi identity, 
we find with \eq{riemann_ricci_weyl}, \eq{ricci_riccired}, \eq{cwe5} and \eq{eqn_curvscalar}
\begin{eqnarray}
  \nabla_{\alpha}W_{\mu\nu\sigma\rho}
 & \equiv& -2 \nabla_{[\mu}W_{\nu]\alpha\sigma\rho}  + 2\nabla_{[\alpha}R_{\nu][\sigma}g_{\rho]\mu}
 - 2\nabla_{[\alpha}R_{\mu][\sigma} g_{\rho]\nu}
  - 2\nabla_{[\mu}R_{\nu][\sigma} g_{\rho]\alpha}
 \nonumber
\\
 &&  + \frac{2}{3}g_{\mu[\sigma}g_{\rho][\nu}\nabla_{\alpha]}R_g
 - \frac{1}{3}g_{\alpha[\sigma}g_{\rho]\nu}\nabla_{\mu}R_g
 \nonumber
\\
 & =&  g_{\mu[\sigma}\zeta_{\rho]\alpha\nu} + g_{\nu[\sigma}\zeta_{\rho]\mu\alpha} -g_{\alpha[\sigma}\zeta_{\rho]\mu\nu}
   - 2\nabla_{[\mu}W_{\nu]\alpha\sigma\rho}
 \nonumber
\\
 &&  + \frac{2}{3}g_{\mu[\sigma}g_{\rho][\nu}\nabla_{\alpha]}\nabla_{\kappa}H^{\kappa}
 - \frac{1}{3}g_{\alpha[\sigma}g_{\rho]\nu}\nabla_{\mu}\nabla_{\kappa}H^{\kappa}
   + g_{\alpha[\sigma}\nabla_{\rho]}\nabla_{[\mu}H_{\nu]}
 \nonumber
\\
 && + g_{\mu[\sigma}\nabla_{\rho]}\nabla_{[\nu}H_{\alpha]}
 +g_{\nu[\sigma}\nabla_{\rho]}\nabla_{[\alpha}H_{\mu]}
  + f_{\alpha\mu\nu\sigma\rho}(x; H, \nabla H)
 \label{nabla_weyl}
 \;.
\end{eqnarray}
Applying $\nabla^{\alpha}$ yields
\begin{eqnarray}
 \Box_g W_{\mu\nu\sigma\rho}
 & =&  2 \nabla_{[\nu}\nabla^{\alpha}W_{\mu]\alpha\sigma\rho}
 + W_{\mu\nu\alpha}{}^{\kappa}W_{\sigma\rho\kappa} {}^{\alpha}  - 4W_{\sigma\kappa[\mu} {}^{\alpha}W_{\nu]\alpha\rho}{}^{\kappa} +\frac{1}{3}RW_{\mu\nu\sigma\rho}
  \nonumber
\\
 &&+2 ( g_{\rho[\mu} W_{\nu]\alpha\sigma}{}^{\kappa}- g_{\sigma[\mu}W_{\nu]\alpha\rho}{}^{\kappa})L_{\kappa}{}^{\alpha}
 -2L_{[\mu}{}^{\kappa} W_{\nu]\kappa\sigma\rho}  - 2L_{[\sigma}{}^{\kappa}  W_{\rho]\kappa\mu\nu}
   \nonumber
\\
 && + \nabla_{[\sigma}\zeta_{\rho]\nu\mu}
 + g_{\sigma[\mu}\nabla^{\alpha}\zeta_{|\rho\alpha|\nu]}  - g_{\rho[\mu}\nabla^{\alpha}\zeta_{|\sigma\alpha|\nu]}
  + \frac{1}{3}g_{\mu[\sigma}g_{\rho]\nu}\nabla_{\kappa}\Box_g H^{\kappa}
   \nonumber
\\
 &&+ \frac{1}{6}g_{\mu[\sigma}\nabla_{\rho]}\nabla_{\nu}\nabla_{\alpha}H^{\alpha}- \frac{1}{6} g_{\nu[\sigma}\nabla_{\rho]}\nabla_{\mu}\nabla_{\alpha}H^{\alpha}
  - \frac{1}{2} g_{\mu[\sigma}\nabla_{\rho]}\Box_g H_{\nu}
    \nonumber
\\
 &&     +\frac{1}{2} g_{\nu[\sigma}\nabla_{\rho]}\Box_g H_{\mu}
  + f_{\mu\nu\sigma\rho}(x; H, \nabla H,\nabla K)
 \;.
  \label{wave_W_sub}
\end{eqnarray}
Before we manipulate this expression any further it is useful to compute
\begin{eqnarray}
 \nabla^{\alpha}\zeta_{\mu\nu\alpha} &\equiv& 2  \Box_g L_{\mu\nu}-   2\nabla_{\nu} \nabla_{\alpha}L_{\mu}{}^{\alpha}
  - 2 R_{\alpha\nu\mu}{}^{\kappa}L_{\kappa}{}^{\alpha} - 2  R_{\nu\kappa}L_{\mu}{}^{\kappa}
 \nonumber
\\
 & =&  2\Box^{(H)}_g L_{\mu\nu}
   + 2 L_{\alpha}{}^{\kappa}W_{\mu\kappa\nu}{}^{\alpha}
     - 3 L_{\mu}{}^{\kappa} R^{(H)}_{\nu\kappa}  -L_{\nu}{}^{\alpha}R^{(H)}_{\mu\alpha}
  +  g_{\mu\nu} L^{\alpha\kappa}R^{(H)}_{\alpha\kappa}
   \nonumber
 \\
 && -   \frac{1}{2} \nabla_{\mu} \nabla_{\nu}  (R_g-\frac{1}{3}R)   +L R^{(H)}_{\mu\nu}
    + \frac{1}{3}  L_{\mu\nu}R_g   - \frac{1}{3}  L R_g g_{\mu\nu}
  \nonumber
  \\
  &&   + \frac{1}{2}   \nabla_{\nu} \nabla_{\kappa}\hat\nabla_{\mu} H^{ \kappa}
 + \frac{1}{2} g_{\mu\kappa} \nabla_{\nu}  \nabla^{\alpha}\hat\nabla_{\alpha} H^{ \kappa}
   +f_{\mu\nu}(x;H,\nabla H)
 \nonumber
\\
  & =&  2( W_{\mu\alpha\nu}{}^{\kappa} - 2\Theta d_{\mu\alpha\nu}{}^{\kappa} ) L_{\kappa}{}^{\alpha}
   + \frac{1}{2}\nabla_{\nu} \Box_g H_{ \mu}
\nonumber
\\
&&  +f_{\mu\nu}(x;H,\nabla H, \nabla K)
 \label{contracted_zeta_B}
 \;,
\end{eqnarray}
which follows from \eq{ricci_riccired}, \eq{cwe1}, \eq{cwe5}, \eq{vanishingL}, \eq{wave_H}, \eq{eqn_curvscalar} and \eq{riemann_ricci_weyl}.
Due to the 
Bianchi identity, \eq{ricci_riccired}, \eq{cwe5} and \eq{eqn_curvscalar}, we also have
\begin{eqnarray}
 \nabla_{\alpha}W_{\mu\nu\sigma}{}^{\alpha} \equiv- \nabla_{[\mu}R_{\nu]\sigma}  - \frac{1}{6} g_{\sigma[\mu}\nabla_{\nu]}R_g
\phantom{xxxxxxxxxxxxxxxxxxxxxxxxxx}
  \nonumber
 \\
  =  \frac{1}{2}\zeta_{\sigma\mu\nu}
 -  \frac{1}{2}\nabla_{\sigma} \nabla_{[\mu}H_{\nu]}
 - \frac{1}{6} g_{\sigma[\mu}\nabla_{\nu]}\nabla_{\kappa}H^{\kappa}
 + f_{\mu\nu\sigma}(x; H,\nabla H)
  \;.
  \label{contr_Bian_W}
\end{eqnarray}
Invoking \eq{contracted_zeta_B} and \eq{contr_Bian_W} we rewrite \eq{wave_W_sub} to obtain
\begin{eqnarray}
 \Box_g W_{\mu\nu\sigma\rho} &=&
  \nabla_{[\sigma}\zeta_{\rho]\nu\mu} -\nabla_{[\mu}\zeta_{\nu]\sigma\rho}
  + W_{\mu\nu\alpha}{}^{\kappa}W_{\sigma\rho\kappa} {}^{\alpha}  - 4W_{\sigma\kappa[\mu} {}^{\alpha}W_{\nu]\alpha\rho}{}^{\kappa} +\frac{1}{3}RW_{\mu\nu\sigma\rho}
   \nonumber
\\
 &&-2L_{[\mu}{}^{\kappa} W_{\nu]\kappa\sigma\rho}  - 2L_{[\sigma}{}^{\kappa}  W_{\rho]\kappa\mu\nu}
  +4L_{\kappa}{}^{\alpha}(  W_{\rho\alpha[\mu}{}^{\kappa} - \Theta d_{\rho\alpha[\mu}{}^{\kappa} )  g_{\nu]\sigma}
   \nonumber
\\
 &&  -4L_{\kappa}{}^{\alpha} ( W_{\sigma\alpha[\mu}{}^{\kappa} - \Theta d_{\sigma\alpha[\mu}{}^{\kappa} ) g_{\nu]\rho}
  + \frac{1}{2} g_{\rho[\mu}\nabla_{\nu]} \Box_g H_{ \sigma} - \frac{1}{2}g_{\sigma[\mu}\nabla_{\nu]} \Box_g H_{ \rho}
   \nonumber
 \\
 && +\frac{1}{2} g_{\nu[\sigma}\nabla_{\rho]}\Box_g H_{\mu}
   - \frac{1}{2} g_{\mu[\sigma}\nabla_{\rho]}\Box_g H_{\nu}
   + \frac{1}{3}g_{\mu[\sigma}g_{\rho]\nu}\nabla_{\kappa}\Box_g H^{\kappa}
   \nonumber
\\
 &&    + f_{\mu\nu\sigma\rho}(x; H, \nabla H,\nabla K)
  \;.
 \label{wave_W1}
\end{eqnarray}
We insert \eq{wave_H*},
\begin{eqnarray}
 \Box_g W_{\mu\nu\sigma\rho} &=&
  \nabla_{[\sigma}\zeta_{\rho]\nu\mu} -\nabla_{[\mu}\zeta_{\nu]\sigma\rho}
  + W_{\mu\nu\alpha}{}^{\kappa}W_{\sigma\rho\kappa} {}^{\alpha}  - 4W_{\sigma\kappa[\mu} {}^{\alpha}W_{\nu]\alpha\rho}{}^{\kappa}
   \nonumber
\\
 &&-2L_{[\mu}{}^{\kappa} W_{\nu]\kappa\sigma\rho}  - 2L_{[\sigma}{}^{\kappa}  W_{\rho]\kappa\mu\nu}
  +4L_{\kappa}{}^{\alpha}(  W_{\rho\alpha[\mu}{}^{\kappa} - \Theta d_{\rho\alpha[\mu}{}^{\kappa} )  g_{\nu]\sigma}
   \nonumber
\\
 &&  -4L_{\kappa}{}^{\alpha} ( W_{\sigma\alpha[\mu}{}^{\kappa} - \Theta d_{\sigma\alpha[\mu}{}^{\kappa} ) g_{\nu]\rho}
   +\frac{1}{3}RW_{\mu\nu\sigma\rho}
  + \frac{1}{3}g_{\mu[\sigma}g_{\rho]\nu}\nabla_{\kappa}\zeta^{\kappa}
   \nonumber
 \\
 && + \frac{1}{2} g_{\rho[\mu}\nabla_{\nu]} \zeta_{ \sigma}
 - \frac{1}{2}g_{\sigma[\mu}\nabla_{\nu]} \zeta_{ \rho} +\frac{1}{2} g_{\nu[\sigma}\nabla_{\rho]}\zeta_{\mu}
   - \frac{1}{2} g_{\mu[\sigma}\nabla_{\rho]}\zeta_{\nu}
   \nonumber
\\
 &&   + f_{\mu\nu\sigma\rho}(x; H, \nabla H,\nabla K)
  \;.
 \label{wave_W1II}
\end{eqnarray}
It proves useful to make the following definitions:
\begin{eqnarray}
 \varkappa_{\mu\nu\sigma} &:=& \frac{1}{2}\zeta_{\mu\nu\sigma} - \nabla_{\kappa}\Theta d_{\nu\sigma\mu}{}^{\kappa}
  \;,
  \\
 \Xi_{\mu\nu} &:=& \nabla_{\mu}\nabla_{\nu}\Theta +\Theta L_{\mu\nu} -sg_{\mu\nu}
  \;.
\end{eqnarray}
We observe the relation
\begin{eqnarray*}
 \nabla_{\rho} \zeta_{\mu\nu\sigma} \,=\,  2\nabla_{\rho}\varkappa_{\mu\nu\sigma} +2\nabla^{\kappa}\Theta \nabla_{\rho} d_{\nu\sigma\mu\kappa}
  +2\Xi_{\rho\kappa} d_{\nu\sigma\mu}{}^{\kappa} -2 L_{\rho}{}^{\kappa} \Theta d_{\nu\sigma\mu\kappa} + 2s d_{\nu\sigma\mu\rho}
   \;.
\end{eqnarray*}
Then, due to  the (anti-)symmetry properties of the tensor $d_{\mu\nu\sigma\rho}$ derived above, \eq{wave_W1II} yields
\begin{eqnarray}
 \Box_g W_{\mu\nu\sigma\rho} &=&
  2\nabla^{\kappa}\Theta \nabla_{[\sigma} d_{\rho]\kappa\nu\mu} -2\nabla^{\kappa}\Theta \nabla_{[\mu} d_{\nu]\kappa\sigma\rho}
   +2\nabla_{[\sigma}\varkappa_{\rho]\nu\mu}   -2\nabla_{[\mu}\varkappa_{\nu]\sigma\rho}
  \nonumber
 \\
 &&  + 2d_{\nu\mu[\rho}{}^{\kappa}\Xi_{\sigma]\kappa} -2 d_{\sigma\rho[\nu}{}^{\kappa}\Xi_{\mu]\kappa}
  + W_{\mu\nu\alpha}{}^{\kappa}W_{\sigma\rho\kappa} {}^{\alpha}  - 4W_{\sigma\kappa[\mu} {}^{\alpha}W_{\nu]\alpha\rho}{}^{\kappa}
  \nonumber
  \\
 && +\frac{1}{3}RW_{\mu\nu\sigma\rho}  + 2L_{[\mu}{}^{\kappa} (\Theta d_{\nu]\kappa\sigma\rho}-W_{\nu]\kappa\sigma\rho})
 - 2L_{[\sigma}{}^{\kappa}( \Theta d_{\rho]\kappa\nu\mu}-W_{\rho]\kappa\nu\mu})
  \nonumber
\\
 &&  +4L_{\kappa}{}^{\alpha}(  W_{\rho\alpha[\mu}{}^{\kappa} - \Theta d_{\rho\alpha[\mu}{}^{\kappa} )  g_{\nu]\sigma}
  -4L_{\kappa}{}^{\alpha} ( W_{\sigma\alpha[\mu}{}^{\kappa} - \Theta d_{\sigma\alpha[\mu}{}^{\kappa} ) g_{\nu]\rho}
  \nonumber
 \\
 && + \frac{1}{2} g_{\rho[\mu}\nabla_{\nu]} \zeta_{ \sigma}
 - \frac{1}{2}g_{\sigma[\mu}\nabla_{\nu]} \zeta_{ \rho} +\frac{1}{2} g_{\nu[\sigma}\nabla_{\rho]}\zeta_{\mu}
   - \frac{1}{2} g_{\mu[\sigma}\nabla_{\rho]}\zeta_{\nu}
  \nonumber
\\
 &&   + \frac{1}{3}g_{\mu[\sigma}g_{\rho]\nu}\nabla_{\kappa}\zeta^{\kappa}
  + 4s d_{\mu\nu\sigma\rho}
  + f_{\mu\nu\sigma\rho}(x; H, \nabla H,\nabla K)
  \;.
  \label{wave_W1III}
\end{eqnarray}
On the other hand, in virtue of \eq{cwe3} and \eq{cwe4}, we have
\begin{eqnarray}
 \Box_g (\Theta d_{\mu\nu\sigma\rho}) &\equiv& d_{\mu\nu\sigma\rho}\Box_g\Theta + \Theta \Box_g d_{\mu\nu\sigma\rho}
  +2\nabla^{\kappa}\Theta \nabla_{\kappa}d_{\mu\nu\sigma\rho}
   \nonumber
 \\
 &=& 4 s d_{\mu\nu\sigma\rho}  +2\nabla^{\kappa}\Theta \nabla_{\kappa}d_{\mu\nu\sigma\rho}
  +  \Theta^2 d_{\mu\nu\kappa}{}^{\alpha}d_{\sigma\rho\alpha}{}^{\kappa}
   - 4\Theta^2 d_{\sigma\kappa[\mu} {}^{\alpha}d_{\nu]\alpha\rho}{}^{\kappa}
      \nonumber
 \\
 &&  + \frac{1}{3}R \Theta d_{\mu\nu\sigma\rho}
    + f_{\mu\nu\sigma\rho}(x; H, \nabla H)
 \;.
  \label{wave_Theta_d}
\end{eqnarray}
Combining \eq{wave_W1III} and \eq{wave_Theta_d}, and invoking \eq{d_bianchi_divergence}, we are led to the wave equation
\begin{eqnarray}
 &&\hspace{-3em}\Box_g( W_{\mu\nu\sigma\rho} - \Theta d_{\mu\nu\sigma\rho})
  \,=\,     2\nabla_{[\sigma}\varkappa_{\rho]\nu\mu}  - 2\nabla_{[\mu}\varkappa_{\nu]\sigma\rho}
  + 2d_{\mu\nu[\sigma}{}^{\kappa}\Xi_{\rho]\kappa} + 2 d_{\sigma\rho[\mu}{}^{\kappa}\Xi_{\nu]\kappa}
\nonumber
    \\
 &&+W_{\mu\nu\alpha}{}^{\kappa}(W_{\sigma\rho\kappa} {}^{\alpha}-\Theta d_{\sigma\rho\kappa}{}^{\alpha} )
 +\Theta d_{\sigma\rho\kappa}{}^{\alpha} (W_{\mu\nu\alpha}{}^{\kappa} -  \Theta  d_{\mu\nu\alpha}{}^{\kappa})
\nonumber
\\
&& - 4W_{\sigma\kappa[\mu} {}^{\alpha}(W_{\nu]\alpha\rho}{}^{\kappa}  - \Theta d_{\nu]\alpha\rho}{}^{\kappa})
 - 4(W_{\sigma\kappa[\mu} {}^{\alpha}-\Theta d_{\sigma\kappa[\mu} {}^{\alpha})\Theta d_{\nu]\alpha\rho}{}^{\kappa}
\nonumber
\\
 && - 2L_{[\mu}{}^{\kappa} ( W_{\nu]\kappa\sigma\rho} - \Theta d_{\nu]\kappa\sigma\rho})
 + 2L_{[\sigma}{}^{\kappa}( W_{\rho]\kappa\nu\mu} -  \Theta d_{\rho]\kappa\nu\mu})
\nonumber
 \\
 && +4L_{\kappa}{}^{\alpha}(  W_{\rho\alpha[\mu}{}^{\kappa} - \Theta d_{\rho\alpha[\mu}{}^{\kappa} )  g_{\nu]\sigma}
 -4L_{\kappa}{}^{\alpha} ( W_{\sigma\alpha[\mu}{}^{\kappa} - \Theta d_{\sigma\alpha[\mu}{}^{\kappa} ) g_{\nu]\rho}
\nonumber
\\
  && + \frac{1}{3}R(W_{\mu\nu\sigma\rho} - \Theta d_{\mu\nu\sigma\rho})
 - \frac{1}{2}\nabla^{\kappa}\Theta(\epsilon_{\kappa\sigma\rho}{}^{\delta} \epsilon_{\mu\nu}{}^{\beta\gamma}  + \epsilon_{\kappa\mu\nu}{}^{\delta} \epsilon_{\sigma\rho}{}^{\beta\gamma}) \nabla_{\alpha}d_{\beta\gamma\delta}{}^{\alpha}
\nonumber
\\
 && + \frac{1}{2} g_{\rho[\mu}\nabla_{\nu]} \zeta_{ \sigma}
 - \frac{1}{2}g_{\sigma[\mu}\nabla_{\nu]} \zeta_{ \rho} +\frac{1}{2} g_{\nu[\sigma}\nabla_{\rho]}\zeta_{\mu}
   - \frac{1}{2} g_{\mu[\sigma}\nabla_{\rho]}\zeta_{\nu}
  \nonumber
\\
 &&   + \frac{1}{3}g_{\mu[\sigma}g_{\rho]\nu}\nabla_{\kappa}\zeta^{\kappa}
 + f_{\mu\nu\sigma\rho}(x;H, \nabla H,\nabla K)
 \label{wave_W-C}
 \;,
\end{eqnarray}
which is fulfilled by any solution of the CWE.

In order to  end up with a homogeneous system of wave equations,
it remains to derive wave equations for $\varkappa_{\mu\nu\sigma}$, $\Xi_{\mu\nu}$ and $\nabla_{\rho} d_{\mu\nu\sigma}{}^{\rho} $.
Let us start with $\nabla_{\rho} d_{\mu\nu\sigma}{}^{\rho} $,
%
\begin{eqnarray}
 \Box_g \nabla_{\rho}d_{\mu\nu\sigma}{}^{\rho}
 &\equiv &
 \nabla_{\rho}\Box_g d_{\mu\nu\sigma}{}^{\rho}
 -4 W_{\kappa\rho[\mu}{}^{\alpha} \nabla^{\kappa}d_{\nu]\alpha\sigma}{}^{\rho}
  +2 W_{\kappa\rho\sigma}{}^{\alpha} \nabla^{\kappa}d_{\mu\nu\alpha}{}^{\rho}
 \nonumber
\\
 &&- 2d_{\mu\nu\rho}{}^{\alpha}\nabla_{[\sigma}R_{\alpha]}{}^{\rho}
  - 2d_{\sigma\rho\nu}{}^{\alpha}\nabla_{[\mu}R_{\alpha]}{}^{\rho}
  +2d_{\sigma\rho\mu}{}^{\alpha}\nabla_{[\nu}R_{\alpha]}{}^{\rho}
  \nonumber
\\
 &&  +2 R_{\rho[\mu} \nabla^{\alpha}d_{\nu]\alpha\sigma}{}^{\rho}
  + R_{\sigma}{}^{\rho}\nabla_{\alpha}d_{\mu\nu\rho}{}^{\alpha}
 +3 R_{\rho}{}^{\alpha} \nabla_{[\mu}d_{\alpha\nu]\sigma}{}^{\rho}
\nonumber
\\
&& - \frac{1}{2}d_{\mu\nu\sigma}{}^{\alpha}  \nabla_{\alpha}R_g
  \nonumber
\\
 &=& 2  d_{\mu\nu\rho}{}^{\alpha}\varkappa^{\rho}{}_{\sigma\alpha}
 - 4  d_{\sigma\rho[\mu} {}^{\alpha} \varkappa^{\rho}{}_{\nu]\alpha}
 + (W_{\kappa\sigma\rho}{}^{\alpha}  -\Theta d_{\kappa\sigma\rho}{}^{\alpha})\nabla^{\kappa}d_{\mu\nu\alpha}{}^{\rho}
  \nonumber
\\
 && -4( W_{\kappa\rho[\mu}{}^{\alpha}  -  \Theta d_{\kappa\rho[\mu}{}^{\alpha})\nabla^{\kappa}d_{\nu]\alpha\sigma}{}^{\rho}
 +\frac{1}{2} R^{\rho\alpha}\epsilon_{\mu\alpha\nu}{}^{\delta} \epsilon_{\sigma\rho}{}^{\beta\gamma} \nabla_{\lambda}d_{\beta\gamma\delta}{}^{\lambda}
  \nonumber
\\
 &&  + 2R_{[\mu}{}^{\alpha} \nabla_{|\rho}d_{\sigma\alpha|\nu]}{}^{\rho}
 + \Theta d_{\mu\nu\kappa}{}^{\alpha} \nabla_{\rho} d_{\alpha}{}^{\kappa}{}_{\sigma}{}^{\rho}
 + 4 \Theta d_{\sigma}{}^{\kappa}{}_{[\mu} {}^{\alpha}\nabla_{|\rho|}d_{\nu]\alpha\kappa}{}^{\rho}
  \nonumber
\\
&&
+( R_{\sigma}{}^{\alpha} +\frac{1}{2}R\delta_{\sigma}{}^{\alpha})\nabla_{\rho}d_{\mu\nu\alpha}{}^{\rho}
 + f_{\mu\nu\sigma}(x; H,\nabla H, \nabla K)
\label{divergence_d}
 \;.
\end{eqnarray}
The validity of the last equality follows from \eq{ricci_riccired}, \eq{cwe4}, \eq{cwe5}, \eq{eqn_curvscalar} and \eq{d_bianchi_divergence}.
Note that  to establish  \eq{d_bianchi_divergence} one just needs the algebraic properties of $d_{\mu\nu\sigma}{}^{\rho}$ which are ensured by Lemma~\ref{lemma_properties_d}.

Next, let us derive a wave equation for $\Xi_{\mu\nu}$.
With \eq{cwe1}-\eq{cwe3}, \eq{cwe5}, \eq{ricci_riccired}, \eq{vanishingL} and \eq{eqn_curvscalar} the following relation is verified,
\begin{eqnarray}
 \Box_g\Xi_{\mu\nu}
 &\equiv&  \nabla_{\mu}\nabla_{\nu}\Box_g\Theta +2\nabla_{(\mu}R_{\nu)\kappa}\nabla^{\kappa}\Theta
 + 2 R_{\kappa(\mu}\nabla_{\nu)}\nabla^{\kappa}\Theta + 2R_{\sigma\mu\nu}{}^{\kappa}\nabla^{\sigma}\nabla_{\kappa}\Theta
 \nonumber
\\
 && - \nabla_{\kappa}R_{\mu\nu}\nabla^{\kappa}\Theta
 +  L_{\mu\nu}\Box_g \Theta + \Theta \Box_g L_{\mu\nu} +2 \nabla^{\sigma}\Theta \nabla_{\sigma} L_{\mu\nu}- g_{\mu\nu}\Box_g s
  \nonumber
\\
 &=&  2(2 L_{(\mu}{}^{\kappa}\delta_{\nu)}{}^{\sigma} -g_{\mu\nu}L^{\sigma\kappa}   - W_{\mu}{}^{\sigma}{}_{\nu}{}^{\kappa}) \Xi_{\sigma\kappa}
   + 2\Theta L_{\sigma\kappa}(W_{\mu}{}^{\sigma}{}_{\nu}{}^{\kappa}-\Theta d_{\mu}{}^{\sigma}{}_{\nu}{}^{\kappa})
  \nonumber
\\
 &&
  + 4\nabla_{(\mu}\Upsilon_{\nu)}  + \frac{1}{6} R\Xi_{\mu\nu}
  + f_{\mu\nu}(x;H,\nabla H,\nabla K)
 \label{conf3**_wave}
 \;,
\end{eqnarray}
where we have set
\begin{equation}
 \Upsilon_{\mu}\, := \, \nabla_{\mu} s + L_{\mu\nu}\nabla^{\nu} \Theta
 \;.
\end{equation}

Of course, we also need a wave equation for $\Upsilon_{\mu}$.
Using again \eq{cwe1}-\eq{cwe3}, \eq{cwe5} as well as \eq{ricci_riccired} and \eq{eqn_curvscalar} we find that
\begin{eqnarray}
 \Box_g\Upsilon_{\mu} &\equiv& \nabla_{\mu}\Box_g s + R_{\mu}{}^{\kappa}\nabla_{\kappa}s + \Box_g L_{\mu\nu}\nabla^{\nu}\Theta + L_{\mu}{}^{\nu}\nabla_{\nu}\Box_g\Theta
 + L_{\mu}{}^{\nu}R_{\nu}{}^{\kappa}\nabla_{\kappa}\Theta
 \nonumber
\\
 &&+2\nabla_{\sigma}L_{\mu\nu}\nabla^{\sigma}\nabla^{\nu}\Theta
 \nonumber
\\
 &=& 6L_{\mu}{}^{\kappa}\Upsilon_{\kappa}  + 2\Theta L^{\rho\kappa}\varkappa_{\rho\kappa\mu} +2\Xi_{\nu\sigma}\nabla^{\sigma}L_{\mu}{}^{\nu}
- \frac{1}{6}\Xi_{\mu}{}^{\nu} \nabla_{\nu} R
 \nonumber
\\
 && + f_{\mu}(x;H,\nabla H,\nabla K)
  \label{conf4**_wave}
 \;.
\end{eqnarray}

Finally, let us derive a wave equation which is satisfied by $\varkappa_{\mu\nu\sigma}\equiv \frac{1}{2} \zeta_{\mu\nu\sigma} - \nabla_{\kappa}\Theta d_{\nu\sigma\mu}{}^{\kappa}$.
The definition of the Weyl tensor \eq{riemann_ricci_weyl} together with the Bianchi identities yield
\begin{eqnarray*}
\frac{1}{2} \Box_g \zeta_{\mu\nu\sigma}
  &\equiv& 2\nabla_{[\sigma}\Box_g L_{\nu]\mu}
 - 2W_{\nu\sigma\kappa\rho}\nabla^{\rho}L_{\mu}{}^{\kappa}+4W_{\mu\kappa\rho[\sigma} \nabla^{\rho}L_{\nu]}{}^{\kappa}
  -  2R_{\kappa[\nu} \nabla_{\sigma]}L_{\mu}{}^{\kappa}
\\
 && + 2R_{\kappa[\sigma} \nabla_{|\mu|}L_{\nu]}{}^{\kappa} - 2R_{\mu[\sigma}  \nabla_{|\kappa|}L_{\nu]}{}^{\kappa}    -2R_{\rho\kappa} g_{\mu[\sigma} \nabla^{\rho}L_{\nu]}{}^{\kappa}   +  \frac{1}{6}R_g\zeta_{\mu\nu\sigma}
\\
 &&  + \frac{2}{3}R_g g_{\mu[\sigma}\nabla^{\kappa}L_{\nu]\kappa}  + 2L_{\mu}{}^{\kappa}\nabla_{[\nu}R_{\sigma]\kappa}
 +2  L_{\nu}{}^{\kappa}\nabla_{[\mu}R_{\kappa]\sigma} +  2  L_{\sigma}{}^{\kappa}\nabla_{[\kappa}R_{\mu]\nu}
\\
 &=& 2\zeta_{\mu\kappa[\sigma} L_{\nu]}{}^{\kappa}    + 3\zeta_{\alpha[\nu\sigma}  g_{\kappa]\mu} L^{\alpha\kappa}
 + 4  L_{\rho}{}^{\kappa}\nabla_{[\nu}(\Theta d_{\sigma]\kappa\mu}{}^{\rho}) + 2\Theta \zeta_{\alpha\kappa[\nu}d_{\sigma]}{}^{\kappa}{}_{\mu}{}^{\alpha}
\\
 &&  +4(W_{\mu}{}^{\rho}{}_{[\nu}{}^{\kappa} - \Theta d_{\mu}{}^{\rho}{}_{[\nu}{}^{\kappa}) \nabla_{|\kappa|} L_{\sigma]\rho}
 - \zeta_{\mu\alpha\kappa}W_{\nu}{}^{\alpha}{}_{\sigma}{}^{\kappa}
+ \frac{1}{3} L_{\mu[\nu}\nabla_{\sigma]}R
\\
 &&  + \frac{1}{6} (R_{\sigma\nu\mu}{}^{\kappa} + 2g_{\mu[\nu} L_{\sigma]}{}^{\kappa} )\nabla_{\kappa}R
+  \frac{1}{12}R_g\zeta_{\mu\nu\sigma}
    + f_{\mu\nu\sigma}(x; H, \nabla H,\nabla K)
 \;,
\end{eqnarray*}
where the last equality follows from  \eq{ricci_riccired}, \eq{cwe1}, \eq{cwe5} and \eq{eqn_curvscalar}.
We employ \eq{cwe3}-\eq{cwe5} and \eq{ricci_riccired} to deduce that
\begin{eqnarray*}
 \Box_ g (\nabla_{\kappa}\Theta d_{\nu\sigma\mu}{}^{\kappa})
 &\equiv &   d_{\nu\sigma\mu}{}^{\kappa}  (\nabla_{\kappa}\Box_g \Theta+  R_{\kappa}{}^{\rho}\nabla_{\rho}\Theta )
 +\nabla_{\kappa}\Theta  \Box_ g  d_{\nu\sigma\mu}{}^{\kappa}
 + 2 \nabla_{\alpha}\nabla_{\kappa}\Theta \nabla^{\alpha}d_{\nu\sigma\mu}{}^{\kappa}
\\
&=&   4 \Upsilon_{\kappa}d_{\nu\sigma\mu}{}^{\kappa}
 - 2 L_{\kappa}{}^{\rho}\nabla_{\rho}(\Theta  d_{\nu\sigma\mu}{}^{\kappa})
 + 2\Xi_{\lambda\kappa} \nabla^{\lambda}d_{\nu\sigma\mu}{}^{\kappa}
   +2s\nabla_{\kappa}d_{\nu\sigma\mu}{}^{\kappa}
\\
 &&   +\Theta  (\frac{1}{2}\zeta_{\mu\lambda}{}^{\alpha} - \varkappa_{\mu\lambda}{}^{\alpha})d_{\nu\sigma\alpha}{}^{\lambda}
   - \Theta  (2\zeta_{\alpha\lambda[\sigma} - 4\varkappa_{\alpha\lambda[\sigma})d_{\nu]}{}^{\lambda}{}_{\mu} {}^{\alpha}
\\
 && + \frac{1}{2}R d_{\nu\sigma\mu}{}^{\kappa}\nabla_{\kappa}\Theta -\frac{1}{6} \Theta  d_{\nu\sigma\mu}{}^{\kappa}  \nabla_{\kappa}   R
 + f_{\mu\nu\sigma}(x; H, \nabla H)
 \;.
\end{eqnarray*}
With \eq{eqn_curvscalar} we are led to
\begin{eqnarray}
  \Box_g \varkappa_{\mu\nu\sigma}
 &=&   4\nabla^{\beta} \Theta \big\{ g_{\beta[\nu}  d_{\sigma]\kappa\mu}{}^{\alpha} L_{\alpha}{}^{\kappa} - g_{\mu[\nu}  d_{\sigma]\kappa\beta}{}^{\alpha} L_{\alpha}{}^{\kappa}
- d_{\mu\beta\kappa[\nu} L_{\sigma]}{}^{\kappa} \big.
\nonumber
\\
 && \hspace{4em}  -\big.  d_{\nu\sigma\kappa[\mu} L_{\beta]}{}^{\kappa}  - \frac{1}{12}R d_{\nu\sigma\mu\beta}  \big\}
  \nonumber
\\
 &&  +6\Theta   L_{\rho}{}^{\kappa} \nabla_{[\nu}  d_{\sigma\kappa]\mu}{}^{\rho}
 - 2\Xi_{\lambda\kappa} \nabla^{\lambda}d_{\nu\sigma\mu}{}^{\kappa} -4\Upsilon_{\kappa}d_{\nu\sigma\mu}{}^{\kappa}
  \nonumber
\\
 &&  +4(W_{\mu}{}^{\rho}{}_{[\nu}{}^{\kappa}- \Theta d_{\mu}{}^{\rho}{}_{[\nu}{}^{\kappa}) \nabla_{|\kappa|} L_{\sigma]\rho}
  -  \frac{1}{2}\zeta_{\mu\kappa}{}^{\alpha}(W_{\nu\sigma\alpha}{}^{\kappa} - \Theta d_{\nu\sigma\alpha}{}^{\kappa}  )
  \nonumber
\\
 && -  4\varkappa_{\mu\kappa[\nu}  L_{\sigma]}{}^{\kappa} + 6\varkappa_{\alpha[\nu\sigma}  g_{\kappa]\mu} L^{\alpha\kappa}
 +\Theta \varkappa_{\mu\lambda}{}^{\alpha}d_{\nu\sigma\alpha}{}^{\lambda}  - 4\Theta \varkappa_{\alpha\lambda[\sigma}d_{\nu]}{}^{\lambda}{}_{\mu} {}^{\alpha}
  \nonumber
\\
&& - 2s \nabla_{\kappa}d_{\nu\sigma\mu}{}^{\kappa}
-\frac{1}{6}(W_{\nu\sigma\mu}{}^{\kappa}  - \Theta  d_{\nu\sigma\mu}{}^{\kappa} ) \nabla_{\kappa}   R
  + \frac{1}{6}R_g\varkappa_{\mu\nu\sigma}
 \nonumber
\\
 &&   +f_{\mu\nu\sigma}(x; H, \nabla H,\nabla K)
\label{prov_wave_xi}
 \;.
\end{eqnarray}
The term in braces needs to be eliminated.
To this end let us consider the expression (we use the implications of Lemma~\ref{lemma_properties_d})
\begin{eqnarray*}
3 \nabla^{\lambda}\nabla_{[\lambda}d_{\mu\nu]\sigma\rho} &=&\Box_g d_{\mu\nu\sigma\rho}
  - \nabla_{\nu}\nabla_{\lambda}d_{\sigma\rho\mu}{}^{\lambda}   + \nabla_{\mu}\nabla_{\lambda}d_{\sigma\rho\nu}{}^{\lambda}
\\
 && - W_{\mu\nu\lambda}{}^{\kappa}d_{\sigma\rho\kappa}{}^{\lambda} +2 W_{\lambda\nu[\sigma}{}^{\kappa}d_{\rho]\kappa\mu}{}^{\lambda}
 -2 W_{\lambda\mu[\sigma}{}^{\kappa}d_{\rho]\kappa\nu}{}^{\lambda}
\\
 &&- d_{\sigma\rho[\mu}{}^{\kappa}R_{\nu]\kappa}-d_{\mu\nu[\sigma}{}^{\lambda} R_{\rho]\lambda}
 + g_{\mu[\sigma}d_{\rho]\kappa\nu\lambda}R^{\lambda\kappa} - g_{\nu[\sigma}d_{\rho]\kappa\mu\lambda}R^{\lambda\kappa}
\;.
\end{eqnarray*}
We take \eq{d_bianchi_divergence}, \eq{ricci_riccired}, \eq{cwe4} and \eq{cwe5} into account to rewrite this equation as
\begin{eqnarray}
&&  \hspace{-4em}2g_{\nu[\sigma}d_{\rho]\kappa\mu}{}^{\alpha}L_{\alpha}{}^{\kappa}
  -2g_{\mu[\sigma}d_{\rho]\kappa\nu}{}^{\alpha}L_{\alpha}{}^{\kappa}
 - 2d_{\mu\nu\kappa[\sigma} L_{\rho]}{}^{\kappa} -  2d_{\sigma\rho\kappa[\mu}L_{\nu]}{}^{\kappa} - \frac{1}{6}R d_{\mu\nu\sigma\rho}
 \nonumber
\\
 &\equiv& 2 (W_{\sigma\kappa[\mu} {}^{\alpha}- \Theta d_{\sigma\kappa[\mu} {}^{\alpha})d_{\nu]\alpha\rho}{}^{\kappa}
   -2( W_{[\mu|\alpha\rho}{}^{\kappa} - \Theta d_{[\mu|\alpha\rho}{}^{\kappa}  )d_{\sigma\kappa|\nu]} {}^{\alpha}
 \nonumber
\\
   &&- (W_{\mu\nu\kappa}{}^{\alpha} - \Theta d_{\mu\nu\kappa}{}^{\alpha})d_{\sigma\rho\alpha}{}^{\kappa}
   + 2\nabla_{[\mu}\nabla_{|\lambda}d_{\sigma\rho|\nu]}{}^{\lambda}
 \nonumber
\\
 &&- \frac{1}{2}  \epsilon_{\lambda\mu\nu}{}^{\kappa} \epsilon_{\sigma\rho}{}^{\beta\gamma} \nabla^{\lambda}\nabla_{\alpha}d_{\beta\gamma\kappa}{}^{\alpha}
 + f_{\mu\nu\sigma\rho}(x; H, \nabla H)
\label{rln_Schouten_resc_Weyl}
\;.
\end{eqnarray}
Combining \eq{rln_Schouten_resc_Weyl} with \eq{prov_wave_xi} and  \eq{d_bianchi_divergence}  yields a wave equation for $\varkappa_{\mu\nu\sigma}$,
\begin{eqnarray}
  \Box_g \varkappa_{\mu\nu\sigma}
  &=& 4 \nabla^{\beta} \Theta [ ( W_{\nu\kappa[\mu}{}^{\alpha} - \Theta d_{\nu\kappa[\mu}{}^{\alpha}  ) d_{\beta]\alpha\sigma}{}^{\kappa}
   -(W_{\sigma\kappa[\mu} {}^{\alpha}- \Theta d_{\sigma\kappa[\mu} {}^{\alpha})  d_{\beta]\alpha\nu}{}^{\kappa}]
 \nonumber
\\
   &&+2 (W_{\mu\beta\kappa}{}^{\alpha} - \Theta d_{\mu\beta\kappa}{}^{\alpha}) \nabla^{\beta} \Theta d_{\sigma\nu\alpha}{}^{\kappa}
  + 4 \nabla^{\beta} \Theta \nabla_{[\beta}(\nabla_{\lambda}d_{|\sigma\nu|\mu]}{}^{\lambda})
 \nonumber
\\
 && +   \epsilon_{\lambda\mu\beta}{}^{\kappa} \epsilon_{\sigma\nu}{}^{\delta\gamma}  \nabla^{\beta} \Theta \nabla^{\lambda}(\nabla_{\alpha}d_{\delta\gamma\kappa}{}^{\alpha})
 +\Theta   L_{\rho}{}^{\kappa} \epsilon_{\sigma\kappa\nu}{}^{\delta} \epsilon_{\mu}{}^{\rho\beta\gamma} \nabla_{\alpha}d_{\beta\gamma\delta}{}^{\alpha}
  \nonumber
\\
 &&   - 2\Xi_{\lambda\kappa} \nabla^{\lambda}d_{\nu\sigma\mu}{}^{\kappa} -4\Upsilon_{\kappa}d_{\nu\sigma\mu}{}^{\kappa}
 +4(W_{\mu}{}^{\rho}{}_{[\nu}{}^{\kappa}- \Theta d_{\mu}{}^{\rho}{}_{[\nu}{}^{\kappa}) \nabla_{|\kappa|} L_{\sigma]\rho}
  \nonumber
\\
 &&    -  4\varkappa_{\mu\kappa[\nu}  L_{\sigma]}{}^{\kappa} + 6\varkappa_{\alpha[\nu\sigma}  g_{\kappa]\mu} L^{\alpha\kappa}
 + \frac{1}{2} \zeta_{\mu\kappa}{}^{\alpha}(W_{\nu\sigma\alpha}{}^{\kappa}-\Theta d_{\nu\sigma\alpha}{}^{\kappa} )
  \nonumber
\\
 &&   +\Theta \varkappa_{\mu\lambda}{}^{\alpha}d_{\nu\sigma\alpha}{}^{\lambda}
 +  4\Theta  \varkappa_{\alpha\lambda[\nu}d_{\sigma]}{}^{\lambda}{}_{\mu}{}^{\alpha}
  -\frac{1}{6}(W_{\nu\sigma\mu}{}^{\kappa}  - \Theta  d_{\nu\sigma\mu}{}^{\kappa} ) \nabla_{\kappa}   R
  \nonumber
\\
 &&  - 2s \nabla_{\kappa}d_{\nu\sigma\mu}{}^{\kappa}
  + \frac{1}{6}R_g\varkappa_{\mu\nu\sigma}
+f_{\mu\nu\sigma}(x; H, \nabla H,\nabla K)
 \label{wave_zeta-xi}
 \;.
\end{eqnarray}

The equations \eq{wave_H*}, \eq{wave_DiffH}, \eq{wave_zetavec}, \eq{wave_W-C}, \eq{divergence_d}, \eq{conf3**_wave}, \eq{conf4**_wave} and \eq{wave_zeta-xi}
form a closed, linear, homogeneous system of linear wave equations satisfied by the fields $H^{\sigma}$, $K_{\mu\nu}$, $\zeta_{\mu}$, $W_{\mu\nu\sigma\rho} -\Theta d_{\mu\nu\sigma\rho} $, $\nabla_{\rho}d_{\mu\nu\sigma}{}^{\rho}$, $\Xi_{\mu\nu}$, $\Upsilon_{\mu}$ and $\varkappa_{\mu\nu\sigma}$,
with all other quantities regarded as being given.
An application of standard uniqueness results, cf.\ e.g.\  \cite{friedlander}, establishes that
all the fields vanish, supposing that this is initially the case.
In particular this guarantees consistency with the gauge condition, i.e.\ $H^{\sigma}=0$ and, by \eq{ricci_schouten},  $R_g=R$, for solutions of the CWE.
In fact we have proven more, and that will be of importance in the next section.

\subsection{Equivalence issue between the CWE and the MCFE}

Recall the CWE \eq{cwe1}-\eq{cwe5} and the MCFE \eq{conf1}-\eq{conf6}.
Let us tackle the equivalence issue between them.
A look at the derivation of the CWE reveals that any solution of the MCFE which satisfies the gauge condition $H^{\sigma}=0$
will be a solution of the CWE with gauge source function $R=R_g$.
The other direction is the more interesting albeit more involved one.
We therefore devote ourselves subsequently to the issue whether (or rather under which conditions) a solution of the CWE is also  a solution
of the MCFE.
We shall demonstrate that a solution of the CWE is a solution of the MCFE supposing that it satisfies certain relations on the initial surface.
In fact, most of the work has already been done in the previous section.

We have the following intermediate result; we emphasize that the conformal factor is allowed to have zeros, or vanish, on the initial surface:
\begin{theorem}
\label{inter-thm}
Assume we have been given data ($\mathring g_{\mu\nu}$, $\mathring s$, $\mathring \Theta$, $\mathring L_{\mu\nu}$, $\mathring d_{\mu\nu\sigma}{}^{\rho}$) on a characteristic initial surface $S$ (for definiteness we think either of two transversally intersecting null hypersurfaces or a light-cone) and a gauge source function $R$, such that $\mathring g_{\mu\nu}$ is the restriction to $S$ of a Lorentzian metric, $\mathring L_{\mu\nu}$ is symmetric, $\mathring L_{\mu}{}^{\mu} \equiv  \mathring L = \overline R/6$, and such that $\mathring d_{\mu\nu\sigma}{}^{\rho}$   satisfies all the algebraic properties of the Weyl tensor (cf.\ the assumptions of Lemma~\ref{lemma_properties_d}).
Suppose further that there exists a solution ($g_{\mu\nu}$, $s$, $\Theta$, $L_{\mu\nu}$, $d_{\mu\nu\sigma}{}^{\rho}$)  of the CWE \eq{cwe1}-\eq{cwe5} with gauge source function $R$ which induces the above data on $S$ and fulfills the following conditions:
\begin{enumerate}
 \item The MCFE \eq{conf1}-\eq{conf4} are fulfilled on $S$;
 \item  equation \eq{conf5} holds at one point on $S$;
\item the Weyl tensor $W_{\mu\nu\sigma}{}^{\rho}[g]$  coincides on $S$ with $\mathring \Theta \mathring d_{\mu\nu\sigma}{}^{\rho}$;
 \item  the wave-gauge vector $H^{\sigma}$ and its covariant derivative $K_{\mu}{}^{\sigma}\equiv \nabla_{\mu}H^{\sigma}$ vanish on $S$;
\item the covector  field $\zeta_{\mu}\equiv -4(\nabla_{\nu}L_{\mu}{}^{\nu} - \frac{1}{6}\nabla_{\mu}R)$  vanishes on $S$.
\end{enumerate}
Then
\begin{enumerate}
 \item[a)] $H^{\sigma}=0$ and $R_g=R$;
\item[b)] $L_{\mu\nu}$ is the Schouten  tensor of $g_{\mu\nu}$;
\item[c)] $\Theta d_{\mu\nu\sigma}{}^{\rho}$ is the Weyl tensor of $g_{\mu\nu}$;
\item[d)] ($g_{\mu\nu}$, $s$, $\Theta$, $L_{\mu\nu}$, $d_{\mu\nu\sigma}{}^{\rho}$)  solves the MCFE \eq{conf1}-\eq{conf6} with $H^{\sigma}=0$ and $R_g=R$.
\end{enumerate}
The conditions 1-5 are  necessary for d) to be true.
\end{theorem}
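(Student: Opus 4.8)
The plan is to leverage the closed, linear, homogeneous wave system \eq{wave_H*}, \eq{wave_DiffH}, \eq{wave_zetavec}, \eq{wave_W-C}, \eq{divergence_d}, \eq{conf3**_wave}, \eq{conf4**_wave} and \eq{wave_zeta-xi} assembled in the previous subsection, which is satisfied by the subsidiary collection
\[
 H^{\sigma},\quad K_{\mu\nu},\quad \zeta_{\mu},\quad W_{\mu\nu\sigma\rho}-\Theta d_{\mu\nu\sigma\rho},\quad \nabla_{\rho}d_{\mu\nu\sigma}{}^{\rho},\quad \Xi_{\mu\nu},\quad \Upsilon_{\mu},\quad \varkappa_{\mu\nu\sigma}
\]
for \emph{any} solution of the CWE, provided the algebraic structure of Lemmas~\ref{lem_g_sym}, \ref{lemma_properties_d} and \ref{lemma_properties_L} is available. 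Since the induced data are assumed to meet the algebraic hypotheses of those lemmas, $g_{\mu\nu}$ and $L_{\mu\nu}$ are symmetric, $d_{\mu\nu\sigma\rho}$ enjoys the full set of Weyl symmetries, and $L=\tfrac16 R$ holds throughout; this is precisely what renders the above system genuinely closed and homogeneous. The entire argument then collapses to verifying that each field in the list vanishes on $S$: by the standard uniqueness theorem for linear wave equations (cf.\ \cite{friedlander}) they all vanish on the domain of dependence, whence a)--d) can be read off.

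The first step is to match each subsidiary field with a hypothesis. Conditions $H^{\sigma}|_S=0$ and $K_\mu{}^{\sigma}|_S=0$ are hypothesis~4, and $\zeta_\mu|_S=0$ is hypothesis~5; hypothesis~3 is exactly $(W_{\mu\nu\sigma\rho}-\Theta d_{\mu\nu\sigma\rho})|_S=0$. The four remaining fields are governed by hypothesis~1: $\nabla_\rho d_{\mu\nu\sigma}{}^{\rho}|_S=0$ is \eq{conf1}, $\Xi_{\mu\nu}|_S=0$ is \eq{conf3}, and $\Upsilon_\mu|_S=0$ is \eq{conf4}. For $\varkappa_{\mu\nu\sigma}=\tfrac12\zeta_{\mu\nu\sigma}-\nabla_\kappa\Theta\,d_{\nu\sigma\mu}{}^{\kappa}$ I would substitute \eq{conf2}: since $\zeta_{\mu\nu\sigma}=4\nabla_{[\sigma}L_{\nu]\mu}$, a relabeling of \eq{conf2} gives $\tfrac12\zeta_{\mu\nu\sigma}=\nabla_\kappa\Theta\,d_{\nu\sigma\mu}{}^{\kappa}$ on $S$, hence $\varkappa_{\mu\nu\sigma}|_S=0$. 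Thus all eight fields vanish on $S$, and therefore identically.

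It remains to harvest the conclusions. From $H^{\sigma}\equiv0$ together with $L=\tfrac16R$ (Lemma~\ref{lemma_properties_L}), identity \eq{eqn_curvscalar} yields $R_g=R$, giving a). As $H^{\sigma}=0$, the reduced Ricci tensor coincides with the Ricci tensor, so \eq{cwe5} reads $R_{\mu\nu}=2L_{\mu\nu}+\tfrac16 R g_{\mu\nu}$; solving for $L_{\mu\nu}$ and using $R_g=R$ exhibits $L_{\mu\nu}$ as the Schouten tensor, which is b). The identical vanishing of $W_{\mu\nu\sigma\rho}-\Theta d_{\mu\nu\sigma\rho}$ is c). For d), the vanishing of $\nabla_\rho d_{\mu\nu\sigma}{}^{\rho}$, $\varkappa_{\mu\nu\sigma}$, $\Xi_{\mu\nu}$ and $\Upsilon_\mu$ recovers \eq{conf1}, \eq{conf2}, \eq{conf3} and \eq{conf4} everywhere; \eq{conf5} then follows from hypothesis~2, because once \eq{conf3} and \eq{conf4} hold the gradient of $2\Theta s-\nabla_\mu\Theta\nabla^\mu\Theta$ vanishes, so validity at a single point propagates; and \eq{conf6} is obtained by inserting b) and c) into \eq{riemann_ricci_weyl}. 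Necessity is immediate: if d) holds then \eq{conf1}--\eq{conf4}, a one-point instance of \eq{conf5}, $H^\sigma=K=0$ and $W=\Theta d$ all hold on $S$, while the contracted Bianchi identity \eq{BianchiL} for the Schouten tensor (with $R_g=R$) forces $\zeta_\mu=0$.

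Because the wave equations for the subsidiary fields were already derived in the preceding subsection, no hard analysis remains; the work is organizational. The delicate points I expect are purely bookkeeping: confirming that hypothesis~1 initializes \emph{all} of $\nabla_\rho d_{\mu\nu\sigma}{}^{\rho}$, $\Xi_{\mu\nu}$, $\Upsilon_\mu$ and $\varkappa_{\mu\nu\sigma}$ on $S$ (the last via the relabeling of \eq{conf2}), and that the one-point hypothesis on \eq{conf5} genuinely propagates through \eq{conf3}--\eq{conf4}. I anticipate the main care to lie in tracking which of Lemmas~\ref{lem_g_sym}--\ref{lemma_properties_L} supplies which symmetry, so that the subsidiary system is certified to be closed and homogeneous before uniqueness is invoked.
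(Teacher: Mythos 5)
Your proposal is correct and follows essentially the same route as the paper's own proof: identify each of the eight subsidiary fields with one of the hypotheses 1, 3--5 to get vanishing initial data, invoke uniqueness for the closed linear homogeneous wave system already derived, and then read off a)--d), with condition 2 handling \eq{conf5} at one point. Your additional remarks on the necessity direction and on why the one-point validity of \eq{conf5} propagates are consistent with what the paper leaves implicit.
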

\begin{proof}
 The conditions 1 and 3-5 make sure that the fields $H^{\sigma}$, $K_{\mu\nu}$, $\zeta_{\mu}$, $W_{\mu\nu\sigma\rho} -\Theta d_{\mu\nu\sigma\rho} $, $\nabla_{\rho}d_{\mu\nu\sigma}{}^{\rho}$, $\Xi_{\mu\nu}$, $\Upsilon_{\mu}$ and $\varkappa_{\mu\nu\sigma}$ vanish on $S$.
In the previous section we have seen that they provide a solution of the
 closed, linear, homogeneous system of wave equations \eq{wave_H*}, \eq{wave_DiffH}, \eq{wave_zetavec}, \eq{wave_W-C}, \eq{divergence_d}, \eq{conf3**_wave}, \eq{conf4**_wave} and \eq{wave_zeta-xi}, so  that all these fields  need to vanish identically.
In particular that implies $H^{\sigma}=0$, that $\Theta d_{\mu\nu\sigma}{}^{\rho}$ is the Weyl tensor of $g_{\mu\nu}$, and that \eq{conf1}-\eq{conf4} hold.
 The vanishing of $H^{\sigma}$ guarantees that the Ricci tensor coincides with the reduced Ricci tensor and  by \eq{eqn_curvscalar} that $R$ is the curvature scalar $R_g$ of $g_{\mu\nu}$. Equation \eq{cwe5} then tells us that $L_{\mu\nu}$  is the Schouten tensor. Hence \eq{conf6}  is an identity and automatically satisfied.
To establish \eq{conf5}, it suffices to check that it is satisfied at one point, which is ensured by condition~2.
\qed
\end{proof}

In the following we shall investigate to what extent the conditions~1-5 are satisfied if the fields  $\mathring g_{\mu\nu}$,
$\mathring L_{\mu\nu}$, $\mathring d_{\mu\nu\sigma}{}^{\rho}$,  $\mathring \Theta$ and $\mathring s$
are constructed as solutions of the constraint equations induced by the MCFE on the initial surface.

\section{Constraint equations induced by the MCFE on the $\mathbf{C_{i^-}}$-cone}
\label{sec_constraint_equations}
\label{adapted_null_coord}

\subsection{Adapted null coordinates and another gauge freedom}
The aim of this section is to derive the set of constraint equations  induced by the MCFE,
\begin{eqnarray}
 && \nabla_{\rho} d_{\mu\nu\sigma}{}^{\rho} =0
  \;,
 \label{mconf1}
\\
 && \nabla_{\mu}L_{\nu\sigma} - \nabla_{\nu}L_{\mu\sigma} = \nabla_{\rho}\Theta \, d_{\nu\mu\sigma}{}^{\rho}
  \;,
 \label{mconf2}
\\
 && \nabla_{\mu}\nabla_{\nu}\Theta = -\Theta L_{\mu\nu} +  s  g_{\mu\nu}
  \;,
 \label{mconf3}
\\
 && \nabla_{\mu}  s = - L_{\mu\nu}\nabla^{\nu}\Theta
 \;,
 \label{mconf4}
\\
 && 2\Theta  s -  \nabla_{\mu}\Theta\nabla^{\mu}\Theta = 0
  \;,
 \label{mconf5}
\\
 &&  R_{\mu\nu\sigma}{}^{\kappa}[ g] = \Theta  d_{\mu\nu\sigma}{}^{\kappa} + 2\left( g_{\sigma[\mu} L_{\nu]}{}^{\kappa}  - \delta_{[\mu}{}^{\kappa} L_{\nu]\sigma} \right)
 \label{mconf6}
\;,
\end{eqnarray}
on the initial surface $S$, where we assume henceforth
\begin{equation}
\lambda=0\;.
\label{eq_vanishing_lambda}
\end{equation}
 By \textit{constraint equations} we mean intrinsic equations on the initial surface which determine the fields
$ g_{\mu\nu}|_S$, $ L_{\mu\nu}|_S$, $  d_{\mu\nu\sigma}{}^{\rho}|_S$,  $  \Theta|_S$ and $ s|_S$
starting from suitable free ``reduced'' data.
We shall do this in adapted null coordinates
and imposing a generalized wave-map gauge condition.
To avoid too many case distinctions we shall derive them in the case where the initial surface is the light-cone $C_{i^-}$ on which the conformal factor $\Theta$ vanishes (this requires  \eq{eq_vanishing_lambda}, cf.\ \eq{conf5} evaluated on $C_{i^-}$), which is completely sufficient for our purposes.

\textit{Adapted null coordinates}  $(u,r, x^A)$ are defined in such a way that  $\{x^0\equiv u=0\}=\Scri^- \equiv C_{i^-}\setminus \{i^-\}$, $x^1\equiv r>0$ parameterizes
the null rays emanating from $i^-$, and $x^A$, $A=2,3$, are local coordinates on the level sets $\{r=\mathrm{const},u=0\}\cong S^2$ (note that these coordinates are singular at the tip, see~\cite{CCM2} for more~details).

First we shall sketch how the constraint equations are obtained in a generalized wave-map gauge with arbitrary gauge functions.
We shall write them down explicitly in a specific gauge afterwards.

We use the same notation as in~\cite{CCM2}, i.e. $\nu_0:=\overline g_{01}$, $\nu_A:=\overline g_{0A}$. The function $\chi_A{}^B:=\frac{1}{2}\overline g^{BC}\partial_1 \overline g_{AC}$ denotes the \textit{null second fundamental form}, the function $\tau$, which describes the \textit{expansion} of the cone, its trace, and the \textit{shear tensor} $\sigma_A{}^B$ its traceless part.
The symbols $\tilde\nabla_A$, $\tilde\Gamma^C_{AB}$ and $\tilde R_{AB}$ refer to the $r$-dependent Riemannian  metric $\tilde g:= \overline g_{AB}\mathrm{d}x^A\mathrm{d}x^B$.


The equation \eq{eqn_gAB} below together with regularity conditions at the tip of the cone imply that
$\tilde g$ is conformal to the   the standard metric $s_{AB}$ on the 2-sphere $S^2$.
 It therefore makes sense  to take as reduced data the $\tilde g$ -trace-free part of $L_{AB}$ on $C_{i^-}$ (which coincides with its $s$-trace-free part). It will be denoted by $\breve{\overline L}_{AB}=:\omega_{AB}$.

The field $\omega_{AB}$ is an $r$-dependent tensor on $S^2$.
Here and in what follows  $\breve{.}$ denotes the $\tilde g$-trace-free part of the corresponding  2-tensor on $S^2$.
As before, overlining is used to indicate restriction to the initial surface.
The gauge degrees of freedom are comprised by $R$, $W^{\lambda}$, $\overline s$  (cf.\ Sections~\ref{conformal_field_equations} and \ref{reduced_cwe}) and $\kappa$.
The function $\kappa$ is given by
\begin{equation}
 \kappa:= \nu^0\partial_1\nu_0 -\frac{1}{2} \tau -\frac{1}{2}\nu_0 (\overline g^{\mu\nu}\overline{\hat \Gamma}{}^0_{\mu\nu} + \overline{ W}{}^0)
 \label{dfn_kappa}
\;,
\end{equation}
where  $\nu^0:= \ol g^{01}=(\nu_0)^{-1}$.
It reflects the freedom to parameterize the null geodesics
generating the initial surface~\cite{CCM2}; the choice $\kappa=0$ corresponds to an affine parameterization.

\subsection{Constraint equations in a generalized wave-map gauge}
\label{constraints_generalizedw}

We show that, in the case where the initial surface is $C_{i^-}$,
 the constraint equations form a  hierarchical system of  algebraic equations and ODEs along the generators of  $C_{i^-}$.
In doing so, we merely consider those gauge choices $W^{\lambda}$ which depend just upon the coordinates and none of the fields appearing in the CWE (cf.\ footnote~\ref{footnote_W_coord}).
To derive the constraint equations we assume we have been given a smooth solution of the
MCFE in a generalized wave-map gauge $H^{\sigma}=0$, smoothly extendable through $C_{i^-}$
We then evaluate the MCFE on $C_{i^-}$ and eliminate the transverse derivatives.
For this we shall assume that the solution satisfies $s_{i^-}\ne 0$, which implies that $\overline s^{-1}$ and $(\overline{\partial_0\Theta})^{-1}$ exist near $i^-$
(the existence of the latter one follows e.g.\ from \eq{expression_s} below).
The function $\tau^{-1}$ needs to exist  anyway close to $i^-$~\cite{CCM2}.
It should be emphasized that, on a light-cone, the initial data for the ODEs cannot be specified freely but follow from regularity conditions at the vertex.
For sufficiently regular gauges the behaviour of the relevant fields near the vertex is computed in \cite{CCM2}.
When stating this behaviour below we shall always tacitly assume that the gauge is sufficiently regular.
\label{suff_reg_gauge}

In the following we shall frequently make use of the formulae (A.8)-(A.25) in~\cite{CCM2} for the Christoffel symbols computed in adapted null coordinates
on a cone.

We consider \eq{mconf3} for $(\mu\nu)=(10),(AB)$ on $C_{i^-}$, where we take the $\overline g{}^{AB}$-trace of the latter equation,
\begin{eqnarray}
 \partial_1\overline{\partial_0\Theta} + (\kappa - \nu^0\partial_1\nu_0)\overline{\partial_0 \Theta} &=& \nu_0\overline s \;,
 \label{eqn_transTheta}
\\
 \overline s &=&\frac{1}{2}\tau\nu^0\overline{\partial_0 \Theta}
 \label{expression_s}
\end{eqnarray}
(note that $\overline H{}^{0}=0$ implies $\kappa = \overline \Gamma{}^1_{11}$~\cite{CCM2}).
Differentiating \eq{expression_s} and inserting the result  into \eq{eqn_transTheta} we obtain an equation for $\tau$,
\begin{eqnarray}
 \partial_1\tau - (\kappa +\partial_1\log|\overline s|)\tau + \frac{1}{2}\tau^2 =0
  \;.
 \label{constraint_tau}
\end{eqnarray}
The boundary behaviour 
is given by $\tau=2r^{-1} +O(r)$~\cite{CCM2}.

Due to our assumption $s_{i^-}\ne 0$ the (AB)-component of \eq{mconf3}, together with \eq{expression_s}, provides an equation for $\overline g_{AB}$
(at least sufficiently close to the vertex),%
\begin{eqnarray}
 \overline s(\partial_1\overline g_{AB} -\tau \overline  g_{AB})=0 \quad \Longleftrightarrow \quad \sigma_{AB}=0
  \label{eqn_gAB}
   \;.
\end{eqnarray}
The boundary condition 
is   $\overline g_{AB}=r^2 s_{AB} + O(r^4)$~\cite{CCM2}, with $s_{AB}$ the round sphere metric.

Using the definition of $L_{\mu\nu}$, which can be recovered from \eq{mconf6}, as well as \eq{eqn_gAB}, we find that
\begin{equation}
 \overline L_{11} \equiv-\frac{1}{2}\big(\partial_1\tau -\overline \Gamma{}^1_{11}\tau + \chi_A{}^B\chi_B{}^A\big)
  = -\frac{1}{2}\partial_1\tau +\frac{1}{2}\kappa \tau - \frac{1}{4}\tau^2
   \label{eqn_L11}
   \;.
\end{equation}

The gauge condition $\overline H{}^0 =0$ provides an equation for $\nu_0$,%
\footnote{
\label{footnote_W_coord}
Recall that we assume $W^{\lambda}$ to depend just upon the coordinates, otherwise one would have to be careful here and specify upon which components of which fields $W^{\lambda}$ is allowed to depend in order to get the hierarchical system we are about to derive.}
\begin{equation}
  \partial_1\nu^0 + \nu^0(\frac{1}{2}\tau + \kappa) + \frac{1}{2}\overline V{}^0 =0\;.
  \label{gauge0}
\end{equation}
%
Here we have set
\begin{eqnarray*}
   V^{\lambda} :=g^{\mu\nu} \hat\Gamma^{\lambda}_{\mu\nu} +  W^{\lambda}
    \;.
\end{eqnarray*}
%
The boundary condition is $\nu_0=1+O(r^2)$~\cite{CCM2}.
Equation \eq{expression_s} then determines $\overline{\partial_0\Theta}$.
%
The function $\overline{\partial_0 g_{11}}$ is computed from $\kappa=\overline \Gamma{}^1_{11}$, 
\begin{equation}
 \overline{\partial_0 g_{11}} = 2\partial_1\nu_0 - 2\nu_0\kappa
 \;.
\end{equation}
We remark that the values of certain transverse derivatives are needed on the way to derive the constraint equations.
As a matter of course the constraint equations themselves will not involve any transverse derivatives, for they are not part of the
characteristic  initial data for the CWE.

Let us introduce the field
\begin{eqnarray}
 \xi_A \,:=\,-2\nu^0\partial_1\nu_A + 4\nu^0\nu_B\chi_A{}^B + \nu_A\overline V{}^0 + \overline g_{AB}\overline V{}^B - \overline g_{AD} \overline g^{BC}\tilde\Gamma^D_{BC}
  \;.
    \label{eqn_xiA}
\end{eqnarray}
%
In a generalized wave-map gauge we have
\cite{CCM2}
\begin{eqnarray}
 \xi_A \,=\, -2\overline\Gamma{}^1_{1A}
 \label{relation_xi_Gamma}
  \;.
\end{eqnarray}
Invoking \eq{expression_s} and \eq{eqn_gAB},  equation \eq{mconf3} with $(\mu\nu)=(0A)$ can  be written as an equation for $\xi_A$,
\begin{equation}
 \xi_A \,=\, 2\partial_A\log|\overline{\partial_0\Theta}| -2\nu^0\partial_A\nu_0
\label{relation_xi_Theta}
 \;.
\end{equation}
The definition of $\xi_A$ can then be employed to compute $\nu_A$,
\begin{equation}
 \nu^0\partial_1\nu_A - \tau\nu^0\nu_A - \frac{1}{2}\nu_A \overline V{}^0 - \frac{1}{2}\overline g_{AB} \overline V{}^B + \frac{1}{2}\overline g_{AD} \overline g^{BC}\tilde\Gamma^D_{BC} + \frac{1}{2}\xi_A =0
 \label{dfn_xiA}
 \;.
\end{equation}
%
The boundary condition is given by $\nu_A =O(r^3)$~\cite{CCM2}.
The equation $\xi_A = -2\overline\Gamma{}^1_{1A}$ then determines $\overline{\partial_0 g_{1A}}$ algebraically,
\begin{equation}
 \overline{\partial_0 g_{1A}} = (\partial_A+\xi_A)\nu_0 + (\partial_1 - \tau)\nu_A
 \;.
\end{equation}
From  \eq{eqn_transTheta}, \eq{expression_s} and \eq{relation_xi_Theta} we obtain the relation
\begin{equation*}
 \partial_1\xi_A \,=\, \partial_A(\tau - 2\kappa)
 \;,
\end{equation*}
which yields
\begin{eqnarray}
 \overline L_{1A} &\equiv& \frac{1}{2}(\partial_1 +\tau)\overline\Gamma{}^1_{1A} + \frac{1}{2}\tilde\nabla_B\chi_A{}^B
  - \frac{1}{2}\partial_A\overline \Gamma{}^1_{11} - \frac{1}{2}\partial_A\tau
   \nonumber
\\
 &=& -\frac{1}{4}\tau\xi_A - \frac{1}{2}\partial_A\tau
  \label{eqn_L1A}
  \;.
\end{eqnarray}

We define the function
\begin{equation}
 \zeta \,:=\, 2(\partial_1 +\kappa +\frac{1}{2}\tau )\overline g^{11}
   + 2\overline V{}^1
    \label{dfn_zeta}
    \;.
\end{equation}
For a solution which satisfies the generalized wave-map gauge condition $H^{\sigma}=0$ it holds~\cite{CCM2} that
\begin{equation}
 \zeta \,=\, 2\overline g^{AB} \overline \Gamma{}^1_{AB} + \tau\overline g{}^{11}
  \label{zeta_H_gauge}
  \;.
\end{equation}
We find that
\begin{eqnarray}
  \overline g^{AB} \overline R_{ACB}{}^C &\equiv& \tilde R -\frac{1}{2} \overline g^{1A}\partial_A\tau
 + \tau \overline g^{AB}\overline \Gamma{}^1_{AB}   + \frac{1}{2}\tau\overline g^{1A}\overline \Gamma{}^1_{1A}    +  \frac{1}{2} \tau^2 \overline g{}^{11}
 \nonumber
\\
 &=& \tilde R -\frac{1}{2} \overline g^{1A}(\partial_A+\frac{1}{2}\xi_A)\tau
 + \frac{1}{2}\tau\zeta
 \;.
 \label{Riemann_one}
\end{eqnarray}
On the other hand, the $\overline g^{AB} \overline R_{ACB}{}^C$-part of \eq{mconf6} yields (we set $\xi^A:=\overline g^{AB}\xi_B$)
\begin{eqnarray}
   \overline g^{AB} \overline R_{ACB}{}^C  &=& \overline g^{1A}\overline L_{1A}  +  2\overline g^{AB}\overline L_{AB}
 \nonumber
\\
 &=&  (\tilde\nabla^A- \frac{1}{2}\xi^A -\frac{1}{4}\tau \ol g^{1A} )\xi_A - \frac{1}{2}\ol g^{1A}\partial_A\tau + (\partial_1 + \tau + \kappa)\zeta
\nonumber
\\
 && + \tilde R - \frac{1}{3}   R
\;,
 \label{Riemann_two}
\end{eqnarray}
where we took into account that
\begin{eqnarray}
  2\overline g^{AB}\overline L_{AB}   &\equiv & (\partial_1 + \tau + \kappa)\zeta   + (\tilde\nabla_A- \frac{1}{2}\xi_A)\xi^A + \tilde R  - \frac{1}{3}   \overline R
 \;.
  \label{eqn_gABLAB}
\end{eqnarray}
Combining \eq{Riemann_one} and \eq{Riemann_two}, we end up with an equation for $\zeta$,
\begin{eqnarray}
 (\partial_1 + \frac{1}{2}\tau + \kappa)\zeta   + (\tilde\nabla_A- \frac{1}{2}\xi_A)\xi^A - \frac{1}{3}  \overline  R &=& 0
 \;,
\label{ODE_zeta}
\end{eqnarray}
where the boundary condition 
is $\zeta + 2r^{-1}=O(1)$.
Then \eq{eqn_gABLAB} becomes
\begin{eqnarray}
  \overline g^{AB}\overline L_{AB} &=& \frac{1}{4}\tau\zeta  + \frac{1}{2}\tilde R
 \;.
\end{eqnarray}

The definition \eq{dfn_zeta} of $\zeta$ can be employed to compute $\overline g_{00}$,
since  $ \overline g_{00} = \ol g^{AB}\nu_A\nu_B -(\nu_0)^2\overline g^{11} $.
The boundary condition is \cite{CCM2} $\overline g^{11}=1 + O(r^2)$.
The equation $\zeta = 2\overline g^{AB} \overline \Gamma{}^1_{AB} + \tau\overline g^{11}$ can then be read as an equation for $\overline g^{AB}\overline {\partial_0 g_{AB}}$,
\begin{equation}
 \overline g^{AB}\overline{\partial_0 g_{AB}} = 2 \tilde\nabla^A\nu_A - \nu_0(\tau \overline g^{11} +\zeta)
 \;.
\end{equation}

An expression for  $\overline L_{01}$ follows from the relation $g^{\mu\nu}L_{\mu\nu} = \frac{1}{6}R$,
which yields
\begin{eqnarray}
  \overline L_{01}
 &=& -\frac{1}{2}\nu^A(\partial_A+ \frac{1}{2}\xi_A)\tau+ \frac{1}{4}\nu_0\overline g^{11}[\partial_1\tau -\kappa\tau + \frac{1}{2}\tau^2]
 \nonumber
\\
 && - \frac{1}{8}\nu_0(\tau\zeta  + 2\tilde R)  + \frac{1}{12}\nu_0 \overline R
 \label{eqn_L01}
 \;,
\end{eqnarray}
where $\nu^A:=\ol g^{AB} \nu_B$.
On the other hand we have
\begin{eqnarray*}
 2\overline L_{01}&\equiv &\overline R_{01} - \frac{1}{6}\nu_0\overline R
\\
 &\equiv& \overline{\partial_0\Gamma^0_{01}} - \partial_1\overline \Gamma{}^0_{00} + (\tilde{\nabla}_A + \frac{1}{2}\tau  \nu^0\nu_A)\overline \Gamma{}^A_{01} +(\nu^0\partial_1 \nu_0-\kappa + \tau) \overline\Gamma{}^1_{01}
\\
 && - (\partial_1-\nu^0\partial_1\nu_0+\kappa  + \frac{1}{2}\tau)\overline\Gamma{}^A_{0A}
  - \frac{1}{6}\nu_0\overline R
\;.
\end{eqnarray*}
Combining this with the gauge condition $\overline{\partial_0 H^0}= 0$ one determines
$\overline {\partial_0 g_{01}}$ and $\overline{\partial^2_{00}g_{11}}$,
with boundary condition  $\overline{\partial_0 g_{01}}=O(r)$ \cite{CCM2}.

Note that up to this stage the initial data $\omega_{AB}$ have not entered yet, i.e.\ all the field components computed so far have a pure gauge-character.
Note further that $(\overline{\partial_0 g_{AB}})\breve{}$
can be computed in terms of $\omega_{AB}\equiv \breve{\overline{L}}_{AB} = \frac{1}{2}\breve{\overline R}_{AB}$ and those quantities computed so far. (Recall that $(\overline{\partial_0 g_{AB}})\breve{}$ denotes the trace-free part of $ \overline{\partial_0 g_{AB}} $ with respect to $\overline g_{AB}$, and note that $(.)\breve{}$ always refers to the two free angular indices.)

Equation \eq{mconf6} with $(\mu\nu\sigma\kappa)=(0ABC)$, contracted with $\overline g{}^{AB}$, gives an equation for $\overline L_{0A}$,
\begin{eqnarray}
 \overline L_{0A}  &=& -\overline g_{AC}\overline g^{BD}(\partial_B\overline\Gamma{}^C_{0D} - \overline{\partial_0\Gamma{}^C_{BD}}
    + \overline\Gamma{}^{\alpha}_{0D}\overline\Gamma{}^C_{\alpha B} - \overline\Gamma{}^{\alpha}_{BD}\overline\Gamma{}^C_{\alpha 0})
 \nonumber
\\
  && -\nu^0\nu_A\nu^{B}\overline L_{1B}  + \nu^{B} \overline L_{AB} +2\nu^0\nu_A\overline L_{01}
  -\frac{1}{6}\nu_A\tilde R
\end{eqnarray}
(the right-hand side contains only known quantities).
From the definition of $\overline L_{0A}$ and the gauge condition $\overline {\partial_0 H{}^C }= 0$ one then computes $\overline{\partial_0 g_{0A}}$ and $\overline{\partial^2_{00}g_{1A}}$.
The relevant boundary condition is $\overline{\partial_0g_{0A}}=O(r^2)$.
The $\tilde g$-trace-free  part of \eq{mconf6} for $(\mu\nu\sigma\kappa)=(0A0B)$ yields $(\overline{\partial^2_{00}g_{AB}})\breve{}$.


The 10 independent components of the rescaled Weyl tensor in adapted null coordinates are
\begin{equation*}
 \overline d_{0101}\;, \quad \overline d_{011A}\;,\quad \overline d_{010A}\;, \quad \overline d_{01AB}\;, \quad
 \breve{\overline { d}}_{1A1B}\;, \quad \breve{\overline {d}}_{0A0B}\;.
\end{equation*}
The $\tilde g$-trace-free part of \eq{mconf2} with $(\mu\nu\sigma)=(A1B)$ determines $\breve{\overline d}_{1A1B}$,
\begin{eqnarray}
 \breve{\overline d}_{1A1B} &= &\nu_0 (\overline{\partial_0\Theta})^{-1}\Big[(\partial_1-\frac{1}{2}\tau)\omega_{AB} + \overline L_{11} \breve{\overline\Gamma}{}^1_{AB} - \tilde\nabla_A\overline L_{1B} + \frac{1}{2}\xi_B\overline L_{1A}
 \nonumber
\\
 && + \frac{1}{2}\ol g_{AB}(\tilde\nabla^C- \frac{1}{2}\xi^C)\overline L_{1C} \Big]
 \;.
\end{eqnarray}
All the remaining components of the rescaled Weyl tensor can be determined from \eq{mconf1}.
We will be rather sketchy here.
For $(\mu\nu\sigma)=(1A1)$ one finds
\begin{eqnarray}
 \overline{\nabla_1{d}_{011A}} +\nu^B\overline{\nabla_1 d_{1A1B}} - \nu_0\overline g^{CD}\overline{\nabla_C d_{1A1D}} =0
 \;,
\end{eqnarray}
which is an ODE for $\overline d_{011A}$, since the term $\overline g^{AB}\overline d_{1ABC}$, which appears when expressing the covariant derivatives in terms
of partial derivatives and connection coefficients, can be written as
\begin{eqnarray*}
 \overline g^{AB}\overline d_{1ABC} &=&   \nu^0\overline d_{011C} - \overline g^{1B}\overline d_{1B1C}
 \;.
\end{eqnarray*}
Any bounded solution of the MCFE satisfies $\overline d_{011A} = O(r)$ for small $r$.

For $(\mu\nu\sigma)=(AB1)$ one obtains an ODE for $\overline d_{01AB}$,
\begin{eqnarray}
 \overline{\nabla_1d_{01AB}} + \nu^C\overline{\nabla_1 d_{1CAB}} - \nu_0\overline g^{CD}\overline{\nabla_Dd_{1CAB}}=0
 \;,
\label{ODE_d01AB}
\end{eqnarray}
the boundary condition is given  by the requirement $\overline d_{01AB}=O(r^2)$.
Note for this that $\overline d_{1ABC}$ and $\overline d_{0[AB]1}$, both of which are  hidden in the covariant derivatives appearing in \eq{ODE_d01AB}, can be expressed in terms of
$\overline d_{1A1B}$ and $\overline d_{011A}$. Indeed,  symmetries of the rescaled Weyl tensor imply that
\begin{eqnarray*}
 \overline d_{1ABC} &=& 2\overline g{}^{EF}\overline d_{1EF[C}\overline g_{B]A}
 \,=\,  2 (  \nu^0\overline d_{011[C} - \overline  g^{1D}\overline d_{1D1[C}  )\overline g_{B]A}
\;,
\\
2\overline d_{0[AB]1} &=&- \overline d_{01AB}
 \;.
\end{eqnarray*}

The $(\mu\nu\sigma)=(101)$-component of \eq{mconf1} can be employed to determine $\overline d_{0101}$,
\begin{eqnarray}
 \overline{ \nabla_1 d_{0101}} + \nu^C\overline{\nabla_1 d_{011C} }- \nu_0\overline g^{CD}\overline{\nabla_Cd_{011D}}=0
\;.
\end{eqnarray}
For that purpose one needs to express  $\overline g^{AB} \overline d_{0AB1}$ in terms of known components  and $\overline d_{0101}$,
\begin{eqnarray*}
  \overline g^{AB} \overline d_{0AB1} &=&  \overline g^{1A}\overline d_{011A} - \nu^0\overline d_{0101}
 \;.
\end{eqnarray*}
The boundary condition for bounded solutions is $\overline d_{0101}=O(1)$.

The function $\overline d_{010A}$ is obtained from \eq{mconf1} with $(\mu\nu\sigma)=(0A1)$,
\begin{eqnarray}
 \overline{\nabla_1 d_{010A} }+ \nu^C\overline{\nabla_1 d_{0A1C}} - \nu_0\overline g^{CD}\overline{\nabla_C d_{0A1D}}=0
 \;,
\end{eqnarray}
and $\overline d_{010A}=O(r)$.
To obtain the desired ODE one needs to use the following relations, which, again, follow from the symmetry properties of the rescaled Weyl tensor:
\begin{eqnarray*}
\overline g^{AB} \overline d_{0AB1}  &=& \overline g^{1A}\overline d_{011A} - \nu^0\overline d_{0101}
 \;,
\\
2\nu^0\overline d_{0(AB)1} &=&  \overline g^{11} \overline d_{1A1B}- 2\overline g{}^{1C}\overline d_{1(AB)C} - \overline g^{CD}\overline d_{CABD}
\;,
\\
 \overline g^{AB} \overline g^{CD}\overline d_{CABD}&=& -2 \overline g^{AB}( \overline g^{1C}\overline d_{1ABC} + \nu^0\overline d_{0AB1}  )
 \;,
\\
\overline g^{CD}\overline d_{CABD}&=& \frac{1}{2}\overline g_{AB}\overline g^{CD} \overline g^{EF} \overline d_{CEFD}
\;,
\\
 \overline d_{ABCD} &=& \overline g^{EF}(\overline g_{C[B}\overline d_{A]EFD} - \overline g_{D[B}\overline d_{A]EFC})
\;,
\\
 \overline g^{AB}\overline d_{0ABC} &=& -\nu^0\overline d_{010C} - \overline g^{11}\overline d_{011C} - \overline g^{1B}(\overline d_{01BC} - \overline d_{0(BC)1}  - \overline d_{0[BC]1})
\;,
\\
 \overline d_{0ABC} &=& 2\overline g^{EF} \overline d_{0EF[C}\overline g_{B]A}
 \;.
\end{eqnarray*}

To gain an equation for $\breve{\overline d}_{0A0B}$ we observe that due to the tracelessness of the rescaled Weyl tensor we have
\begin{eqnarray*}
 0 &=& \overline g^{\mu\nu} \overline{\nabla_0 d_{\mu AB\nu}} - \frac{1}{2}\overline g_{AB}\overline g^{CD} \overline g^{\mu\nu} \overline{\nabla_0 d_{\mu CD\nu}}
\\
 &=&  2\nu^0\overline{\nabla_0 \breve  d_{0( AB)1}}  - \overline g^{11}\overline{\nabla_0 \breve d_{1A1B}} + 2(\overline g^{1C}\overline{\nabla_0 d_{1( AB)C}})\breve{}
 \;.
\end{eqnarray*}
Two of the transverse derivatives can be eliminated via the  following relations,
%
\begin{eqnarray*}
0=  \nu_0\overline{\nabla_{\rho} \breve d_{1(AB)}{}^{\rho}} &\equiv&  -\overline{\nabla_{0} \breve d_{1A1B}}
 + \overline{\nabla_{1} \breve{ d}_{0(AB)1}} - \nu_0\overline g^{11}\overline{\nabla_{1} \breve { d}_{1A1B}}
\\
&& - (\nu^C\overline{\nabla_{1} { d}_{1(AB)C}})\breve{} + \nu^C \overline{\nabla_{C} \breve{d}_{1A1B}}
 + \nu_0(\overline g^{CD}\overline{\nabla_{D}  d_{1(AB)C}})\breve{}
 \;,
\\
 0  = \nu_0 \overline{\nabla_{\rho}  d_{ABC}{}^{\rho}} &\equiv&  \overline{\nabla_{0}  d_{ABC1}}
+ \overline{\nabla_{1}  d_{ABC0}} +  \nu_0\overline g^{11} \overline{\nabla_{1}  d_{ABC1}}
\\
 &&   -\nu^D \overline{\nabla_{1}  d_{ABCD}} - \nu^D\overline{\nabla_{D}  d_{ABC1}}
 + \nu_0 \overline g^{DE} \overline{\nabla_{E}  d_{ABCD}}
 \;,
\end{eqnarray*}
so that we end up with an expression for $\overline{\nabla_{0}\breve{ d}_{0(AB)1}}$.
The trace-free and symmetrized part of equation \eq{mconf1} with $(\mu\nu\sigma)=(0AB)$ reads
\begin{eqnarray}
0&=& \nu^0 \overline{\nabla_{0}\breve{ d}_{0(AB)1}}+  \nu^0 \overline{\nabla_{1}\breve{d}_{0AB0}}
 + \overline  g^{11} \overline{\nabla_{1}\breve{d}_{0(AB)1}} +  \overline  g^{1C} \overline{\nabla_{C} \breve d_{0(AB)1}}
 \nonumber
\\
 &&+  (\overline  g^{1C} \overline{\nabla_{1}d_{0(AB)C}})\breve{}
 +  (\overline  g^{CD}\overline{ \nabla_{D} d_{0(AB)C}})\breve{}
 \;,
\end{eqnarray}
which  thus provides an ODE for $\breve{\overline d}_{0A0B}$ with boundary condition $\breve{\overline d}_{0A0B}=O(r^2)$.

Finally, one determines $\overline L_{00}$ from equation \eq{mconf2} with $(\mu\nu\sigma)=(100)$
and the contracted Bianchi identity \eq{BianchiL},
\begin{equation}
  2\nu^0\overline{\nabla_1 L_{00}}  +\overline g^{11}\overline{\nabla_1  L_{01}}
  +2\ol g^{1A}\overline{\nabla_{(1} L_{A)0}}
+\overline g^{AB}\overline{\nabla_A L_{0B}}
-\frac{1}{6}\overline{\partial_0 R}
= (\nu^0)^2 \overline{\partial_0\Theta}\,\overline d_{0101}
  \;.
\end{equation}
The boundary condition, satisfied by any bounded solution, is $\overline L_{00}=O(1)$.

\subsection{Constraint equations in the $ (R=0,\overline s =-2,\kappa=0, \hat g = \eta)$-wave-map gauge}
\label{kappa0_wavemap}

To simplify computations significantly  let us choose a specific gauge.
The CWE take their simplest form  if we impose the gauge condition
\begin{equation}
  R \,=\,0
 \;,
\label{gauge_curv_scalar}
\end{equation}
which we shall do henceforth.
Moreover, we assume the wave-map gauge condition and an affinely parameterized cone, meaning that
\begin{equation}
\kappa=0  \quad \text{and} \quad W^{\sigma}= 0\;.
\end{equation}
Furthermore, we set
\begin{equation}
\overline s = - 2\;,
\label{initialdata_s}
\end{equation}
(the negative sign of $\overline s$ makes sure that $\Theta$ will be positive inside the cone),
and use a Minkowski target $\hat g_{\mu\nu} = \eta_{\mu\nu}$.
This way many of the constraint equations can be solved explicitly.
From now on all equalities are meant to hold in this particular gauge, if not stated otherwise.

The relevant boundary conditions for the ODEs, which follow from regularity conditions at the vertex, have been specified in the previous section.
Recall that the free initial data are given by the $\tilde g$-trace-free tensor $\omega_{AB}$
and that we treat the case where the initial surface is $C_{i^-}$,  i.e.\ we have
\begin{eqnarray}
 \overline\Theta=0\;.
 \label{initialdata_theta}
\end{eqnarray}
Regularity for the Schouten tensor  requires $\omega_{AB}=O(r^2)$.
However, regularity for the rescaled Weyl tensor requires the stronger condition (cf.\ \eq{initialdata_d1} below)
%
\begin{equation}
 \omega_{AB}=O(r^4)
 \label{behaviour_initial_data}
 \;.
\end{equation}
Many of the above equations can be solved straightforwardly,  we just present the results,
\begin{eqnarray}
 \overline g_{\mu\nu} = \eta_{\mu\nu} \;, \quad\overline L_{1\mu}=0\;, \quad \overline g^{AB}\overline L_{AB}=0 \;, \quad
 \overline L_{0A} = \frac{1}{2}\tilde\nabla^B\overline{\partial_0 g_{AB}}
 \;.
 \label{initialdata_g_L}
  \label{expr_omegaAB_transgAB}
\end{eqnarray}
Note that $L_{\mu\nu}$ is trace-free as required by Lemma~\ref{lemma_properties_L}.
On the way to compute these quantities we have found
\begin{eqnarray}
 & \tau=2/r \;, \quad  \overline{\partial_0\Theta}=-2r\;, \quad  \overline{\partial_0 g_{1\mu}}=0\:, \quad \overline g^{AB} \overline{\partial_0 g_{AB}}=0
 \;,
 \nonumber &
\\
 &  \xi_A=0\;, \quad \zeta=-2/r\;,
\nonumber &
\\
&  (\partial_1- r^{-1})\overline{\partial_0g_{AB}} = - 2 \omega_{AB}  \,\text{ with }\, \overline{\partial_0 g_{AB}}=O(r^3)
 \label{transgAB_omega}
 \;.
&
\end{eqnarray}
We further obtain (note that $\overline \Gamma{}^B_{0A} = \frac{1}{2}\overline g^{BC}\overline{\partial_0 g_{AC}}$)
\begin{eqnarray}
  \overline d_{1A1B} &=& -\frac{1}{2}\partial_1(r^{-1}\omega _{AB})
 \;,
 \label{initialdata_d1}
\\
 (\partial_1+ 3 r^{-1})\overline d_{011A} &=& \tilde\nabla^B\overline d_{1A1B}
 \;,
  \label{initialdata_d2}
\\
 (\partial_1+ 3r^{-1})\overline d_{0101} &=& \tilde\nabla^A\overline d_{011A} - \frac{1}{2}\overline{\partial_0 g^{AB}}\overline d_{1A1B}
  \label{initialdata_d3}
 \;,
\\
 (\partial_1+ r^{-1})\overline d_{01AB} &=& 2\tilde\nabla_{[A} \overline d_{B]110} - 2\overline \Gamma{}^C_{0[A}\overline d_{B]11C}
  \label{initialdata_d4}
 \;,
\\
 (\partial_1+ r^{-1})\overline d_{010A}   &=&  \frac{1}{2} \tilde\nabla^B( \overline d_{01AB} - \overline d_{1A1B} )+ \frac{1}{2} \tilde\nabla_A \overline d_{0101}
 + r^{-1} \overline d_{011A}
 \nonumber
\\
&& +  2\overline \Gamma{}^B_{0A}\overline d_{011B}
  \label{initialdata_d5}
 \;,
\end{eqnarray}
with $ \overline d_{011A}=O(r)$, $\overline d_{0101}=O(1)$, $\overline d_{01AB}=O(r^2)$ and $\overline d_{010A}=O(r)$.
To derive \eq{initialdata_d1}-\eq{initialdata_d5}  we have used the following relations, which  follow from algebraic symmetry properties of the Weyl tensor,
\begin{eqnarray}
 \overline d_{ABCD} &=&   -2  \overline g_{A[C} \overline g_{D]B}d_{0101}     \;,
     \label{relation_d2}
\\
  2d_{0[AB]1} &=& - d_{01AB}\;,
    \label{relation_d3}
\\
 2\overline d_{0(AB)1} &=&
   \overline d_{1A1B} -  \overline g_{AB}\overline d_{0101}
 \;,
    \label{relation_d4}
\\
  \overline d_{1ABC} &=&  -2 \overline d_{011[B}\overline  g_{C]A} \;,
    \label{relation_d5}
\\
    \overline d_{0ABC} &=&
 2\overline d_{010[B} \overline g_{C]A} + 2 \overline d_{011[B} \overline g_{C]A}
 \;.
    \label{relation_d6}
\end{eqnarray}

Before we proceed let us establish some relations:
\begin{lemma}
 \label{lemma_tracelessness}
 \begin{enumerate}
  \item [(i)] \( (\overline g^{CD} \overline{\partial_0 g_{AC}}\,  \overline{\partial_0 g_{BD}})\breve{}=0 \),
  \item [(ii)] \( (\overline g^{CD}\overline{\partial_0 g_{C(A}}\omega_{B)D})\breve{} =0 \),
  \item [(iii)] \( (\overline g^{CD}\overline{\partial_0 g_{C(A}}\,\overline d_{B)1D1} )\breve{} =0 \).
 \end{enumerate}
\end{lemma}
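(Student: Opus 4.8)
The plan is to reduce all three identities to a single pointwise algebraic fact valid in two dimensions, and then simply check that the tensors being multiplied are trace-free. First I would record the elementary identity: if $h_{AB}$ is a metric on a two-dimensional manifold and $P_{AB}$, $Q_{AB}$ are symmetric and $h$-trace-free, then
\begin{equation*}
 P_{(A}{}^{C}Q_{B)C} \,=\, \frac{1}{2}\,h_{AB}\, P^{CD}Q_{CD}\;,
\end{equation*}
that is, the symmetrised product is pure trace. This is immediate in a local $h$-orthonormal frame, in which any symmetric $h$-trace-free $2$-tensor is represented by a symmetric trace-free $2\times2$ matrix; a direct multiplication shows that the symmetric part of the product of two such matrices is a multiple of the identity. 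Consequently the $h$-trace-free part of $P_{(A}{}^{C}Q_{B)C}$ vanishes, and this is exactly the shape of the quantities in (i)--(iii) after contraction with $\overline g^{CD}$ and symmetrisation over the two free angular indices.

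Second, I would verify that each factor entering the three products is symmetric and $\tilde g=\overline g_{AB}$-trace-free. The tensor $\overline{\partial_0 g_{AB}}$ is trace-free by \eq{transgAB_omega}, which records $\overline g^{AB}\overline{\partial_0 g_{AB}}=0$. The tensor $\omega_{AB}$ is trace-free by its definition as $\breve{\overline L}_{AB}$. The tensor $\overline d_{1A1B}$ is trace-free: this is built into its construction as $\breve{\overline d}_{1A1B}$, but it can also be confirmed directly from \eq{initialdata_d1}, since $\overline g^{AB}\omega_{AB}=0$ together with the shear-free condition $\partial_1\overline g_{AB}=\tau\overline g_{AB}$ (equivalently $\sigma_{AB}=0$, cf.\ \eq{eqn_gAB}), whence $\partial_1\overline g^{AB}=-\tau\overline g^{AB}$, gives $\overline g^{AB}\partial_1(r^{-1}\omega_{AB})=0$ and hence $\overline g^{AB}\overline d_{1A1B}=0$. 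Symmetry of $\overline d_{1A1B}$ in $A\leftrightarrow B$ follows from the pair symmetry $d_{\mu\nu\sigma\rho}=d_{\sigma\rho\mu\nu}$ established in Lemma~\ref{lemma_properties_d}.

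Finally I would apply the identity termwise: for (i) take $P=Q=\overline{\partial_0 g_{AB}}$; for (ii) take $P=\overline{\partial_0 g_{AB}}$ and $Q=\omega_{AB}$; for (iii) take $P=\overline{\partial_0 g_{AB}}$ and $Q_{BD}=\overline d_{B1D1}$. In each case the symmetrised product is proportional to $\overline g_{AB}$, so its $\breve{}$-part vanishes, which is precisely the assertion. The only genuine content is the two-dimensional algebraic identity; the rest is bookkeeping of the trace-free property, so I do not expect any real obstacle here.
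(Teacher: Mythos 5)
Your proof is correct and follows essentially the same route as the paper, whose own argument is the one-line observation that the claims follow from the constraint equations \eq{transgAB_omega}--\eq{initialdata_d1} together with the $\tilde g$-tracelessness of $\overline{\partial_0 g_{AB}}$. You have merely made explicit the underlying two-dimensional fact (the symmetrised product of two symmetric trace-free $2$-tensors is pure trace) and the tracelessness of each factor, which is exactly what the paper leaves implicit.
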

\begin{proof}
 This follows from the constraint equations \eq{transgAB_omega}-\eq{initialdata_d1}, together with the $\tilde g$-tracelessness of $\overline{\partial_0 g_{AB}}$.
 \qed
\end{proof}

The lemma can be employed to simplify the ODE which determines $\breve{\overline d}_{0A0B}$,
%
\begin{eqnarray}
  2( \partial_{1} - r^{-1} )\breve {\overline d}_{0A0B}
&=& 3(\partial_{1} - r^{-1})  \breve{\overline  d}_{0(AB)1}   - ( \partial_{1} - r^{-1} ) {\overline d}_{1A1B}
 \nonumber
\\
 && +(\tilde \nabla^{C} \overline d_{1(AB)C})\breve{} +   2 (\tilde \nabla^{C} \overline d_{0(AB)C} )\breve{} -(\overline{\partial_0g^{CD}}\overline  d_{ACBD })\breve{}
  \nonumber
\\
 &&  +[2\overline \Gamma{}^C_{0(A}( \overline d_{B)C01} - \overline d_{B)01C}  +\frac{1}{2} \overline d_{B)1C1})]\breve{}
  \nonumber
\\
&=& \frac{1}{2}(\partial_{1} - r^{-1}) {\overline d}_{1A1B}  +(\tilde \nabla_{(A} \overline d_{B)110})\breve{}
 +   2 (\tilde \nabla_{(A} \overline d_{B)010}    )\breve{}
  \nonumber
\\
 &&  +3\overline \Gamma{}^C_{0(A}\overline d_{B)C01}
   +\frac{3}{2}\overline d_{0101}\overline{ \partial_0 g_{AB}}
   \label{initialdata_d6}
 \;,
\end{eqnarray}
with $\breve{\overline d}_{0A0B}=O(r^2)$.
Finally, one shows that the missing component of the Schouten tensor satisfies
\begin{eqnarray}
  2(\partial_1+ r^{-1})\overline L_{00}  = \frac{1}{2} \omega^{AB}\overline{\partial_0 g_{AB}} -2r \overline d_{0101} - \tilde \nabla^A \overline L_{0A}
 \label{initialdata_L00}
\;,
\end{eqnarray}
with $\overline L_{00}=O(1)$.

We aim now to find explicit solutions to some of the remaining ODEs \eq{initialdata_d2}-\eq{initialdata_d5}.
The key observation to solve \eq{initialdata_d2} is that, due to \eq{transgAB_omega}, we have
\begin{eqnarray}
 \overline d_{1A1B} &=& -\frac{1}{2}r^{-1}\partial_1\Big(\omega _{AB} + \frac{1}{2}r^{-1}\overline{\partial_0 g_{AB}}\Big)
 \;.
\end{eqnarray}
Hence we find
\begin{eqnarray}
   \partial_1(r^3\overline d_{011A}) &=& -\frac{1}{2}\partial_1\Big( r^2\tilde\nabla^B \omega _{AB} +  r\overline L_{0A}\Big)
\nonumber
\\
  \overset{\overline d_{011A}=O(r)}{\Longrightarrow} \quad    \overline d_{011A} &=& -\frac{1}{2} r^{-1}\tilde\nabla^B \omega _{AB}  -\frac{1}{2}  r^{-2}\overline L_{0A}
 \label{expression_d011A}
\\
 &\overset{\eq{transgAB_omega}}{=}&   \frac{1}{2}r^{-1}\partial_1\overline L_{0A}
\label{diff_expression_d011A}
 \;.
\end{eqnarray}
The equations \eq{transgAB_omega} and  \eq{diff_expression_d011A} can be used to  rewrite \eq{initialdata_d4},
\begin{eqnarray}
  \partial_1 (r \overline d_{01AB}) &=&  \partial_1\tilde\nabla_{[A}\overline L_{B]0}
  - r \overline \Gamma{}^C_{0[A}\partial_{|1|}(r^{-1}\omega_{B]C})
 \nonumber
\\
 &=& \partial_1( \tilde\nabla_{[A}\overline L_{B]0}  -  \overline\Gamma{}^C_{0[A}\omega_{B]C} )
 \nonumber
\\
  \overset{\overline d_{01AB}=O(r^2)}{\Longrightarrow} \quad
 \overline d_{01AB}&=&
  r^{-1}\tilde\nabla_{[A}\overline L_{B]0}- r^{-1}\overline \Gamma{}^C_{0[A}\omega_{B]C}
 \label{expression_d01AB}
 \;.
\end{eqnarray}

The constraint equations in the $ (R=0,\overline s =-2,\kappa=0, \hat g = \eta)$-wave-map gauge are summed up in \eq{constraint_g}-\eq{relation_lambda_omega}
below.

\section{Applicability of Theorem~\ref{inter-thm} on the $\mathbf{C_{i^-}}$-cone}
\label{sec_applicability}

Let us suppose we have been given initial data $\omega_{AB} \equiv \breve{\mathring L}_{AB}$ on $C_{i^-}$,
supplemented by a gauge choice for $R$, $\mathring s$, $ W^{\sigma}$ and $\kappa$.
Then we  solve the hierarchical system of constraint equations derived above;
the solutions are denoted by
$\mathring g_{\mu\nu}$, $\mathring L_{\mu\nu}$ and $\mathring d_{\mu\nu\sigma\rho}$.
Let us further assume that there exists a smooth solution of the CWE in some neighbourhood to the future of $i^-$,
smoothly extendable  through $C_{i^-}$,
 which induces
the data $\overline \Theta=0$, $\overline s=\mathring s$, $\overline g_{\mu\nu}=\mathring g_{\mu\nu}$,
$\overline L_{\mu\nu} = \mathring L_{\mu\nu}$ and $\overline d_{\mu\nu\sigma\rho} = \mathring d_{\mu\nu\sigma\rho}$ on $C_{i^-}$.
The purpose of this section is to investigate to what extent the hypotheses made in Theorem~\ref{inter-thm} are satisfied in the case of
initial data which have been constructed as a solution of the constraint equations.
For convenience and to make computations significantly easier
we shall not do it in an arbitrary generalized wave-map gauge but
prefer to work within the specific gauge \eq{gauge_curv_scalar}-\eq{initialdata_s}.

\subsection{$(R=0,\overline s=-2,\kappa=0, \hat g = \eta)$-wave-map gauge}
We restrict attention to the $\kappa=0$-wave-map gauge with $ W^{\sigma}= 0$;
moreover, we set $R=0$ and $\mathring s =-2$,
and use a Minkowski target $\hat g_{\mu\nu} = \eta_{\mu\nu}$.
All equalities are meant to hold in this specific gauge.
For reasons of clarity let us recall the CWE in an $(R=0)$-gauge, where they take their simplest form,
\begin{eqnarray}
 \Box^{(H)}_{ g} L_{\mu\nu}&=&  4 L_{\mu\kappa} L_{\nu}{}^{\kappa} -  g_{\mu\nu}| L|^2
  - 2\Theta d_{\mu\sigma\nu}{}^{\rho}  L_{\rho}{}^{\sigma}
  \label{cwe1*}
  \;,
  \\
  \Box_{ g}  s  &=& \Theta| L|^2
  \label{cwe2*}
  \;,
  \\
  \Box_{ g}\Theta &=& 4 s
  \label{cwe3*}
  \;,
  \\
  \Box^{(H)}_g d_{\mu\nu\sigma\rho}
  &=& \Theta d_{\mu\nu\kappa}{}^{\alpha}d_{\sigma\rho\alpha}{}^{\kappa}
   - 4\Theta d_{\sigma\kappa[\mu} {}^{\alpha}d_{\nu]\alpha\rho}{}^{\kappa}
  \label{cwe4*}
  \;,
  \\
  R^{(H)}_{\mu\nu}[g] &=& 2L_{\mu\nu}
  \label{cwe5*}
  \;.
\end{eqnarray}

The constraint equations, from which the initial data for the CWE are determined from given free data $\omega_{AB}\equiv \breve{\mathring L}_{AB}=O(r^4)$
read:
\begin{eqnarray}
\mathring g_{\mu\nu} &=& \eta_{\mu\nu} \;,    \label{constraint_g}
\\
\mathring L_{1\mu}&=&0\;, \quad
 \mathring L_{0A} \,=\, \frac{1}{2}\tilde\nabla^B\lambdahere_{AB}\;, \quad
  \mathring g^{AB} \mathring L_{AB}\,=\,0\;,
  \label{constraint_L}
\\
   \mathring d_{1A1B} &=& -\frac{1}{2}\partial_1(r^{-1}\omega _{AB})
    \label{initial_d1A1B_spec_gauge}
 \;,
\\
    \mathring d_{011A} &=&  \frac{1}{2}r^{-1}\partial_1\mathring L_{0A}\;,
     \label{initial_d011A_spec_gauge}
\\
 \mathring d_{01AB}&=& r^{-1}\tilde\nabla_{[A}\mathring L_{B]0} - \frac 12 r^{-1}\lambda_{[A}{}^C \omega_{B]C}
  \label{initial_d01AB_spec_gauge}
 \;,
\\
 (\partial_1+\ 3r^{-1})\mathring d_{0101} &=& \tilde\nabla^A\mathring d_{011A} + \frac{1}{2}\lambdahere^{AB}\mathring d_{1A1B}  \;,
  \label{initial_d0101_spec_gauge}
\\
 2(\partial_1+ r^{-1})\mathring d_{010A}   &=&  \tilde\nabla^B( \mathring d_{01AB} - \mathring d_{1A1B} )
 + \tilde\nabla_A \mathring d_{0101} + 2r^{-1} \mathring d_{011A}
  \nonumber
\\
&&
 +  2\lambda_{A}{}^B \mathring d_{011B}\;,
 \label{initial_d010A_spec_gauge}
 \\
  4( \partial_{1} - r^{-1} )\breve {\mathring d}_{0A0B}
&=& (\partial_{1} - r^{-1}) {\mathring d}_{1A1B}  + 2(\tilde \nabla_{(A} \mathring d_{B)110})\,\breve{}
 +   4 (\tilde \nabla_{(A} \mathring d_{B)010}    )\,\breve{}
 \nonumber
\\
 &&
  + 3 \lambda_{(A}{}^C\mathring d_{B)C01}
   +3\mathring d_{0101}\lambdahere_{AB}
   \label{initial_d0A0B_spec_gauge}
\;,
 \phantom{xxxx}
\\
  4(\partial_1+ r^{-1})\mathring L_{00}  &=& \lambdahere^{AB} \omega_{AB} -4r \mathring d_{0101} - 2\tilde \nabla^A \mathring L_{0A}
   \label{initial_L00_spec_gauge}
\;,
\end{eqnarray}
with
\begin{equation}
 \mathring d_{0101}=O(1)\;, \quad \mathring d_{010A} =O(r)\;, \quad \breve {\mathring d}_{0A0B}=O(r^2)\;, \quad
\mathring L_{00}=O(1)\;,
\label{constraints_boundary_data}
\end{equation}
and where $\lambda_{AB}$ is the unique solution of
\begin{equation}
 (\partial_1- r^{-1})\lambda_{AB} = - 2 \omega_{AB} \quad \text{with} \quad \lambda_{AB}=O(r^5)\;.
  \label{relation_lambda_omega}
\end{equation}
Note that the expansion $\tau$ satisfies
\begin{equation}
 \tau \,=\, 2/r \;.
\end{equation}
All the other components of $\mathring g_{\mu\nu}$, $\mathring L_{\mu\nu}$ and $\mathring d_{\mu\nu\sigma\rho}$ follow from their usual symmetry properties which they are required to satisfy.

\subsection{Vanishing of $\overline H{}^{\sigma}$}
\label{subsec_vanish_H}
Inserting the definition of the reduced Ricci tensor \eq{ricci_riccired} equation \eq{cwe5*} becomes
\begin{eqnarray}
 R_{\mu\nu} - g_{\sigma(\mu}\hat\nabla_{\nu)}H^{\sigma} =2L_{\mu\nu}
  \label{wavered*}
  \;.
\end{eqnarray}
Utilizing the constraint equations \eq{constraint_g} and the identities~\cite{CCM2}
\begin{eqnarray*}
 \overline R_{11} &\equiv& -\partial_1 \tau + \tau\overline\Gamma{}^1_{11} - |\sigma|^2-\frac{1}{2}\tau^2
  \, = \, \tau \overline\Gamma{}^1_{11}
  \;,
 \\
 \overline \Gamma{}^1_{11} &\equiv& \kappa -\frac{1}{2}\nu_0 \overline H{}^0
  \, = \, -\frac{1}{2} \overline H{}^0
  \;,
\end{eqnarray*}
the latter one follows from the definitions of $H^{\sigma}$ and $\kappa$, 
we conclude that the solution satisfies the ODE
\begin{eqnarray}
  \hat\nabla_{1}\overline H{}^{0} + \frac{1}{2}\tau\overline H{}^0 = 0 \quad \Longleftrightarrow \quad
  (\partial_{1} + r^{-1})\overline H{}^{0} = 0
  \;.
   \label{eqn_H0}
\end{eqnarray}
%
For any regular solution of the CWE
 the function  $\overline H{}^0$
will be bounded near the vertex. We observe that
\begin{equation}
 \overline H{}^0 \,=\, 0
 \label{vanishing_H0}
\end{equation}
is the only solution of \eq{eqn_H0} where this is the case.
Then we immediately obtain
\begin{equation}
 \overline \Gamma{}^1_{11} \,=\, \kappa \,=\, 0
  \label{relation_kappa}
  \;.
\end{equation}

Recall the definition of the field $\xi_A$, which vanishes in our gauge,
\begin{eqnarray*}
 \xi_A &\equiv&  -2\nu^0\partial_1\nu_A + 4\nu^0\nu_B\chi_A{}^B + \nu_A\overline V{}^0 + \overline g_{AB}\overline V{}^B - \overline g_{AD} \overline g^{BC}\tilde\Gamma^D_{BC}
\,=\,  0
  \;.
\end{eqnarray*}
From the constraint equations, \eq{wavered*} and the identities~\cite{CCM2}
\begin{eqnarray}
 \overline R_{1A} &\equiv& (\partial_1 + \tau)\overline \Gamma{}^1_{1A} + \tilde\nabla_B\chi_A{}^B -\partial_A\overline\Gamma{}^1_{11} -\partial_A\tau
  \,=\, (\partial_1 + \tau)\overline \Gamma{}^1_{1A}
  \;,
 \\
  \xi_A   &\equiv & -2\overline \Gamma{}^1_{1A} - \overline H_A -\nu_A \overline H{}^0
   \,=\,  -2\overline \Gamma{}^1_{1A} - \overline H_A
    \label{rel_xiA_HA}
 \;,
\end{eqnarray}
we find that $\overline H_A:=\overline g_{AB}\overline H{}^B$ fulfills the ODE
\begin{eqnarray*}
      \partial_1\overline H_A   \,=\,0
  \;.
\end{eqnarray*}
Any regular  solution necessarily satisfies $\overline H_A = O(r)$ and we infer
\begin{eqnarray}
 \overline H{}^A =0 \quad \text{and} \quad \overline\Gamma{}^1_{1A}=0 
 \label{vanishing_HA}
  \;.
\end{eqnarray}

We have introduced the function
\begin{equation*}
 \zeta \,\equiv\, 2(\partial_1 +\kappa +\frac{1}{2}\tau )\overline g^{11}
   + 2\overline V{}^1
  \, =\, -\tau
   \;.
\end{equation*}
From \eq{wavered*}, the constraint equation $\overline g^{AB} \overline L_{AB}=0$
and the identities~\cite{CCM2}
\begin{eqnarray}
 \overline g^{AB} \overline R_{AB} & \equiv& 2(\partial_1 + \overline\Gamma{}^1_{11} + \tau)[\underbrace{(\partial_1 + \overline \Gamma{}^1_{11} + \frac{1}{2}\tau)\overline g^{11} + \overline g^{\mu\nu}\overline \Gamma{}^1_{\mu\nu}}_{\equiv \overline g^{AB}\overline \Gamma{}^1_{AB} + \frac{1}{2}\tau\overline{}g^{11}}]
 \nonumber
\\
 &&  + \tilde R - 2\overline g^{AB}\overline\Gamma{}^1_{1A}\overline\Gamma{}^1_{1B} - 2\overline g^{AB}\tilde\nabla_A\overline\Gamma{}^1_{1B}
\nonumber
 \\
 & =& 2(\partial_1  + \tau)[\overline g^{AB}\overline \Gamma{}^1_{AB} + \frac{1}{2}\tau]
   +\frac{1}{2}\tau^2
 \;,
\\
 \zeta &\equiv&  2\overline g^{AB}\overline\Gamma{}^1_{AB} + \tau\overline g^{11} + \nu_0\overline g^{11}\overline H{}^0 - 2 \overline H{}^1
  \nonumber
   \\
  &=& 2\overline g^{AB}\overline\Gamma{}^1_{AB} + \tau  - 2 \overline H{}^1
 \label{rel_zeta_H}
 \:,
\end{eqnarray}
we deduce that
\begin{eqnarray*}
  (\partial_1 + r^{-1} ) \overline H{}^1 \,=\, 0
  \;.
\end{eqnarray*}
Our solution is supposed to be regular at $i^-$, whence $ \overline H{}^1 =O(1)$ and we conclude
\begin{eqnarray}
  \overline H{}^{1}=0 \quad \text{and} \quad  \overline g^{AB}\overline\Gamma{}^1_{AB} =   - \tau
 \label{vanishing_H1}
 \;.
\end{eqnarray}
Altogether we have proven that
\begin{equation}
 \overline H{}^{\sigma}=0
 \label{vanishing_H}
 \:.
\end{equation}

Note that once we know the values of the wave-gauge vector on $C_{i^-}$, we can compute the values of
certain components of the transverse derivative of the metric on $C_{i^-}$. More concretely, we find
that the solution satisfies
\begin{equation*}
 \overline{\partial_0 g_{11}}=0
\;, \quad  \overline{\partial_0 g_{1A}}=0 \:, \quad \overline g^{AB} \overline{\partial_0 g_{AB}}=0
 \;.
 \label{certain_relations}
\end{equation*}
We also have
\begin{eqnarray*}
 \overline R_{AB} &\equiv& \overline{\partial_{\alpha}\Gamma^{\alpha}_{AB}}  - \partial_A\overline \Gamma{}^{\alpha}_{\alpha B} + \overline \Gamma{}^{\alpha}_{AB} \overline \Gamma{}^{\beta}_{\beta\alpha}
  - \overline \Gamma}{^{\alpha}_{\beta A} \overline \Gamma{}^{\beta}_{\alpha B}
  \\
   &=& \tilde R_{AB} -\frac{1}{4}\tau^2 \overline g_{AB}
  -\frac{1}{2} (\partial_{1}-\tau) \overline{\partial_0 g_{AB}}
   +\overline{\partial_{0}\Gamma^{0}_{AB}} -\frac{1}{2}\tau \overline g_{AB} \overline \Gamma{}^{0}_{00}
 \\
   &=&  - (\partial_{1}-r^{-1}) \overline{\partial_0 g_{AB}}
   \;,
\end{eqnarray*}
where we employed the relation
\begin{equation*}
 \overline{\partial_{0}\Gamma^{0}_{AB}} = \frac{1}{2}\tau \overline g_{AB} \overline{\partial_0 g_{01}} - \frac{1}{2}\partial_1\overline{\partial_0 g_{AB}}
  \;.
\end{equation*}
The vanishing of $\overline H{}^{\sigma}$ implies via \eq{wavered*} and \eq{constraint_g}
\begin{equation*}
 \overline R_{AB} \, = \, 2\overline L_{AB} \, = 2\omega_{AB}
  \;,
\end{equation*}
and thus by \eq{relation_lambda_omega}
\begin{equation*}
 (\partial_1- r^{-1})(\lambda_{AB} - \overline{\partial_0 g_{AB}}) =0
 \;.
\end{equation*}
For initial data of the form $\omega_{AB}=O(r^4)$ we have $\lambda_{AB}=O(r^5)$.
Since regularity requires~\cite{CCM2} $\overline{\partial_0 g_{AB}}=O(r^3)$,
we discover  the expected relation
\begin{eqnarray*}
  \lambda_{AB} \,=\, \overline{\partial_0 g_{AB}}
   \;.
\end{eqnarray*}

\subsection{Vanishing of $\overline {\nabla_{\mu} H^{\sigma}}$ and $\overline \zeta_{\mu}$}
\label{sect_vanishing_nablaH}

We know that the wave-gauge vector satisfies the wave equation \eq{wave_H},
\begin{equation}
 \nabla^{\nu} \hat\nabla_{\nu} H^{ \alpha}+2g^{\mu\alpha} \nabla_{[\sigma} \hat\nabla_{\mu]} H^{ \sigma}
 + 4\nabla^{\nu} L_{\nu}{}^{\alpha} =0
 \label{wave_H+}
 \;.
\end{equation}
Let us first consider the $\alpha=0$-component evaluated on $\scri^-$,
\begin{eqnarray}
 (\partial_1+r^{-1})\overline{\partial_0H^0}  + 2\overline{\partial_0 L_{11}} =0
 \label{ODE_transH0}
 \;.
\end{eqnarray}
We need to show that the source term vanishes.
Equation \eq{cwe1*} provides an expression for $\overline{\partial_0 L_{11}}$,
\begin{eqnarray}
 \label{5X12.1}
&& \overline{\Box^{(H)}_gL_{11}} = 0 \quad \Longleftrightarrow \quad
 (\partial_1+ r^{-1})\overline{\partial_0L_{11}} =0
 \;.
\end{eqnarray}
Any regular solution  satisfies
$\overline{\partial_0 L_{11}} = \overline{\nabla_0 L_{11}}  = O(1)$.
There is precisely one bounded  solution  of \eq{5X12.1},
which is
\begin{equation}
 \overline{\partial_0L_{11}}=0
 \label{vanishing_transL11}
 \;.
\end{equation}
The function $\overline {\nabla_0 H^0}= \overline{\partial_0 H^0}$ needs to be bounded as well, and the only bounded
solution of \eq{ODE_transH0} is
\begin{equation}
 \overline{\partial_0 H^0}=0
 \label{vanishing_transH0}
 \;.
\end{equation}
Taking the trace of  \eq{wavered*}  then shows that the curvature scalar vanishes initially,
\begin{eqnarray}
  \overline  R_g =0
 \label{transH_R}
  \;.
\end{eqnarray}
Using \eq{vanishing_transH0} as well as the relation $ \overline R_{01}=2\overline L_{01}=0 $, which follows from \eq{wavered*},
one verifies that
\begin{equation*}
 \overline{\partial_0 g_{01}} =0 \quad \text{and} \quad  \overline{\partial^2_{00}g_{11}}=0\;.
\end{equation*}

The $\alpha=A$-component of \eq{wave_H+} yields
\begin{eqnarray}
 (\partial_1 + 2r^{-1})\overline{\partial_0H^A}
 + 2\overline g^{AB}( \overline{\partial_0 L_{1B}}  + \partial_1\overline L_{0B} +\tau\overline L_{0B}+ \tilde\nabla^C \omega_{BC}  )=0
 \label{ODE_transHA}
 \;.
\end{eqnarray}
We employ \eq{cwe1*} to compute the source term,
\begin{eqnarray}
 \overline{ \Box^{(H)}_{ g} L_{1A}}=0  \quad \Longleftrightarrow \quad
 2\partial_1\overline{\partial_0L_{1A}} -\tau\tilde\nabla^B  \omega_{AB} -\tau^2 \overline L_{0A}=0
 \label{wave_L1A}
 \;.
\end{eqnarray}
Equation \eq{relation_lambda_omega} implies
\begin{eqnarray}
   2\tilde\nabla^B\omega_{AB} = -\tilde\nabla^B\partial_1\lambda_{AB}  + \tau \overline L_{0A}
 = -2\partial_1\overline L_{0A}  -  \tau \overline L_{0A}
 \label{omega_L0A}
 \;.
\end{eqnarray}
From \eq{wave_L1A} and \eq{omega_L0A} we derive the ODE
\begin{equation}
 \partial_1(\overline{\partial_0 L_{1A}} + r^{-1} \overline L_{0A}) =0
 \;.
 \label{transL1A_L0A}
\end{equation}
For any sufficiently regular  solution we have $\overline{\partial_0 L_{1A}}=\overline{\nabla_0 L_{1A}}=O(r)$.
Since the initial data satisfy $\omega_{AB}=O(r^4)$, we have $\overline L_{0A}=O(r^2)$ by \eq{wave_L1A}.
We then conclude from \eq{transL1A_L0A} that
\begin{equation}
 \overline{\partial_0 L_{1A}} = -r^{-1} \overline L_{0A} = - \frac{1}{4}\tau \tilde\nabla^B \lambda_{AB}
 \label{transverseL1A}
 \;.
\end{equation}
With \eq{constraint_g}, \eq{omega_L0A} and \eq{transverseL1A} equation \eq{ODE_transHA} becomes
\begin{eqnarray}
 (\partial_1 + 2r^{-1})\overline{\partial_0H^A} =0
 \label{ODE_transHA2}
 \;.
\end{eqnarray}
Any  solution which is regular at $i^-$ fulfills $\overline{\partial_0 H^A}=\overline{\nabla_0 H^A}=O(r^{-1})$.
 The ODE \eq{ODE_transHA2} admits precisely one such solution, namely
\begin{equation}
 \overline{\partial_0 H^A} =0
 \;.
\end{equation}
We have
\begin{eqnarray*}
 \tilde\nabla^B\lambda_{AB}&=&2\overline L_{0A} \,=\, \overline R_{0A} \,=\, \frac{1}{2}\overline{\partial^2_{00}g_{1A}} - \frac{1}{2}(\partial_1 - \tau)\overline{\partial_0g_{0A}} + \frac{1}{2}\tilde\nabla^B\lambda_{AB}\;,
\\
 0&=& \overline g_{AB}\overline {\partial_0 H^B} \,=\, \overline{\partial^2_{00} g_{1A}} + (\partial_1+ \tau)\overline{\partial_0 g_{0A}} + \tilde\nabla^B\lambda_{AB}
 \;.
\end{eqnarray*}
The combination of both equations yields
\begin{eqnarray}
  \partial_1\overline{\partial_0g_{0A}} + \tilde\nabla^B\lambda_{AB}=0 \quad \text{and} \quad
 \overline{\partial^2_{00}g_{1A}} =- \tau \overline{\partial_0 g_{0A}}
\label{expr_00g1A}
 \;.
\end{eqnarray}


Utilizing the previous results of this section the $\alpha=1$-component of \eq{wave_H+} can be written in our gauge as
\begin{eqnarray}
 (\partial_1 + r^{-1})\overline{\partial_0 H^1}   +\underbrace{ 2(\partial_1 + \tau)\overline  L_{00}  + 2\tilde\nabla^A\overline L_{0A}
 - \overline g^{AB}\overline{\partial_0 L_{AB}}}_{=:f} =0
 \label{ODE_transH1}
 \;,
\end{eqnarray}
where we took into account that owing to Lemma~\ref{lemma_properties_L} we have
\begin{eqnarray}
0\,=\, \overline{\partial_{0}L} &=& 2\overline{\partial_0 L_{01}} + \overline g^{AB}\overline{\partial_0 L_{AB}}
  -  \omega^{AB} \lambda_{AB}
 \;.
 \label{eqn_transtrace}
\end{eqnarray}
%
We show that the source $f$ vanishes.
To do that we compute the $\tilde g$-trace of the $(\mu\nu)=(AB)$-component of \eq{cwe1*} on $\scri^-$.
With \eq{relation_lambda_omega}
we obtain
\begin{eqnarray}
 &&\overline g^{AB} \overline{\Box^{(H)}_g L_{AB}} = 2\overline L_A{}^B \overline L_B{}^A  \quad \Longleftrightarrow \quad
 \nonumber
\\
  &&2( \partial_1 + r^{-1})(\overline g^{AB}\overline{\partial_0L_{AB}}) - 2\lambda^{AB}(\partial_1- r^{-1})\omega_{AB}
 \nonumber
\\
 &&\phantom{xxxxxxxxxxxxxxxxx} + 2  \tau\tilde\nabla^A\overline L_{0A} + \tau^2 \overline L_{00}
 + 2|\omega|^2 =0
 \label{gAB_transLAB}
 \;,
\end{eqnarray}
where we have set $|\omega|^2:= \omega_A{}^B\omega_B{}^A$.

As another intermediate step it is useful to derive a second-order equation for $\ol L_{00}$,
so let us  differentiate \eq{initial_L00_spec_gauge} with respect to $r$,
\begin{eqnarray*}
 (4\partial^2_{11} + 2\tau \partial_1 - \tau^2)\overline L_{00}
= 8 \overline d_{0101} -  4r(\partial_1 +3 r^{-1}) \overline d_{0101} - 2\partial_1\tilde \nabla^A \overline L_{0A} +  \partial_1(\lambda^{AB} \omega_{AB})
 \;.
\end{eqnarray*}
With \eq{initial_d1A1B_spec_gauge}, \eq{initial_d011A_spec_gauge}
\eq{initial_d0101_spec_gauge}, \eq{relation_lambda_omega} and again \eq{initial_L00_spec_gauge} that yields
\begin{eqnarray}
 2(\partial^2_{11} + 3r^{-1} \partial_1 + r^{-2})\overline L_{00}
   =  \lambda^{AB}(\partial_1 - r^{-1}) \omega_{AB}
  - |\omega|^2 - 2 \tilde \nabla^A\partial_1 \overline L_{0A}
 \label{ODE_L00}
 \;.
\end{eqnarray}
Let us return to the source term $f$ in \eq{ODE_transH1}. It satisfies the ODE
\begin{eqnarray*}
 2(\partial_1+r^{-1})f
 &=& 4\partial^2_{11}\overline L_{00} + 6\tau\partial_1 \overline L_{00} -2\tau\tilde\nabla^A \overline L_{0A} + 4\tilde\nabla^A\partial_1 \overline L_{0A}
\\
 && - 2(\partial_1+r^{-1})(\overline g^{AB}\overline{\partial_0L_{AB}})
\\
 &\overset{\eq{gAB_transLAB}}{=}& 4\partial^2_{11}\overline L_{00} + 6\tau\partial_1\overline L_{00} + \tau^2\overline  L_{00}+ 4\tilde\nabla^A\partial_1\overline L_{0A}
\\
 &&- 2\lambda^{AB}(\partial_1- r^{-1})\omega_{AB}   +  2|\omega|^2
\\ &\overset{\eq{ODE_L00}}{=}&  0
 \;.
\end{eqnarray*}
We conclude that
\begin{eqnarray}
  f\equiv 2(\partial_1 + \tau)\overline  L_{00}  + 2\tilde\nabla^A\overline L_{0A}
 - \overline g^{AB}\overline{\partial_0 L_{AB}} = c(x^A) r^{-1}
 \label{ODE_L002}
\end{eqnarray}
for some angle-dependent function $c$.
Regularity at $i^-$ implies $\overline L_{00}=O(1)$ and $\partial_1\overline L_{00}=\overline{\nabla_1 L_{00}}
=O(1)$.
Furthermore, we have (note that $\lambda^{AB}\omega_{AB}=O(r^5)$)
\begin{eqnarray*}
 O(1) &=& \overline{\nabla^A L_{0A}} \,=\, \tilde\nabla^A\overline L_{0A} - \frac{1}{2}\lambda^{AB} \omega_{AB}
 +\tau\overline L_{00}
 \;,
\\
 O(1) &=& \overline g^{AB} \overline{\nabla_0 L_{AB}} \,= \, \overline g^{AB} \overline{\partial_0 L_{AB}} - \lambda^{AB} \omega_{AB}
 \\
   \Longrightarrow && \tilde\nabla^A\overline L_{0A}+\tau\overline L_{00} \,=\, O(1)\;,
    \quad \overline g^{AB} \overline{\partial_0 L_{AB}}\, =\, O(1)
 \;.
\end{eqnarray*}
Therefore the problematic $r^{-1}$-term in the expansion of $f$ needs to vanish, and we
conclude $c=0$.
Then \eq{ODE_transH1} enforces $\overline{\partial_0 H^1}$ to vanish in order to be bounded, i.e.\ altogether we have proven that
\begin{equation}
 \overline{\nabla_{\mu}H^{\nu}}=0
 \label{vanishing_transH}
 \;.
\end{equation}

Recall that $\zeta_{\mu} \,\equiv\,  -4(\nabla_{\nu} L_{\mu}{}^{\nu} - \nabla_{\mu}R/6) =  -4\nabla_{\nu} L_{\mu}{}^{\nu}$.
If we evaluate \eq{wave_H+} on $\scri^-$ (which, as a matter of course, is to be read as an equation for $\zeta_{\mu}$)  and insert \eq{vanishing_transH}, we immediately observe that
\begin{equation}
 \overline \zeta_{\mu} =0
 \label{vanishing_zeta}
 \;.
\end{equation}

\subsection{Vanishing of $\overline W_{\mu\nu\sigma\rho}$}
\label{sect_vanishing_W}

We want to show that the Weyl tensor $W_{\mu\nu\sigma\rho}$ of $g_{\mu\nu}$  vanishes on
$C_{i^-}$, and thus coincides there
with the tensor $\Theta d_{\mu\nu\sigma\rho}$.
The 10 independent components are
\begin{eqnarray*}
 \overline W_{0101}\;, \quad    \overline W_{011A}\;, \quad   \overline W_{010A}\;, \quad   \overline W_{01AB}\;, \quad  \breve { \overline  W}_{1A1B}\;, \quad  \breve { \overline W}_{0A0B}\;.
\end{eqnarray*}
Due to the vanishing of $\overline H{}^{\sigma}$, $\overline{\nabla_{\mu} H^{\sigma}}$ and $\overline R_g$,  \eq{cwe5*} tells us that the tensor $L_{\mu\nu}$
coincides on $C_{i^-}$ with the Schouten tensor. We thus have the formula:
\begin{eqnarray}
 \overline W_{\mu\nu\sigma\rho} = \overline R_{\mu\nu\sigma \rho} -2( \overline g_{\sigma[\mu}\overline L_{\nu]\rho}
 -\overline g_{\rho[\mu}\overline L_{\nu]\sigma} )
 \;.
 \label{Weyl_Riemann_Schouten}
\end{eqnarray}
The following list of Christoffel symbols, or rather of their transverse derivatives, will be useful:
\begin{eqnarray*}
 \overline{\partial_0\Gamma^0_{01}} &=& \overline{\partial_0\Gamma^1_{11}} \, = \, 0
  \;,
\\
 \overline{\partial_0\Gamma^0_{0A}} &\overset{\eq{expr_00g1A}}{=}& - \frac{1}{2}(\partial_1+ \tau)\overline{\partial_0g_{0A}}
  \;,
\\
 \overline{\partial_0\Gamma^0_{AB}} &=& -\frac{1}{2}\partial_1\lambda_{AB}
  \;,
\\
 \overline{\partial_0\Gamma^1_{1A}} &=&  \frac{1}{2}\partial_1\overline{\partial_0g_{0A}}
  \;,
\\
 \overline{\partial_0\Gamma^1_{AB}} &=& \frac{1}{2}\tau \overline g_{AB}\overline{\partial_0 g_{00}}
  + \tilde\nabla_{(A}\overline{\partial_{|0}g_{0|B)}} -\frac{1}{2}\overline{\partial^2_{00}g_{AB}}
   -\frac{1}{2}\partial_1\lambda_{AB}
  \;,
\\
 \overline{\partial_0\Gamma^C_{0A}} &=&   \frac{1}{2}\overline g^{CD}\overline{\partial^2_{00} g_{AD}} -\frac{1}{2}\lambda_{A}{}^{D}\lambda_{D}{}^C
  +\overline g^{CD}\tilde\nabla_{[A}\overline{\partial_{|0} g_{0|D]}}
  \;,
\\
 \overline{\partial_0\Gamma^C_{1A}} &=& \frac{1}{2}\partial_1\lambda_A{}^C
  \;,
\\
 \overline{\partial_0\Gamma^C_{AB}} &=&
  \frac{1}{2}\tau\overline g_{AB}\overline g{}^{CD} \overline{\partial_0 g_{0D}}
 + \tilde\nabla_{(A}\lambda_{B)}{}^C
   -  \frac{1}{2} \tilde\nabla^C\lambda_{AB}
  \;,
\\
 \overline{\partial^2_{00}\Gamma^0_{AB}}  &\overset{\eq{expr_00g1A}}{=}& \frac{1}{2}\tau\overline g_{AB}\overline{\partial^2_{00}g_{01}}
  -\tau \tilde\nabla_{(A}\overline{\partial_{|0}g_{0|B)}}   -\frac{1}{2}\partial_1\overline{\partial^2_{00}g_{AB}}
  \;.
\end{eqnarray*}
We compute the relevant components of the Riemann tensor
$R_{\mu\nu\sigma}{}^{\rho}\equiv \partial_{\nu}\Gamma^{\rho}_{\mu\sigma} - \partial_{\mu}\Gamma^{\rho}_{\nu\sigma}
 +\Gamma^{\alpha}_{\mu\sigma}\Gamma^{\rho}_{\nu\alpha} - \Gamma^{\alpha}_{\nu\sigma}\Gamma^{\rho}_{\mu\alpha}$,
\begin{eqnarray}
 \overline R_{0101} &=&
     0
  \;, \quad
  \overline R_{011A} \,=\,
  0
  \;, \quad
  \overline R_{01AB} \,=\,
  0
  \;, \quad
  \overline R_{1A1B} \,=\,
  0
  \;,
 \label{expressions_Riemann1}
 \\
  \overline R_{010A} &=&
   \frac{1}{2}(\partial_1-\tau)\overline{\partial_0 g_{0A}} - \frac{1}{2}\overline{\partial^2_{00}g_{1A}}
   \,\overset{\eq{expr_00g1A}}{=}\, -\frac{1}{2}\tilde\nabla^B \lambda_{AB}
  \;,
 \\
  \breve{\overline R}_{0A0B} &=&
  (\tilde\nabla_{(A}\overline{\partial_{|0} g_{0|B)}})\breve{}
   - \frac{1}{2}(\overline{\partial^2_{00}g_{AB}})\breve{}
 \label{expressions_Riemann4}
  \;.
\end{eqnarray}
Next, we determine the independent components of the Weyl tensor on $\scri^-$ via \eq{Weyl_Riemann_Schouten} and by taking into account the values
we have found for $\overline L_{\mu\nu}$,
\begin{eqnarray}
 \overline W_{0101} &=&  
 0
 \;, \quad
 \overline W_{011A} \,=\, 
  0
 \;, \quad
 \overline  W_{010A} \,=\,  
  0
 \;,
\\
 \overline W_{01AB} &=& 0
 \;, \quad
 \overline  W_{1A1B} \,=\,  
 0
 \;,
\\
\breve{ \overline  W}_{0A0B} &=& \omega_{AB} + (\tilde\nabla_{(A}\overline{\partial_{|0} g_{0|B)}})\breve{}
   - \frac{1}{2}(\overline{\partial^2_{00}g_{AB}})\breve{}
\;.
 \label{weyl_component_0A0B}
\end{eqnarray}
It remains to determine $\overline{\partial^2_{00}g_{AB}}$. Note that according to \eq{wavered*} the
vanishing of $\overline H{}^{\sigma}$ and $\overline{\nabla_{\mu} H^{\sigma}}$ implies
\begin{eqnarray*}
&& \overline{\partial_0 R_{AB}} \,=\, 2\overline{\partial_0 L_{AB}}
\\
 \Longrightarrow &&\overline{\Box_g R_{AB}} \,=\, 2\overline{ \Box_g L_{AB}} \,=\, 2\overline{ \Box^{(H)}_g L_{AB}}
 \overset{\eq{cwe1*}}{=}  8\omega_{AC} \omega_B{}^C - 2\overline g_{AB}|\omega|^2
  \;.
\end{eqnarray*}
A rather lengthy computation, which uses \eq{relation_lambda_omega}, reveals that this is equivalent to (set $\Delta_{\tilde g} := \tilde\nabla^A\tilde\nabla_A$)
\begin{eqnarray*}
&&  (\partial_1-r^{-1})\overline{\partial_0R_{AB}} - 2(\partial_1-r^{-1})(\omega_{C(A}\lambda_{B)}{}^C)
 + 2 \tau\tilde\nabla_{(A}\overline L_{B)0}
\\
&& \phantom{xxxxxxxxx} + (\partial^2_{11} - \tau\partial_1+  \Delta_{\tilde g})\omega_{AB}      +  \overline g_{AB}(\frac{1}{2} \tau^2 \overline L_{00}
 + |\omega|^2 ) \,=\, 0
 \;.
\end{eqnarray*}
We take its traceless part and invoke Lemma~\ref{lemma_tracelessness},
\begin{eqnarray}
  (\partial_1-r^{-1})(\overline {\partial_0R_{AB}})\breve{}
 + 2 \tau(\tilde\nabla_{(A}\overline L_{B)0})\breve{}
+ (\partial^2_{11} - \tau\partial_1+   \Delta_{\tilde g})\omega_{AB} &=& 0
 \;.
 \label{ODE_transRAB}
\end{eqnarray}
Let us compute $\partial_0 R_{AB}$ on $\scri^-$, which is done by using \eq{relation_lambda_omega}, \eq{expr_00g1A} and the above formulae for the $u$-differentiated Christoffel symbols,
\begin{eqnarray}
 \overline{\partial_0 R_{AB}} &=&\overline{ \partial^2_{00}\Gamma^{0}_{AB} }+ \partial_{1}\overline{\partial_0\Gamma^{1}_{AB}}
  + \tilde\nabla_{C}\overline{\partial_0\Gamma^{C}_{AB}}  - \tilde\nabla_A\overline{\partial_0\Gamma^{C}_{BC}}
 - \tilde\nabla_A\overline{\partial_0\Gamma^{1}_{1B}}
 \nonumber
 \\
  &&   - \tilde\nabla_A\overline{\partial_0\Gamma^{0}_{0B}}   -\overline  \Gamma{}^{0}_{BC}\overline{\partial_0\Gamma^{C}_{0A}} - \overline\Gamma{}^{0}_{AC}\overline{\partial_0\Gamma^{C}_{B0}}
  - \overline\Gamma{}^{C}_{B0}\overline{\partial_0\Gamma^{0}_{AC}}    - \overline\Gamma{}^{C}_{0A}\overline{\partial_0\Gamma^{0}_{BC}}
 \nonumber
 \\
   &&
   - \overline \Gamma{}^{1}_{BC}\overline{\partial_0\Gamma^{C}_{1A}}  - \overline \Gamma{}^{1}_{AC}\overline{\partial_0\Gamma^{C}_{B1}} + \overline \Gamma{}^{0}_{AB}\overline{\partial_0\Gamma^{\mu}_{\mu 0}} + \overline \Gamma{}^{1}_{AB}\overline{\partial_0\Gamma^{\mu}_{\mu 1}}
 \nonumber
 \\
  &=& -(\partial_1-r^{-1})\overline{\partial^2_{00}g_{AB}} + (\partial_{1}-r^{-1})\omega_{AB}
   -  \frac{1}{2} (\Delta_{\tilde g}-\frac{1}{2}\tau^2)\lambda_{AB}
 \nonumber
\\
 &&  -\tau \tilde\nabla_{(A}\overline{\partial_{|0}g_{0|B)}}    -  \frac{1}{2}\tau \lambda_A{}^C\lambda_{BC}    - 2\lambda_{(A}{}^C\omega_{B)C}
   + f(r,x^C)\overline g_{AB}
 \label{computation_transRAB}
\;.
\end{eqnarray}
The traceless part of  $\overline{\partial_0 R_{AB}} $ reads
\begin{eqnarray}
 (\overline{\partial_0 R_{AB}})\breve{} &=& -(\partial_1-r^{-1})(\overline{\partial^2_{00}g_{AB}})\breve{}  + ( \partial_{1}-r^{-1})\omega_{AB}
  -  \frac{1}{2} (\Delta_{\tilde g}-\frac{1}{2}\tau^2)\lambda_{AB}
 \nonumber
\\
 &&   -\tau (\tilde\nabla_{(A}\overline{\partial_{|0}g_{0|B)}})\breve{}
 \label{traceless_transRAB}
\;.
\end{eqnarray}
Next, we apply $2(\partial_1 - r^{-1})$ to the expression \eq{weyl_component_0A0B} which we have found for $\breve{\overline W}_{0A0B}$. With \eq {traceless_transRAB} and \eq{expr_00g1A} we end up with
\begin{eqnarray}
2 (\partial_{1}-r^{-1}) \breve{ \overline  W}_{0A0B} &=&  (\partial_{1}-r^{-1})\big[ 2\omega_{AB}
+2  (\tilde\nabla_{(A}\overline{\partial_{|0} g_{0|B)}})\breve{}
 - (\overline{\partial^2_{00}g_{AB}})\breve{}\big]
  \nonumber
\\
 &=&   (\partial_{1}-r^{-1})\omega_{AB} -4 ( \tilde\nabla_{(A}\overline L_{B)0} )\breve{}  + (\overline{\partial_0 R_{AB}})\breve{}
  \nonumber
\\
 &&       +  \frac{1}{2} (\Delta_{\tilde g}- 2r^{-2})\lambda_{AB}
 \label{relation_weylcomponent}
 \;.
\end{eqnarray}
On the other hand, from the Bianchi identity $\nabla_{[\mu}R_{iA]B}{}^{\mu}=0$, $i=0,1$, we infer
\begin{eqnarray*}
  (\overline{\nabla_{\mu} W_{i(AB)}{}^{\mu}})\breve{} + \frac{1}{2}(\overline{\nabla_i R_{AB}})\breve{} - \frac{1}{2}(\overline{\nabla_{(A} R_{B)i}})\breve{}
  \,=\,0
   \;.
\end{eqnarray*}
Employing further the tracelessness of the Weyl tensor,
\begin{eqnarray*}
 g^{\mu\nu}\nabla_0  W_{\mu AB\nu}=0 \quad \Longrightarrow\quad
  2 (\overline{\nabla_0  W_{0(AB)1}})\breve{} = (\overline{\nabla_0  W_{1A1B}})\breve{}
  \;,
\end{eqnarray*}
we obtain with $\overline R_{\mu\nu}=2\overline L_{\mu\nu}$, $\overline R_g=0$,  Lemma~\ref{lemma_tracelessness} and since the other components of the Weyl tensor are
already known to vanish initially,
\begin{eqnarray}
  2 (\partial_{1}-r^{-1})\breve{\overline W}_{0A0B} \,=\, (\partial_1-r^{-1})\omega_{AB}  + (\overline{\partial_0 R_{AB}})\breve{}
   -2(\tilde\nabla_{(A} \overline L_{B)0})\breve{}
 \label{binachi_weylcomponent}
   \;.
\end{eqnarray}
Combining \eq{relation_weylcomponent} and \eq{binachi_weylcomponent} we are led to
\begin{eqnarray}
  (\Delta_{\tilde g}-\frac{1}{2}\tau^2)\lambda_{AB}  -4( \tilde\nabla_{(A}\overline L_{B)0} )\breve{}
 &=& 0
\label{box_rel1}
 \;.
\end{eqnarray}
We apply $(\partial_1+r^{-1})$  and use \eq{relation_lambda_omega} to conclude that
\begin{eqnarray}
(\Delta_{\tilde g}-\frac{1}{2}\tau^2)\omega_{AB}  +2(\partial_1+r^{-1})( \tilde\nabla_{(A}\overline L_{B)0} )\breve{}
  &=& 0
 \;,
 \label{box_rel2}
\end{eqnarray}
which will prove to be a useful relation.
Next we apply  $(\partial_1-r^{-1})$  to \eq{relation_weylcomponent}.
With \eq{relation_lambda_omega}, \eq{ODE_transRAB}, \eq{box_rel1} and \eq{box_rel2} we end up with
\begin{eqnarray*}
2(\partial_1-r^{-1})^2 \breve{ \overline  W}_{0A0B}
 &=& (\partial^2_{11} -2r^{-1}\partial_1+2r^{-2})\omega_{AB} -4 (\partial_1-r^{-1})(\tilde\nabla_{(A}\overline L_{B)0})\breve{}
\\
 &&  
  -(\Delta_{\tilde g}- 2r^{-1})(\omega_{AB}+ r^{-1}\lambda_{AB})
   +(\partial_1-r^{-1})(\overline{\partial_0 R_{AB}})\breve{}
\\
 &=& 0
\\
  \Longrightarrow \quad \breve{ \overline  W}_{0A0B}  &=& c_{AB}(x^C)r^2 + d_{AB}(x^C)r = c_{AB}(x^C)r^2
 \;,
\end{eqnarray*}
for any regular solution satisfies $  \breve{ \overline  W}_{0A0B}  =O(r^2)$
in adapted coordinates.

We have $\omega_{AB}=O(r^4)$ and $\lambda_{AB}=O(r^5)=\overline{\partial_0 g_{AB}}$.
A regular solution
satisfies  $  O(r^2) =( \overline{\nabla_{(A} L_{B)0}})\breve{} = \tilde\nabla _{(A}\overline L_{B)0}$.
Similarly, we have $O(r^2) =\overline{ \nabla_0 R_{AB}} = \overline{\partial_0 R_{AB}} - 2\overline \Gamma{}^C_{0(A}\overline R_{B)C}$, which implies
$(\overline{\partial_0 R_{AB}})\breve{}=O(r^2)$,
so the right-hand side of \eq{binachi_weylcomponent} is $O(r^2)$, consequently $\breve{ \overline  W}_{0A0B} =O(r^3)$, whence $c_{AB}=0$ and
\begin{equation*}
 \breve{ \overline  W}_{0A0B} =0
 \;.
\end{equation*}

\subsection{Validity of equation \eq{conf5} on $C_{i^-}$}
\label{sect_validity}

We need to show that \eq{conf5} holds at at least one point. In fact,
since $\overline \Theta$ vanishes and $\overline{\nabla_{\mu}\Theta}$ is null,  one  immediately observes that
it is satisfied on the whole initial surface $C_{i-}$.
%
%

\subsection{Vanishing of $\overline \Upsilon_{\mu}$}

Using the constraint equations \eq{constraint_g} it is easily checked
that the components $\mu=1,A$ of $\overline \Upsilon_{\mu} \equiv \overline{\nabla_{\mu} s} + \overline L_{\mu}{}^{\nu}\overline{\nabla_{\nu}\Theta}$ vanish.
To show that also the $\mu=0$-component vanishes, we need to compute the value of the transverse derivative of $s$ on $\scri^-$,
which is accomplished via the CWE \eq{cwe2*},
\begin{eqnarray*}
 \overline{\Box^{}_g s}\,=\,0 \quad \Longleftrightarrow \quad  (\partial_1+r^{-1})\overline{\partial_0 s} \,=\,0
 \;.
\end{eqnarray*}
The function $\overline{\partial_0 s}$ is bounded. Thus
\begin{equation}
 \overline{\partial_0 s} \,=\,0
 \;,
 \label{vanish_trans_s}
\end{equation}
and the vanishing of $\overline \Upsilon_{\mu}$ is ensured.

\subsection{Vanishing of $\overline \Xi_{\mu\nu}$}
\label{sect_vanishing_Xi}

We consider
\begin{equation*}
 \overline\Xi_{\mu\nu} \equiv \overline{ \nabla_{\mu}\nabla_{\nu}\Theta +\Theta\,   L_{\mu\nu} -  s\, g_{\mu\nu}}
 = \overline{\partial_{\mu}\partial_{\nu}\Theta} - \overline\Gamma{}^0_{\mu\nu}\overline{\partial_0\Theta}  + 2\overline g_{\mu\nu}
 \;.
\end{equation*}
First of all we need to determine the value of $\overline{\partial_0\Theta}$, which is not part of the initial data. It can be derived from the CWE.
Evaluation of \eq{cwe3*} on $\scri^-$ gives
\begin{eqnarray*}
  \overline{\Box^{}_g \Theta} = 4\overline s \quad \Longleftrightarrow \quad
   (\partial_1+ r^{-1})\overline{\partial_0 \Theta}  = -4
 \;.
\end{eqnarray*}
For any sufficiently regular solution of the CWE  the function $\overline{\partial_0\Theta}$ is bounded near the vertex,
and there is precisely one such solution,
\begin{equation}
 \overline{\partial_0 \Theta} \, = \, -2r\;.
 \end{equation}
One straightforwardly checks that $\overline\Xi_{\mu\nu}=0$ for $(\mu\nu)\ne (00)$.
To determine $\overline\Xi_{00}$ we need to compute the second-order transverse derivative of $\Theta$ first.
This is done via the CWE \eq{cwe3*},
\begin{eqnarray*}
 \overline{\partial_0\Box_{g}\Theta} = 4\overline{\partial_0 s} \overset{\eq{vanish_trans_s}}{=}0 \quad \Longleftrightarrow \quad (\partial_1 + r^{-1})\overline{\partial^2_{00}\Theta} - 2r^{-1}=0
\;,
\end{eqnarray*}
where we took into account that $\overline{\partial_0 g_{1\mu}}=0$, $\overline g^{AB} \overline{\partial_0 g_{AB}}=0$ , $\overline{\partial^2_{00}g_{11}}=0$,
as well as the formulae for the $u$-differentiated Christoffel symbols. The general solution of the ODE is $\overline{\partial^2_{00}\Theta}= 2 +cr^{-1}$.
For any sufficiently regular  solution   $\overline{\partial^2_{00}\Theta}= \overline{\nabla_0\nabla_0\Theta}$ is bounded, and we conclude
\begin{equation*}
 \overline{\partial^2_{00}\Theta}\, =\, 2
 \;,
\end{equation*}
which guarantees the vanishing of $\overline \Xi_{00}$.

\subsection{Vanishing of $\overline \varkappa_{\mu\nu\sigma}$}
\label{sect_vanishing_xi}

Recall the definition of the tensor
\begin{eqnarray*}
 \varkappa_{\mu\nu\sigma} \equiv 2\nabla_{[\sigma} L_{\nu]\mu} - \nabla_{\kappa}\Theta d_{\nu\sigma\mu}{}^{\kappa}
 \;.
\end{eqnarray*}
Due to the symmetries $\varkappa_{\mu(\nu\sigma)}=0$, $\varkappa_{[\mu\nu\sigma]}=0$ and $\varkappa_{\nu\mu}{}^{\nu}=0$ (since $\overline\zeta_{\mu}=0$ and $L=0$)
its independent components on the initial surface are
\begin{eqnarray*}
   \overline \varkappa_{11A}\;, \quad \overline \varkappa_{A1B}\;, \quad \overline \varkappa_{01A}\;, \quad \overline \varkappa_{ABC}\; \quad
 \overline\varkappa_{00A}\;, \quad  \overline\varkappa_{A0B}\;.
\end{eqnarray*}
We find (recall that $\overline L_{1\mu}=0$ and $\overline L_{0A} = \frac{1}{2}\tilde\nabla^B\lambda_{AB}$),
\begin{eqnarray*}
 \overline \varkappa_{11A} &=& 0
 \;,
\\
 \overline \varkappa_{A1B} &=& -(\partial_1-r^{-1}) \omega_{AB} - 2r\overline  d_{1A1B} \overset{\eq{initial_d1A1B_spec_gauge}}{=} 0
\;,
\\
 \overline \varkappa_{01A} &=&  -\partial_{1} \overline  L_{0A} + 2r\overline  d_{011A} \overset{\eq{expression_d011A}}{=}   0
\;,
\\
 \overline \varkappa_{ABC} &=& 2 \tilde \nabla_{[C} \omega_{B]A} -\tau  \overline g_{A[B} \overline L_{C]0}   - 2r \overline  d_{1ABC}
\\
 &=&  2 \tilde \nabla_{[C} \omega_{B]A}    - 2\tilde\nabla_D \omega _{[B}{}^D \overline  g_{C]A}
 \overset{\mathrm{tr}(\omega)=0}{=} 0
 \;,
\end{eqnarray*}
where the first equal sign in the last line follows from \eq{relation_d5}, \eq{initial_d011A_spec_gauge},  \eq{constraint_g} and~\eq{relation_lambda_omega}.

To prove the vanishing of the remaining components,
\begin{eqnarray*}
    \overline \varkappa_{A0B} &=& \tilde  \nabla_{B} \overline L_{0A} - \frac{1}{2}\lambda_B{}^C\omega _{ AC} +\frac{1}{2}\tau \overline g_{AB}\overline L_{00}   -  \overline{\nabla_{0} L_{AB}} + 2r \overline d_{0BA1}
  \;,
\\
  \overline \varkappa_{00A} &=&  \tilde\nabla_{A}\overline  L_{00} - \lambda_A{}^B\overline L_{0B} - \overline{ \nabla_{0} L_{0A}} + 2r \overline d_{010A}
 \;,
\end{eqnarray*}
is somewhat more involved as it requires the knowledge of certain transverse derivatives
of $L_{\mu\nu}$ on $\scri^-$. These can be extracted from  \eq{cwe1*},
\begin{eqnarray*}
 \overline{\Box_g L_{AB}} &=&    \overline{\Box^{(H)}_g L_{AB}}  \,=\, 4\omega_{AC}\omega_{B}{}^C - \overline g_{AB} |\omega|^2
\;,
\\
 \overline{\Box_g L_{0A}} &=&  \overline{\Box^{(H)}_g L_{0A}} \,=\, 4\omega_{A}{}^{B}\overline L_{0B}
 \;.
\end{eqnarray*}
We employ the facts, established above, that the Weyl tensor vanishes on $C_{i^-}$ and that $L_{\mu\nu}$ coincides there with the Schouten tensor,
to compute the action of $\Box_g$ on $L_{AB}$ and $L_{0A}$,
\begin{eqnarray*}
 \overline{\Box_g L_{AB}} &=& 2 (\partial_1- r^{-1})\overline{\nabla_0 L_{AB}} +\partial_1(\partial_1- \tau) \omega_{AB} +(\Delta_{\tilde g}-\frac{1}{2}\tau^2 )\omega _{AB}
\\
&&  + 2\tau  \tilde \nabla_{(A}\overline  L_{B)0}
   -\tau  \lambda_{(A}{}^C  \omega_{B)C} + \frac{1}{2}\tau^2 \overline  g_{AB} \overline  L_{00 }
\\
 &\overset{\eq{box_rel2}}{=}&
  2 (\partial_1- r^{-1})\overline{\nabla_0 L_{AB}} +\partial_1(\partial_1- \tau) \omega_{AB}  - 2(\partial_1-r^{-1})( \tilde\nabla_{(A}\overline L_{B)0} )\breve{}
\\
  &&  + \tau  \overline g_{AB} \tilde \nabla^C \overline  L_{0C}
   -\tau  \lambda_{(A}{}^C  \omega_{B)C} + \frac{1}{2}\tau^2 \overline  g_{AB} \overline  L_{00 }
\;,
\\
  \overline{\Box_g L_{0A}} &=& 2 \partial_1\overline{\nabla_0 L_{0A}}+ (\partial_1+r^{-1}) (\partial_1-r^{-1}) \overline L_{0A} -2\omega_A{}^B\overline  L_{0B}
\\
 && +(\Delta_{\tilde g}- r^{2}) \overline L_{0A} + \tau  \tilde \nabla_A \overline L_{00 }
  -\lambda_B{}^C\tilde \nabla^B \omega_{AC}  - \tau  \lambda_A{}^B\overline  L_{0 B}
\\
 &\overset{\eq{relation_lambda_omega}}{=}& 2 \partial_1\overline{\nabla_0 L_{0A}}
  - (\partial_1-r^{-1}) \tilde\nabla^B \omega_{AB}   -2\omega_A{}^B\overline  L_{0B}
\\
 && +(\Delta_{\tilde g} + r^{-2})\overline L_{0A} + \tau  \tilde \nabla_A \overline L_{00 }
  - \lambda_B{}^C\tilde \nabla^B \omega_{AC}  - \tau \lambda_A{}^B\overline  L_{0 B}
 \;.
\end{eqnarray*}
With these expressions,  Lemma~\ref{lemma_tracelessness},
 \eq{constraint_g}-\eq{relation_lambda_omega} and \eq{relation_d2}-\eq{relation_d6} we find
\begin{eqnarray*}
 2  (\partial_1-r^{-1}) \overline \varkappa_{A0B} &=&  2(\partial_1-r^{-1})\tilde  \nabla_{B} \overline L_{A0}
  - \omega _{ A}{}^C(\partial_1-r^{-1})\lambda_{BC}  +  \tau \overline g_{AB}\partial_1 \overline L_{00}
\\
 && - \lambda_B{}^C(\partial_1-\tau )\omega _{ AC}  + 4 r\partial_1\overline d_{0BA1}
 -  2  (\partial_1-r^{-1})\overline{\nabla_{0} L_{AB}}
\\
 &=&  2(\partial_1-r^{-1})\tilde  \nabla_{[B} \overline L_{A]0}
 +\overline g_{AB} (\partial_1+3r^{-1})\tilde  \nabla^C \overline L_{C0}
\\
 &&  - \lambda_B{}^C\partial_1\omega _{ AC} + \tau  \lambda_{[B}{}^C\omega _{ A]C}
 - 2\omega_{AC}\omega_{B}{}^C + \overline g_{AB} |\omega|^2
\\
 && +  \tau \overline g_{AB}(\partial_1+r^{-1}) \overline L_{00} -2 r \overline g_{AB}(\partial_1+\tau)\overline d_{0101}
  + 2 r\partial_1\overline d_{01AB}
\\
 &=&
   - (\partial_1 + r^{-1})  (\lambda_{(A}{}^C\omega _{ B)C})\breve{}
  - 4(\omega_{CA}\omega_{B}{}^C)\breve{}
\\
 &=& 0
 \;,
\end{eqnarray*}
as well as
\begin{eqnarray*}
 2\partial_1 \overline \varkappa_{00A}
 &=&   2 \partial_1\tilde\nabla_{A}\overline  L_{00}
 +  4( \omega_{A}{}^B+ r^{-1} \lambda_A{}^B )\overline L_{0B} + 2  \lambda_A{}^B\tilde\nabla^C \omega_{BC}
  - 2 \partial_1\overline{ \nabla_{0} L_{0A}}
\\
  && + 4 r(\partial_1 + r^{-1}) \overline d_{010A}
\\
 &=&  \frac{1}{2} \tilde\nabla_{A}( \omega_{BC}\lambda^{BC}) + 2(r^{-1} \lambda_A{}^B  -  \omega_{A}{}^B) \overline L_{0B}
- \tilde\nabla_{A} \tilde \nabla^B \overline L_{0B}
- \lambda_{C}{}^B\tilde\nabla^C \omega_{AB}
\\
 &&+  2\lambda_{A}{}^B\tilde\nabla^C \omega_{BC}
 - (\partial_1-r^{-1}) \tilde\nabla^B \omega_{AB}  +(\Delta_{\tilde g}+r^{-2})\overline L_{0A}
\\
  && + 2r \tilde\nabla^B\overline d_{01AB} - 2 r \tilde\nabla^B\overline d_{1A1B}
  + 4  \overline d_{011A}
   + 4 r \lambda_A{}^B\overline d_{011B}
\\
 &=& -\tilde\nabla^B( \lambda_{C(A} \omega_{B)}{}^C)\breve{}
  -  r^{-2}\overline L_{0A}  -2 \tilde\nabla_{[A} \tilde \nabla_{B]} \overline L_{0}{}^B
\\
 &=& 0
 \;.
\end{eqnarray*}
Due to regularity we have $\overline \varkappa_{A0B}=O(r^2)$ and $\overline \varkappa_{00A}=O(r)$,
so  the only remaining possibilities are
\begin{equation*}
 \overline \varkappa_{A0B}\,=\,0 \quad \text{and} \quad \overline \varkappa_{00A} \,=\, 0
 \;.
\end{equation*}

\subsection{Vanishing of $\overline {\nabla_{\rho}d_{\mu\nu\sigma}{}^{\rho}}$}

The independent components of $\overline {\nabla_{\rho}d_{\mu\nu\sigma}{}^{\rho}}$, which by Lemma~\ref{lemma_properties_d} is antisymmetric in its first two indices, trace-free and
satisfies the first Bianchi identity, are
\begin{eqnarray*}
  \overline {\nabla_{\rho}d_{0A0}{}^{\rho}}\;, \quad
  \overline {\nabla_{\rho}d_{0A1}{}^{\rho}}\;, \quad
  \overline {\nabla_{\rho}d_{0AB}{}^{\rho}}\;, \quad
  \overline {\nabla_{\rho}d_{1A1}{}^{\rho}}\;, \quad
  \overline {\nabla_{\rho}d_{1AB}{}^{\rho}}\;, \quad
  \overline {\nabla_{\rho}d_{ABC}{}^{\rho}}\;.
\end{eqnarray*}
We need to show that they vanish altogether.
Let us start with those components which do not involve transverse derivatives. Then their vanishing follows immediately from the constraint equations
\eq{constraint_g}-\eq{relation_lambda_omega} and \eq{relation_d2}-\eq{relation_d6},
\begin{eqnarray*}
 \overline {\nabla_{\rho}d_{0A1}{}^{\rho}} &=&
  -(\partial_{1}+r^{-1})\overline d_{010A} +\frac{1}{2} \tilde \nabla^{B}\overline d_{01AB}  - \frac{1}{2}\tilde \nabla^{B}  \overline d_{1A1B}
   + \frac{1}{2}\tilde \nabla_A\overline d_{0101}
  \\
   &&  + r^{-1}\overline  d_{011A} +  \lambda_{A}{}^B \overline d_{011B} \,=\, 0
 \;,
\\
 \overline {\nabla_{\rho}d_{1A1}{}^{\rho}} &=&
  -(\partial_{1}+ 3r^{-1}) \overline d_{011A} +  \tilde \nabla^{B}\overline d_{1A1B}
    \,=\,  0
 \;.
\end{eqnarray*}
To determine the remaining components we first of all need to compute the transverse derivatives.
This is done by evaluating the CWE \eq{cwe4*} on $C_{i^-}$,
\begin{eqnarray}
  \overline{\Box_g d_{\mu\nu\sigma\rho} } \,=\, \overline{\Box^{(H)}_g d_{\mu\nu\sigma\rho} } \,=\, 0
 \;.
 \label{wave_d_cone}
\end{eqnarray}
Moreover, we will exploit the Lemmas~\ref{lemma_properties_d} and \ref{lemma_tracelessness},
the fact that the Weyl tensor vanishes on $C_{i^-}$, and that $ L_{\mu\nu}$ coincides there with the Schouten tensor,
i.e.\ that \eq{mconf6} holds initially.

Invoking \eq{constraint_g}-\eq{relation_lambda_omega} and \eq{relation_d2}-\eq{relation_d6}
we compute
\begin{eqnarray}
 (\partial_1 - r^{-1}) \overline{ \nabla_{\rho}d_{1AB}{}^{\rho}} &=& -(\partial_1 - r^{-1}) \overline{\nabla_{0}d_{1A1B}} - \frac{1}{2}(\partial_1 - r^{-1})^2   \overline d_{1A1B}
 \nonumber
 \\
  &&    + 2\tau ( \tilde \nabla_{(A} \overline d_{B)110})\breve{}
   - ( \tilde \nabla_{(A} \tilde\nabla^C \overline  d_{B)1C1})\breve{}
  \;.
   \label{div_d1AB}
\end{eqnarray}
 With \eq{expressions_Riemann1} we further find
\begin{eqnarray*}
 \overline{\Box_g d_{1A1B}}
 &=& 2 (\partial_1 - r^{-1})\overline{\nabla_0  d_{1A1B}} + (\partial^2_{11} - 2r^{-1}\partial_1 )  \overline d_{1A1B} + \Delta_{\tilde g}\overline  d_{1A1B}
\\
 &&   +2\tau\tilde \nabla^C \overline d_{1(AB)C}    + \frac{1}{2}\tau^2 \overline  g^{CD} \overline d_{ACBD}  + 2\tau\tilde\nabla_{(A} \overline d_{B)101}
 + \frac{1}{2}\tau^2\overline  g_{AB}\overline d_{0101}
\\
   &=& 2 (\partial_1 - r^{-1})\overline {\nabla_0  d_{1A1B}} + ( \Delta_{\tilde g} + \partial^2_{11} - 2r^{-1}\partial_1 ) \overline d_{1A1B}
   - 4\tau(\tilde \nabla_{(A}\overline  d_{B)110})\breve{}
 \;.
\end{eqnarray*}
The transverse derivative in \eq{div_d1AB} is eliminated via $\overline{\Box_g d_{1A1B}}=0$,
\begin{eqnarray*}
 (\partial_1 - r^{-1}) \overline{ \nabla_{\rho}d_{1AB}{}^{\rho}} &=&  \frac{1}{2}\Delta_{\tilde g} \overline d_{1A1B}
  - r^{-2}  \overline d_{1A1B}   - ( \tilde \nabla_{(A} \tilde\nabla^C \overline  d_{B)1C1})\breve{}
  \;.
\end{eqnarray*}
We need an expression for the $\Delta_{\tilde g}$-term, which can be derived
from \eq{initial_d1A1B_spec_gauge}, \eq{box_rel2}, \eq{constraint_g} and \eq{relation_lambda_omega}
as follows,
\begin{eqnarray}
 \Delta_{\tilde g}  \overline d_{1A1B} &=&
  -\frac{1}{2}r^{-1}(\partial_1+ r^{-1})\Delta_{\tilde g} \omega _{AB}
   \nonumber
\\
 &=& \frac{1}{2}r^{-1}(\partial_1+ r^{-1})[ 2(\partial_1+r^{-1})( \tilde\nabla_{(A}\overline L_{B)0} )\breve{} - \frac{1}{2}\tau^2\omega_{AB}]
 \nonumber
\\
&=& 2r^{-2} \overline d_{1A1B}  + 2( \tilde\nabla_{(A}\tilde\nabla^C \overline d_{B)1C1})\breve{}
\label{Delta_d1A1B}
  \;.
\end{eqnarray}
Plugging that in we are led to the ODE
\begin{equation*}
 (\partial_1 - r^{-1}) \overline{ \nabla_{\rho}d_{1AB}{}^{\rho}} \,=\,0
  \;.
\end{equation*}
For any sufficiently regular  solution of the CWE we have $ \overline{ \nabla_{\rho}d_{1AB}{}^{\rho}} =O(r^2)$ and hence
\begin{equation*}
 \overline{ \nabla_{\rho}d_{1AB}{}^{\rho}} \,=\,0
  \;.
\end{equation*}

To show that the other components of $\nabla_{\rho}d_{\mu\nu\sigma}{}^{\rho}$ vanish initially, we proceed in a similar manner.
In particular, we shall make extensively use of the  constraint equations
\eq{constraint_g}-\eq{relation_lambda_omega}
(and also of their non-integrated counterparts \eq{initialdata_d1}-\eq{initialdata_d5}),
\eq{relation_d2}-\eq{relation_d6}  and of the expressions  \eq{expressions_Riemann1}-\eq{expressions_Riemann4} we computed for the components of the Riemann tensor.

Let us establish the vanishing of $  \overline {\nabla_{\rho}d_{ABC}{}^{\rho}}$. By \eq{wave_d_cone} we have
$  \overline{\Box_g d_{1ABC} } = 0$  on $\scri^-$ with
\begin{eqnarray}
\overline{  \Box_g d_{1ABC}}
   &=& 2 (\partial_1 - \tau)\overline{  \nabla_0 d_{1ABC}}   + (\partial^2_{11} - 4r^{-1}\partial_1 + r^{-2})\overline  d_{1ABC} +  \Delta_{\tilde g} \overline d_{1ABC}
 \nonumber
  \\
   &&  -\tau \tilde\nabla^D \overline d_{D ABC} + \tau \tilde\nabla_A\overline d_{10 BC} + 2\tau \tilde\nabla_{[B} \overline d_{C]0A1} + 2\tau  \tilde\nabla_{[B}\overline d_{C]1A1}
 \nonumber
   \\
   &&
    +2\tilde\nabla_D(\lambda_{[B}{}^D \overline d_{C]1A1})
 -\tau \lambda_{[B}{}^D \overline d_{C]1AD}  - \tau \lambda_{[B}{}^D \overline d_{C]D A1}
 \nonumber
    \\
    && - \frac {1}{2}\tau^2 \overline d_{0A BC}   - \tau^2 \overline g_{A[B} \overline d_{C]010}  - \tau^2 \overline g_{A[B} \overline d_{C]110}
   -\tau \lambda_{A[B} \overline d_{C]110}
 \nonumber
\\
 &=& 2 (\partial_1 - \tau)\overline{\nabla_0 d_{1ABC} }  + 2\overline  g_{A[B} (\partial^2_{11} -5r^{-2})  \overline d_{C]110}
 + 2 \overline  g_{A[B}   \Delta_{\tilde g}  \overline d_{C]110}
 \nonumber
  \\
   && -3 \tau   \overline g_{A[B} \tilde\nabla_{C]}\overline d_{0101}
 - \tau \tilde\nabla_A \overline d_{01 BC}    +\tau \tilde\nabla_{[B} \overline d_{C]A01} + \tau  \tilde\nabla_{[B} \overline d_{C]1A1}
 \nonumber
   \\
   &&    +2\tilde\nabla_D(\lambda_{[B}{}^D \overline  d_{C]1A1})
   -2\tau \overline  g_{A[B}\lambda_{C]}{}^D \overline d_{011D}
  - 2\tau \lambda_{A[B}\overline d_{C]110}
 \label{box_d1ABC_cone}
 \;.
\end{eqnarray}
We determine
\begin{eqnarray*}
  \overline {\nabla_{\rho}d_{ABC}{}^{\rho}} &=& \overline{\nabla_{0}d_{ABC1}} +2  \overline g_{C[A} (\partial_1+r^{-1}) \overline d_{B]010}
   \\
   &&   - 2\overline g_{C[A} \tilde \nabla_{B]} \overline d_{0101}
  -\overline  g_{C[A}\lambda_{B]}{}^D \overline d_{011D}
    +\lambda_{C[A} \overline d_{B]110}
    \\
   &=&\overline{ \nabla_{0}d_{ABC1}} +  \overline g_{C[A} \tilde\nabla^D \overline d_{B]D01} + \overline g_{C[A} \tilde\nabla^D\overline d_{B]11D}
 +\tau \overline g_{C[A}  \overline d_{B]110}
   \\
   &&   - \overline g_{C[A} \tilde \nabla_{B]}\overline  d_{0101}
  +\overline  g_{C[A}\lambda_{B]}{}^D \overline d_{011D}
    +\lambda_{C[A}\overline d_{B]110}
\;.
\end{eqnarray*}
Due to the constraint equations  that yields
\begin{eqnarray*}
&& \hspace{-3em} 2(\partial_1- \tau)  \overline {\nabla_{\rho}d_{CBA}{}^{\rho}}
\\
 &=&  2(\partial_1-\tau) \overline{ \nabla_{0}d_{1ABC}}
 + 2 \overline g_{A[B} \tilde\nabla^D(\partial_{|1|}-\tau) \overline d_{C]1D1}
\\
 && -   2  \overline g_{A[B} \tilde\nabla^D(\partial_{|1|} + r^{-1}) \overline d_{C]D01}
 - 2\tau \overline g_{A[B}(\partial_{|1|}+3r^{-1} )   \overline d_{C]110}
   \\
   &&   + 2 \overline g_{A[B} \tilde \nabla_{C]} (\partial_1+3r^{-1} )\overline d_{0101}
   -2 \overline  g_{A[B}\lambda_{C]}{}^D (\partial_1+ 3r^{-1}) \overline d_{011D}
\\
   &&  - 2\lambda_{A[B}(\partial_{|1|}+ 3r^{-1} ) \overline d_{C]110}
  + 3\tau  \overline g_{A[B} \tilde\nabla^D \overline d_{C]D01} + 4\tau^2 \overline g_{A[B} \overline d_{C]110}
\\
 &&   - 3\tau  \overline g_{A[B} \tilde \nabla_{C]} \overline d_{0101} +4\tau  \overline  g_{A[B}\lambda_{C]}{}^D  \overline d_{011D}
  + 4\tau \lambda_{A[B} \overline d_{C]110}
\\
 && +\underbrace{4\omega_{A[B} \overline d_{C]110} + 4\overline  g_{A[B}\omega_{C]}{}^D \overline d_{011D}}_{=0}
\\
 &=&  2(\partial_1-\tau) \overline { \nabla_{0}d_{1ABC}}
 + 2 \overline g_{A[B} \tilde\nabla^D(\partial_{|1|}-2\tau) \overline d_{C]1D1}
 +4\tau  \overline  g_{A[B}\lambda_{C]}{}^D  \overline d_{011D}
\\
 &&    + 4\tau\lambda_{A[B} \overline d_{C]110}
 - 3\tau  \overline g_{A[B} \tilde \nabla_{C]} \overline d_{0101}
  + 3\tau  \overline g_{A[B}\tilde\nabla^D \overline d_{C]D01}
    + \frac{7}{2}\tau^2 \overline g_{A[B} \overline d_{C]110}
\\
 &&
 +    \overline g_{A[B} \tilde \nabla_{C]}(\lambda^{DE}\overline d_{1D1E})
  +   2  \overline g_{A[B} \Delta_{\tilde g} \overline d_{C]110}
  +   \overline g_{A[B} \tilde\nabla^D(\overline d_{C]1F1} \lambda_D{}^F)
\\
  &&  -     \overline g_{A[B} \tilde\nabla^D  (\lambda_{C]}{}^E\overline d_{1D1E})
\underbrace{ -2 \overline  g_{A[B}\lambda_{C]}{}^E  \tilde\nabla^D\overline d_{1D1E}
  - 2\lambda_{A[B} \tilde\nabla^D\overline d_{C]1D1}}_{=0}
 \;.
\end{eqnarray*}
With $\overline{\Box_g d_{1ABC}}=0$ we eliminate the transverse derivative. Employing further  \eq{initial_d1A1B_spec_gauge}, \eq{initial_d011A_spec_gauge} and \eq{relation_lambda_omega} we end up with
\begin{eqnarray}
 && \hspace{-3em}  2(\partial_1- 2r^{-1})  \overline {\nabla_{\rho}d_{CBA}{}^{\rho}}
\nonumber
\\
 &=&   -\frac{1}{4}\tau^2( \partial_1- r^{-1}) (\overline  g_{A[B}  \tilde\nabla^D\omega_{C]D}  - \tilde\nabla_{[B} \omega_{C]A})
\nonumber
  \\
   &&
 +  \tau \tilde\nabla_A\overline  d_{01 BC}     - \tau \tilde\nabla_{[B}\overline  d_{C]A01}  + 3\tau  \overline g_{A[B}\tilde\nabla^D \overline d_{C]D01}
\nonumber
\\
 &&  +6\tau  \overline  g_{A[B}\lambda_{C]}{}^D  \overline d_{011D}  + 6\tau \lambda_{A[B} \overline d_{C]110}
\nonumber
\\
 &&  -    \overline g_{A[B} \tilde\nabla^D  (\lambda_{C]}{}^E\overline d_{1D1E})
 +    \overline g_{A[B} \tilde \nabla_{C]}(\lambda^{DE}\overline d_{1D1E})
 - \tilde\nabla_D(\lambda_{[B}{}^D  \overline d_{C]1A1})
\nonumber
\\
 &&
  +    \overline g_{A[B} \tilde\nabla^D(\overline d_{C]1E1}\lambda_D{}^E)   - \tilde\nabla_D(\lambda_{[B}{}^D \overline  d_{C]1A1})
\nonumber
\\
 &=&
\label{ODE_dCBA}0
 \;,
\end{eqnarray}
since the terms in each line add up to zero, as one checks e.g.\ by introducing an orthonormal frame for $\tilde g$.
By regularity we have $ \overline {\nabla_{\rho}d_{ABC}{}^{\rho}} =O(r^3)$, so \eq{ODE_dCBA} enforces
\begin{equation*}
  \overline {\nabla_{\rho}d_{ABC}{}^{\rho}}  \,=\,0
 \;.
\end{equation*}

To check the vanishing of $\overline{\nabla_{\rho}d_{0A0}{}^{\rho}}$ we start with the relation $\overline{\Box_g d_{010A} } =0$,
and compute
\begin{eqnarray*}
 \overline { \Box_g d_{010A}}
&=& 2  \overline  L_{0}{}^{B} \overline  d_{0A1B}  +2 \overline L_{0}{}^{B}\overline d_{01AB}+ 2 \overline  L_{0A} \overline d_{0101} + 2 \partial_1\overline {\nabla_0   d_{010A}}    - \frac{1}{2}\tau\lambda_A{}^B \overline d_{ 010B }
\\
 &&  +(\Delta_{\tilde g} +  \partial^2_{11}  -  \frac{5}{4}\tau^2  ) \overline d_{010A}
  - \lambda_B{}^C \tilde \nabla^B\overline d_{C10A}   + \lambda_B{}^C\tilde \nabla^B \overline d_{01AC}
\\
 &&   +\frac{1}{2}  \lambda_B{}^C\lambda^{BD}\overline d_{1C AD}
  -\frac{1}{4}|\lambda|^2\overline d_{01 1A}
+ ( \tau\lambda_A{}^B + \lambda_A{}^C\lambda_C{}^B)\overline d_{011B }
\\
 &&
 - \tau   \tilde \nabla^B   \overline d_{0A 0B}   - \frac{1}{2}\tau\lambda^{BC}\overline  d_{0BAC}
   + \tau  \tilde \nabla_A   \overline d_{0101 }
 + \lambda_A{}^B\tilde \nabla_B  \overline  d_{0101 }
\\
 &=&2 \partial_1\overline {\nabla_0   d_{010A}}
   +  \tilde\nabla_A \tilde\nabla^B\overline d_{011B} -  \frac{1}{2} \Delta_{\tilde g} \overline d_{011A}
 + \frac{1}{4} \tilde\nabla^B(\lambda_A{}^C\overline d_{1B1C})
\\
  &&  + \frac{1}{4}  \tilde\nabla_A(\lambda^{BC}\overline d_{1B1C})
 -  \frac{1}{2} \tilde\nabla^B  (\partial_{1}-5r^{-1})\overline d_{1A1B}     -\tau \tilde\nabla^B \overline d_{01AB}
\\
 && -2\omega_{A}{}^B\overline d_{011B}
  +\lambda_A{}^C \tilde\nabla^B\overline d_{1B1C}  - \tau   \tilde \nabla^B \overline   d_{0A 0B}
  - \frac{3}{4} \tilde \nabla^B(\lambda_B{}^C \overline d_{1A1C})
\\
 &&      + \frac{3}{2}\tilde \nabla^B(\lambda_B{}^C\overline  d_{01AC})
  + \frac{3}{2}\tilde \nabla_B(\lambda_A{}^B \overline   d_{0101 }) +\Delta_{\tilde g} \overline  d_{010A}   - \frac{9}{8}\tau^2 \overline d_{011A}
\\
 &&  -  \frac{3}{4}\tau^2 \overline d_{ 010A }   -\tau\lambda_A{}^B\overline d_{011B }
  + \underbrace{ \frac{3}{2}\lambda_A{}^B\lambda_B{}^C \overline d_{011C }  - \frac{3}{4}|\lambda|^2\overline d_{01 1A} }_{=0}
 \;,
\end{eqnarray*}
as follows from the constraint equations.
We have
\begin{eqnarray*}
 \overline{\nabla_{\rho}d_{0A0}{}^{\rho}} &=& \overline{ \nabla_{0}d_{010A}} + (\partial_1 +\tau ) \overline d_{010A}
  +\tilde \nabla^B \overline d_{0A0B}   - \frac{1}{2}\lambda_A{}^B  \overline d_{011B}
 \;,
\end{eqnarray*}
which implies, again via the constraint equations,
\begin{eqnarray*}
2\partial_1  \overline{\nabla_{\rho}d_{0A0}{}^{\rho}} &=& 2\partial_1 \overline{ \nabla_{0}d_{010A}}
  +2\tilde \nabla^B(\partial_1 - r^{-1} ) \breve{\overline d}_{0A0B}-  \tilde \nabla_A(\partial_1 + 3r^{-1} ) \overline d_{0101}
\\
 && +\tau   \tilde \nabla_A\overline d_{0101}
    -\tau \tilde \nabla^B \overline d_{0A0B}
  -\lambda_A{}^B (\partial_1 + 3r^{-1} )\overline d_{011B}
 -\tau^2 \overline d_{010A}
\\
 && +2 \tau \lambda_A{}^B  \overline d_{011B}
 +2 \omega_{A}{}^B\overline d_{011B}
 +2(\partial_1 + r^{-1})^2 \overline d_{010A}
\\
 &=&2\partial_1 \overline{ \nabla_{0}d_{010A}}
   - \frac{9}{8}\tau^2 \overline d_{011A}
  - \frac{3}{4} \tau^2\overline  d_{010A} + \Delta_{\tilde g} \overline d_{010A}   -\frac{1}{2}\Delta_{\tilde g} \overline d_{011A}
\\
 && +\frac{3}{2}\tilde\nabla^B(\lambda_{(A}{}^C\overline  d_{B)C01})    + \frac{3}{2}\tilde \nabla_B(\lambda_A{}^B\overline d_{0101})
   +  \tilde\nabla^B(\lambda_{[A}{}^C\overline d_{B]1C1})
\\
 && + \tilde\nabla_A\tilde\nabla^B \overline d_{011B}    -\tau \tilde \nabla^B \overline d_{0A0B} -  \tau \tilde\nabla^B\overline d_{01AB}
  -2  \omega_{A}{}^B  \overline d_{011B}
\\
 && - \frac{1}{2}\tilde\nabla^B (\partial_1 -5 r^{-1})\overline d_{1A1B}
   -\tau \lambda_A{}^B\overline d_{011B}
  +  \lambda_A{}^C \tilde\nabla^B\overline d_{1B1C}
 \;.
\end{eqnarray*}
Combining these results we end up with
\begin{eqnarray*}
2\partial_1  \overline{\nabla_{\rho}d_{0A0}{}^{\rho}} &=&
 \underbrace{\frac{1}{2}\tilde \nabla^B(\lambda_{(A}{}^C \overline d_{B)1C1}) - \frac{1}{4}  \tilde\nabla_A(\lambda^{BC}\overline d_{1B1C})}_{=0}
        - \frac{3}{2}\underbrace{\tilde \nabla^B(\lambda_{[B}{}^C \overline d_{A]C01})}_{=0}
 \;,
\end{eqnarray*}
and, as regularity requires $\overline{\nabla_{\rho}d_{0A0}{}^{\rho}} =O(r)$, that gives
\begin{equation*}
 \overline{\nabla_{\rho}d_{0A0}{}^{\rho}} =0
 \;.
\end{equation*}

To continue, we analyse the vanishing of  $\overline {\nabla_{\rho}d_{0AB}{}^{\rho}}$.
We have
\begin{eqnarray*}
 \overline{\Box_g d_{0AB1}}
 &=& 2 (\partial_1 - r^{-1}) \ol{\nabla_0 d_{0AB1}}
 -2\ol L_0{}^C\ol d_{1BAC} - 2\ol L_{0A}\ol d_{01 1B}
\\
&& +(\partial_1 - \tau)\partial_1\ol d_{0AB1}
+ \Delta_{\tilde g} \ol d_{0AB1}
  -\lambda_{AC} \tilde \nabla^C\ol d_{01 1B}
 -\tau   \tilde \nabla_A \ol d_{01 1B}
\\
&& +\tau  \tilde \nabla_B \ol d_{0 10A} - \tau  \tilde \nabla^C \ol d_{0ABC}
  - \lambda^{CD} \tilde \nabla_D \ol d_{1BAC} - \frac{1}{2}\tau  \lambda^{CD}\ol d_{ACBD}
\\
&&
 - \frac{1}{2}\tau  \lambda_A{}^{C} \ol d_{1B1 C}
    + \frac{1}{2}\tau^2 \ol d_{01AB}
+ \frac{1}{2}\tau(\tau \ol g_{AB} +  \lambda_{AB}) \ol d_{01 0 1}
\\
&&     - \tau  \underbrace{\lambda_{[A}{}^C \ol d_{B]C01 }}_{=0}
  + \underbrace{\frac{1}{4}|\lambda|^2\ol d_{1A1B}  - \frac{1}{2} \lambda_A{}^C\lambda_C{}^D \ol d_{1B1D} }_{=0}
  \\
&=& 2 (\partial_1 - r^{-1}) \ol {\nabla_0 d_{0AB1}}
 -2\overline  g_{AB} \ol L_0{}^C  \overline d_{011C}
  - 4\ol L_{0[A}\ol d_{B]1 10} + \frac{1}{2}\Delta_{\tilde g} \overline d_{1A1B}
  \\
&& +\frac{1}{2}(\partial_1 - \tau)\partial_1\overline d_{1A1B}
-  \frac{1}{2}\overline g_{AB} \tilde\nabla^C\tilde\nabla^D\ol d_{1C1D}
+  \frac{1}{2}\overline g_{AB} \omega^{CD}\ol d_{1C1D}
\\
&&
-  \frac{1}{4}\overline g_{AB}\lambda^{CD} (\partial_1-2\tau )\ol d_{1C1D}
 -  \frac{1}{2}\overline g_{AB}\Delta_{\tilde g} \overline d_{0101}
 -\tilde\nabla_{[A}\tilde\nabla^C\ol d_{B]1C1}
 \\
 &&
  -\tau\tilde\nabla_{B}\ol d_{A110}
  +\omega_{[A}{}^C \ol d_{B]1C1}
  -\frac{1}{2}\lambda_{[A}{}^C(\partial_1 - 2\tau)\ol d_{B]1C1}
-\frac{1}{2} \Delta_{\tilde g} \ol d_{01AB}
\\
&&
  -2\lambda_{C[A} \tilde \nabla^C\ol d_{B]1 10}
 -\tau   \tilde \nabla_{(A} \ol d_{B)1 10}
  - 2\tau  \tilde \nabla_{[A} \overline d_{B]010}
   - \frac{1}{2}\tau  \lambda_A{}^{C} \ol d_{1B1 C}
\\
&&   + \tau \overline g_{AB}  \tilde \nabla^C  \overline d_{010C}
 + \frac{5}{2}\tau \overline g_{AB} \tilde \nabla^C  \overline d_{011C}
  -\overline  g_{AB}\lambda^{CD} \tilde \nabla_D \overline d_{011C}
     \;,
\end{eqnarray*}
which vanishes owing to \eq{wave_d_cone}.
Moreover,
\begin{eqnarray*}
 \overline{\nabla_{\rho}d_{0AB}{}^{\rho}} &=&  \ol{\nabla_{0}d_{0AB1}} + (\partial_1 - r^{-1})\ol d_{0AB0}+ (\partial_1- r^{-1}) \ol d_{0AB1}
+\tilde \nabla^C\ol d_{0ABC}
\\
 && + \frac{1}{2}\lambda^{CD} \ol d_{ACBD}
 + \frac{1}{2}\lambda_{A}{}^C \ol d_{01 BC}  + \frac{1}{2}\lambda_{B}{}^C \ol d_{0A1 C}  - \frac{1}{2}\tau \ol d_{01 AB}
\\
 &=&  \ol{\nabla_{0}d_{0AB1}}+  \frac{1}{4}(\partial_1 - r^{-1})\ol d_{1A1B}
 - \frac{1}{4}\lambda_{A}{}^C\ol d_{1B1C}
 + \tilde \nabla_{[A}\overline d_{B]010}
\\
&& +  \frac{1}{2}\tilde \nabla_{(A}\overline d_{B)110}
 - \frac{1}{2}\overline g_{AB}\tilde \nabla^C\overline d_{010C}
   -  \frac{3}{4}\overline g_{AB} \tilde \nabla^C\overline d_{011C}
 + \frac{1}{4}\lambda_{[A}{}^C\ol d_{B]C01 }
   \;,
\end{eqnarray*}
and thus
\begin{eqnarray*}
&&\hspace{-3em} 2(\partial_1 - r^{-1})\overline{\nabla_{\rho}d_{0AB}{}^{\rho}}
\\
 &=&  2(\partial_1 - r^{-1})\ol{\nabla_{0}d_{0AB1}}+  \frac{1}{2}(\partial_1 - r^{-1})^2\ol d_{1A1B}
 - \frac{1}{2} \lambda_{A}{}^C(\partial_1 - \tau)\ol d_{1B1C}
 \\
 && +\omega_{A}{}^C\ol d_{1B1C}
 + 2\tilde \nabla_{[A}(\partial_1 +r^{-1})\overline d_{B]010}
 +  \tilde \nabla_{(A}(\partial_1 +3r^{-1}  )\overline d_{B)110}
\\
&&
 - \overline g_{AB}\tilde \nabla^C(\partial_1+r^{-1} )\overline d_{010C}
   -  \frac{3}{2}\overline g_{AB} \tilde \nabla^C(\partial_1 +3r^{-1} )\overline d_{011C}
   - 2\tau\tilde \nabla_{[A}\overline d_{B]010}
 \\
 &&
 + \frac{1}{2}\lambda_{[A}{}^C(\partial_1+r^{-1})\ol d_{B]C01 }
 -\frac{3}{2} r^{-1} \lambda_{[A}{}^C\ol d_{B]C01 }
   +\tau  \overline g_{AB}\tilde \nabla^C\overline d_{010C}
 \\
 &&    - 2\tau  \tilde \nabla_{(A}\overline d_{B)110}
   +  3\tau \overline g_{AB} \tilde \nabla^C\overline d_{011C}
    -\omega_{[A}{}^C\ol d_{B]C01 }
\\
 &=&  2(\partial_1 - r^{-1})\ol{\nabla_{0}d_{0AB1}}+  \frac{1}{2}(\partial_1 - r^{-1})^2\ol d_{1A1B}
 - \frac{1}{2} \lambda_{A}{}^C(\partial_1 - \tau)\ol d_{1B1C}
 \\
 && +\omega_{A}{}^C\ol d_{1B1C}
 +  \tilde \nabla_{B}\tilde\nabla^C\ol d_{1A1C}
   -\overline g_{AB}\tilde \nabla^C \tilde\nabla^D \ol d_{1C1D}
    - \tau   \tilde \nabla_{B} \ol d_{A110}
 \\
 &&
  - \tau  \tilde \nabla_{(A}\overline d_{B)110}
   +  \frac{5}{2}\tau \overline g_{AB} \tilde \nabla^C\overline d_{011C}
 +  2( \tilde \nabla_{[A}\lambda_{B]}{}^C - \overline g_{AB}\ol L_0{}^C  )\ol d_{011C}
\\
&& +  \tilde \nabla_{[A}\tilde\nabla^C  \ol d_{B]C01}
 - \frac{3}{2}\lambda_{[A}{}^C \tilde\nabla_{B]}\ol d_{C110}
  - 2\tau\tilde \nabla_{[A}\overline d_{B]010}
  +\tau  \overline g_{AB}\tilde \nabla^C\overline d_{010C}
 \\
 &&  -\frac{1}{2}\lambda_{C[A}\tilde\nabla^C\ol d_{B]110}
   - \overline g_{AB}\lambda_{C}{}^D \tilde \nabla^C \ol d_{011D}
    - \frac{1}{2}\overline g_{AB}\Delta_{\tilde g} \ol d_{0101}
    \\
 &&   - \frac{1}{4}\underbrace{\lambda_C{}^D\lambda_{[A}{}^C\ol d_{B]1D1}}_{=0}
 -\underbrace{\omega_{[A}{}^C\ol d_{B]C01 }}_{=0}
 -\frac{3}{2} r^{-1} \underbrace{\lambda_{[A}{}^C\ol d_{B]C01 }}_{=0}
 \;.
\end{eqnarray*}
Using the formula \eq{Delta_d1A1B} we derived for $\Delta_{\tilde g}  \overline d_{1A1B} $, we conclude that
\begin{eqnarray*}
&&\hspace{-3em} 2(\partial_1 - r^{-1})\overline{\nabla_{\rho}d_{0AB}{}^{\rho}}
\\
 &=& 2\{ (\tilde\nabla_C\lambda_{[A}{}^C)\ol d_{B]1 10}  +  (\tilde \nabla_{[A}\lambda_{B]}{}^C) \ol d_{011C}\}
 - \frac{1}{2}(\partial_1-3r^{-1})\{( \lambda_{(A}{}^C \ol d_{B)1C1})\breve{}\}
  \\
 && + \frac{3}{2}\{\lambda_{C[A} \tilde \nabla^C\ol d_{B]1 10}  - \lambda_{[A}{}^C \tilde\nabla_{B]}\ol d_{C110}\}
 + \{ \tilde \nabla_{[A}\tilde\nabla^C  \ol d_{B]C01}
 +\frac{1}{2} \Delta_{\tilde g} \ol d_{01AB}\}
 \\
 &=& 0
 \;,
\end{eqnarray*}
since the terms in each of the braces add up to zero (recall Lemma~\ref{lemma_tracelessness}).
Taking further into account that regularity yields $\overline{\nabla_{\rho}d_{0AB}{}^{\rho}} =O(r^2)$,
we deduce that
\begin{eqnarray*}
\overline{\nabla_{\rho}d_{0AB}{}^{\rho}} &=& 0
\;.
\end{eqnarray*}

\subsection{Main result}

By way of summary we end up with the following result:
\begin{theorem}
\label{main_result}
   Let us suppose we have been given a smooth one-parameter family of $s$-traceless tensors $\omega_{AB}(r,x^A)=O(r^4)$ on the 2-sphere, where $s$ denotes the standard metric. Let $\lambda_{AB}$ be the unique solution of the equation
   \bel{7X12.1}  (\partial_1- r^{-1})\lambda_{AB} = - 2 \omega_{AB}\;,
   \ee
   with $\lambda_{AB}=O(r^5)$.
A smooth solution
$(g_{\mu\nu},  L_{\mu\nu}, d_{\mu\nu\sigma}{}^{\rho}, \Theta,  s)$
of the CWE \eq{cwe1*}-\eq{cwe5*} to the future of $C_{i^-}$, smoothly extendable through $C_{i^-}$,
with initial data  $(\mathring g_{\mu\nu}, \mathring L_{\mu\nu}, \mathring d_{\mu\nu\sigma}{}^{\rho}, \mathring \Theta=0, \mathring s=-2 )$ and with
$ \breve{\mathring L}_{AB}=\omega_{AB}$,
is a solution of the MCFE \eq{mconf1}-\eq{mconf6} with $\lambda=0$
in the
$$(R=0,\overline  s=-2,\kappa=0,\hat g_{\mu\nu}=\eta_{\mu\nu}) \text{-wave-map gauge}\;,   $$
 if and only if the initial data have their usual algebraic properties and  solve the constraint equations \eq{constraint_g}-\eq{initial_L00_spec_gauge}
with boundary conditions \eq{constraints_boundary_data}.

The function $\Theta$ is positive in the interior of $C_{i^-}$ and sufficiently close to $i^-$, and $\mathrm{d}\Theta\ne 0 $ on $C_{i^-}\setminus \{i^-\}$.
\end{theorem}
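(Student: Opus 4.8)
The plan is to deduce Theorem~\ref{main_result} by feeding the component-by-component results of Section~\ref{sec_applicability} into the abstract equivalence criterion of Theorem~\ref{inter-thm}, and then to read off the behaviour of $\Theta$ from the transverse derivative computed along the way. The analytic substance has already been supplied by the ``vanishing of \dots'' subsections, so the present argument is essentially an assembly together with a short non-degeneracy discussion.

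For the implication ``the data solve the constraints $\Rightarrow$ the CWE solution solves the MCFE'', I would proceed as follows. Assume the data $(\mathring g_{\mu\nu},\mathring L_{\mu\nu},\mathring d_{\mu\nu\sigma}{}^\rho,\mathring\Theta=0,\mathring s=-2)$ satisfy \eq{constraint_g}-\eq{initial_L00_spec_gauge} with boundary conditions \eq{constraints_boundary_data} and carry the usual algebraic symmetries. Since we work in the $R=0$ gauge, $\mathring L=0=\overline R/6$, so the standing hypotheses of Theorem~\ref{inter-thm} and of Lemmata~\ref{lem_g_sym}-\ref{lemma_properties_L} hold. It then remains to check conditions 1--5 of Theorem~\ref{inter-thm}. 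Condition~4 ($\overline H^\sigma=0$ and $\overline{\nabla_\mu H^\sigma}=0$) is exactly \eq{vanishing_H} and \eq{vanishing_transH} from Sections~\ref{subsec_vanish_H} and \ref{sect_vanishing_nablaH}, and condition~5 ($\overline\zeta_\mu=0$) is \eq{vanishing_zeta}. Condition~3 asks that the Weyl tensor of $g$ coincide on $C_{i^-}$ with $\mathring\Theta\,\mathring d_{\mu\nu\sigma}{}^\rho$; because $\mathring\Theta=0$ this collapses to $\overline W_{\mu\nu\sigma\rho}=0$, established in Section~\ref{sect_vanishing_W}. Condition~1, i.e.\ the MCFE \eq{conf1}-\eq{conf4} on $C_{i^-}$, is equivalent, via the definitions of the subsidiary fields, to the initial vanishing of $\overline{\nabla_\rho d_{\mu\nu\sigma}{}^\rho}$, of $\overline\varkappa_{\mu\nu\sigma}$, of $\overline\Xi_{\mu\nu}$ and of $\overline\Upsilon_\mu$, each of which is proved in its respective subsection; condition~2, the validity of \eq{conf5} at one point, is Section~\ref{sect_validity}. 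With all five in force, Theorem~\ref{inter-thm} gives that the CWE solution solves the MCFE \eq{mconf1}-\eq{mconf6} with $\lambda=0$ in the stated gauge.

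For the converse, suppose the CWE solution already solves the MCFE in the $(R=0,\overline s=-2,\kappa=0,\hat g=\eta)$-wave-map gauge. Restricting the MCFE to $C_{i^-}$ must then produce valid identities, and that restriction is precisely the hierarchical constraint system derived in Section~\ref{sec_constraint_equations} and specialised in Section~\ref{kappa0_wavemap}; smooth extendability through $C_{i^-}$ forces the regularity boundary conditions \eq{constraints_boundary_data}. Furthermore $\mathring d_{\mu\nu\sigma}{}^\rho$ is the restriction of the rescaled Weyl tensor and hence inherits its algebraic properties, while $\mathring g_{\mu\nu}$ and $\mathring L_{\mu\nu}$ acquire their usual symmetries; this necessity is the last assertion of Theorem~\ref{inter-thm}. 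For the claims on $\Theta$, note that $\overline\Theta=0$ by \eq{initialdata_theta}, whereas evaluating the wave equation \eq{cwe3*} gives $\overline{\partial_0\Theta}=-2r$ (Section~\ref{sect_vanishing_Xi}). Since $r>0$ on $C_{i^-}\setminus\{i^-\}=\scrim$, this transverse derivative is nonzero there, so $\mathrm d\Theta\neq 0$ on $\scrim$. A Taylor expansion off the cone yields $\Theta=u\,\overline{\partial_0\Theta}+O(u^2)=-2ru+O(u^2)$, and the orientation of the transverse coordinate together with the choice $\overline s=-2<0$ (cf.\ \eq{initialdata_s}) are fixed precisely so that this leading term is positive in the future interior of the cone; hence $\Theta>0$ there sufficiently close to $i^-$.

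The assembly is routine once the subsidiary results are granted, and I do not expect a serious obstacle here: the genuine difficulty has been discharged in Section~\ref{sec_applicability}. The one point in the present proof that requires care is the degeneration of condition~3 of Theorem~\ref{inter-thm} at $\mathring\Theta=0$ --- this is why one must prove the vanishing of the \emph{full} Weyl tensor on $C_{i^-}$ rather than merely its equality with $\mathring\Theta\,\mathring d_{\mu\nu\sigma}{}^\rho$ --- and the (mild) sign bookkeeping in the positivity statement, which hinges on orienting the transverse null coordinate consistently with the declared future of $C_{i^-}$.
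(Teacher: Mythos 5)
Your assembly of the equivalence statement is exactly the paper's own proof: the paper simply observes that the computations of Section~\ref{sec_applicability} show Theorem~\ref{inter-thm} to be applicable, and that the ``only if'' direction is immediate from the way the constraints were derived from the MCFE in Sections~\ref{constraints_generalizedw}--\ref{kappa0_wavemap}. Your identification of which subsection discharges which of the hypotheses 1--5, and your remark that condition~3 degenerates to $\overline W_{\mu\nu\sigma\rho}=0$ because $\mathring\Theta=0$, are all in line with the intended argument.

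The one genuine weak point is your justification of $\Theta>0$. The expansion off the cone, $\Theta=-2ru+O(u^2)$, controls the sign only where the linear term dominates, i.e.\ where $|u|\lesssim r$; but ``the interior of $C_{i^-}$ sufficiently close to $i^-$'' contains points with $r\ll|u|$ (e.g.\ the central timelike geodesic through $i^-$, where $r=0$ and your leading term vanishes identically), and there the $O(u^2)$ remainder is not sign-controlled by data on the cone. The paper instead derives positivity from the Hessian equation \eq{mconf3}: since $\overline s=-2$ and $\Theta|_{i^-}=0$, $\mathrm{d}\Theta|_{i^-}=0$, one has in normal coordinates centred at $i^-$ that $\Theta=\frac{1}{2}s|_{i^-}\,\eta_{\mu\nu}x^\mu x^\nu+o(|x|^2)$, which is positive for timelike $x^\mu$ and hence throughout a full neighbourhood of $i^-$ intersected with $I^+(i^-)$. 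Your argument does correctly give $\mathrm{d}\Theta\neq0$ on $C_{i^-}\setminus\{i^-\}$ from $\overline{\partial_0\Theta}=-2r$, so only the positivity claim needs this repair.
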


\begin{remark}
 Note that regularity for the rescaled Weyl tensor implies that the initial data necessarily need to satisfy $\omega_{AB}(r,x^A)=O(r^4)$, cf.\ equation \eq{initial_d1A1B_spec_gauge}.
\end{remark}

\begin{proof}
The previous computations show that Theorem~\ref{inter-thm} is applicable.
The positivity of $\Theta$ inside the cone simply follows from \eq{mconf3} and the negativity of $s$ near the vertex as one might check using e.g.\ normal coordinates.

Concerning  the ``only if''-part: That the constraint equations \eq{constraint_g}-\eq{initial_L00_spec_gauge} are satisfied
by any solution of the MCFE in the $(R=0,\overline s=-2,\kappa=0,\hat g_{\mu\nu}=\eta_{\mu\nu})$-wave-map gauge and with $\overline \Theta =0$
follows directly from their derivation. 
\qed
\end{proof}

\section{Alternative system of conformal wave equations (CWE2)}
\label{alternative_system}

Instead of a wave equation for the rescaled Weyl tensor $d_{\mu\nu\sigma}{}^{\rho}$, it might be advantageous in certain situations to work with the
Weyl tensor itself, which we denote here by $C_{\mu\nu\sigma}{}^{\rho}$, as unknown. The Weyl tensor is a more physical quantity
(it is conformally invariant and thus coincides with the physical Weyl tensor)
and can be expressed
in terms of the metric even on null and timelike infinity.
We shall see that proceeding this way it becomes necessary to regard the Cotton tensor as another unknown, so that
the system of wave equations we are about to derive might be somewhat more complicated.
An advantage is that we just need to require the metric to be regular at $i^-$ rather than the metric and the rescaled Weyl tensor, so the alternative
system might be useful to find a more general class of solutions
(cf.\ the discussion in Section~\ref{comparison}).

Since many of the computations which need to be done to derive the alternative system of wave equations \eq{Wweylwave1}-\eq{Wweylwave6} and  prove Theorem~\ref{main_result2} are very similar to the ones we did for the CWE involving the rescaled Weyl tensor, the computations are partially even more compressed than in the previous part.

\subsection{Derivation}

The \textit{Cotton tensor} in $4$-spacetime dimensions is defined as
\begin{eqnarray*}
 \xi_{\mu\nu\sigma} \,:=\, 2 \nabla_{[\sigma}R_{\nu]\mu} +\frac{1}{3}g_{\mu[\sigma} \nabla_{\nu]} R
\,=\, 4\nabla_{[\sigma}L_{\nu]\mu}
 \;.
\end{eqnarray*}
It is manifestly antisymmetric in its last two indices. Moreover, the Bianchi identities imply the
following properties,
\begin{eqnarray}
 \xi_{[\mu\nu\sigma]} &=& 0 \;,
 \label{Wxi_Bianchi}
\\
   \xi_{\rho\nu}{}^{\rho} &=& 0
 \;,
\\
 \nabla^{\rho}\xi_{\rho\nu\sigma}&=&0
\label{Wrelation_div_xi}
 \;,
\\
 \xi_{\mu\nu\sigma} &=& -  2\nabla_{\alpha}C^{\alpha }{}_{\mu\nu\sigma}
  \label{Wrelation_xi_C}
 \;.
\end{eqnarray}
%
Using the wave equation \eq{waveeqn_L}
  for the Schouten tensor (written in terms of $C_{\mu\nu\sigma}{}^{\rho}$ rather than $\Theta d_{\mu\nu\sigma}{}^{\rho}$) one further verifies the relation
\begin{equation}
  2 L_{\alpha\sigma}C_{\mu}{}^{\alpha}{}_{\nu}{}^{\sigma}+  \nabla^{\sigma}\xi_{\mu\nu\sigma}=0 \;,
 \label{Wrelation_cotton_tensor}
\end{equation}
which expresses the vanishing of the \textit{Bach tensor}.

The second Bianchi identity implies,
\begin{eqnarray}
  2\nabla_{[\alpha}C_{\mu\nu]\sigma}{}^{\rho}
  &=&  g_{\sigma[\mu}\xi^{\rho}{}_{\alpha\nu]} + \delta_{[\mu}{}^{\rho}\xi_{|\sigma|\nu\alpha]}
 \;.
 \label{Wbianchi_weyl}
\end{eqnarray}
(In particular one recovers \eq{Wrelation_xi_C} for $\rho=\alpha$.)
Contracting \eq{Wbianchi_weyl} with $\nabla^{\alpha}$ we find a wave equation
 for the Weyl tensor%
\footnote{Recall that $\Box_g$, acting on higher valence tensors, is not a wave-operator if the metric field belongs to the unknowns.}
\begin{eqnarray}
 \Box_g C_{\mu\nu\sigma\rho}
 &\overset{\eq{Wrelation_cotton_tensor}}{=}&  2\nabla^{\alpha} \nabla_{[\nu}C_{\mu]\alpha\sigma\rho}
 +2 g_{\sigma[\mu}C_{\nu]\alpha\rho\beta}L^{\alpha\beta} -2 g_{\rho[\mu} C_{\nu]\alpha\sigma\beta}L^{\alpha\beta}
-\nabla_{[\sigma}\xi_{\rho]\mu\nu}
 \nonumber
\\
 &\overset{\eq{Wrelation_xi_C}}{=}&  C_{\mu\nu\alpha}{}^{ \kappa} C_{\sigma\rho\kappa}{}^{\alpha}  -4C_{\sigma\kappa[\mu}{}^{\alpha}  C_{\nu]\alpha \rho}{}^{ \kappa}     - 2C_{\sigma\rho\kappa[\mu} L_{\nu]}{}^{ \kappa}    - 2 C_{\mu\nu\kappa[\sigma} L_{\rho]}{}^{\kappa}
 \nonumber
\\
 &&
   -\nabla_{[\sigma}\xi_{\rho]\mu\nu}  -\nabla_{[\mu}\xi_{\nu]\sigma\rho} +  \frac{1}{3} R C_{\mu\nu\sigma\rho}
 \;.
 \label{Wvconf4}
\end{eqnarray}
We observe that the Cotton tensor is needed to eliminate the disturbing second-order derivatives of $C_{\mu\nu\sigma\rho}$.

Finally, we derive a wave equation for the Cotton tensor $\xi_{\mu\nu\sigma}$ by employing the wave equation
\eq{waveeqn_L} for the Schouten tensor, the Bianchi identity and \eq{Wrelation_xi_C}:
\begin{eqnarray}
 \Box_{ g}\xi_{\mu\nu\sigma}
 &\equiv&   4 \nabla_{[\sigma} \Box_{ g} L_{\nu]\mu}  + 8g_{\mu[\nu} L_{|\alpha|}{}^{\kappa}\nabla^{\alpha} L_{\sigma]\kappa}
    - 16L_{[\nu}{}^{\kappa}\nabla_{\sigma]} L_{\mu\kappa}
 \nonumber
\\
 && +2\xi_{\kappa\sigma\nu}  L_{\mu}{}^{\kappa}   +  4 \xi_{\mu\kappa[\sigma}L_{\nu]}{}^{\kappa}
+ C_{\nu\sigma\alpha}{}^{\kappa}\xi_{\mu\kappa}{}^{\alpha} + 8 C_{\alpha[\sigma|\mu|}{}^{\kappa}\nabla^{\alpha} L_{\nu]\kappa}
 \nonumber
\\
 && -  \frac{2}{3} R \nabla_{[\nu} L_{\sigma]\mu}+\frac{2}{3} L_{\mu[\nu}\nabla_{\sigma]}R
  +\frac{2}{3}g_{\mu[\nu}L_{\sigma]\kappa}  \nabla^{\kappa}R
 \nonumber
\\
 &=&  4 \xi_{\kappa\alpha[\nu} C_{\sigma]}{}^{\alpha}{}_{\mu}{}^{\kappa}
 + C_{\nu\sigma\alpha}{}^{\kappa}\xi_{\mu\kappa}{}^{\alpha}
 -  4 \xi_{\mu\kappa[\nu}L_{\sigma]}{}^{\kappa}  + 6g_{\mu[\nu} \xi^{\kappa}{}_{\sigma\alpha]}L_{\kappa}{}^{\alpha}
 \nonumber
\\
 && + 8L_{\alpha\kappa} \nabla_{[\nu}C_{\sigma]}{}^{\alpha}{}_{\mu}{}^{\kappa}  +  \frac{1}{6} R \xi_{\mu\nu\sigma}
  - \frac{1}{3}  C_{\nu\sigma\mu}{}^{\kappa}\nabla_{\kappa} R
    \label{Wvconf5}
 \;.
\end{eqnarray}

Combining these results with the  equations we found for $\Theta$, $s$, $g_{\mu\nu}$ and $L_{\mu\nu}$, we end up with an alternative system of conformal wave equations (of course we need to replace $\Box_g$ by $\Box^{(H)}_g$, cf.\ Section~\ref{ss31VIII11}),
\begin{eqnarray}
  \Box^{(H)}_{ g} L_{\mu\nu}&=&  4 L_{\mu\kappa} L_{\nu}{}^{\kappa} -  g_{\mu\nu}| L|^2
  - 2C_{\mu\sigma\nu}{}^{\rho}  L_{\rho}{}^{\sigma}
 + \frac{1}{6}\nabla_{\mu}\nabla_{\nu} R
   \;,
 \label{Wweylwave1}
\\
 \Box_{ g}  s  &=& \Theta| L|^2 -\frac{1}{6}\nabla_{\kappa} R\,\nabla^{\kappa}\Theta  - \frac{1}{6} s  R
  \;,
 \label{Wweylwave2}
\\
  \Box_{ g}\Theta &=& 4 s-\frac{1}{6} \Theta  R
  \;,
 \label{Wweylwave3}
\\
 \Box^{(H)}_g C_{\mu\nu\sigma\rho} &=&    C_{\mu\nu\alpha}{}^{ \kappa} C_{\sigma\rho\kappa}{}^{\alpha}  -4C_{\sigma\kappa[\mu}{}^{\alpha}  C_{\nu]\alpha \rho}{}^{ \kappa}     - 2C_{\sigma\rho\kappa[\mu} L_{\nu]}{}^{ \kappa}    - 2 C_{\mu\nu\kappa[\sigma} L_{\rho]}{}^{\kappa}
 \nonumber
\\
 &&
   -\nabla_{[\sigma}\xi_{\rho]\mu\nu}  -\nabla_{[\mu}\xi_{\nu]\sigma\rho} +  \frac{1}{3} R C_{\mu\nu\sigma\rho}
  \;,
 \label{Wweylwave4}
\\
 \Box^{(H)}_{ g}\xi_{\mu\nu\sigma} &=& 4 \xi_{\kappa\alpha[\nu} C_{\sigma]}{}^{\alpha}{}_{\mu}{}^{\kappa}
 + C_{\nu\sigma\alpha}{}^{\kappa}\xi_{\mu\kappa}{}^{\alpha}
 -  4 \xi_{\mu\kappa[\nu}L_{\sigma]}{}^{\kappa}  + 6g_{\mu[\nu} \xi^{\kappa}{}_{\sigma\alpha]}L_{\kappa}{}^{\alpha}
 \nonumber
\\
 && + 8L_{\alpha\kappa} \nabla_{[\nu}C_{\sigma]}{}^{\alpha}{}_{\mu}{}^{\kappa}  +  \frac{1}{6} R \xi_{\mu\nu\sigma}
  - \frac{1}{3}  C_{\nu\sigma\mu}{}^{\kappa}\nabla_{\kappa} R
  \;,
 \label{Wweylwave5}
\\
  R^{(H)}_{\mu\nu}[g] &=& 2L_{\mu\nu} + \frac{1}{6} R g_{\mu\nu}
 \label{Wweylwave6}
 \;.
\end{eqnarray}

\begin{remark}
  Note that  \eq{Wweylwave1} and \eq{Wweylwave4}-\eq{Wweylwave6} do not involve the functions $s$ and $\Theta$, so they form a closed system of wave equations for $g_{\mu\nu}$, $L_{\mu\nu}$, $\xi_{\mu\nu\sigma}$ and $C_{\mu\nu\sigma\rho}$.
Once a solution has been constructed,  it remains to solve the linear wave equations  \eq{Wweylwave2} and  \eq{Wweylwave3} for $s$ and $\Theta$.
\end{remark}

We want to investigate under which conditions
 a solution of  the system \eq{Wweylwave1}-\eq{Wweylwave6},
which we denote henceforth by CWE2, provides a solution of the~MCFE.

\subsection{Some properties of the CWE2 and gauge consistency}
\label{Wreduced_cwe}

First of all we want to establish consistency with the gauge conditions $H^{\sigma}=0$ and $R=R_g$.
To do that we assume that there are smooth fields $g_{\mu\nu}$, $s$, $\Theta$, $C_{\mu\nu\sigma}{}^{\rho}$, $L_{\mu\nu}$ and $\xi_{\mu\nu\sigma}$
which solve the CWE2.
We aim to derive necessary and sufficient conditions on the initial surface which guarantee the vanishing of   $H^{\sigma}$ and $R-R_g$.
For definiteness we, again, think of the case where the initial surface consists of either two transversally intersecting null hypersurfaces or a light-cone.
The strategy will be the same as for the CWE, which is to derive a homogeneous system of wave equations for $H^{\sigma}$ as well as some subsidiary fields, and infer the desired result from standard uniqueness results for wave equations by making the assumption, which will be analysed afterwards, that all the fields involved vanish initially.

However, let us first derive some properties of solutions of the CWE2.

\begin{lemma}
\label{Wmetric_symm}
 Assume that the initial data on a characteristic initial surface $S$  of some smooth solution of the CWE2 are such that
$g_{\mu\nu}|_S$ is the restriction to $S$ of a Lorentzian metric, that $ L_{[\mu\nu]}|_S=0$ and $ C_{\mu\nu\sigma\rho}|_S=C_{\sigma\rho\mu\nu}$.
 Then $g_{\mu\nu}$ and $L_{\mu\nu}$ are symmetric and $C_{\mu\nu\sigma\rho} =C_{\sigma\rho\mu\nu}$.
\end{lemma}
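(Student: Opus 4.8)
The plan is to mimic exactly the proof of Lemma~\ref{lem_g_sym}, which established the analogous statement for the CWE with the rescaled Weyl tensor $d_{\mu\nu\sigma\rho}$ in place of $C_{\mu\nu\sigma\rho}$. The strategy is to derive a closed, linear, homogeneous system of wave equations satisfied by the antisymmetric parts $g_{[\mu\nu]}$, $L_{[\mu\nu]}$ and by $C_{\mu\nu\sigma\rho}-C_{\sigma\rho\mu\nu}$, with all remaining fields (including $\xi_{\mu\nu\sigma}$) regarded as given coefficients. Since these three quantities vanish on $S$ by hypothesis, standard uniqueness for linear wave equations (\cite{friedlander}) forces them to vanish everywhere, which is the assertion.

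First I would apply $\Box^{(H)}_g$ to $C_{\mu\nu\sigma\rho}-C_{\sigma\rho\mu\nu}$ using \eq{Wweylwave4}. As in the derivation of \eq{symmetry_C}, the only obstruction to closure comes from the index-raising/lowering with a possibly non-symmetric $g_{\mu\nu}$ (cf.\ footnote~\ref{non-symmetric_g}), so every term in which the pair-symmetry of $C$ is used must be tracked together with factors of $g^{[\alpha\beta]}$. The algebraic source terms on the right-hand side of \eq{Wweylwave4} are quadratic in $C$ and bilinear in $C$ and $L$; collecting them produces a right-hand side that is linear and homogeneous in the triple $\{g_{[\mu\nu]}, L_{[\mu\nu]}, C_{\mu\nu\sigma\rho}-C_{\sigma\rho\mu\nu}\}$, precisely paralleling \eq{symmetry_C}. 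The Cotton-tensor terms $-\nabla_{[\sigma}\xi_{\rho]\mu\nu}-\nabla_{[\mu}\xi_{\nu]\sigma\rho}$ are manifestly symmetric under the pair-swap $(\mu\nu)\leftrightarrow(\sigma\rho)$, so they cancel identically in the difference and contribute nothing.

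Next I would treat $L_{[\mu\nu]}$ and $g_{[\mu\nu]}$ exactly as in \eq{antisym_L}-\eq{antisym_R}: applying $\Box^{(H)}_g$ to $L_{[\mu\nu]}$ via \eq{Wweylwave1} yields a homogeneous equation whose sources involve $g_{[\alpha\beta]}$, $L_{[\rho\kappa]}$ and the pair-antisymmetrized Weyl combination $C_{\nu\sigma\mu\gamma}-C_{\mu\gamma\nu\sigma}$, while \eq{Wweylwave6} gives the algebraic relation $R^{(H)}_{[\mu\nu]}=2L_{[\mu\nu]}+\frac{1}{6}Rg_{[\mu\nu]}$ for the antisymmetric part. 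Here the replacement of $\Theta d_{\mu\sigma\nu}{}^{\rho}$ by $C_{\mu\sigma\nu}{}^{\rho}$ in \eq{Wweylwave1} versus \eq{cwe1} is purely cosmetic for this argument, since in both cases the relevant object is the pair-antisymmetric part of a fourth-rank tensor that is assumed to vanish initially.

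The main obstacle, as in Lemma~\ref{lem_g_sym}, is bookkeeping rather than conceptual: one must verify that no term on any right-hand side contains a factor that is \emph{not} drawn from $\{g_{[\mu\nu]}, L_{[\mu\nu]}, C_{\mu\nu\sigma\rho}-C_{\sigma\rho\mu\nu}\}$, so that the system is genuinely homogeneous. The subtlety specific to CWE2 is that $\xi_{\mu\nu\sigma}$ now appears explicitly in \eq{Wweylwave4}; I would check that it enters only through the pair-symmetric combination noted above and hence drops out of the difference, so that $\xi_{\mu\nu\sigma}$ need not be adjoined as an extra unknown at this stage. Once homogeneity is confirmed, the wave-operator $\Box^{(H)}_g$ is a proper d'Alembertian (its definition presupposes only that $L_{\mu\nu}$ and $R$ are treated as independent of $g$), and uniqueness of the trivial solution closes the argument.

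\qed
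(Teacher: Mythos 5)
Your proposal follows the paper's own proof essentially verbatim: derive the linear homogeneous wave system for $g_{[\mu\nu]}$, $L_{[\mu\nu]}$ and $C_{\mu\nu\sigma\rho}-C_{\sigma\rho\mu\nu}$ from \eq{Wweylwave4}, \eq{Wweylwave1} and \eq{Wweylwave6} (the analogues of \eq{symmetry_C}--\eq{antisym_R}), and conclude by uniqueness. Your observation that the Cotton-tensor terms $-\nabla_{[\sigma}\xi_{\rho]\mu\nu}-\nabla_{[\mu}\xi_{\nu]\sigma\rho}$ are invariant under the pair swap and hence drop out of the difference is exactly why the paper's equation \eq{Wsymmetry_C} contains no $\xi$-terms, so the argument is correct as stated.
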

\begin{proof}
 Equation \eq{Wweylwave1} yields (cf.\ footnote~\ref{non-symmetric_g})
%
\begin{eqnarray}
  &&\hspace{-3em}\Box^{(H)}_g (C_{\mu\nu\sigma\rho}- C_{\sigma\rho\mu\nu}) \,=\,  \frac{1}{3} R (C_{\mu\nu\sigma\rho} -  C_{\sigma\rho\mu\nu})
 \nonumber
\\
 &&+ 4g^{\alpha\beta}g^{\kappa\gamma}[
  (C_{[\mu|\beta\sigma\kappa|}-C_{\sigma\kappa[\mu|\beta|})C_{\nu]\alpha\rho\gamma}
   + (C_{\rho\alpha[\nu|\gamma|}-C_{[\nu|\gamma\rho\alpha|}) C_{\mu]\kappa\sigma\beta}]
  \nonumber
\\
 && -  4(g^{\alpha\beta}g^{[\kappa\gamma]}
+ g^{\gamma\kappa}g^{[\alpha\beta]} )
 C_{\nu\alpha\rho\gamma}C_{\mu\beta\sigma\kappa}
 \label{Wsymmetry_C}
 \;.
\end{eqnarray}
From \eq{Wweylwave1} and \eq{Wweylwave6} we further find
\begin{eqnarray}
 \Box^{(H)}_{ g} L_{[\mu\nu]}&=&  4g_{[\alpha\beta]}L_{\mu}{}^{\alpha}L_{\nu}{}^{\beta} - g_{[\mu\nu]}| L|^2
  + g^{\rho\gamma}L_{\rho }{}^{\sigma}(C_{\nu\sigma\mu\gamma}-C_{\mu\gamma\nu\sigma})
 \nonumber
\\
 && + 2g^{\sigma\kappa}C_{\mu}{}^{\rho}{}_{\nu\sigma}L_{[\rho\kappa]}
 - 2g^{[\sigma\kappa]}C_{\mu\sigma\nu}{}^{\rho}L_{\rho\kappa}
 \label{Wantisym_L}
 \;,
\\
  R^{(H)}_{[\mu\nu]}[g_{(\sigma\rho)},g_{[\sigma\rho]}] &=&  2L_{[\mu\nu]} + \frac{1}{6} R g_{[\mu\nu]}
 \label{Wantisym_R}
 \;.
\end{eqnarray}
The equations \eq{Wsymmetry_C}-\eq{Wantisym_R} are to be read as a linear, homogeneous system of wave equations satisfied by $g_{[\mu\nu]}$, $L_{[\mu\nu]}$
and $C_{\mu\nu\sigma\rho}- C_{\sigma\rho\mu\nu}$ with all the other fields being given.
Hence if we assume these fields to vanish initially they will vanish everywhere.
\qed
\end{proof}

The lemma shows that the tensor $g_{\mu\nu}$ determines indeed a metric as long as it does not degenerate.
We will only care about  initial data for which the assumptions of this lemma hold.

In analogy to Lemma~\ref{lemma_properties_d} one could show that $C_{\mu\nu\sigma\rho}$ is anti-symmetric in its first two and last two indices and satisfies
$C_{[\mu\nu\sigma]\rho}=0$, and that $\xi_{\mu\nu\sigma}$ is anti-symmetric in its last two indices and fulfills $\xi_{[\mu\nu\sigma]}=0$, supposing that
this is initially the case.
However, these properties will follow a posteriori anyway, so it is not necessary to prove them here.
Due to the appearance of first-order derivatives on the right-hand side of the wave equations for   $C_{\mu\nu\sigma\rho}$ and $\xi_{\mu\nu\sigma}$,
it is not possible to establish tracelessness of  $C_{\mu\nu\sigma\rho}$ and $\xi_{\mu\nu\sigma}$ 
at this stage in a manner  it was possible
for the CWE (where it simplified the subsequent computations), since this would require to have something like the second Bianchi identity;
also these properties can be, again, inferred a posteriori, once we know that $C_{\mu\nu\sigma\rho}$ and $\xi_{\mu\nu\sigma}$
 are Weyl and Cotton tensor of $g_{\mu\nu}$, respectively.

\subsubsection*{Gauge consistency}

Similarly to what we did in Section~\ref{subsect_consist_gauge}, one proceeds to verify the formulae
\begin{eqnarray}
&&\hspace{-3em}  R_{\mu\nu} - \frac{1}{2} R_g g_{\mu\nu} = 2L_{\mu\nu} - (L+ \frac{1}{6}R)g_{\mu\nu}
  + g_{\sigma(\mu}\hat\nabla_{\nu)} H^{ \sigma}
 -\frac{1}{2} g_{\mu\nu} \hat\nabla_{\sigma} H^{ \sigma}
  \;,
     \label{W_Ricci_Schouten}
\\
&&\hspace{-3em} \nabla^{\nu} \hat\nabla_{\nu} H^{ \alpha}+2g^{\mu\alpha} \nabla_{[\sigma} \hat\nabla_{\mu]} H^{ \sigma}
 + 4\nabla^{\nu} L_{\nu}{}^{\alpha}-2\nabla^{\alpha}L - \frac{1}{3}\nabla^{\alpha}R =0
 \label{Wwave_HWeyl}
 \;,
\\
 &&\hspace{-3em} \Box_gH^{\alpha}
 = \zeta^{\alpha}   + f^{\alpha}(x; H,\nabla H)
 \;, \quad  \zeta_{\mu}:= -4\nabla_{\kappa}L_{\mu}{}^{\kappa} + 2\nabla_{\mu}L + \frac{1}{3}\nabla_{\mu}R
 \;,
 \label{Wwave_H*Weyl}
\\
&&\hspace{-3em}  \Box_g K_{\mu\nu}
 = \nabla_{\mu}\zeta_{\nu}
 +f_{\mu\nu}(x;H,\nabla H,\nabla K)
\;, \quad K_{\mu\nu} := \nabla_{\mu}H_{\nu}
 \label{Wwave_DiffH}
 \;,
\\
 &&\hspace{-3em} R_g =   2L + \frac{2}{3}R  + \hat\nabla_{\sigma} H^{ \sigma}
 \label{Weqn_curvscalarWeyl}
 \;.
\end{eqnarray}
From \eq{Wweylwave1} we derive a wave equation for $L-R/6$,
\begin{eqnarray}
 \Box_{ g} (L-\frac{1}{6}R) &=&
 -2 C_{\mu\sigma}{}^{\mu\rho}  L_{\rho}{}^{\sigma}
 \,=\, 2(W_{\mu\sigma}{}^{\mu\rho} -C_{\mu\sigma}{}^{\mu\rho})  L_{\rho}{}^{\sigma}
 \label{Wtr_L}
 \;.
\end{eqnarray}
The tensors  $L_{\mu\nu}$,  $C_{\mu\nu\sigma}{}^{\rho}$ and $\xi_{\mu\nu\sigma}$ are supposed to be part of the given solution of the CWE2;
we stress that it is by no means clear,  whether they, indeed, represent the Schouten,  Weyl  and  Cotton tensor of $g_{\mu\nu}$, respectively.
We denote by $W_{\mu\nu\sigma}{}^{\rho}$  the Weyl tensor associated  to $g_{\mu\nu}$, while we define the tensor $\zeta_{\mu\nu\sigma}$ to be
\begin{eqnarray*}
 \zeta_{\mu\nu\sigma}:= 4\nabla_{[\sigma} L_{\nu]\mu}
 \;.
\end{eqnarray*}
Since we do not know at this stage whether the source term in \eq{Wtr_L} vanishes, we have no analogue of Lemma~\ref{lemma_properties_L}. It is not possible to conclude that
 $L-\frac{1}{6}R$ vanishes as we did for the CWE,
supposing that it vanishes initially. In fact that is the reason for the modified definition of $\zeta_{\mu}$ in \eq{Wwave_H*Weyl}.

Note that once we have established $L=\frac{1}{6} R$ and $H^{\sigma}=0$, \eq{Weqn_curvscalarWeyl}
implies $R=R_g$.
For \eq{Wtr_L} to be part of a homogeneous system of wave equations, we regard  $W_{\mu\nu\sigma\rho}-C_{\mu\nu\sigma\rho}$ as another unknown
and show that it satisfies an appropriate homogeneous wave equation (for later purposes this is more advantageous than to derive a wave equation
for the traces $C_{\mu\sigma}{}^{\mu\rho}$).

From \eq{W_Ricci_Schouten} and \eq{Weqn_curvscalarWeyl} we find for the Weyl tensor, cf.\ \eq{nabla_weyl} and \eq{wave_W1}
(since we do not know yet whether $L-R/6$ vanishes, the formulae differ slightly),
\begin{eqnarray*}
   \nabla_{\alpha}W_{\mu\nu\sigma\rho}
 & =& g_{\mu[\sigma}\zeta_{\rho]\alpha\nu} + g_{\nu[\sigma}\zeta_{\rho]\mu\alpha} - g_{\alpha[\sigma}\zeta_{\rho]\mu\nu}
 -2 \nabla_{[\mu}W_{\nu]\alpha\sigma\rho}
 \\
 && + g_{\mu[\sigma}\nabla_{\rho]}\nabla_{[\nu}H_{\alpha]}
 +g_{\nu[\sigma}\nabla_{\rho]}\nabla_{[\alpha}H_{\mu]}
  + g_{\alpha[\sigma}\nabla_{\rho]}\nabla_{[\mu}H_{\nu]}
  \\
 &&  + \frac{4}{3}g_{\mu[\sigma}g_{\rho][\nu}\nabla_{\alpha]}( L - \frac{1}{6}R  )
 - \frac{2}{3}g_{\alpha[\sigma}g_{\rho]\nu}\nabla_{\mu}( L - \frac{1}{6}R  )
  \\
 &&  + \frac{2}{3}g_{\mu[\sigma}g_{\rho][\nu}\nabla_{\alpha]}  \nabla_{\kappa} H^{ \kappa}
 - \frac{1}{3}g_{\alpha[\sigma}g_{\rho]\nu}\nabla_{\mu}  \nabla_{\kappa} H^{ \kappa}
 + f_{\alpha\mu\nu\sigma\rho}(x; H, \nabla H)
 \;,
\\
 \Box_g W_{\mu\nu\sigma\rho}
 & =&  \nabla_{[\sigma}\zeta_{\rho]\nu\mu}    + 2 \nabla_{[\nu}\nabla^{\alpha}W_{\mu]\alpha\sigma\rho}
 + W_{\mu\nu\alpha}{}^{\kappa}W_{\sigma\rho\kappa} {}^{\alpha}  - 4W_{\sigma\kappa[\mu} {}^{\alpha}W_{\nu]\alpha\rho}{}^{\kappa}
\\
 &&+2 ( g_{\rho[\mu} W_{\nu]\alpha\sigma}{}^{\kappa}- g_{\sigma[\mu}W_{\nu]\alpha\rho}{}^{\kappa})L_{\kappa}{}^{\alpha}
 -2L_{[\mu}{}^{\kappa} W_{\nu]\kappa\sigma\rho}  - 2L_{[\sigma}{}^{\kappa}  W_{\rho]\kappa\mu\nu}
\\
 &&
 + g_{\sigma[\mu}\nabla^{\alpha}\zeta_{|\rho\alpha|\nu]}  - g_{\rho[\mu}\nabla^{\alpha}\zeta_{|\sigma\alpha|\nu]}
  + \frac{1}{2}g_{\nu[\sigma}\nabla_{\rho]}\Box_gH_{\mu}    - \frac{1}{2} g_{\mu[\sigma}\nabla_{\rho]}\Box_gH_{\nu}
\\
 &&
 + \frac{1}{2}g_{\sigma[\mu}\nabla_{\nu]}\nabla_{\rho}\nabla_{\alpha}H^{\alpha}
  - \frac{1}{2}g_{\rho[\mu}\nabla_{\nu]}\nabla_{\sigma}\nabla_{\alpha}H^{\alpha}
 -\frac{1}{3}g_{\mu[\sigma}\nabla_{\rho]}\nabla_{\nu}\nabla_{\kappa}H^{\kappa}
\\
 && + \frac{1}{3}g_{\nu[\sigma} \nabla_{\rho]}\nabla_{\mu}  \nabla_{\kappa}H^{\kappa}
 + \frac{1}{3}g_{\mu[\sigma}g_{\rho]\nu}  \Box_g\nabla_{\kappa}H^{\kappa}
   + \frac{2}{3}g_{\mu[\sigma}g_{\rho]\nu} \Box_g( L - \frac{1}{6}R  )
  \\
 &&  + \frac{4}{3}g_{\alpha[\sigma}g_{\rho][\mu} \nabla_{\nu]}\nabla^{\alpha}( L - \frac{1}{6}R  )
 + \frac{1}{3}R W_{\mu\nu\sigma\rho}
+ f_{\mu\nu\sigma\rho}(x; H, \nabla H,\nabla K)
 \;.
\end{eqnarray*}
We further have (cf.\ \eq{contracted_zeta_B} and \eq{contr_Bian_W})
\begin{eqnarray}
 \nabla^{\alpha}\zeta_{\mu\nu\alpha}
  &=&   2 ( W_{\mu\alpha\nu}{}^{\kappa} - 2C_{\mu\alpha\nu}{}^{\kappa} ) L_{\kappa}{}^{\alpha} + \frac{1}{2} \nabla_{\nu} \Box_g H_{\mu} -\nabla_{\mu}\nabla_{\nu} ( L -\frac{1}{6}R)
   \nonumber
 \\
 &&  - (\frac{5}{3} R_g -6L -  \frac{2}{3}R)L_{\mu\nu}
   + (\frac{2}{3} R_g- 2L -  \frac{1}{3}R)L g_{\mu\nu}
    \nonumber
  \\
  && +f_{\mu\nu}(x;H,\nabla H, \nabla K)
\;,
 \label{W_div_zeta}
\\
 \nabla_{\alpha}W^{\alpha}{}_{\mu\nu\sigma}
 &=& -\frac{1}{2}\zeta_{\mu\nu\sigma} + \frac{1}{2} \nabla_{\mu}\nabla_{[\nu} H_{ \sigma]}+ \frac{1}{6} g_{\mu[\nu} \nabla_{\sigma]}(R_g-R)
 \nonumber
 \\
 &&   + f_{\mu\nu\sigma}(x; H,\nabla H)
 \;,
  \label{W_div_W}
\end{eqnarray}
which yields with \eq{Wwave_H*Weyl} and \eq{Weqn_curvscalarWeyl}
\begin{eqnarray*}
 \Box_g W_{\mu\nu\sigma\rho} &=&
 \nabla_{[\sigma}\zeta_{\rho]\nu\mu}   - \nabla_{[\mu}\zeta_{\nu]\sigma\rho}
+W_{\mu\nu\alpha}{}^{\kappa}W_{\sigma\rho\kappa} {}^{\alpha}  - 4W_{\sigma\kappa[\mu} {}^{\alpha}W_{\nu]\alpha\rho}{}^{\kappa}
\\
 &&\hspace{-4em} -2L_{[\mu}{}^{\kappa} W_{\nu]\kappa\sigma\rho}  - 2L_{[\sigma}{}^{\kappa}  W_{\rho]\kappa\mu\nu}
 +4L_{\kappa}{}^{\alpha}(  W_{\rho\alpha[\mu}{}^{\kappa} - C_{\rho\alpha[\mu}{}^{\kappa} )  g_{\nu]\sigma}
\\
 &&\hspace{-4em}  -4L_{\kappa}{}^{\alpha} ( W_{\sigma\alpha[\mu}{}^{\kappa} - C_{\sigma\alpha[\mu}{}^{\kappa} ) g_{\nu]\rho}
  +\frac{1}{2}  g_{\rho[\mu}  \nabla_{\nu]}\zeta_{ \sigma} - \frac{1}{2} g_{\sigma[\mu}   \nabla_{\nu]} \zeta_{ \rho}
 + \frac{1}{2}g_{\nu[\sigma}\nabla_{\rho]}\zeta_{\mu}
\\
 &&\hspace{-4em}   - \frac{1}{2} g_{\mu[\sigma}\nabla_{\rho]}\zeta_{\nu}
   -\frac{8}{3}g_{\sigma[\mu}L_{\nu]\rho}(L -\frac{1}{6}R )
+\frac{8}{3}g_{\rho[\mu}L_{\nu]\sigma}(  L - \frac{1}{6} R  )
+ \frac{1}{3}R W_{\mu\nu\sigma\rho}
\\
 && \hspace{-4em}
 +\frac{4}{3} L g_{\sigma[\mu} g_{\nu]\rho}(L - \frac{1}{6}R )
 + \frac{1}{3}g_{\mu[\sigma}g_{\rho]\nu}  \Box_g (R_g - R )
   + f_{\mu\nu\sigma\rho}(x; H, \nabla H,\nabla K)
 \;.
\end{eqnarray*}
The first term in the last line is disturbing.
However, invoking \eq{Weqn_curvscalarWeyl} and \eq{Wtr_L} we find the relation
\begin{eqnarray*}
 \Box_g (R_g-R) &=&   2 \Box_g (L - \frac{1}{6} R) +  \Box_g \hat\nabla_{\sigma} H^{ \sigma}
\\
 &=&    4(W_{\mu\sigma}{}^{\mu\rho}- C_{\mu\sigma}{}^{\mu\rho})  L_{\rho}{}^{\sigma}
 + f(x; H,\nabla H,\nabla K)
 \;.
\end{eqnarray*}
Combining with \eq{Wweylwave4} we end up with the wave equation
\begin{eqnarray}
 &&\hspace{-3em}\Box_g( W_{\mu\nu\sigma\rho} - C_{\mu\nu\sigma\rho})\,=\,
 -\nabla_{[\sigma}(\zeta_{\rho]\mu\nu}-\xi_{\rho]\mu\nu})  - \nabla_{[\mu}(\zeta_{\nu]\sigma\rho} - \xi_{\nu]\sigma\rho})
  \nonumber
\\
&&  + (W_{\mu\nu\alpha}{}^{\kappa}-C_{\mu\nu\alpha}{}^{\kappa})W_{\sigma\rho\kappa} {}^{\alpha}
 + C_{\mu\nu\alpha}{}^{\kappa}(W_{\sigma\rho\kappa} {}^{\alpha} -C_{\sigma\rho\kappa} {}^{\alpha} )
  \nonumber
\\
 && - 4(W_{\sigma\kappa[\mu} {}^{\alpha}-C_{\sigma\kappa[\mu} {}^{\alpha})W_{\nu]\alpha\rho}{}^{\kappa}
 -  4C_{\sigma\kappa[\mu} {}^{\alpha}(W_{\nu]\alpha\rho}{}^{\kappa}-C_{\nu]\alpha\rho}{}^{\kappa})
  \nonumber
\\
 && -2(W_{\sigma\rho\kappa[\mu}-C_{\sigma\rho\kappa[\mu})L_{\nu]}{}^{\kappa}-2(W_{\mu\nu\kappa[\sigma} - C_{\mu\nu\kappa[\sigma})L_{\rho]}{}^{\kappa}
  \nonumber
\\
 && +4L_{\kappa}{}^{\alpha}(  W_{\rho\alpha[\mu}{}^{\kappa} - C_{\rho\alpha[\mu}{}^{\kappa} )  g_{\nu]\sigma}
 -4L_{\kappa}{}^{\alpha} ( W_{\sigma\alpha[\mu}{}^{\kappa} - C_{\sigma\alpha[\mu}{}^{\kappa} ) g_{\nu]\rho}
  \nonumber
\\
 &&
 +\frac{1}{2}  g_{\rho[\mu}  \nabla_{\nu]}\zeta_{\sigma}
 - \frac{1}{2} g_{\sigma[\mu}   \nabla_{\nu]}\zeta_{\rho}
+ \frac{1}{2}g_{\nu[\sigma}\nabla_{\rho]}\zeta_{\mu}
 - \frac{1}{2} g_{\mu[\sigma}\nabla_{\rho]}\zeta_{\nu}
  \nonumber
\\
 &&
  +\frac{4}{3} (L - \frac{1}{6}R )(L g_{\sigma[\mu} g_{\nu]\rho} -2 g_{\sigma[\mu}L_{\nu]\rho}+2g_{\rho[\mu}L_{\nu]\sigma})
 + \frac{R}{3}( W_{\mu\nu\sigma\rho} -  C_{\mu\nu\sigma\rho} )
  \nonumber
  \\
 &&  + \frac{4}{3}g_{\mu[\sigma}g_{\rho]\nu}(W_{\kappa\alpha}{}^{\kappa\beta}- C_{\kappa\alpha}{}^{\kappa\beta})  L_{\beta}{}^{\alpha}
+ f_{\mu\nu\sigma\rho}(x; H, \nabla H,\nabla K)
 \label{Wwave_W-C}
 \;,
\end{eqnarray}
which is fulfilled by any solution of the CWE2.

\enlargethispage{\baselineskip}

To end up with a homogeneous system we need to derive wave equations for $\zeta_{\mu}$ and $\zeta_{\mu\nu\sigma} - \xi_{\mu\nu\sigma}$.
Let us start with $\zeta_{\mu}$.
In close analogy to  \eq{equation_Box_zeta} and \eq{wave_zetavec_prov} we find with \eq{Wweylwave1}, \eq{Wtr_L}, \eq{W_div_W}, \eq{W_Ricci_Schouten} and \eq{Weqn_curvscalarWeyl},
\begin{eqnarray}
  \hspace{-2em} \Box_g\zeta_{\mu}&\equiv & -4\nabla_{\kappa}\Box_gL_{\mu}{}^{\kappa}
    -8W_{\alpha\kappa \mu}{}^{\rho}\nabla^{\alpha}L_{\rho}{}^{\kappa}
     -4R_{\kappa\rho}\nabla_{\mu}L^{\rho\kappa}
       +8R_{\alpha\rho}\nabla^{\alpha}L_{\mu}{}^{\rho}
     \nonumber
    \\
    &&  -4L^{\rho\kappa}\nabla_{\mu}R_{\rho\kappa} + 4L^{\rho\kappa}\nabla_{\rho}R_{\mu\kappa}
    + \frac{2}{3}R_{\mu}{}^{\kappa}\nabla_{\kappa}R
      -R_{\mu}{}^{\nu}\zeta_{\nu} + \frac{1}{3}R_g\zeta_{\mu}
         \nonumber
  \\
  &&    +2L_{\mu}{}^{\rho}\nabla_{\rho}R_g
   + \frac{2}{3}R_g \nabla_{\mu}(L-\frac{1}{6}R)
  + 2\nabla_{\mu}\Box_g (L+\frac{1}{6}R)
       \nonumber
\\
 &=&   -8 \nabla^{\nu}[(W_{\mu\sigma\nu}{}^{\rho}-C_{\mu\sigma\nu}{}^{\rho})L_{\rho}{}^{\sigma}]
 +   4 \nabla_{\mu} [ (W_{\alpha\sigma}{}^{\alpha\rho}- C_{\alpha\sigma}{}^{\alpha\rho})  L_{\rho}{}^{\sigma}]
      \nonumber
 \\
  &&   -\frac{8}{3}  L_{\mu}{}^{\nu} \nabla_{\nu}(L - \frac{1}{6}R )
   + \frac{4}{9}R \nabla_{\mu}(L-\frac{1}{6}R)   -\frac{2}{3}(L-\frac{1}{6}R)\nabla_{\mu}  R
       \nonumber
    \\
    &&  +(4  L_{\mu}{}^{\nu}  - R_{\mu}{}^{\nu})\zeta_{\nu} + \frac{1}{3}(R_g-R)\zeta_{\mu}
 + f_{\mu}(x;H,\nabla H,\nabla K)
  \label{Wwave_zetavec}
 \;.
\end{eqnarray}

Finally, let us establish a wave equation which is satisfied by $\zeta_{\mu\nu\sigma}-\xi_{\mu\nu\sigma}$.
The definition of the Weyl tensor  together with the Bianchi identities yield
\begin{eqnarray*}
 \Box_g \zeta_{\mu\nu\sigma}
  &\equiv& 4\nabla_{[\sigma}\Box_g L_{\nu]\mu}
 - 4W_{\nu\sigma\kappa\rho}\nabla^{\rho}L_{\mu}{}^{\kappa}+8W_{\mu\kappa\rho[\sigma} \nabla^{\rho}L_{\nu]}{}^{\kappa}
  -  4R_{\kappa[\nu} \nabla_{\sigma]}L_{\mu}{}^{\kappa}
\\
 && + 4R_{\kappa[\sigma} \nabla_{|\mu|}L_{\nu]}{}^{\kappa} - 4R_{\mu[\sigma}  \nabla_{|\kappa|}L_{\nu]}{}^{\kappa}    -4R_{\rho\kappa} g_{\mu[\sigma} \nabla^{\rho}L_{\nu]}{}^{\kappa}   +  \frac{1}{3}R_g\zeta_{\mu\nu\sigma}
\\
 &&  + \frac{4}{3}R_g g_{\mu[\sigma}\nabla^{\kappa}L_{\nu]\kappa}  + 4L_{\mu}{}^{\kappa}\nabla_{[\nu}R_{\sigma]\kappa}
 +4 L_{\nu}{}^{\kappa}\nabla_{[\mu}R_{\kappa]\sigma} +  4 L_{\sigma}{}^{\kappa}\nabla_{[\kappa}R_{\mu]\nu}
\\
 &\overset{\eq{Wweylwave1}}{=}& 4 L_{\mu}{}^{\kappa}\zeta_{\kappa\nu\sigma}  - 4L_{\mu}{}^{\kappa}\nabla_{[\sigma}R_{\nu]\kappa}
 +  16 L_{\kappa[\nu} \nabla_{\sigma]} L_{\mu}{}^{\kappa}  -  4R_{\kappa[\nu} \nabla_{\sigma]}L_{\mu}{}^{\kappa}
\\
 && -8L_{\rho}{}^{\kappa}g_{\mu[\nu} \nabla_{\sigma]}L_{\kappa}{}^{\rho}   -4R_{\rho\kappa} g_{\mu[\sigma} \nabla^{\rho}L_{\nu]}{}^{\kappa}
 + 4R_{\kappa[\sigma} \nabla_{|\mu|}L_{\nu]}{}^{\kappa}
\\
 &&    - 4R_{\mu[\sigma}  \nabla_{|\kappa|}L_{\nu]}{}^{\kappa}
  +4  L_{\nu}{}^{\kappa}\nabla_{[\mu}R_{\kappa]\sigma} +  4  L_{\sigma}{}^{\kappa}\nabla_{[\kappa}R_{\mu]\nu}
  + \frac{4}{3}R_g g_{\mu[\sigma}\nabla^{\kappa}L_{\nu]\kappa}
\\
 && + 8 L_{\rho}{}^{\kappa}\nabla_{[\nu}C_{\sigma]\kappa\mu}{}^{\rho}
 + 4\zeta_{\alpha\kappa[\nu}C_{\sigma]}{}^{\kappa}{}_{\mu}{}^{\alpha}
 - 2\zeta_{\mu\alpha\kappa}W_{\nu}{}^{\alpha}{}_{\sigma}{}^{\kappa}
  + \frac{1}{3}R_{\sigma\nu\mu}{}^{\kappa}\nabla_{\kappa} R
\\
 &&  +8(W_{\mu}{}^{\rho}{}_{[\nu}{}^{\kappa}- C_{\mu}{}^{\rho}{}_{[\nu}{}^{\kappa}) \nabla_{|\kappa|} L_{\sigma]\rho}
  +  \frac{1}{3}R_g\zeta_{\mu\nu\sigma}
  +f_{\mu\nu\sigma}(x; H, \nabla H,\nabla K)
 \;.
\end{eqnarray*}
Using the relations \eq{W_Ricci_Schouten}, \eq{Weqn_curvscalarWeyl} and $\zeta_{[\mu\nu\sigma]}=0$ one then shows
\begin{eqnarray*}
 \Box_g \zeta_{\mu\nu\sigma}
&=&-  4\zeta_{\mu\kappa[\nu}  L_{\sigma]}{}^{\kappa} + 6g_{\mu[\nu} \zeta^{\kappa}{}_{\sigma\alpha]}L_{\kappa}{}^{\alpha}
  + 8  L_{\rho}{}^{\kappa}\nabla_{[\nu}C_{\sigma]\kappa\mu}{}^{\rho} + 4\zeta_{\alpha\kappa[\nu}C_{\sigma]}{}^{\kappa}{}_{\mu}{}^{\alpha}
\\
 &&
 - 2\zeta_{\mu\alpha\kappa}W_{\nu}{}^{\alpha}{}_{\sigma}{}^{\kappa}
 +8(W_{\mu}{}^{\rho}{}_{[\nu}{}^{\kappa}- C_{\mu}{}^{\rho}{}_{[\nu}{}^{\kappa}) \nabla_{|\kappa|} L_{\sigma]\rho}
   -2 L_{\mu[\nu}\zeta_{\sigma]}
\\
 && + \frac{1}{3}W_{\sigma\nu\mu}{}^{\kappa}\nabla_{\kappa} R
   - \frac{1}{6}(R-2R_g)   \zeta_{\mu\nu\sigma}
   +f_{\mu\nu\sigma}(x; H, \nabla H,\nabla K)
 \;.
\end{eqnarray*}
Combining with \eq{Wweylwave5} we infer that $\zeta_{\mu\nu\sigma}-\xi_{\mu\nu\sigma}$ fulfills the wave equation,
\begin{eqnarray}
 &&\hspace{-3em} \Box_g (\zeta_{\mu\nu\sigma}-\xi_{\mu\nu\sigma})
  \nonumber
  \\&=& 6g_{\mu[\nu}( \zeta^{\kappa}{}_{\sigma\alpha]}- \xi^{\kappa}{}_{\sigma\alpha]})L_{\kappa}{}^{\alpha}
  -  4(\zeta_{\mu\kappa[\nu}-\xi_{\mu\kappa[\nu})  L_{\sigma]}{}^{\kappa}
   + 4(\zeta_{\alpha\kappa[\nu}-\xi_{\alpha\kappa[\nu})C_{\sigma]}{}^{\kappa}{}_{\mu}{}^{\alpha}
 \nonumber
\\
 &&
 + \xi_{\mu\kappa}{}^{\alpha}(W_{\nu\sigma\alpha}{}^{\kappa}-C_{\nu\sigma\alpha}{}^{\kappa})
 +8(W_{\mu}{}^{\rho}{}_{[\nu}{}^{\kappa}- C_{\mu}{}^{\rho}{}_{[\nu}{}^{\kappa}{}) \nabla_{|\kappa|} L_{\sigma]\rho}
  -2 L_{\mu[\nu}\zeta_{\sigma]}
  \nonumber
\\
 &&- \frac{1}{3}(W_{\nu\sigma\mu}{}^{\kappa}-C_{\nu\sigma\mu}{}^{\kappa})\nabla_{\kappa} R
  + (\zeta_{\mu\kappa}{}^{\alpha} - \xi_{\mu\kappa}{}^{\alpha})W_{\nu\sigma\alpha}{}^{\kappa}
   +  \frac{1}{6} R(\zeta_{\mu\nu\sigma} - \xi_{\mu\nu\sigma})
 \nonumber
 \\
 && +  4L_{\mu[\nu}\nabla_{\sigma]}(L-\frac{1}{6}R)
   + \frac{8}{3}( L - \frac{1}{6}R ) g_{\mu[\sigma}\nabla^{\kappa}L_{\nu]\kappa}
     + \frac{2}{9}( L - \frac{1}{6}R ) g_{\mu[\nu}\nabla_{\sigma]} R
     \nonumber
\\
 && + \frac{2}{3}(L - \frac{1}{6}R)   \zeta_{\mu\nu\sigma}
 +f_{\mu\nu\sigma}(x; H, \nabla H,\nabla K)
 \label{Wwave_zeta-xi}
 \;.
\end{eqnarray}

The equations \eq{Wwave_H*Weyl}, \eq{Wwave_DiffH}, \eq{Wtr_L}, \eq{Wwave_W-C}, \eq{Wwave_zetavec} and \eq{Wwave_zeta-xi}
form a closed, linear,  homogeneous system of wave equations satisfied by $H^{\sigma}$, $K_{\mu\nu}$, $L-R/6$, $W_{\mu\nu\sigma\rho} -C_{\mu\nu\sigma\rho} $, $\zeta_{\mu}$ and $\zeta_{\mu\nu\sigma}-\xi_{\mu\nu\sigma}$, with $g_{\mu\nu}$, $L_{\mu\nu}$, etc.\  regarded as being given.
An application of standard uniqueness results for wave equations, cf.\ e.g.\  \cite{friedlander}, establishes that
all the fields vanish identically, supposing that this is initially the case.
In particular this guarantees
the vanishing of $H^{\sigma}$ and, via \eq{Weqn_curvscalarWeyl},  of $R_g-R$, and therefore consistency of the CWE2
with the gauge condition.

Moreover, the computations above reveal that the solution satisfies certain relations expected from the derivation of  the CWE2; e.g.\ it follows from \eq{Wweylwave6} that $L_{\mu\nu}$ is the Schouten tensor of $g_{\mu\nu}$ if $H^{\sigma}=0$ and $R_g=R$.

\begin{proposition}
\label{Wgauge_consistency}
Let us assume we have been given data ($\mathring g_{\mu\nu}$, $\mathring s$, $\mathring \Theta$, $\mathring L_{\mu\nu}$, $\mathring C_{\mu\nu\sigma}{}^{\rho}$, $\mathring \xi_{\mu\nu\sigma}$) on an initial surface $S$ (for definiteness we think either of two transversally intersecting null hypersurfaces or a light-cone) and a gauge source function $R$, such that $\mathring g_{\mu\nu}$ is the restriction to $S$ of a Lorentzian metric, $\mathring L_{\mu\nu}$ is symmetric and $\mathring L = \overline R/6$.
Suppose further that there exists a smooth solution ($g_{\mu\nu}$, $s$, $\Theta$, $L_{\mu\nu}$, $C_{\mu\nu\sigma}{}^{\rho}$, $\xi_{\mu\nu\sigma}$)  of the CWE2 \eq{Wweylwave1}-\eq{Wweylwave6} with gauge source function $R$ which induces the above data on $S$ and fulfills the following conditions:%
\begin{enumerate}
 \item $\overline H{}^{\sigma}[g]=0$,
 \item $\overline K_{\mu}{}^{\sigma}[g]=0$, where $K_{\mu}{}^{\sigma} \equiv \nabla_{\mu}H^{\sigma}$,
\item $\overline W_{\mu\nu\sigma}{}^{\rho}[g] = \overline C_{\mu\nu\sigma}{}^{\rho}$,
 \item $\overline \zeta_{\mu\nu\sigma}[g,L] = \overline \xi_{\mu\nu\sigma}$, where $\zeta_{\mu\nu\sigma}\equiv 4\nabla_{[\sigma}L_{\nu]\mu}$,
\item $\overline \zeta_{\mu}=0$, where $\zeta_{\mu}\equiv  -4\nabla_{\kappa}L_{\mu}{}^{\kappa} + 2\nabla_{\mu}L + \frac{1}{3}\nabla_{\mu}R$.
\end{enumerate}
Then
\begin{enumerate}
 \item[a)] $H^{\sigma}=0$ and $R_g=R$,
\item[b)] $C_{\mu\nu\sigma}{}^{\rho} $ is the Weyl tensor of $g_{\mu\nu}$,
\item[c)] $L_{\mu\nu}$ is the Schouten tensor of $g_{\mu\nu}$,
\item[d)]$\xi_{\mu\nu\sigma}$ is the Cotton tensor of $g_{\mu\nu}$.
\end{enumerate}
\end{proposition}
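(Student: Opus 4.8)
The plan is to mirror the argument already carried out for Theorem~\ref{inter-thm}, exploiting the closed homogeneous wave system assembled in Section~\ref{Wreduced_cwe}. The decisive structural fact is that equations \eqref{Wwave_H*Weyl}, \eqref{Wwave_DiffH}, \eqref{Wtr_L}, \eqref{Wwave_W-C}, \eqref{Wwave_zetavec} and \eqref{Wwave_zeta-xi} form a linear, homogeneous system of wave equations for the six subsidiary fields $H^{\sigma}$, $K_{\mu\nu}$, $L-\frac{1}{6}R$, $W_{\mu\nu\sigma\rho}-C_{\mu\nu\sigma\rho}$, $\zeta_{\mu}$ and $\zeta_{\mu\nu\sigma}-\xi_{\mu\nu\sigma}$, with all the remaining quantities ($g_{\mu\nu}$, $L_{\mu\nu}$, etc.) treated as given coefficients. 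Hence it suffices to verify that each of these six fields vanishes on $S$; standard uniqueness for linear wave equations with trivial Cauchy data (cf.~\cite{friedlander}) then forces them to vanish throughout the domain of dependence.

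First I would record that Lemma~\ref{Wmetric_symm} applies: its hypotheses $g_{\mu\nu}|_S$ Lorentzian and $L_{[\mu\nu]}|_S=0$ are assumed, while the pair-symmetry $\mathring C_{\mu\nu\sigma\rho}=\mathring C_{\sigma\rho\mu\nu}$ follows from condition~3, since the Weyl tensor $W_{\mu\nu\sigma\rho}$ of any metric satisfies $W_{\mu\nu\sigma\rho}=W_{\sigma\rho\mu\nu}$. Thus $g_{\mu\nu}$ and $L_{\mu\nu}$ are genuine symmetric tensors and the wave-equation apparatus is legitimate. Next I would read off the initial vanishing directly from the hypotheses: condition~1 gives $H^{\sigma}|_S=0$, condition~2 gives $K_{\mu\nu}|_S=0$, condition~3 gives $(W_{\mu\nu\sigma\rho}-C_{\mu\nu\sigma\rho})|_S=0$, condition~4 gives $(\zeta_{\mu\nu\sigma}-\xi_{\mu\nu\sigma})|_S=0$, and condition~5 gives $\zeta_{\mu}|_S=0$; the remaining field vanishes on $S$ because the trace assumption $\mathring L=\overline R/6$ is precisely $(L-\frac{1}{6}R)|_S=0$. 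The uniqueness argument then yields
\begin{equation*}
 H^{\sigma}=0\;,\quad K_{\mu\nu}=0\;,\quad L=\tfrac{1}{6}R\;,\quad W_{\mu\nu\sigma\rho}=C_{\mu\nu\sigma\rho}\;,\quad \zeta_{\mu}=0\;,\quad \zeta_{\mu\nu\sigma}=\xi_{\mu\nu\sigma}
\end{equation*}
everywhere. Claim~b) is immediate, as $W_{\mu\nu\sigma\rho}$ is by construction the Weyl tensor of $g_{\mu\nu}$.

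For the remaining claims I would unwind the identities in the order a), c), d). Inserting $H^{\sigma}=0$ and $L=\frac{1}{6}R$ into \eqref{Weqn_curvscalarWeyl} gives $R_g=2L+\frac{2}{3}R=R$, which is a). Because $H^{\sigma}=0$ the reduced Ricci tensor coincides with the Ricci tensor, so \eqref{Wweylwave6} reads $R_{\mu\nu}=2L_{\mu\nu}+\frac{1}{6}R g_{\mu\nu}=2L_{\mu\nu}+\frac{1}{6}R_g g_{\mu\nu}$; inverting, $L_{\mu\nu}=\frac{1}{2}R_{\mu\nu}-\frac{1}{12}R_g g_{\mu\nu}$, i.e.\ $L_{\mu\nu}$ is the Schouten tensor, which is c). Finally, with $L_{\mu\nu}$ now identified as the Schouten tensor, $\zeta_{\mu\nu\sigma}=4\nabla_{[\sigma}L_{\nu]\mu}$ is exactly the Cotton tensor, and since $\xi_{\mu\nu\sigma}=\zeta_{\mu\nu\sigma}$ we obtain d).

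The main obstacle is not computational — that burden was already discharged in deriving \eqref{Wwave_H*Weyl}--\eqref{Wwave_zeta-xi} — but rather a bookkeeping of logical dependencies that differs subtly from the CWE case. Here one cannot invoke an analogue of Lemma~\ref{lemma_properties_L} to obtain $L=\frac{1}{6}R$ independently, nor may one assume tracelessness of $C_{\mu\nu\sigma\rho}$ and $\xi_{\mu\nu\sigma}$ a priori (as emphasised in Section~\ref{Wreduced_cwe}). Instead, the relation $L=\frac{1}{6}R$ must be propagated as one of the unknowns of the homogeneous system through \eqref{Wtr_L}, whose source $2(W_{\mu\sigma}{}^{\mu\rho}-C_{\mu\sigma}{}^{\mu\rho})L_{\rho}{}^{\sigma}$ is controlled only because $W-C$ is itself part of the closed system. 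Checking that this coupling genuinely closes — that no derivative of an unknown leaks uncontrolled into the source terms — is the delicate point, and it is exactly what the modified definition of $\zeta_{\mu}$ in \eqref{Wwave_H*Weyl} and the choice of subsidiary fields in Section~\ref{Wreduced_cwe} were arranged to secure.
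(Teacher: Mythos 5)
Your proof is correct and follows essentially the same route as the paper: the statement is established by observing that conditions 1--5 together with $\mathring L=\overline R/6$ make all six subsidiary fields of the closed homogeneous wave system \eq{Wwave_H*Weyl}, \eq{Wwave_DiffH}, \eq{Wtr_L}, \eq{Wwave_W-C}, \eq{Wwave_zetavec}, \eq{Wwave_zeta-xi} vanish on $S$, whence they vanish identically, and the conclusions a)--d) are then read off from \eq{Weqn_curvscalarWeyl} and \eq{Wweylwave6} exactly as you do. Your additional observation that the pair symmetry of $\mathring C_{\mu\nu\sigma\rho}$ needed for Lemma~\ref{Wmetric_symm} is already implied by condition~3 is a correct and welcome piece of bookkeeping, but does not change the argument.
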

The validity of the assumptions 1-5  will be the subject of Section~\ref{Wsec_applicability}.

\subsection{Equivalence issue between the CWE2 and the MCFE}

We devote ourselves now to the issue to what extent and under which conditions a solution of the CWE2 is also a solution of the MCFE. It turns out that this issue is somewhat more intricate than for the CWE due to the change of variables.
Note that at this stage the cosmological constant $\lambda$ does not need to vanish.

\subsubsection*{A subsidiary system}

Recall the MCFE,  
\begin{eqnarray}
 && \nabla_{\rho} d_{\mu\nu\sigma}{}^{\rho} =0
 \;,
 \label{Wconf1*}
\\
 && \nabla_{\mu}L_{\nu\sigma} - \nabla_{\nu}L_{\mu\sigma} = \nabla_{\rho}\Theta \, d_{\nu\mu\sigma}{}^{\rho}
 \;,
 \label{Wconf2*}
\\
 && \nabla_{\mu}\nabla_{\nu}\Theta = -\Theta L_{\mu\nu} +  s  g_{\mu\nu}
 \;,
 \label{Wconf3*}
\\
 && \nabla_{\mu}  s = - L_{\mu\nu}\nabla^{\nu}\Theta
 \;,
 \label{Wconf4*}
\\
 && 2\Theta  s -  \nabla_{\mu}\Theta\nabla^{\mu}\Theta = \lambda/3
 \;,
 \label{Wconf5*}
\\
 &&  R_{\mu\nu\sigma}{}^{\kappa}[ g] = \Theta  d_{\mu\nu\sigma}{}^{\kappa} + 2\left( g_{\sigma[\mu} L_{\nu]}{}^{\kappa}  - \delta_{[\mu}{}^{\kappa} L_{\nu]\sigma} \right)
 \label{Wconf6*}
\;.
\end{eqnarray}
The MCFE 
are equivalent to the following system, supposing that $\Theta>0$,
\begin{eqnarray}
 &&
 \nabla_{\rho} C_{\nu\mu\sigma}{}^{\rho} = \nabla_{\mu}L_{\nu\sigma} - \nabla_{\nu}L_{\mu\sigma}  \;,
 \label{Wconf1**}
\\
 && \Theta(\nabla_{\mu}L_{\nu\sigma} - \nabla_{\nu}L_{\mu\sigma}) = \nabla_{\rho}\Theta \, C_{\nu\mu\sigma}{}^{\rho}  \;,
 \label{Wconf2**}
\\
 && \nabla_{\mu}\nabla_{\nu}\Theta = -\Theta L_{\mu\nu} +  s  g_{\mu\nu}  \;,
 \label{Wconf3**}
\\
 && \nabla_{\mu}  s = - L_{\mu\nu}\nabla^{\nu}\Theta  \;,
 \label{Wconf4**}
\\
 && 2\Theta  s -  \nabla_{\mu}\Theta\nabla^{\mu}\Theta = \lambda/3  \;,
 \label{Wconf5**}
\\
 &&  R_{\mu\nu\sigma}{}^{\kappa}[ g] = C_{\mu\nu\sigma}{}^{\kappa} + 2\left( g_{\sigma[\mu} L_{\nu]}{}^{\kappa}  - \delta_{[\mu}{}^{\kappa} L_{\nu]\sigma} \right)
 \label{Wconf6**}
\;.
\end{eqnarray}
This can be seen as follows:
Suppose we have a solution of \eq{Wconf1**}-\eq{Wconf6**}, then we obtain a solution of  \eq{Wconf1*}-\eq{Wconf6*} by identifying
$d_{\mu\nu\sigma}{}^{\rho}$ with $\Theta^{-1} C_{\mu\nu\sigma}{}^{\rho}$ and vice versa
(hence the system \eq{Wconf1**}-\eq{Wconf6**} is also equivalent to the vacuum Einstein equations for $\Theta>0$).
In fact, a solution of \eq{Wconf1*}-\eq{Wconf6*} provides a solution of \eq{Wconf1**}-\eq{Wconf6**} for any $\Theta$
since the identification of $C_{\mu\nu\sigma}{}^{\rho}$ with $\Theta d_{\mu\nu\sigma}{}^{\rho}$ is  possible, even where $\Theta=0$.

We elaborate in somewhat more detail on the characteristic initial value problem for an initial surface $S$ for which the set $\{\overline \Theta =0 \}$ is non-empty.
Since we are mainly interested in a light-cone with $\overline \Theta =0$ everywhere we specialise to the case $S=C_{i^-}$ (we then need to assume
$\lambda=0$).
Let us assume we have been given free initial data $\omega_{AB}\equiv \breve{\overline L}_{AB}$  on $C_{i^-}$,
and that the fields $\mathring g_{\mu\nu}$, $\mathring\Theta=0$, $\mathring s$, $\mathring L_{\mu\nu}$, $\mathring C_{\mu\nu\sigma}{}^{\rho}=0$
and $\mathring\xi_{\mu\nu\sigma}$
have been constructed by solving the constraint equations to be derived below  (cf.\ Section~\ref{Wsec_constraint_equations}).
Let us further assume that there exists a smooth solution  of the system \eq{Wconf1**}-\eq{Wconf6**} to the future of $S$ which induces these data on $S$
and which satisfies $s|_{i^-}\ne 0$. Then $\Theta$ has no zeroes inside the cone and sufficiently close to the vertex.
Moreover, cf.\ the proof of Lemma~\ref{append_lemma_s} in Appendix~\ref{cone_smoothness},
 $\mathrm{d}\Theta\ne 0 $ on $\Scri^-$ and  $\mathrm{d}\Theta|_{i^-} =0$.
Since the tensor $C_{\mu\nu\sigma}{}^{\rho}$ vanishes on $C_{i^-}$ the field
\( C_{\mu\nu\sigma}{}^{\rho}/ \Theta \)
can be smoothly
 continued across $\scri^-$ (though not necessarily across $i^-$).
The solution at hand thus solves \eq{Wconf1*}-\eq{Wconf6*} (except possibly at $i^-$) when identifying $C_{\mu\nu\sigma}{}^{\rho}/\Theta$ with
$d_{\mu\nu\sigma}{}^{\rho}$, smoothly continued across $\scri^-$.

The system  \eq{Wconf1**}-\eq{Wconf6**} is not regular for $\Theta=0$, and thus does not provide a good evolution system. However,
it turns out that it is equivalent to the CWE2, when the latter system is supplemented by the constraint equations,
and thus provides a useful tool to solve the equivalence issue between the MCFE and the CWE2.
The only grievance (or possibly advantage, we will come back to this issue later)  is that  we do not know how $d_{\mu\nu\sigma}{}^{\rho}$ behaves near the vertex, in particular it is by no means clear whether it can be continuously continued across past timelike infinity at all.
Nevertheless, the solution provides a solution of the MCFE up to and excluding the vertex, which induces the free initial
data $\omega_{AB}$ on $C_{i^-}$, and it provides a solution of the vacuum Einstein equations inside the cone, at least near $i^-$.

\subsubsection*{Equivalence of the CWE2 and the subsidiary system}

In this section we  address the equivalence issue  between the CWE2  \eq{Wweylwave1}-\eq{Wweylwave6}
and the subsidiary system \eq{Wconf1**}-\eq{Wconf6**} we just introduced and which, once we have constructed a solution thereof,  provides a solution of the MCFE \eq{Wconf1*}-\eq{Wconf6*}, with the possible exception of the vertex of the cone $C_{i^-}$.
For that we shall demonstrate that a solution of the CWE2 is a solution of the subsidiary system
 supposing that certain relations are satisfied on the initial surface, namely the constraint equations, cf.\  the next section.
The other direction follows from the derivation of the CWE2.
As initial surface we have, as before, two transversally intersecting null hypersurfaces or a light-cone in mind.

Recall the CWE2 \eq{Wweylwave1}-\eq{Wweylwave6}.
%
We assume we have been given a smooth solution $(g_{\mu\nu},L_{\mu\nu}, C_{\mu\nu\sigma}{}^{\rho},\xi_{\mu\nu\sigma},\Theta,s)$
with all the hypotheses of Proposition~\ref{Wgauge_consistency} being satisfied.
Then $L_{\mu\nu}$, $C_{\mu\nu\sigma}{}^{\rho}$ and $\xi_{\mu\nu\sigma}$ are the Schouten, Weyl and Cotton tensor of $g_{\mu\nu}$, respectively.
The equations \eq{Wconf1**} and \eq{Wconf6**} are thus identities and automatically satisfied.
Recall that it suffices for \eq{Wconf5**} to be satisfied at just one point.
Let us derive a homogeneous system of wave equations which establishes the validity of the remaining equations,
\eq{Wconf2**}-\eq{Wconf4**}.

It is convenient to make the following definitions:
\begin{eqnarray*}
 \Lambda_{\sigma\nu\mu} &:=& \frac{1}{2}\Theta\xi_{\sigma\nu\mu} + \nabla_{\rho}\Theta C_{\mu\nu\sigma}{}^{\rho}
\;,
\\
 \Xi_{\mu\nu} &:=& \nabla_{\mu}\nabla_{\nu}\Theta + \Theta L_{\mu\nu} - s g_{\mu\nu}
 \;,
\\
 \Upsilon_{\mu} &:=& \nabla_{\mu}s + L_{\mu\nu}\nabla^{\nu}\Theta
\;.
\end{eqnarray*}

Computations similar to the ones which led us to   \eq{conf3**_wave} and \eq{conf4**_wave}  (now with $H^{\sigma}$ and $K_{\mu\nu}$ vanishing)
reveal that, because of  \eq{Wweylwave1}-\eq{Wweylwave3}, we have
\begin{eqnarray}
\hspace{-2em}  \Box_g \Xi_{\mu\nu}
 &=& 2 \Xi_{\sigma\kappa}( 2 L_{(\mu}{}^{\kappa}\delta_{\nu)}{}^{\sigma}-g_{\mu\nu}L^{\sigma\kappa}- C_{\mu}{}^{\sigma}{}_{\nu}{}^{\kappa})
 + 4\nabla_{(\mu}\Upsilon_{\nu)} + \frac{1}{6}R\Xi_{\mu\nu}
 \label{Wconf3**_wave}
 \;,
\\
\hspace{-2em}  \Box_g\Upsilon_{\mu}
 &=& 6L_{\mu}{}^{\kappa}\Upsilon_{\kappa} +2L^{\rho\kappa}\Lambda_{\rho\kappa\mu}
 +2\Xi_{\nu}{}^{\sigma}\nabla_{\sigma}L_{\mu}{}^{\nu} - \frac{1}{6}\Xi_{\mu}{}^{\nu}\nabla_{\nu}R
 \label{Wconf4**_wave}
 \;.
\end{eqnarray}
Furthermore, in virtue of \eq{Wweylwave3}-\eq{Wweylwave5} and \eq{Wbianchi_weyl} we find that
\begin{eqnarray*}
\Box_g \Lambda_{\sigma\nu\mu}
 &=& s \xi_{\sigma\nu\mu} - 2L_{\rho\kappa} \nabla^{\kappa}\Theta \, C_{\mu\nu\sigma}{}^{\rho}
  -2  \nabla^{\rho}\Theta  C_{\sigma\rho\kappa[\mu}L_{\nu]}{}^{\kappa} -2 \nabla^{\rho}\Theta C_{\mu\nu\kappa[\sigma}L_{\rho]}{}^{\kappa}
\\
 && + \nabla^{\rho}\Theta(\nabla_{[\rho}\xi_{\sigma]\nu\mu}
  +\nabla_{[\mu}\xi_{\nu]\rho\sigma}   )   + \nabla^{\rho}\Theta\nabla_{\sigma}\xi_{\rho\nu\mu}
  + 4\Upsilon_{\rho} C_{\mu\nu\sigma}{}^{\rho}
\\
 &&
  + 2\Xi_{\kappa\rho} \nabla^{\kappa}C_{\mu\nu\sigma}{}^{\rho}
  + 4 C_{\sigma}{}^{\kappa}{}_{[\mu}{}^{\alpha}\Lambda_{|\kappa\alpha|\nu]}
 - C_{\mu\nu\alpha}{}^{\kappa}\Lambda_{\sigma\kappa}{}^{\alpha}
  +\frac{1}{3}R   \Lambda_{\sigma\nu\mu}
 \;.
\end{eqnarray*}
We observe the relation
\begin{eqnarray*}
 && \hspace{-4em} 2\nabla^{\rho}\Theta(\nabla_{[\rho}\xi_{\sigma]\nu\mu}
  +\nabla_{[\mu}\xi_{\nu]\rho\sigma}   )
\\
 &=&  4 \nabla^{\rho}\Theta(\nabla_{[\rho}\nabla_{\mu]}L_{\nu\sigma} - \nabla_{[\rho}\nabla_{\nu]}L_{\mu\sigma} +\nabla_{[\sigma}\nabla_{\nu]}L_{\mu\rho}
  -\nabla_{[\sigma}\nabla_{\mu]}L_{\nu\rho})
\\
 &=& 4 \nabla^{\rho}\Theta(C_{\mu\nu[\sigma}{}^{\kappa} L_{\rho]\kappa} - C_{\sigma\rho[\mu}{}^{\kappa} L_{\nu]\kappa}    )
 \;,
\end{eqnarray*}
which yields
\begin{eqnarray}
 \Box_g\Lambda_{\sigma\nu\mu}
 &=&  2\Xi_{\kappa\rho} \nabla^{\kappa}C_{\mu\nu\sigma}{}^{\rho}  +\xi^{\rho}{}_{\mu\nu}\Xi_{\sigma\rho}   + 2L_{\sigma}{}^{\rho} \Lambda_{\rho\nu\mu}
 + 4 C_{\sigma}{}^{\kappa}{}_{[\mu}{}^{\alpha}\Lambda_{|\kappa\alpha|\nu]}
  \nonumber
\\
 &&    - C_{\mu\nu\alpha}{}^{\kappa}\Lambda_{\sigma\kappa}{}^{\alpha}
  + 4\Upsilon_{\rho} C_{\mu\nu\sigma}{}^{\rho}
 +  \nabla_{\sigma}(\xi_{\rho\nu\mu}\nabla^{\rho}\Theta )
   +\frac{1}{3}R   \Lambda_{\sigma\nu\mu}
 \label{Wconf2**_wave}
 \;.
\phantom{xx}
\end{eqnarray}
%

It remains to derive a wave equation for $ \xi_{\rho\nu\mu}\nabla^{\rho}\Theta$ which follows from \eq{Wweylwave3}, \eq{Wweylwave5} and \eq{Wbianchi_weyl},

\begin{eqnarray}
 \Box_g (\xi_{\rho\nu\mu}\nabla^{\rho}\Theta ) &=&  \xi^{\rho}{}_{\nu\mu} \nabla_{\rho}\Box_g\Theta
 + 2\Xi^{\kappa\rho}( \nabla_{\kappa}\xi_{\rho\nu\mu}    +  2 L_{\kappa}{}^{\delta}C_{\mu\nu\delta\rho})
 - 4 L^{\kappa\rho} \nabla_{\kappa}\Lambda_{\rho\nu\mu}
 \nonumber
\\
 && +  \nabla^{\kappa} \Theta (4 L^{\delta\rho} \nabla_{\delta} C_{\mu\nu\rho\kappa}
  + 4 L_{\kappa}{}^{\rho} \xi_{\rho\nu\mu}
  + \Box_g \xi_{\kappa\nu\mu})  + \frac{1}{6} R \xi_{\rho\nu\mu}\nabla^{\rho}\Theta
 \nonumber
\\
  &=&  4\xi^{\rho}{}_{\nu\mu} \Upsilon_{\rho}
 + 2\Xi^{\kappa\rho}( \nabla_{\kappa}\xi_{\rho\nu\mu}    +  2 L_{\kappa}{}^{\delta}C_{\mu\nu\delta\rho})
 - 4 L^{\kappa\rho} \nabla_{\kappa}\Lambda_{\rho\nu\mu}
\nonumber
\\
 &&   - ( \xi_{\kappa\beta}{}^{\alpha} \nabla^{\kappa} \Theta )C_{\mu\nu\alpha}{}^{\beta}  + 4 \xi^{\alpha\beta}{}_{[\mu}\Lambda_{|\alpha\beta|\nu]}
  -\frac{1}{3}  \Lambda_{\rho\nu\mu}  \nabla^{\rho}R
\nonumber
\\
&& + \frac{1}{2} R\, \xi_{\rho\nu\mu}\nabla^{\rho}\Theta
 \;.
 \label{wave_thetaxi}
\end{eqnarray}

The equations  \eq{Wconf3**_wave}-\eq{wave_thetaxi} form a closed, linear, homogeneous system of wave equations
for the fields $\Xi_{\mu\nu}$, $\Upsilon_{\mu}$, $\Lambda_{\sigma\nu\mu}$
and $\xi_{\rho\nu\mu}\nabla^{\rho}\Theta$.
If we assume that the equations \eq{Wconf2**}-\eq{Wconf4**} are initially satisfied and that $\overline{\xi_{\rho\nu\mu}\nabla^{\rho}\Theta }=0$, we have vanishing initial data,
and standard uniqueness results for wave equations can be applied (cf.\ e.g.\ \cite{friedlander}) to conclude that \eq{Wconf2**}-\eq{Wconf4**}
are fulfilled.

As an extension of Proposition~\ref{Wgauge_consistency} we have proven the following result (note that the cosmological constant $\lambda$ is allowed to be non-vanishing):
%
\begin{theorem}
\label{Winter-thm}
Let us assume we have been given data ($\mathring g_{\mu\nu}$, $\mathring s$, $\mathring \Theta$, $\mathring L_{\mu\nu}$, $\mathring C_{\mu\nu\sigma}{}^{\rho}$, $\mathring \xi_{\mu\nu\sigma}$) on a characterteristic  initial surface $S$ (for definiteness we think either of two transversally intersecting null hypersurfaces or a light-cone) and a gauge source function $R$, such that $\mathring g_{\mu\nu}$ is the restriction of a Lorentzian metric, $\mathring L_{\mu\nu}$ is symmetric and $\mathring L = \overline R/6$.
Suppose further that there exists a smooth solution ($g_{\mu\nu}$, $s$, $\Theta$, $L_{\mu\nu}$, $C_{\mu\nu\sigma}{}^{\rho}$, $\xi_{\mu\nu\sigma}$)  of the CWE2 \eq{Wweylwave1}-\eq{Wweylwave6} with gauge source function $R$ which induces the above data on $S$ and satisfies the following conditions
(since it is the case of physical relevance we assume $\Theta\ne 0$ away from $S$; later on we shall consider only initial data where this is automatically the case, at least sufficiently close to $S$):
\begin{enumerate}
\item The equations \eq{Wconf2**}-\eq{Wconf5**} are satisfied on $S$ (it suffices if \eq{Wconf5**} holds at just  one point on $S$).
\item  The Weyl tensor of $g_{\mu\nu}$  coincides on $S$ with $C_{\mu\nu\sigma}{}^{\rho}$.
\item The relation $\xi_{\mu\nu\sigma}=4\nabla_{[\sigma}L_{\nu]\mu}$ holds on $S$.
\item The covector field $\zeta_{\mu}\equiv  -4\nabla_{\kappa}L_{\mu}{}^{\kappa} + 2\nabla_{\mu}L + \frac{1}{3}\nabla_{\mu}R$ vanishes on $S$.
\item The tensor field $\xi_{\rho\nu\mu}\nabla^{\rho}\Theta$ vanishes on $S$.
\item  The wave-gauge vector $H^{\sigma}$ and its covariant derivative $K_{\mu}{}^{\sigma}\equiv \nabla_{\mu}H^{\sigma}$ vanish on $S$.
\end{enumerate}
Then:
\begin{enumerate}
\item [a)] $H^{\sigma}=0$ and $R_g=R$.
 \item[b)]
The fields $C_{\mu\nu\sigma}{}^{\rho}$, $L_{\mu\nu}$ and $\xi_{\mu\nu\sigma}$ are the Weyl, Schouten and Cotton tensor of $g_{\mu\nu}$, respectively.
\item[c)] Set
 $d_{\mu\nu\sigma}{}^{\rho}:=\Theta^{-1}C_{\mu\nu\sigma}{}^{\rho}$ where $\Theta\ne 0$.
The tensor  $d_{\mu\nu\sigma}{}^{\rho}$ extends  to
the set $\{\overline  \Theta=0, \overline{\mathrm d\Theta}\ne 0 \}\subset S$.
Moreover,   the tuple $(g_{\mu\nu}\;,\, L_{\mu\nu}\;,\,\Theta\;,\,s\;,\,d_{\mu\nu\sigma}{}^{\rho})$  solves the MCFE \eq{Wconf1*}-\eq{Wconf6*} in the $(H^{\sigma}=0,R_g=R)$-gauge.
\end{enumerate}
The conditions 1-6 are  necessary for c) to be fulfilled.
\end{theorem}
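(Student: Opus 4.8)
The plan is to assemble Theorem~\ref{Winter-thm} from two pieces already in place: the gauge-consistency result Proposition~\ref{Wgauge_consistency} (which will yield a) and b)) and the closed subsidiary wave system derived above (which will yield the MCFE in c)). I would treat the two groups of conclusions separately, and handle the extension claim at the end.

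First I would observe that hypotheses 2, 3, 4 and 6 of the theorem are, after Lemma~\ref{Wmetric_symm} has secured the symmetry properties of $g_{\mu\nu}$, $L_{\mu\nu}$ and $C_{\mu\nu\sigma\rho}$, precisely the assumptions 1--5 of Proposition~\ref{Wgauge_consistency}: condition~6 supplies $\overline H{}^\sigma=0$ and $\overline K_\mu{}^\sigma=0$, condition~2 supplies $\overline W_{\mu\nu\sigma}{}^\rho=\overline C_{\mu\nu\sigma}{}^\rho$, condition~3 supplies $\overline\zeta_{\mu\nu\sigma}=\overline\xi_{\mu\nu\sigma}$ with $\zeta_{\mu\nu\sigma}\equiv4\nabla_{[\sigma}L_{\nu]\mu}$, and condition~4 supplies $\overline\zeta_\mu=0$. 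Invoking Proposition~\ref{Wgauge_consistency} then gives a) and b) at once: $H^\sigma=0$, $R_g=R$, and $C_{\mu\nu\sigma}{}^\rho$, $L_{\mu\nu}$, $\xi_{\mu\nu\sigma}$ are the Weyl, Schouten and Cotton tensor of $g_{\mu\nu}$, respectively.

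Second, I would establish the subsidiary system \eq{Wconf1**}--\eq{Wconf6**}. Once a)--b) hold, \eq{Wconf1**} reduces to \eq{Wrelation_xi_C} and \eq{Wconf6**} to the Weyl decomposition of the Riemann tensor, so both are identities. For the remaining equations I would use that $\Xi_{\mu\nu}$, $\Upsilon_\mu$, $\Lambda_{\sigma\nu\mu}$ and $\xi_{\rho\nu\mu}\nabla^\rho\Theta$ satisfy the closed, linear, homogeneous system of wave equations \eq{Wconf3**_wave}, \eq{Wconf4**_wave}, \eq{Wconf2**_wave}, \eq{wave_thetaxi}. Hypotheses~1, 3 and 5 force vanishing initial data: \eq{Wconf3**} and \eq{Wconf4**} give $\overline\Xi_{\mu\nu}=0$ and $\overline\Upsilon_\mu=0$; \eq{Wconf2**} rewritten through condition~3 (so that $\nabla_\mu L_{\nu\sigma}-\nabla_\nu L_{\mu\sigma}=\tfrac12\xi_{\sigma\nu\mu}$) gives $\overline\Lambda_{\sigma\nu\mu}=0$; and condition~5 gives $\overline{\xi_{\rho\nu\mu}\nabla^\rho\Theta}=0$. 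By the standard uniqueness theorem for wave equations (cf.\ \cite{friedlander}) all four fields vanish identically, which is exactly \eq{Wconf2**}--\eq{Wconf4**}; equation \eq{Wconf5**} then holds throughout, being recovered from \eq{Wconf4**} together with its validity at a single point (condition~1), just as \eq{conf5} follows from \eq{conf3}--\eq{conf4}. Thus \eq{Wconf1**}--\eq{Wconf6**} hold in full.

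Third, for c) I would invoke the equivalence of \eq{Wconf1**}--\eq{Wconf6**} with the MCFE \eq{Wconf1*}--\eq{Wconf6*} for $\Theta\neq0$: putting $d_{\mu\nu\sigma}{}^\rho:=\Theta^{-1}C_{\mu\nu\sigma}{}^\rho$ converts the verified system into \eq{Wconf1*}--\eq{Wconf6*} in the $(H^\sigma=0,R_g=R)$-gauge. The delicate point---and the step I expect to be the main obstacle---is the smooth extension of $d_{\mu\nu\sigma}{}^\rho$ across $\{\overline\Theta=0,\overline{\mathrm d\Theta}\neq0\}$, where the change of variables from $d$ to $C$ degenerates. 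Here I would argue that, $C_{\mu\nu\sigma}{}^\rho$ being the Weyl tensor (conclusion b), it vanishes on that set while $\Theta$ vanishes there only to first order since $\mathrm d\Theta\neq0$, so that the quotient $C/\Theta$ admits a smooth continuation, exactly as in the discussion of the subsidiary system and in the proof of Lemma~\ref{append_lemma_s} in Appendix~\ref{cone_smoothness}. Finally, the necessity of conditions~1--6 is the easy converse: any solution of \eq{Wconf1*}--\eq{Wconf6*} in the stated gauge satisfies \eq{Wconf2**}--\eq{Wconf5**}, has $C$, $L$, $\xi$ as its Weyl, Schouten and Cotton tensors, and has vanishing $H^\sigma$ and $K_\mu{}^\sigma$; condition~5 in particular drops out by contracting $\Lambda_{\sigma\nu\mu}=0$ with $\nabla^\sigma\Theta$ and using the antisymmetry of $C_{\mu\nu\sigma\rho}$ in its last pair of indices.
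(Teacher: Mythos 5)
Your proposal follows the paper's own route exactly: Proposition~\ref{Wgauge_consistency} (fed by hypotheses 2, 3, 4 and 6) yields a) and b); the closed, linear, homogeneous wave system \eq{Wconf3**_wave}, \eq{Wconf4**_wave}, \eq{Wconf2**_wave}, \eq{wave_thetaxi} for $\Xi_{\mu\nu}$, $\Upsilon_{\mu}$, $\Lambda_{\sigma\nu\mu}$ and $\xi_{\rho\nu\mu}\nabla^{\rho}\Theta$, with vanishing initial data supplied by hypotheses 1, 3 and 5 and standard uniqueness for wave equations, yields \eq{Wconf2**}--\eq{Wconf4**} (with \eq{Wconf5**} propagated from one point); and c) follows from the equivalence of the subsidiary system with the MCFE together with the extension of $C_{\mu\nu\sigma}{}^{\rho}/\Theta$ across $\{\overline\Theta=0,\,\overline{\mathrm d\Theta}\ne 0\}$. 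This is correct and is essentially the same argument as the paper's.
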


We shall investigate next to what extent the conditions~1-6 are satisfied if the initial data  are constructed as solutions of the constraint equations induced by the MCFE on the initial surface.

\subsection{Constraint equations on $C_{i^-}$ in terms of Weyl and Cotton tensor}
\label{Wsec_constraint_equations}

\subsubsection*{Generalized wave-map gauge}

The aim of this section is to determine the constraint equations induced by the MCFE
on the fields $g_{\mu\nu}$, $L_{\mu\nu}$, $\Theta$, $s$, $C_{\mu\nu\sigma\rho}$ and $\xi_{\mu\nu\sigma}$.
For this purpose we assume we have been given some smooth solution ($g_{\mu\nu}$, $L_{\mu\nu}$, $\Theta$, $s$, $d_{\mu\nu\sigma\rho}$)
of the MCFE.
For simplicity and to avoid an exhaustive  case-by-case analysis
we shall restrict attention, as for the CWE, to the case where the initial surface is $S=C_{i^-}$.
This requires to assume
$$\lambda=0\;.$$
As a matter of course the constraints for  $g_{\mu\nu}$, $L_{\mu\nu}$, $\Theta$ and $s$ are the same as before,
cf.\ Section~\ref{constraints_generalizedw}.
The Weyl tensor vanishes on $\scri$  \cite{p2},
\begin{eqnarray}
 \overline C_{\mu\nu\sigma}{}^{\rho} =0\;.
\end{eqnarray}
It thus remains to determine the constraint equations for $\xi_{\mu\nu\sigma}$.
In adapted null coordinates the independent components of the Cotton tensor are
\begin{eqnarray*}
\overline  \xi_{00A}\;, \quad \overline \xi_{01A}\;, \quad \overline \xi_{11A}\;,
\quad \overline \xi_{A0B}\;, \quad \overline \xi_{A1B}\;, \quad \overline \xi_{ABC}\;.
\end{eqnarray*}
We have
\begin{eqnarray}
 \overline \xi_{01A} &=& 2(\overline{ \nabla_A L_{01}} - \overline{\nabla_1  L_{0A}})
 \label{Wcotton1}
  \;,
 \\
 \overline\xi_{11A} &=&  2(\overline{\nabla_A L_{11}} -  \overline{\nabla_1 L_{1A}})
  \label{Wcotton2}
  \,,
 \\
 \overline\xi_{A1B} &=& 2( \overline{\nabla_B L_{1A}} - \overline{\nabla_1 L_{AB}})
  \label{Wcotton3}
  \;,
 \\
 \overline\xi_{ABC} &=& 2( \overline{\nabla_C L_{AB}} -  \overline{\nabla_B L_{AC}})
  \label{Wcotton4}
  \;,
\end{eqnarray}
and no transverse derivatives of $L_{\mu\nu}$ are involved.
The remaining components follow from \eq{Wconf2*},
\begin{eqnarray}
 \overline\xi_{00A} &=& 2\nu^0\overline{\partial_0\Theta} \,\overline d_{010A}
  \label{Wcotton5}
  \;,
  \\
 \overline\xi_{A0B} &=& 2\nu^0\overline{\partial_0\Theta} \,\overline d_{0BA1}
 \label{Wcotton6}
  \;.
\end{eqnarray}

\subsubsection*{$(R=0,\overline s=-2,\kappa=0,\hat g = \eta)$-wave-map gauge}

To make computations easier we restrict attention to the  $(R=0, \overline s=-2, \kappa=0,\hat g=\eta)$-wave-map gauge.
Henceforth all equalities are meant to hold in this particular gauge.
As free initial data we take the  $s$-trace-free tensor $\omega_{AB}=\breve{\ol L}_{AB}$.
The constraint equations for  $\mathring g_{\mu\nu}$, $\mathring L_{\mu\nu}$ and  $\mathring C_{\mu\nu\sigma\rho}$ read (cf.\  \eq{constraint_g}-\eq{relation_lambda_omega})
\begin{eqnarray}
 &\mathring g_{\mu\nu} = \eta_{\mu\nu}\;, \enspace\,  \mathring C_{\mu\nu\sigma\rho}=0 \;, &
\label{W_constrainteqn_g}
\\
&\mathring L_{1\mu}=0\;,\enspace\,
 \mathring L_{0A} = \frac{1}{2}\tilde\nabla^B\lambda_{AB} \;, \enspace\,
  \mathring g^{AB} \mathring  L_{AB} = 0 \;,
 \label{relation_lambdaAB_omegaAB}&
\\
   &4(\partial_1+ r^{-1})\mathring L_{00}  =  \lambda^{AB} \omega_{AB} -4r \rho - 2\tilde \nabla^A \mathring L_{0A}
\;,&
\label{W_constrainteqn_L00}
\end{eqnarray}
where
\begin{eqnarray}
  (\partial_1- r^{-1})\lambda_{AB} &=& - 2 \omega_{AB}
 \;,
\\
 (\partial_1+\ 3r^{-1}) \rho &=& \frac{1}{2}r^{-1}\tilde\nabla^A \partial_1\mathring L_{0A} - \frac{1}{4}\lambda^{AB}\partial_1(r^{-1}\omega _{AB}) \;.
\label{W_constraint_equation_rho}
\end{eqnarray}
The relevant boundary conditions are
\begin{equation}
\mathring L_{00}= O(1)\;, \quad  \lambda_{AB}=O(r^3)\;, \quad \rho=O(1)
\;.
\label{W_boundary_values1}
\end{equation}
The equations \eq{Wcotton1}-\eq{Wcotton6} yield
\begin{eqnarray}
 \mathring \xi_{01A} &=& -2\partial_1 \mathring L_{0A}
 \,= \,\mathring g^{BC}\mathring \xi_{BAC}
  \label{constr_xi01A}
  \;,
 \\
 \mathring\xi_{11A} &=& 0
   \label{constr_xi11A}
  \;,
 \\
 \mathring\xi_{A1B} &=& -2r\partial_1( r^{-1}\omega_{AB})
   \label{constr_xiA1B}
  \;,
 \\
 \mathring\xi_{ABC} &=& 4\tilde\nabla_{[C}\omega_{B]A} - 4r^{-1}\mathring g_{A[B}\mathring L_{C]0}
  \label{constr_xiABC}
   \;,
\\
 \mathring\xi_{00A} &=&-4r\mathring d_{010A}
  \;, \text{ i.e.}
 \label{constr_xi00A_d}
\\
\partial_1 \mathring\xi_{00A} &=&\tilde\nabla^B(\lambda_{[A}{}^C\omega_{B]C})
- 2\tilde\nabla^B\tilde\nabla_{[A}\mathring L_{B]0} +\frac{1}{2}\tilde\nabla^B\mathring \xi_{A1B}
 \nonumber
\\
&& - 2 r\tilde\nabla_A\rho
 + r^{-1}\mathring \xi_{01A}  + \lambda_A{}^B\mathring \xi_{01B} \;,
  \label{constr_xi00A}
\\
 \mathring\xi_{A0B} &=& 2r\mathring g_{AB}\mathring d_{0101}  -2r \mathring d_{1A1B}  -  2  r \mathring d_{01AB}
 \nonumber
\\
  &=&  \lambda_{[A}{}^C\omega_{B]C} - 2\tilde\nabla_{[A}\mathring L_{B]0} + 2r\rho \,\mathring g_{AB} - \frac{1}{2}\mathring\xi_{A1B}
  \;,
    \label{constr_xiA0B}
\end{eqnarray}
with boundary condition
\begin{equation}
\mathring \xi_{00A}=O(r)
\;.
 \label{W_boundary_values2}
\end{equation}
We employed the $d_{\mu\nu\sigma\rho}$-constraints \eq{initial_d1A1B_spec_gauge}-\eq{initial_d010A_spec_gauge} to derive the expressions
for $\mathring \xi_{00A}$ and $\mathring \xi_{A0B}$ (recall that $\overline {\partial_0\Theta}=-2r$, cf.\ Section~\ref{kappa0_wavemap}).

Using the constraints for $\xi_{\mu\nu\sigma}$, one may rewrite the equation for $\rho$,
\begin{eqnarray}
8 (\partial_1+\ 3r^{-1})\rho &=& r^{-1}\lambda^{AB}\mathring \xi_{A1B} - 2r^{-1}\tilde\nabla^A\mathring\xi_{01A}
\;.
\end{eqnarray}
Note that for the Cotton tensor to be regular at $i^-$ the initial data necessarily need to satisfy $\omega_{AB} = O(r^3)$, cf.\ \eq{constr_xiA1B}.

\subsection{Applicability of Theorem~\ref{Winter-thm} on the $C_{i^-}$-cone}
\label{Wsec_applicability}

Let us assume we have been given initial data $\omega_{AB}=O(r^3)$ on $C_{i^-}$,
such that a smooth solution of the CWE2 exists in some neighbourhood to the future of $i^-$,
smoothly extendable through $C_{i^-}$,
 which induces
the prescribed data $\mathring \Theta=0$, $\mathring s=-2$, $\mathring C_{\mu\nu\sigma}{}^{\rho}=0$, $\mathring g_{\mu\nu}=\eta_{\mu\nu}$,
$\mathring L_{\mu\nu}$ and $\mathring \xi_{\mu\nu\sigma}$ on $C_{i^-}$, the last two fields determined
from the hierarchical system of constraint equations \eq{W_constrainteqn_g}-\eq{constr_xiA0B}.
We want to investigate to what extent the hypotheses of Theorem~\ref{Winter-thm} are satisfied under these assumptions.

For convenience let us recall the CWE2 in an $(R=0)$-gauge,
\begin{eqnarray}
  \Box^{(H)}_{ g} L_{\mu\nu}&=&  4 L_{\mu\kappa} L_{\nu}{}^{\kappa} -  g_{\mu\nu}| L|^2
  - 2C_{\mu\sigma\nu}{}^{\rho}  L_{\rho}{}^{\sigma}
   \;,
 \label{Wweylwave1*}
\\
 \Box_{ g}  s  &=& \Theta| L|^2
  \;,
 \label{Wweylwave2*}
\\
  \Box_{ g}\Theta &=& 4 s
  \;,
 \label{Wweylwave3*}
\\
 \Box^{(H)}_g C_{\mu\nu\sigma\rho} &=&    C_{\mu\nu\alpha}{}^{ \kappa} C_{\sigma\rho\kappa}{}^{\alpha}  -4C_{\sigma\kappa[\mu}{}^{\alpha}  C_{\nu]\alpha \rho}{}^{ \kappa}     - 2C_{\sigma\rho\kappa[\mu} L_{\nu]}{}^{ \kappa}    - 2 C_{\mu\nu\kappa[\sigma} L_{\rho]}{}^{\kappa}
 \nonumber
\\
 &&
   -\nabla_{[\sigma}\xi_{\rho]\mu\nu}  -\nabla_{[\mu}\xi_{\nu]\sigma\rho}
  \;,
 \label{Wweylwave4*}
\\
 \Box^{(H)}_{ g}\xi_{\mu\nu\sigma} &=& 4 \xi_{\kappa\alpha[\nu} C_{\sigma]}{}^{\alpha}{}_{\mu}{}^{\kappa}
 + C_{\nu\sigma\alpha}{}^{\kappa}\xi_{\mu\kappa}{}^{\alpha}
 -  4 \xi_{\mu\kappa[\nu}L_{\sigma]}{}^{\kappa}  + 6g_{\mu[\nu} \xi^{\kappa}{}_{\sigma\alpha]}L_{\kappa}{}^{\alpha}
 \nonumber
\\
 && + 8L_{\alpha\kappa} \nabla_{[\nu}C_{\sigma]}{}^{\alpha}{}_{\mu}{}^{\kappa}
  \;,
 \label{Wweylwave5*}
\\
  R^{(H)}_{\mu\nu}[g] &=& 2L_{\mu\nu}
 \label{Wweylwave6*}
 \;.
\end{eqnarray}

\subsubsection*{Vanishing of $\overline H^{\sigma}$}

This can be shown in exactly the same manner as for the CWE, Section~\ref{subsec_vanish_H}.

\subsubsection*{Vanishing of $\overline {\partial_0 H^{\sigma}}$ and $\overline \zeta_{\mu}$}

We know that the wave-gauge vector fulfills the wave equation \eq{Wwave_HWeyl} with $R=0$,
\begin{equation}
 \nabla^{\nu} \hat\nabla_{\nu} H^{ \alpha}+2g^{\mu\alpha} \nabla_{[\sigma} \hat\nabla_{\mu]} H^{ \sigma}
 + 4\nabla^{\nu} L_{\nu}{}^{\alpha}-2\nabla^{\alpha}L =0
 \label{Wwave_H+}
 \;.
\end{equation}
As for the CWE the vanishing of $\overline{\partial_0 H^0}$ and $\overline{\partial_0 H^A}$ follows from \eq{Wwave_H+} with $\alpha=0,A$
by taking regularity at the vertex into account.
Taking the trace of  the restriction of \eq{Wweylwave6*}
to the initial surface
then shows that the curvature scalar vanishes initially,
$  \overline  R_g =0$.

The $\alpha=1$-component of \eq{Wwave_H+} can be written  as
\begin{eqnarray}
 (\partial_1 + r^{-1})\overline{\partial_0 H^1}   + 2(\partial_1 + \tau)\overline  L_{00}  + 2\tilde\nabla^A\overline L_{0A}
 - \overline g^{AB}\overline{\partial_0 L_{AB}}=0
 \label{WODE_transH1}
 \;,
\end{eqnarray}
where we used that $\overline{\partial_0 L_{11}}=0$, cf.\ \eq{vanishing_transL11}, and that
\begin{eqnarray}
  \overline{\partial_{0}L} \,=\, \overline{\partial_0( g^{\mu\nu}L_{\mu\nu})} \,=\,  2\overline{\partial_0 L_{01}} + \overline g^{AB}\overline{\partial_0 L_{AB}}
  -   \lambda^{AB}\overline L_{AB}
 \label{Weqn_transtrace}
 \;.
\end{eqnarray}
We observe that, although we do not know yet whether $\overline {\partial_0 L}$ vanishes, equation \eq{WODE_transH1} coincides with \eq{ODE_transH1} of Section~\ref{sect_vanishing_nablaH}, and thus
the vanishing of $\overline{\partial_0 H^1}$ can be established by proceeding in exactly the same manner as for the CWE;
 one first shows that the source terms in \eq{WODE_transH1} vanishes and then utilizes regularity to deduce the desired result.
Altogether we have
\begin{equation}
 \overline{\nabla_{\mu}H^{\nu}}=0
 \label{Wvanishing_transH}
 \;.
\end{equation}


Inserting the definition \eq{Wwave_H*Weyl} of $\zeta_{\mu}$ into \eq{Wwave_H+}  yields
\begin{equation}
 \overline \zeta_{\mu} =0
 \label{Wvanishing_zeta}
 \;.
\end{equation}

\subsubsection*{Vanishing of $\overline{\xi_{\rho\nu\mu}\nabla^{\rho}\Theta}$}

Since $\overline \Theta=0$, it suffices to show that $\overline\xi_{1\mu\nu}=0$. Invoking the symmetries of $\overline \xi_{\mu\nu\sigma}$, we deduce from the constraint equations
\eq{constr_xi01A}-\eq{constr_xiA0B} that
\begin{eqnarray*}
 \overline \xi_{101} &\equiv& \overline g^{AB} \overline \xi_{A1B} =0
 \;,
\\
 \overline \xi_{10A} &\equiv& \overline g^{BC} \overline \xi_{BAC} -\overline \xi_{01A} - \overline \xi_{11A} =0
 \;,
\\
 \overline \xi_{11A} &=& 0
 \;,
\\
 \overline \xi_{1AB} &\equiv& -2\overline \xi_{[AB]1} =0
\;.
\end{eqnarray*}

\subsubsection*{Vanishing of $\overline{\zeta_{\mu\nu\sigma} -\xi_{\mu\nu\sigma}}$}

We need to show that
\begin{eqnarray*}
 \overline \xi_{\mu\nu\sigma} = \overline \zeta_{\mu\nu\sigma} \equiv 4\overline{\nabla_{[\sigma} L_{\nu]\mu}}
\;.
\end{eqnarray*}
 For the components $\overline \xi_{01A}$,  $\overline \xi_{11A}$,  $\overline \xi_{A1B}$
 and  $\overline \xi_{ABC}$ this follows straightforwardly  from the constraint equations
\eq{constr_xi01A}-\eq{constr_xiABC}.
The remaining independent components $\overline\xi_{00A}$ and $\overline\xi_{A0B}$ are determined by \eq{constr_xi00A} and \eq{constr_xiA0B},
respectively.
We observe that $\overline\zeta_{00A} - \overline\xi_{00A}$
and $\overline\zeta_{A0B}-\overline \xi_{A0B}$ satisfy the same equations as the components $\overline\varkappa_{00A}/2$ and $\overline\varkappa_{A0B}/2$
in Section~\ref{sect_vanishing_xi}, so one just needs to repeat the computations carried out there to accomplish the proof that $\overline \xi_{\mu\nu\sigma} = \overline \zeta_{\mu\nu\sigma}$.

\subsubsection*{Vanishing of $\overline W_{\mu\nu\sigma}{}^{\rho}$}

In the same manner as for the CWE, Section~\ref{sect_vanishing_W}, one shows that the Weyl tensor $ W_{\mu\nu\sigma}{}^{\rho}$ of $g_{\mu\nu}$ vanishes
initially.

\subsubsection*{Validity of the equations \eq{Wconf2**}-\eq{Wconf5**} on $C_{i^-}$}

The validity of  \eq{Wconf2**} on $C_{i^-}$ follows from the vanishing of $\overline \Theta$ and $\overline C_{\mu\nu\sigma\rho}$.
The computation which shows the vanishing of \eq{Wconf3**}-\eq{Wconf5**} is identical to the one we did for the CWE, cf.\ Sections~\ref{sect_validity}-\ref{sect_vanishing_Xi}.

\subsection{Main result concerning the CWE2}

We end up with the following result, which is in close analogy with Theorem~\ref{main_result}:
\begin{theorem}
\label{main_result2}
   Let us suppose we have been given a smooth one-parameter family of $s$-traceless tensors $\omega_{AB}(r,x^A)=O(r^3)$ on the 2-sphere, where $s$ denotes the standard metric.
A smooth solution
$(g_{\mu\nu},  L_{\mu\nu}, C_{\mu\nu\sigma}{}^{\rho}, \xi_{\mu\nu\sigma}, \Theta,  s)$
of the CWE2 \eq{Wweylwave1*}-\eq{Wweylwave6*} to the future of $C_{i^-}$, smoothly extendable through $C_{i^-}$,
 with initial data
$(\mathring g_{\mu\nu}, \mathring L_{\mu\nu}, \mathring C_{\mu\nu\sigma}{}^{\rho}, \mathring\xi_{\mu\nu\sigma},\mathring \Theta=0, \mathring s= -2 )$,
where
$ \breve{\mathring L}_{AB}=\omega_{AB}$,
provides a solution
$$( g_{\mu\nu},  L_{\mu\nu},  d_{\mu\nu\sigma}{}^{\rho} = \Theta^{-1}C_{\mu\nu\sigma}{}^{\rho}, \Theta, s )$$
of the MCFE \eq{Wconf1*}-\eq{Wconf6*} with $\lambda=0$, smoothly continued across $\scri^-$,
 in a neighbourhood of $i^-$ intersected with $J^+(i^-)$, with the possible exception of $i^-$ itself,
 in the
$$(R=0,\overline  s=-2,\kappa=0,\hat g_{\mu\nu} = \eta_{\mu\nu} ) \text{-wave-map gauge}$$
if and only if the initial data have their usual symmetry properties and satisfy the constraint equations \eq{W_constrainteqn_g}-\eq{W_constraint_equation_rho}
and \eq{constr_xi01A}-\eq{constr_xiA0B} with boundary conditions \eq{W_boundary_values1} and \eq{W_boundary_values2},%
\footnote{Note that if $s|_{i^-}<0$ then $\Theta$ is positive in the interior of $C_{i^-}$ and sufficiently close to $i^-$ and  $\mathrm{d}\Theta\ne 0 $ on $C_{i^-}\setminus \{i^-\}$
near  $i^-$, so a  solution of the CWE2 provides a solution of the  MCFE  in $J^+(i^-)\setminus \{i^-\}$ sufficiently close to $i^-$.}
\end{theorem}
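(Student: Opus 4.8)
The plan is to invoke the machinery already assembled in the excerpt and simply verify that its hypotheses are met, so that Theorem~\ref{Winter-thm} applies with $S=C_{i^-}$. The statement of Theorem~\ref{main_result2} is an ``if and only if'', and the two directions are handled quite asymmetrically. For the ``if'' direction, the strategy is exactly parallel to the proof of Theorem~\ref{main_result}: assuming the initial data solve the constraint equations \eq{W_constrainteqn_g}--\eq{W_constraint_equation_rho} together with \eq{constr_xi01A}--\eq{constr_xiA0B} and the boundary conditions \eq{W_boundary_values1}--\eq{W_boundary_values2}, I would cite the subsections of Section~\ref{Wsec_applicability} which establish, one by one, that conditions 1--6 of Theorem~\ref{Winter-thm} hold on $C_{i^-}$. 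Concretely: the vanishing of $\overline H^{\sigma}$ (condition~6, first part) is proved as in Section~\ref{subsec_vanish_H}; the vanishing of $\overline{\nabla_{\mu}H^{\nu}}$ and $\overline\zeta_{\mu}$ (conditions~6 and~4) is the content of the subsection on $\overline{\partial_0 H^{\sigma}}$; the vanishing of $\overline{\xi_{\rho\nu\mu}\nabla^{\rho}\Theta}$ (condition~5) and of $\overline{\zeta_{\mu\nu\sigma}-\xi_{\mu\nu\sigma}}$ (condition~3) are the next two subsections; the vanishing of $\overline W_{\mu\nu\sigma}{}^{\rho}$ (condition~2) follows as in Section~\ref{sect_vanishing_W}; and the validity of \eq{Wconf2**}--\eq{Wconf5**} on $C_{i^-}$ (condition~1) is checked in the final subsection. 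Once all six conditions are verified, Theorem~\ref{Winter-thm}(c) yields that $(g_{\mu\nu}, L_{\mu\nu}, d_{\mu\nu\sigma}{}^{\rho}=\Theta^{-1}C_{\mu\nu\sigma}{}^{\rho}, \Theta, s)$ solves the MCFE \eq{Wconf1*}--\eq{Wconf6*} wherever $\Theta\ne 0$, with $d$ extending smoothly across the set $\{\overline\Theta=0,\ \overline{\mathrm d\Theta}\ne 0\}=\Scri^-$.

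The second part of the argument is geometric rather than computational: I must pin down where $\Theta\ne 0$ and justify the continuation ``across $\scri^-$, in a neighbourhood of $i^-$ intersected with $J^+(i^-)$, with the possible exception of $i^-$''. Here I would argue as in the footnote and in the subsidiary-system discussion: since $\overline s=-2<0$ and $s$ is continuous, $s$ stays negative near $i^-$, and then \eq{Wconf3**} (equivalently \eq{mconf3}) together with $\overline\Theta=0$ forces $\Theta$ to be positive in the interior of $C_{i^-}$ sufficiently close to $i^-$, with $\mathrm d\Theta\ne 0$ on $C_{i^-}\setminus\{i^-\}$; this is the same reasoning invoked in the proof of Theorem~\ref{main_result} and in Lemma~\ref{append_lemma_s}. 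On that region $d_{\mu\nu\sigma}{}^{\rho}=\Theta^{-1}C_{\mu\nu\sigma}{}^{\rho}$ is well defined and, because $\overline C_{\mu\nu\sigma}{}^{\rho}=0$ while $\mathrm d\Theta\ne 0$ off the vertex, extends smoothly across $\scri^-$. The vertex $i^-$ must be excluded because there $\mathrm d\Theta=0$ and nothing in the hypotheses controls the behaviour of $C_{\mu\nu\sigma}{}^{\rho}/\Theta$ there — this is precisely the point emphasised in the subsidiary-system subsection, and is the genuine difference from the CWE case (where $d$ itself, not $C/\Theta$, is an unknown and is assumed regular at $i^-$).

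For the ``only if'' direction I would argue as in Theorem~\ref{main_result}: if the given solution of the CWE2 is in fact an MCFE solution in the stated wave-map gauge with $\overline\Theta=0$, then the constraint equations \eq{W_constrainteqn_g}--\eq{constr_xiA0B} with their boundary conditions hold automatically, since these equations are nothing but the MCFE evaluated on $C_{i^-}$ (with the transverse derivatives eliminated) as derived in Section~\ref{Wsec_constraint_equations}. The regularity requirement $\omega_{AB}=O(r^3)$ likewise follows, as noted, from \eq{constr_xiA1B}, i.e.\ from demanding that the Cotton tensor be regular at $i^-$; this should be recorded, perhaps as the remark accompanying the theorem.

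The main obstacle is not any single hard calculation — all the serious PDE work is already done in Sections~\ref{Wreduced_cwe} and~\ref{Wsec_applicability} — but rather the careful bookkeeping needed to confirm that the six hypotheses of Theorem~\ref{Winter-thm} are \emph{exactly} what the cited subsections establish, and in particular that the change of variables from $d_{\mu\nu\sigma}{}^{\rho}$ to the pair $(C_{\mu\nu\sigma}{}^{\rho},\xi_{\mu\nu\sigma})$ does not introduce a gap at the vertex. The delicate point is that, unlike in the CWE case where Lemma~\ref{lemma_properties_L} forces $L=\tfrac16 R$ directly, here $L-\tfrac16 R$ is itself one of the subsidiary fields governed by the homogeneous wave system \eq{Wtr_L}, so its vanishing — and hence that $L_{\mu\nu}$ is genuinely the Schouten tensor — rests on the full uniqueness argument of Proposition~\ref{Wgauge_consistency}. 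I would therefore take care to invoke that proposition before appealing to Theorem~\ref{Winter-thm}, ensuring the logical chain $\textrm{constraints}\Rightarrow\textrm{conditions 1--6}\Rightarrow\textrm{Prop.~\ref{Wgauge_consistency} and Thm.~\ref{Winter-thm}}\Rightarrow\textrm{MCFE}$ is complete, and only then address the geometric continuation statement about $\Theta$ and $\scri^-$.
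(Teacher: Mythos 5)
Your proposal is correct and follows essentially the same route as the paper: the theorem is obtained by checking, subsection by subsection in Section~\ref{Wsec_applicability}, that the constraint equations force conditions 1--6 of Theorem~\ref{Winter-thm} (largely by reduction to the identical CWE computations), then invoking that theorem, with the sign of $\overline s$ and \eq{Wconf3**} supplying the positivity of $\Theta$ and the smooth continuation of $\Theta^{-1}C_{\mu\nu\sigma}{}^{\rho}$ across $\scri^-$ away from the vertex, and the ``only if'' direction following from the derivation of the constraints. Your added emphasis on the $L-\tfrac16R$ subtlety and the exclusion of $i^-$ is consistent with the paper's own remarks and introduces no divergence in method.
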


\begin{remark}
Note that regularity for the Cotton tensor  implies that the initial data necessarily need to satisfy $\omega_{AB}(r,x^A)=O(r^3)$, cf.\ equation \eq{constr_xiA1B}.
\end{remark}

\section{Conclusions and outlook}
\label{sec_conclusions}

Let us finish by briefly comparing the two systems of wave equations, CWE and CWE2, which we have studied here, and by summarizing  the results
we have established for them.

\subsection{Comparison of both systems CWE \& CWE2}
\label{comparison}

It might be advantageous in certain situations that the Schouten, Weyl and Cotton tensor, which appear in the CWE2-system,  can be directly expressed in terms of the metric. 
In contrast, the rescaled Weyl tensor, which is an unknown of the CWE, can be defined on $\scri$ in terms of the metric and the conformal factor only via a limiting process from the inside.

Once a smooth solution of the CWE has been constructed
 (we think of a characteristic Cauchy problem with data on $C_{i^-}$),
 it is, as a matter of course, known that the rescaled Weyl tensor is regular at $i^-$.
Since both, $\Theta$ and $\mathrm{d}\Theta$, vanish at $i^-$ the same conclusion cannot be straightforwardly drawn
for a solution of CWE2,
even if one takes initial data $\omega_{AB}=O(r^4)$.
Note for this  that the constraint equations for CWE2 are somewhat ``weaker'' than the constraints for the CWE involving the rescaled
Weyl tensor, which is due to the fact that the Cotton tensor has less independent components than the rescaled Weyl tensor. It is the
$\breve{ d}_{0A0B}$-constraint which has no equivalent in the CWE2-system.
Thus it seems to be plausible that the conclusions are weaker, too. One has no control how  $C_{\mu\nu\sigma\rho}/\Theta$ behaves near the vertex.
It  seems to be hopeless to catch the behaviour of $d_{\mu\nu\sigma}{}^{\rho}$ when approaching $i^-$
 in terms of the initial data on $C_{i^-}$.

However, this can be seen as an advantage as well, for there seems to be no reason why the rescaled Weyl tensor should be regular at $i^-$.
It might be more sensible
to assume just the unphysical metric to be regular there.
Note, however,  that for analytic data the rescaled Weyl tensor will be regular at $i^-$ \cite{F10}, while for smooth data this is an open issue.
The CWE2 might be predestined to construct solutions of the Einstein equations with a rescaled Weyl tensor which cannot be extended across $i^-$, supposing of course that such solutions do exist at all.
In fact, we have seen that any smooth solution of the CWE (supplemented by the constraint equations) necessarily requires initial data $\omega_{AB}=O(r^4)$,
while for the CWE2 we just needed to require $\omega_{AB}=O(r^3)$.
So if one is able to construct a solution of the CWE2 from free initial data $\omega_{AB}$ which are properly $O(r^3)$, the corresponding solution of the MCFE will lead to a rescaled Weyl tensor which could not be regular at $i^-$.

\subsection{Summary and outlook}

Both  CWE and  CWE2 have been extracted from the MCFE by imposing a generalized wave-map gauge condition.
Similar to Friedrich's reduced conformal field equations, they provide, in $3+1$ dimensions,  a well behaved system of evolution equations.
The main object of this paper was to investigate the issue under which conditions a solution of the CWE/CWE2 is also a solution of the MCFE.
Since, roughly speaking, the CWE/CWE2 have been derived from the MCFE by differentiation, one needs to make sure, on characteristic initial surfaces, that the MCFE are initially satisfied, as made rigorous by   Theorems~\ref{inter-thm} and \ref{Winter-thm}.

One would like to construct the initial data for the CWE/CWE2 in such a way, that all the hypothesis in these theorems are fulfilled.
The expectation is that this is the case whenever the data are constructed from suitable  free ``reduced'' data by solving a set of constraint equations induced by the MCFE on the initial surface, which is a hierarchical system of algebraic equations
and ODEs, as typical for characteristic initial value problems for Einstein's vacuum field equations.
In this work, we have restricted attention  to the $C_{i^-}$-cone, which requires $\lambda=0$, and, for computational purposes, to a specific gauge, and showed that this is indeed the case, cf.\   Theorems~\ref{main_result} and \ref{main_result2}.

Analogous results should be expected to hold for e.g.\ a light-cone for $\lambda\geq 0$ whose vertex is located at $\scri^-$, or for two transversally intersecting
null hypersurfaces, one of which belongs to $\scri^-$ for $\lambda=0$, or where the intersection manifold is located at $\scri^-$ for $\lambda >0$,
and also for such surfaces with a
vertex, or intersection manifold, located in the physical space-time.
Furthermore,  any  generalized wave-map gauge with sufficiently well-behaved gauge functions should lead to the same conclusions.
We will not work out the details here.
 It should also be clear that  results similar to  Theorems~\ref{main_result} and \ref{main_result2} can be established with initial data of finite differentiability.

The equivalence issue between CWE/CWE2 and MCFE is also of relevance for spacelike Cauchy problems.
This  has been analysed in \cite{ttp2}. It is shown there that, roughly speaking, a solution of the CWE is a solution
of the MCFE if the MCFE and their transverse derivatives are satisfied on the initial surface.
As in the characteristic case, it should be expected that this can be guaranteed whenever the initial data are constructed as solutions of an appropriate set of constraint equations. 
In \cite{ttp2} this has been proved to be the case if the initial surface is a spacelike $\scri^-$ (here one needs to assume $\lambda>0$).


\vspace{1.2em}

\noindent {\textbf {Acknowledgements}}
I am grateful to  my advisor Piotr T.\ Chru\'sciel for numerous stimulating discussions and suggestions, and for reading a first draft of this article.
Moreover, it is a  pleasure to thank  Helmut Friedrich for his valuable input to rewrite the conformal field equations as a system of wave equations.
This work was supported in part by the  Austrian Science Fund (FWF) P 24170-N16.

\appendix

\section{Cone-smoothness and proof of Lemma~\ref{extension_s}}
\label{cone_smoothness}

In order to prove Lemma~\ref{extension_s} we need some facts about \textit{cone-smooth} functions.
Therefore let us briefly review the notion of cone-smoothness as well as some basic properties of cone-smooth functions.
For the details we refer the reader to~\cite{ChJezierskiCIVP,C1}.

We denote by $\{y^0\equiv t,y^i\}$ coordinates in a  $4$-dimensional spacetime for which
\begin{equation*}
 C_O := \Big\{ y^{\mu} \in \mathbb{R}^{4} : y^0= \sqrt{\sum_i (y^i)^2} \Big\}
  \;.
\end{equation*}
is the light-cone emanating from a point $O$. Such coordinates exist at least sufficiently close to the vertex.
Adapted null coordinates are denoted by $\{x^0\equiv u, x^1\equiv r, x^A\}$.
Both coordinate systems are related via a transformation of the form (cf.\ \cite{ChJezierskiCIVP}),
\begin{eqnarray*}
 y^0 = x^1 - x^0\;, \quad y^i=x^1\Theta^i(x^A) \quad \text{with} \quad \sum_{i=1}^3[\Theta^i(x^A)]^2=1\;.
\end{eqnarray*}

\begin{definition}[\cite{ChJezierskiCIVP}]
 A function $\varphi$ defined on $C_O$ is said to belong to $\mycal C^k(C_O)$, $k\in\mathbb{N}\cup\{\infty\}$, if it can be written as $\hat\varphi+ r\check\varphi$ with
 $\hat\varphi$ and $\check\varphi$ being $\mycal{C}^k$-functions of $y^i$. If $k=\infty$ the function $\varphi$ is called cone-smooth.
\end{definition}

\begin{remark}
 We are particularly interested in the cone-smooth case $k=\infty$.
\end{remark}

\begin{proposition}[\cite{ChJezierskiCIVP}]
 \label{prop_cone_smooth}
 Let  $\varphi: C_O \rightarrow \mathbb{R}$  be a function and let $k\in\mathbb{N}\cup\{\infty\}$. The following statements are equivalent:
 \begin{enumerate}
 \item[(i)] The function $\varphi$ can be extended to a $\mycal{C}^k$ function on $\mathbb{R}^{4}$.
 \item[(ii)] $\varphi\in \mycal{C}^k(C_O)$.
 \item[(iii)] The function $\varphi$ admits an expansion of the form ($ \varphi_{i_1 \dots i_p}, \varphi'_{i_1 \dots i_{p-1}}\in \mathbb{R}$)
  \[ \varphi = \sum_{p=0}^{k} \varphi_p r^p + o_k(r^k) \enspace\text{where} \enspace \varphi_p := \varphi_{i_1 \dots i_p} \Theta^{i_1}\dots \Theta^{i_p} + \varphi'_{i_1 \dots i_{p-1}} \Theta^{i_1}\dots \Theta^{i_{p-1}}\;. \]
 \end{enumerate}
\end{proposition}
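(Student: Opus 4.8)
The plan is to establish the three equivalences cyclically, exploiting throughout the coordinate relations that hold along the cone. In adapted null coordinates $C_O$ is the locus $\{x^0=0\}$, so that $y^0=r$ and $y^i=r\Theta^i(x^A)$ with $\sum_i(\Theta^i)^2=1$; equivalently $\Theta^i=y^i/r$ on $C_O$. These relations let me pass freely between a function of the spacetime coordinates $y^\mu$, a function of $(r,x^A)$, and the decomposition $\hat\varphi+r\check\varphi$. I would prove the chain $(ii)\Rightarrow(i)\Rightarrow(iii)\Rightarrow(ii)$.

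The implication $(ii)\Rightarrow(i)$ is immediate: given $\varphi=\hat\varphi+r\check\varphi$ with $\hat\varphi,\check\varphi\in\mycal C^k$ functions of $y^i$, I would set
\[
 \Phi(y^0,y^i):=\hat\varphi(y^i)+y^0\,\check\varphi(y^i),
\]
which is $\mycal C^k$ on $\mathbb{R}^{4}$, being built from the $\mycal C^k$ functions $\hat\varphi,\check\varphi$ of $y^i$ and the smooth coordinate $y^0$. Since $y^0=r$ on $C_O$, its restriction to the cone equals $\varphi$, giving the desired $\mycal C^k$ extension.

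For $(i)\Rightarrow(iii)$ I would Taylor-expand an extension $\Phi\in\mycal C^k(\mathbb{R}^4)$ about the vertex,
\[
 \Phi(y)=\sum_{|\alpha|\le k}\frac{\partial^\alpha\Phi(O)}{\alpha!}\,y^\alpha+R_k(y),\qquad R_k(y)=o_k(|y|^k).
\]
Restricting to the cone, each monomial $(y^0)^{\alpha_0}\prod_i(y^i)^{\alpha_i}$ of total degree $p$ becomes $r^p$ times a monomial in the $\Theta^i$ of degree $p-\alpha_0\le p$, while $|y|=r\sqrt2$ turns the remainder into an $o_k(r^k)$ term. Collecting powers of $r$ yields $\varphi=\sum_{p\le k}\varphi_p\,r^p+o_k(r^k)$, where a priori $\varphi_p$ is the restriction to the unit sphere of a polynomial in $\Theta$ of degree $\le p$ containing both parities. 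The essential algebraic step is then the constraint $\sum_i(\Theta^i)^2=1$: multiplying a homogeneous polynomial of degree $d$ by $\sum_i(\Theta^i)^2$ leaves its value on the sphere unchanged while raising its degree by two, so every homogeneous part of $\varphi_p$ can be promoted to degree $p$ (even parity) or degree $p-1$ (odd parity). This collapses $\varphi_p$ to exactly the claimed form $\varphi_{i_1\dots i_p}\Theta^{i_1}\cdots\Theta^{i_p}+\varphi'_{i_1\dots i_{p-1}}\Theta^{i_1}\cdots\Theta^{i_{p-1}}$.

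Finally, $(iii)\Rightarrow(ii)$ reverses this reassembly: using $\Theta^i=y^i/r$ on the cone,
\[
 r^p\,\Theta^{i_1}\cdots\Theta^{i_p}=y^{i_1}\cdots y^{i_p},\qquad r^p\,\Theta^{i_1}\cdots\Theta^{i_{p-1}}=r\,y^{i_1}\cdots y^{i_{p-1}},
\]
so the degree-$p$ parts of the $\varphi_p$ collect into a polynomial in $y^i$ (the candidate $\hat\varphi$) and the degree-$(p-1)$ parts collect into $r$ times a polynomial in $y^i$ (the candidate $\check\varphi$), both manifestly $\mycal C^\infty$ functions of $y^i$. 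The main obstacle, and the only step that is not pure linear algebra, is the remainder: I must show that the $o_k(r^k)$ term can itself be written in the form $\hat\varphi+r\check\varphi$ with genuinely $\mycal C^k$ coefficients of $y^i$, not merely a $\mycal C^0$ estimate. I would handle this by splitting the remainder according to parity in $r$ (equivalently, homogeneity degree in $y$), again via the substitution $y^i=r\Theta^i$, and invoking the quantitative characterization of $o_k$-smallness; for these remainder estimates I would lean on the detailed analysis of \cite{ChJezierskiCIVP,C1}.
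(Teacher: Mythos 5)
First, note that the paper does not prove this proposition: it is imported verbatim from \cite{ChJezierskiCIVP}, so there is no in-paper argument to compare yours against. On its own terms, your implications (ii)$\Rightarrow$(i) and (i)$\Rightarrow$(iii) are essentially correct: the extension $\Phi(y^0,y^i)=\hat\varphi(y^i)+y^0\check\varphi(y^i)$ is the standard one, and the degree-promotion trick using $\sum_i(\Theta^i)^2=1$ is exactly the right way to collapse the restricted Taylor polynomial of degree $\le p$ into the stated form $\varphi_{i_1\dots i_p}\Theta^{i_1}\cdots\Theta^{i_p}+\varphi'_{i_1\dots i_{p-1}}\Theta^{i_1}\cdots\Theta^{i_{p-1}}$.

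The genuine gap is in (iii)$\Rightarrow$(ii), which is where all the analytic content of the proposition sits, and your treatment of it is both incomplete and circular. ``Splitting the remainder according to parity in $r$'' is not an available operation: an arbitrary $o_k(r^k)$ function of $(r,x^A)$ is defined only for $r\ge 0$ and has no even/odd decomposition in $r$ unless one has already extended it across the vertex --- which is precisely what is to be proved. Deferring ``the remainder estimates'' to \cite{ChJezierskiCIVP,C1} is deferring to the very result being established. What is actually needed is, first, the precise definition of $o_k$, which must encode control of derivatives up to order $k$ (a mere pointwise $o(r^k)$ bound would make the implication false; consider remainders of the type $r^{k+1/2}\sin(1/r)$, which are $o(r^k)$ but are not restrictions of $\mycal C^k$ functions), and, second, a Whitney-extension-type argument showing that a function on $C_O$ that is suitably regular away from the vertex and whose $k$-jet at the vertex is prescribed by the $\varphi_p$ extends to a $\mycal C^k$ function of the $y^i$; for $k=\infty$ one must additionally read (iii) as an expansion to every finite order and invoke a Borel-summation step to assemble genuine smooth $\hat\varphi$ and $\check\varphi$. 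None of this is routine, and it is the reason the statement appears in the paper as a citation rather than a remark.
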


\begin{lemma}[\cite{C1}]
\label{lemma_cone_smooth}
  Let $\varphi\in\mycal{C}^k(C_O)$ with $k\in\mathbb{N}\cup\{\infty\}$. Then
 \begin{enumerate}
 \item[(i)]  $ \exp(\varphi)\in \mycal{C}^k(C_O)$,  and
  \item[(ii)] $ r^{-1} \int_0^r  \varphi(\hat r, x^A)\,\mathrm{d}\hat r \in \mycal{C}^k(C_O)$.
 \end{enumerate}
  If, in addition, $\varphi(0)=0$, then
 \begin{enumerate}
 \item[(iii)]  $ \int_0^r \hat r^{-1} \varphi(\hat r, x^A)\,\mathrm{d}\hat r \in \mycal{C}^k(C_O)$.
 \end{enumerate}
\end{lemma}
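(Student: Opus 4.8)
The plan is to prove all three statements by reducing them to the characterization of $\mycal{C}^k(C_O)$ given in Proposition~\ref{prop_cone_smooth}, namely that membership in $\mycal{C}^k(C_O)$ is equivalent to being the restriction to $C_O$ of a $\mycal{C}^k$ function on $\mathbb{R}^4$. Since $\exp$ is a smooth function $\mathbb{R}\to\mathbb{R}$, part (i) is essentially functorial: if $\varphi$ extends to a $\mycal{C}^k$ function $\Phi$ on $\mathbb{R}^4$, then $\exp(\Phi)$ is again $\mycal{C}^k$ on $\mathbb{R}^4$ (composition of $\mycal{C}^k$ maps), and its restriction to $C_O$ is $\exp(\varphi)$, so $\exp(\varphi)\in\mycal{C}^k(C_O)$ by Proposition~\ref{prop_cone_smooth}(i)$\Rightarrow$(ii). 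The parts (ii) and (iii) concerning the integral operators are the substantive content, and I would handle them via the expansion characterization (iii) of Proposition~\ref{prop_cone_smooth}.

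\textbf{The integral operators via expansions.}
For part (ii), I would start from the expansion $\varphi=\sum_{p=0}^{k}\varphi_p r^p + o_k(r^k)$ provided by Proposition~\ref{prop_cone_smooth}(iii), where each $\varphi_p=\varphi_{i_1\dots i_p}\Theta^{i_1}\cdots\Theta^{i_p}+\varphi'_{i_1\dots i_{p-1}}\Theta^{i_1}\cdots\Theta^{i_{p-1}}$ is built from the same angular building blocks $\Theta^i(x^A)$. The key point is that these operators act only in the $r$-variable while leaving the $x^A$-dependence, hence the angular structure of each coefficient, intact. Integrating term by term,
\begin{equation*}
 r^{-1}\int_0^r \varphi(\hat r,x^A)\,\mathrm{d}\hat r = \sum_{p=0}^{k}\frac{\varphi_p}{p+1}r^p + o_k(r^k)\;,
\end{equation*}
and since the new coefficient of $r^p$ is simply $\varphi_p/(p+1)$, which has exactly the same admissible angular form as $\varphi_p$, the right-hand side is again of the form required by Proposition~\ref{prop_cone_smooth}(iii). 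The careful part is to verify that the error term remains $o_k(r^k)$ after applying $r^{-1}\int_0^r$; this follows from the elementary estimate that $r^{-1}\int_0^r o_k(\hat r^k)\,\mathrm{d}\hat r = o_k(r^k)$, together with the corresponding control on the $r$-derivatives up to order $k$, which I would obtain by differentiating under the integral sign and using the remainder bounds. For part (iii), the hypothesis $\varphi(0)=0$ means $\varphi_0=0$, so the expansion begins at $p=1$; then
\begin{equation*}
 \int_0^r \hat r^{-1}\varphi(\hat r,x^A)\,\mathrm{d}\hat r = \sum_{p=1}^{k}\frac{\varphi_p}{p}r^p + o_k(r^k)\;,
\end{equation*}
and the same reasoning as for (ii) applies, the vanishing of $\varphi_0$ being precisely what is needed to keep the integrand integrable at $r=0$ and the output coefficients well-defined.

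\textbf{Main obstacle.}
The principal difficulty is not the formal term-by-term manipulation but the justification that the remainder and all of its relevant $y^i$-derivatives remain controlled, so that the resulting function genuinely extends to a $\mycal{C}^k$ function on $\mathbb{R}^4$ rather than merely admitting a finite Taylor-type expansion. In other words, I must make sure that the notation $o_k(r^k)$ carries its full meaning (vanishing together with its derivatives up to order $k$) through the operators $r^{-1}\int_0^r$ and $\int_0^r\hat r^{-1}$. I expect the cleanest route is to work with the extended function $\Phi$ on $\mathbb{R}^4$ supplied by Proposition~\ref{prop_cone_smooth}(i), express the integral operators as operators acting on $\Phi$ in the $(r,x^A)$ variables, and then transfer back to the $y^i$-coordinates through the stated change of variables $y^0=x^1-x^0$, $y^i=x^1\Theta^i(x^A)$; the smoothness of this transformation away from and at the vertex (with the homogeneity built into $\Theta^i$) is what ultimately guarantees the $\mycal{C}^k$-extendability. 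Since the detailed verification is routine once the expansion bookkeeping is set up, I would present it compactly, citing~\cite{C1,ChJezierskiCIVP} for the underlying regularity estimates.
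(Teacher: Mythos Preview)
The paper does not prove this lemma; it is stated with a citation to~\cite{C1} and used as a black box in the proof of Lemma~\ref{append_lemma_s}. So there is no ``paper's own proof'' to compare against here.

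That said, your approach is sound. Part~(i) via the extension characterization is immediate, as you note. For parts~(ii) and~(iii), the term-by-term integration of the expansion is correct and preserves the admissible angular structure of the coefficients. Your identification of the remainder control as the nontrivial point is accurate. One remark that may simplify your treatment of~(ii): after the substitution $\hat r = rs$ the operator becomes $\int_0^1 \varphi(rs,x^A)\,\mathrm{d}s$, and if $\Phi$ is a $\mycal{C}^k$ extension of $\varphi$ to $\mathbb{R}^4$ then $\Psi(y^\mu):=\int_0^1 \Phi(sy^\mu)\,\mathrm{d}s$ is manifestly $\mycal{C}^k$ on $\mathbb{R}^4$ and restricts to the desired function on $C_O$, since on the cone $(y^0,y^i)=(r,r\Theta^i)$. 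This bypasses the remainder bookkeeping entirely for~(ii). A similar scaling trick works for~(iii) once you use $\varphi(0)=0$ to factor out one power of the radial variable, though there one must be slightly more careful not to lose a derivative.
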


\begin{lemma}
\label{append_lemma_s}
Consider any smooth solution of the MCFE in $4$ spacetime dimensions
 in some neighbourhood $\mycal U$ to the future  of $i^-$,
smoothly extendable through $C_{i^-}$,
 which satisfies
 \begin{equation}
 s_{i^-} := s|_{i^-} \ne 0\;.
\end{equation}
 Let $\rho$ be any
  function on $\mycal U\cap \partial J^+(i^-)$ with $\rho_{i^-}:=\rho|_{i^-}\ne 0$
and $\lim_{r\rightarrow 0}\partial_r\rho =0$
which can be extended to a smooth spacetime function.
 Then the equation
\begin{eqnarray}
 \overline{\nabla_{\mu}\Theta\nabla^{\mu} \mathring \phi + \mathring\phi\, s - \mathring \phi^{2}\rho}=0
 \label{B_ODE_omega_rho_pre}
\end{eqnarray}
is a Fuchsian
 ODE for $\mathring \phi$ and any solution satisfies (set $ \mathring\phi_{i^-} := \mathring \phi|_{i^-} $)
\begin{equation}
\mathrm{sign}(\mathring\phi_{i^-} ) =  \mathrm{sign}(s_{i^-}) \mathrm{sign}(\rho_{i^-})
\end{equation}
(in particular $\mathring\phi_{i^-}\ne 0$), and is the restriction to $C_{i^-}$ of a smooth spacetime function.
\end{lemma}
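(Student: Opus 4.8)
The plan is to reduce \eqref{B_ODE_omega_rho_pre} to an explicit first–order ODE along the generators of $C_{i^-}$ and to analyse it as a Fuchsian equation with a regular singular point at the vertex. First I would work in adapted null coordinates and observe that on $C_{i^-}$ one has $\overline\Theta=0$, so that $\overline{\nabla_\mu\Theta}=\overline{\partial_0\Theta}\,\delta_\mu^0$. Since \eqref{conf5} with $\lambda=0$ gives $\overline{\nabla_\mu\Theta\nabla^\mu\Theta}=2\overline\Theta\,\overline s=0$, the covector $\overline{\nabla^\mu\Theta}$ is null \emph{and} tangent to the null hypersurface $\scri^-$, hence proportional to the generator field $\partial_r$; concretely $\overline{\nabla^\mu\Theta}=\nu^0\overline{\partial_0\Theta}\,\delta^\mu_1$, so the (undefined) transverse derivative and all angular derivatives of $\mathring\phi$ drop out. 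Thus \eqref{B_ODE_omega_rho_pre} becomes the genuine ODE $\nu^0\overline{\partial_0\Theta}\,\partial_r\mathring\phi+\overline s\,\mathring\phi-\rho\,\mathring\phi^2=0$. Using \eqref{expression_s} in the form $\overline s/(\nu^0\overline{\partial_0\Theta})=\tau/2$ together with $\tau=2r^{-1}+O(r)$, this reads
\[
 \partial_r\mathring\phi+\tfrac12\tau\,\mathring\phi-\tfrac{\rho}{\overline s}\,\tfrac12\tau\,\mathring\phi^2=0,
\]
i.e.\ $r\partial_r\mathring\phi+\mathring\phi-\tfrac{\rho}{\overline s}\mathring\phi^2+O(r^2)(\cdots)=0$, which exhibits the regular singular point and shows the equation is Fuchsian.

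Next I would extract the vertex value and its sign. Evaluating the Fuchsian equation at $r=0$ gives the indicial relation $\mathring\phi_{i^-}\bigl(1-\tfrac{\rho_{i^-}}{s_{i^-}}\mathring\phi_{i^-}\bigr)=0$, with roots $\mathring\phi_{i^-}=0$ and $\mathring\phi_{i^-}=s_{i^-}/\rho_{i^-}$. Linearising about each root shows that the root $w=0$ has negative indicial exponent $-1$, so the only cone-smooth solution tangent to it is the trivial $\mathring\phi\equiv0$ (discarded, since $\mathring\phi$ is to serve as a conformal gauge factor), whereas the root $w=s_{i^-}/\rho_{i^-}$ has positive indicial exponent $+1$ and therefore carries the admissible, cone-smooth branch (the free constant in the quadrature below gives the non-empty one-parameter solution set). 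For this branch $\mathring\phi_{i^-}\ne0$ and $\mathrm{sign}(\mathring\phi_{i^-})=\mathrm{sign}(s_{i^-})\,\mathrm{sign}(\rho_{i^-})$, which is the asserted sign identity.

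To construct the solution and prove cone-smoothness I would pass to $v:=1/\mathring\phi$, which turns the Bernoulli equation into the \emph{linear} ODE $\partial_r v-\tfrac12\tau\,v=-\tfrac{\rho}{\overline s}\,\tfrac12\tau$. Splitting $\tfrac12\tau=r^{-1}+\tfrac12(\tau-2r^{-1})$ with the remainder cone-smooth and vanishing at $r=0$, the integrating factor is $r^{-1}$ times the exponential of a cone-smooth function (cone-smooth by Lemma~\ref{lemma_cone_smooth}(i)), and $v$ is recovered by a single quadrature. The only threat to cone-smoothness is a logarithm produced when integrating the $r^{-1}$-part of the expansion of $\rho/\overline s$; its coefficient is $c_1:=\partial_r(\rho/\overline s)|_{i^-}=\bigl(s_{i^-}\partial_r\rho|_{i^-}-\rho_{i^-}\partial_r\overline s|_{i^-}\bigr)/s_{i^-}^2$. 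The hypothesis $\lim_{r\to0}\partial_r\rho=0$ kills the first term, and I would show $\partial_r\overline s|_{i^-}=0$ kills the second: since $s$ is a scalar, \eqref{mconf4} gives $\overline{\partial_1 s}=-\overline{L_{1\nu}\nabla^\nu\Theta}=-\overline{L_{11}}\,\nu^0\overline{\partial_0\Theta}$, and as $\overline{L_{11}}=O(1)$ (the $r^{-2}$-terms in \eqref{eqn_L11} cancel) while $\overline{\partial_0\Theta}=O(r)$, this is $O(r)$. Hence $c_1=0$, no logarithm appears, and Lemma~\ref{lemma_cone_smooth}(ii)--(iii) yield that $v$ is cone-smooth with $v_{i^-}=\rho_{i^-}/s_{i^-}\ne0$; consequently $\mathring\phi=1/v$ is cone-smooth near the vertex and, by Proposition~\ref{prop_cone_smooth}, is the restriction to $C_{i^-}$ of a smooth spacetime function. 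The main obstacle is precisely this cancellation of logarithmic terms: it explains why the hypothesis $\lim_{r\to0}\partial_r\rho=0$ is imposed (recall $\rho=\overline s^*$ in the notation of Lemma~\ref{extension_s}), and it only works because the MCFE force $\partial_r\overline s|_{i^-}=0$ as well.
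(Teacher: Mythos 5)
Your proof is correct and follows essentially the same route as the paper: reduce \eqref{B_ODE_omega_rho_pre} to a Fuchsian ODE along the generators using the tangency of $\overline{\nabla^\mu\Theta}$, pass to the reciprocal (your Bernoulli substitution $v=1/\mathring\phi$ is the paper's $\gamma$-substitution in different clothing), and kill the potential logarithm by showing $\partial_r\overline s|_{i^-}=0$ from \eqref{mconf4} together with the hypothesis $\partial_r\rho|_{i^-}=0$, before invoking Proposition~\ref{prop_cone_smooth} and Lemma~\ref{lemma_cone_smooth} for cone-smoothness. Your explicit indicial analysis discarding the root $\mathring\phi_{i^-}=0$ is a small but welcome addition that the paper leaves implicit in its substitution $\gamma\propto\mathring\phi^{-1}$.
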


\begin{proof}
We assume a sufficiently regular gauge so that the regularity conditions (4.41)-(4.51) in~\cite{CCM2} hold.
Evaluation of the MCFE \eq{conf3} on $\scri^-$ in coordinates adapted to the cone implies the relation
\begin{eqnarray}
  \overline g^{AB}\overline{ \nabla_A\nabla_B \Theta }= 2 \overline s \quad \Longleftrightarrow \quad
 \tau  \overline{\partial_0 \Theta} = 2\nu_0 \overline s
\;.
 \label{rel_tau_transTheta}
\end{eqnarray}
The notation is introduced at the beginning of Section~\ref{sec_constraint_equations}. The expansion $\tau$ of the light-cone satisfies
\begin{equation*}
  \tau = \frac{2}{r} + O(r)\;,
  \quad \partial_1 \tau = -\frac{2}{r^2} + O(1)
  \;.
\end{equation*}
Moreover, regularity requires
\begin{equation*}
\nu_0 = 1+ O(r^2)\;, \enspace \partial_1\nu_0 = O(r)\;, \enspace \overline s=O(1)\;, \enspace \partial_1 \overline s =O(1)
 \;.
\end{equation*}
Hence
\begin{equation}
 \overline{\partial_0 \Theta} = s_{i^-} r  + O(r^2) \quad \text{and} \quad
 \partial_1\overline{\partial_0 \Theta} = s_{i^-}   + O(r)
 \;.
  \label{vertex_behaviour_Theta}
\end{equation}
The $r$-component of the MCFE \eq{conf4} yields
\begin{eqnarray*}
 \partial_1 \overline s + \nu^0 \overline L_{11}\overline{\partial_0 \Theta} =0
\quad \Longrightarrow \quad
  \partial_1 \overline s |_{i^-} =0
\end{eqnarray*}
due to regularity (note that $\overline L_{11}=O(1)$), i.e.\
\begin{eqnarray*}
 \overline s = s_{i^-} + O(r^2)
 \;.
\end{eqnarray*}
%

In adapted null coordinates \eq{B_ODE_omega_rho_pre} reads
\begin{eqnarray}
 \nu^0\overline{\partial_0\Theta}\partial_1 \mathring\phi +  \overline s \mathring \phi- \rho\mathring \phi^{2}=0
 \label{ODE_omega_rho}
 \;,
\end{eqnarray}
i.e., since $\mathrm{d}\Theta|_{\scri^-}\ne 0$, \eq{B_ODE_omega_rho_pre} is a Fuchsian
ODE for $\mathring\phi$ along the null geodesics emanating from $i^-$.
By assumption, the functions $\overline s$
and $\rho$ are cone-smooth.
In~\cite{C1} it is shown that $\nu^0$ and $r\tau$  are cone-smooth.
That implies that the function
\begin{eqnarray*}
 \psi := \frac{\overline{\partial_0\Theta}}{r} \overset{\eq{rel_tau_transTheta}}{=} \frac{2\nu_0 \overline s}{r \tau}
 = s_{i^-} + O(r^2)
\end{eqnarray*}
is cone-smooth, as well (note that $(r\tau)|_{i^-}\ne 0$).
 Since we have assumed $s_{i^-}\ne 0$, the function $\psi$ has no zeros near $i^-$, so $\psi^{-1}$ exists  near $i^-$ and is cone-smooth.
 The ODE \eq{ODE_omega_rho} thus takes the form
  \begin{eqnarray}
 r\partial_1 \mathring\phi +  \hat\omega \mathring \phi - \omega\mathring \phi^{2}=0
  \;,
  \label{form_ODE_omega}
 \end{eqnarray}
 where the functions $\hat\omega :=\nu_0 s\psi^{-1}=1+O(r^2)$ and $\omega:=\nu_0\rho \psi^{-1}=\frac{\rho_{i^-}}{s_{i^-}}+  O(r^2)$
 are  cone-smooth and non-vanishing near the tip of the cone.%
\footnote{
\label{comment_rho}
The absence of first-order terms is due to the vanishing of $\partial_1\overline s|_{i^-}$ and $\partial_1 \rho|_{i^-}$. In fact, a term proportional to $r$ in  $\omega$ would produce
logarithmic terms in the expansion of $\zeta$, \eq{definition_zeta}, and thereby in the expansions of
$r\gamma$, \eq{expan_r_gamma}, and $\mathring\phi$, \eq{expan_mathring_phi}.}

Let $\varepsilon>0$ be sufficiently small.
 We introduce the function
 \begin{equation}
  \gamma := e^{-\int_{\varepsilon}^r \hat r^{-1} \hat\omega\,\mathrm{d}\hat r}\,\mathring \phi^{-1}
   \;,
 \end{equation}
 so that \eq{form_ODE_omega} becomes
   \begin{eqnarray}
 r^2\partial_1 \gamma  + \zeta=0
  \;,
 \end{eqnarray}
where
\begin{eqnarray}
 \zeta:= \varepsilon\omega e^{-\int_{\varepsilon}^r \hat r^{-1}(\hat\omega- 1)\,\mathrm{d}\hat r}
  =\underbrace{\varepsilon\frac{\rho_{i^-}}{s_{i^-}} e^{\int_0^{\varepsilon} \hat r^{-1}(\hat\omega- 1)\,\mathrm{d}\hat r}}_{=:c} + O(r^2)
  \label{definition_zeta}
\end{eqnarray}
 is cone-smooth by Lemma~\ref{lemma_cone_smooth} and has a sign near the tip,
\begin{equation*}
 \mathrm{sign}(\zeta_{i^-}) \,=\, \mathrm{sign}(s_{i^-}) \mathrm{sign}(\rho_{i^-})
\;.
\end{equation*}
Consequently,
 \begin{eqnarray}
  r\gamma \,=\, -r \int   \hat r^{-2}\zeta \,\mathrm{d}\hat r \,=\, c + \hat c r  + O(r^2)
  \label{expan_r_gamma}
 \end{eqnarray}
 is cone-smooth and has a sign as follows immediately from the expansions in Proposition~\ref{prop_cone_smooth} and term-by-term integration.
The constant $\hat c$ can be regarded as representing the, possibly $x^A$-dependent,  integration function. We conclude that the function
 \begin{eqnarray}
    \mathring \phi &=& \varepsilon e^{-\int_{\varepsilon}^r \hat r^{-1} (\hat\omega- 1)\,\mathrm{d}\hat r}\,(r\gamma)^{- 1}  =\frac{s_{i^-}}{\rho_{i^-}} + O(r)
    \label{expan_mathring_phi}
 \end{eqnarray}
 is cone-smooth and has a sign near the vertex of the cone,
\begin{eqnarray*}
  \mathrm{sign}(\mathring\phi_{i^-}) \,=\,  \mathrm{sign}(s_{i^-}) \mathrm{sign}(\rho_{i^-})
\end{eqnarray*}
(Note that
  there remains a gauge freedom to choose $\partial_1\mathring \phi|_{i^-}$.)
 \qed
\end{proof}

\end{document}